\title[ ]{Quantitative  inductive estimates  for    Green's functions of   non-self-adjoint   matrices}
\author{Wencai Liu}
\address[W. Liu]{ Department of Mathematics, Texas A\&M University, College Station, TX 77843-3368, USA} \email{liuwencai1226@gmail.com}
\keywords{Multi-scale analysis, large deviation theorem, discrepancy, semi-algebraic sets, Cartan's techniques, Anderson localization.}
\subjclass[2010]{  81Q10 (primary); 82B44, 37C55 (secondary)}
\theoremstyle{plain}
\newtheorem{theorem}{Theorem}[section]
\newtheorem{corollary}[theorem]{Corollary}
\newtheorem{lemma}[theorem]{Lemma}
\newcommand{\C}{\mathbb{C}}
\newcommand{\R}{\mathbb{R}}
\newcommand{\T}{\mathbb{T}}
\newcommand{\Z}{\mathbb{Z}}
\newcommand{\N}{\mathbb{N}}
\theoremstyle{definition}
\newtheorem{remark}{Remark}
\begin{document}
	

\maketitle
\renewcommand{\abstractname}{Abstract}
\begin{abstract}
	We provide quantitative inductive estimates for Green's functions  of  matrices  with (sub)expoentially decaying off diagonal entries  in higher dimensions.
	Together with  Cartan's estimates and discrepancy estimates,
	we establish  explicit   bounds for the large deviation theorem  for  non-self-adjoint     Toeplitz operators.
	As applications,  we obtain  the modulus of    continuity  of  the integrated density of states  with explicit  bounds and the  pure point spectrum property for analytic quasi-periodic operators.
	Moreover, our inductions   are self-improved and work for perturbations with low complexity interactions.
\end{abstract}

\section{Introduction}

The  dynamics and spectral theory for  quasi-periodic  operators  have  been made  significant progress
in the last 40 years, through earlier  perturbative methods \cite{ds75,eli,fsw,si87,cs89,MP84}, and then
non-perturbative methods by controlling   
  Green's functions/transfer matrices  \cite{j94,j,bj,bbook,bg20,bgscmp} or  by   reducibility \cite{hy12,afk11}.   The case of
one dimensional  lattice   and one frequency potential, has been well understood  for both small and large coupling constants, with the recent discovery  of
global theory \cite{global} and universal  structure  \cite{jl1,jl2}.  In particular,  remarkable developments have been achieved for  several   models  motivated by physics: the almost Mathieu operator (the Harper's model), the extended Harper's  model and the
Maryland model \cite{ayz,jl1,jl2,jlcpam,sim85,hj17,jm12,j,lyjst,lyjfa,ten,liuetds,jkl,jk19,jpoint,ad08,ab08,ak06,ls19,ls19,jkjems,jzjems}.
We refer readers to
\cite{jmarx,you} and
references therein for more details.

Problems are known to be much more complicated if one increases the underlying dimension $b$ of the torus or the dimension $d$ of the lattice.
The higher dimension picture is still far from clear. For the one dimensional lattice $d=1$ and multi-frequencies $b\geq 1$,    some special cases have been studied   by transfer matrices or Schr\"odinger cocycles \cite{gsv,gs01,bbook,dgsv,amor,CCYZ19,eli}.
 The first
multi-dimensional localization  result was obtained by perturbative (KAM)
methods by Chulaevsky-Dinaburg for operators on lattices  $\Z^d$ and torus  $\T$
for   arbitrary $d$ \cite{cd93}.
 Bourgain-Goldstein-Schlag developed a celebrated  method  in the spirit of non-perturbative approaches from Bourgain-Goldstein \cite{bg20}  
  to handle the  two-dimension
and two-frequency case \cite{bgs} ($b=d=2$) and  established the Anderson localization for  large coupling constants.  This is the first higher dimension lattice and multi-frequency result.
 Moreover, the large deviation theorem in \cite{bgs}, which is a key ingredient to prove the Anderson localization,  is purely arithmetic in the sense that removed  sets of  frequencies are independent of  the potential. 
Roughly speaking,  by imposing some purely arithmetic condition on $(\omega_1,\omega_2)\in \R^2$,  for  any 
algebraic curve $\Gamma\subset [0,1]^2$ with  degree  at most  $N^C$,
the number of lattice points 
\begin{equation}\label{gsub}
\{(n_1,n_2)\in \Z^2: |n_1|\leq N, |n_2|\leq N , (n_1\omega_1,n_2\omega_2)\mod \Z^2\in  \Gamma_{\tau} \}   
\end{equation}
is bounded by $N^{1-\delta}$ for some $\delta>0$, where $\Gamma_\tau$ is the $e^{-N^\tau}$ neighborhood of $\Gamma$.
The quantity $N^{1-\delta}$ is referred  to as the sublinear bound.
It is still open whether the analogy for $d\geq 3$  is true or not.  

In \cite{bgs},  Bourgain   developed a new
scheme  to prove the large deviation theorem 
 for  arbitrary $b=d$ \cite{gafa} by a delicate study of the semi-algebraic sets.
 Jitomirskaya-Liu-Shi extended
Bourgain's result to the case of arbitrary $b$ and $d$ \cite{jls}.   However,  the removed set  of frequencies in \cite{gafa,jls} depends on the potential.


  Bourgain, Goldstein and Schlag \cite{bgs} mentioned  that  the sub-linear bound \eqref{gsub}  is the only obstruction to  establish an arithmetic version of the  large deviation theorem   in higher dimensions. However, there is no detailed proof available yet. 
 Our first goal of this paper is  to provide  such a    proof.  Moreover,  we  are going to establish the quantitative version of the  main results in  \cite{bgs} with  generalizations, in particular it   can be applied to  quasi-periodic operators on arbitrary lattices $\Z^d$ driven by any dynamics on tori $\Z^b$ under the assumption on sub-linear bounds.

Instead of      Laplacians or  long range operators,
we will  study  Toeplitz  matrices  with (sub)exponentially decaying off diagonal entries.  
Among all the motivations of our generalizations, we want to highlight one.
Anderson localization receives  a lot of  attentions  from both mathematics and physics. The approach to establish Anderson localization for quasi-periodic operators with analytic potentials turns out to be a breakthrough component  to construct quasi-periodic solutions for nonlinear Schr\"odinger equations and nonlinear wave equations \cite{bbook,wangduke}.
It is  known that the quasi-periodic solutions in PDEs  are only sub-exponentially, not exponentially decaying \cite{b98,bbook,wangduke}. 
Therefore, the  (sub)exponentially decaying  matrices are   more   natural settings in PDEs.  

In our arguments, the matrices are    not necessarily  self-adjoint and every entry   of  the matrices is  allowed to be  a function.
	For $d\geq 2$, this is the   first time to  study operators that beyonds  long range  cases. 
For $d=1$,   our assumptions are  weaker than Bourgain's \cite{bkick}. See Remark \ref{rebound} for details. Moreover, our arguments hold under perturbations  with low complexity.

Our proof is definitely inspired by \cite{bgs}.  However, there are a lot of  important ingredients  being added into the arguments to make it quantitative  in our more general settings. 
 Moreover,
we  significantly simplify the arguments even for the case appearing in \cite{bgs}.
The analysis of \cite{bgs} required dealing with many different types of elementary regions, say rectangles and $L$-shapes in $\Z^2$.  We largely  reduced the  elementary regions  to be  square related. See Fig.1. 
Two novelties  are added  here. Firstly, we introduce the concept of width of   subsets of lattices. In our augments, we always keep the involved regions $\Lambda$ having large width  so that every lattice point in $\Lambda$ can be covered by a square related elementary region with presetting size contained in  $\Lambda$. For example, the region like   Fig.2 was not allowed because the width determined by the distance between $B$ and $C$ is too small.  Secondly, we reconstruct the exhaustion of $x$ in every elementary region. In our new construction, the  annuli with small width are absorbed into  bigger ones. See Fig.3.

There are several other  technical improvements in this paper, which we believe to be of independent interest.
For example, we estabish the Cartan's estimates for non-self-adjoint matrices.

We will prove a  quantitatively  inductive theorem  about the   Green's functions     in higher dimensions as stated in Theorem \ref{thmmul}.  
This is a deterministic statement, 
which  can be  applied to study operators  even without dynamics.  
Based on  matrix-valued Cartan-type theorem (estimates on subharmonic functions)  in \cite{bgs} with further developments in  \cite{bbook,gs08,jls},
we will establish the measure estimates in Theorem \ref{thmmu2}. 
Imposing proper dynamics on  tori,
the  quantitative inductive  estimate for Green's functions is obtained (Theorem \ref{thmu}). 
Moreover,     the  relation among all  constants and parameters is displayed clearly so that   the whole picture  becomes extremely transparent.
We will   see  how  arithmetic conditions on  frequencies effect  the discrepancy, how structures of  semi-algebraic sets effect   the number of bad Green's functions, and how  the  dimensions of lattices and frequencies  contribute to  bounds.

Finally, we want to talk about the applications.
As far as we know, there is no    explicit bound yet for the  large deviation theorem except for the case $d=1$  and $b=1,2$.
Our approaches (Theorems \ref{thmmul}, \ref{thmmu2} and \ref{thmu}) are the first time to establish the explicit bounds in higher dimensions  and multi-frequencies.
We show  that in the arithmetic sense, for  $d=1$ and any $b$, the bound is   arbitrarily close to $\frac{1}{b^3}$ for shift dynamics and $\frac{1}{4^{b-1}b^3}$ for skew-shift dynamics.
 For $b=1$ and arbitrary $d$, we show that  the bound is   arbitrarily close to $1$. 
 
 Another application  we  want to mention     is the regularity of the integrated density of states (IDS) of quasi-periodic operators.
 The log-H\"older continuity of the integrated density of states  is quite general \cite{cbcmp,bkinv}.  
 The H\"older continuity  in one dimensional settings was well established \cite{blmp2000,gs08,aj10,amor,bbook,gs01,ly15,ajcmp,CCYZ19,XZhao20,hzimrn} for both large and small coupling constants.  
 What we will investigate in this paper  is the modulus of continuity $f(x)=e^{-\kappa|\log x|^\tau}$. 
 Unfortunately, like   the large deviation theorem,  except for  the case $d=1$  and $b=1,2$, there are no explicit bounds  of $\tau$  in the region of large coupling constants.
 Based on the ingredients from \cite{blmp2000,schcmp} and the  large deviation theorem,  the   modulus of continuity  of the integrated density of states  with   explicit estimates  will be obtained in Theorem \ref{thmholder}.

\section{Main results}

Let $A$ be a (operator) matrix on $\ell^2(\Z^d)$ satisfying,
\begin{equation}\label{GO}
|A(n,n')|\leq   Ke^{-c_1|n-n'|^{\tilde{\sigma}}},  K>0, c_1>0,0<\tilde{\sigma}\leq 1,
\end{equation}
 where $|n|:=\max\limits_{1\leq i\leq d}|n_i|$ for $n=(n_1,n_2,\cdots,n_d)\in \Z^d$.
 We  say that  the off diagonal entries of $A$ are  subexponentially decaying  if $A$ satisfies \eqref{GO}. Sometimes, we just say 
 $A$ is  subexponentially decaying for simplicity.

For $d=1$,   the elementary region of size $N$ centered at 0 is given by
\begin{equation*}
  Q_N=[-N,N].
\end{equation*}

For $d\geq 2$, denote by $Q_N$ an elementary region of size $N$ centered at 0, which is one of the following regions,
\begin{equation*}
  Q_N=[-N,N]^d
\end{equation*}
or
$$Q_N=[-N,N]^d\setminus\{n\in\mathbb{Z}^d: \ n_i\varsigma_i 0, 1\leq i\leq d\},$$
where  for $ i=1,2,\cdots,d$, $ \varsigma_i\in \{<,>,\emptyset\}^{d}$ and at least two $ \varsigma_i$  are not $\emptyset$.

Denote by $\mathcal{E}_N^{0}$ the set of all elementary regions of size $N$ centered at 0. Let $\mathcal{E}_N$ be the set of all translates of  elementary regions  with center at 0, namely,
$$\mathcal{E}_N:=\{n+Q_N:n\in\mathbb{Z}^d,Q_N\in \mathcal{E}_N^{0}\}.$$
We call  elements in $\mathcal{E}_N$ elementary regions.

{\bf Example 1: } For $d=2$, there are five types of elementary regions.
\begin{center}
\begin{tikzpicture}[]
\draw[](0,0)rectangle (2,2);
\draw (3,0)--(5,0)--(5,1)--(4,1)--(4,2)--(3,2)--(3,0);
\draw (6,0)--(8,0)--(8,2)--(7,2)--(7,1)--(6,1)--(6,0);
\draw (10,1)--(10,0)--(11,0)--(11,2)--(9,2)--(9,1)--(10,1);
\draw (12,0)--(13,0)--(13,1)--(14,1)--(14,2)--(12,2)--(12,0);
\draw (7,0) node[below]   {Fig.1: elementary regions in $\Z^2$};



\end{tikzpicture}
\end{center}

The  width  of a   subset  $\Lambda\subset \Z^d$, is defined by maximum
 $M\in \N$  such that  for any  $n\in \Lambda$, there exists  $\hat{M}\in \mathcal{E}_M$ such that
\begin{equation*}
  n\in \hat{M} \subset \Lambda
\end{equation*}
and
\begin{equation*}
  \text{ dist }(n,\Lambda\backslash \hat{M})\geq M/2.
\end{equation*}
{\bf Example 2:}  In Fig.2, the width of $\Lambda$ is determined by the distance between B and C.
\begin{center}
\begin{tikzpicture}[]
\draw (0,0)--(4,0)--(4,-2)--(4.3,-2)--(4.3,-4)--(0,-4)--(0,0);
\draw (2,-2) node[below]   {$\Lambda$};
\draw (3.85,-2) node[]   {B};
\draw (4.45,-2) node[]   {C};
\draw (2,-4) node[below]   {Fig.2: a region with small  width};




\end{tikzpicture}
\end{center}


A generalized  elementary region is defined to be a subset $\Lambda\subset \Z^d$ of the form
\begin{equation*}
  \Lambda:= R\backslash(R+z),
\end{equation*}
where $z\in\Z^d$ is arbitrary and $R$ is a rectangle,
\begin{equation*}
  R=\{n=(n_1,n_2,\cdots,n_d)\in \Z^d: |n_1-n_1^\prime|\leq M_1, \cdots,|n_d-n_d^\prime|\leq M_d\}.
\end{equation*}
For $ \Lambda\subset\mathbb{Z}^d$,  we introduce its diameter,
$$\mathrm{diam}(\Lambda)=\sup_{n,n'\in \Lambda}|n-n'|.$$

 Denote by $\mathcal{R}_N$
all  generalized elementary regions with diameter less than or equal  to $N$.
Denote by $\mathcal{R}_N^M$
all   generalized elementary regions in $\mathcal{R}_N$ with width larger than or equal to $M$.
For $\Lambda\subset\mathbb{Z}^d$,  let $R_{\Lambda}$ be the restriction operator, i.e.,  $(R_{\Lambda}u)(n)=u(n)$ for $n\in \Lambda$, and
$(R_{\Lambda}u)(n)=0$ for $n\notin \Lambda$.

We say an elementary region $ \Lambda\in \mathcal{E}_{N^\prime}$ is  in class {\it G} (Good)  if
\begin{equation}\label{ggood}
  |(R_{\Lambda}A  R_{\Lambda})^{-1}(n,n^\prime)|\leq  e^{-c_2|n-n^\prime|^{\tilde{\sigma}}}, \text{ for } |n-n^\prime|\geq \frac{N^\prime}{10},
\end{equation}
where $0<c_2\leq \frac{5^{\tilde{\sigma}}-1}{5^{\tilde{\sigma}}}c_1$ and $0<\tilde{\sigma}\leq1$.
We mentioned that the upper bound $\frac{5^{\tilde{\sigma}}-1}{5^{\tilde{\sigma}}}c_1$ is chosen for technical convenience. See \eqref{gdec94} for the explanation.

Denote by   $\lfloor x\rfloor$  the largest integer smaller than  or equal to $x$.
\begin{theorem}\label{thmmul}
Assume $A$ satisfies \eqref{GO}.
Let $\varsigma,\sigma,\xi\in(0,1)$ and $\sigma<\tilde{\sigma}\leq 1$.
Let $\tilde{\Lambda}_0\in \mathcal{E}_N$ be an elementary region with the property that for all $ \Lambda\subset \tilde{\Lambda}_0$, $\Lambda\in \mathcal{R}_L^{N^\xi}$ with $N^{\xi}\leq L\leq N$, the Green's function $(R_{\Lambda}A  R_{\Lambda})^{-1}$ satisfies 
\begin{equation}\label{gboundnov26}
  ||(R_{\Lambda}A  R_{\Lambda})^{-1}||\leq e^{L^{\sigma}}.
\end{equation}
Assume that for any family $\mathcal{F}$ of pairwise disjoint elementary regions in $\tilde{\Lambda}_0$ with size $M=\lfloor N^\xi \rfloor$,
\begin{equation}\label{gnbad}
  \# \{ \Lambda \in \mathcal{F}: \Lambda \text { is not in class G } \}\leq \frac{N^{\varsigma}}{N^\xi}.
\end{equation}
Then
for large $N$ (depending on $K,\varsigma,\sigma,\tilde{\sigma},\xi,c_1$ and the lower bound of $c_2$),
\begin{equation}\label{gthm1}
  |(R_{\tilde{\Lambda}_0}A  R_{\tilde{\Lambda}_0})^{-1}(n,n^\prime)|\leq  e^{-(c_2-N^{-\vartheta})|n-n^\prime|^{\tilde{\sigma}}},\text{ for } |n-n^\prime|\geq \frac{N}{10},
\end{equation}
where $\vartheta=\vartheta(\sigma,\tilde{\sigma},\xi,\varsigma)>0$.
\end{theorem}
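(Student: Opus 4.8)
The plan is to iterate the resolvent identity along a geometric sequence of scales running from $M = \lfloor N^\xi\rfloor$ up to $N$, using the Good-region hypothesis \eqref{ggood} at the bottom scale and the a priori norm bound \eqref{gboundnov26} at every intermediate scale to control the "bad" contributions. Fix $n, n' \in \tilde{\Lambda}_0$ with $|n - n'| \geq N/10$. First I would set up the basic one-step expansion: for a lattice point $m \in \tilde{\Lambda}_0$, choose (using the width hypothesis $\tilde{\Lambda}_0 \in \mathcal{E}_N$, so every point has a square-related elementary region of size $M$ around it sitting inside $\tilde{\Lambda}_0$ with $\mathrm{dist}(m, \tilde{\Lambda}_0 \setminus \hat{M}) \geq M/2$) an elementary region $\Lambda_m \in \mathcal{E}_M$ with $m \in \Lambda_m \subset \tilde{\Lambda}_0$. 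The resolvent (Schur complement / Poisson-type) identity then gives, for $n \notin \Lambda_m$,
\begin{equation*}
  (R_{\tilde{\Lambda}_0} A R_{\tilde{\Lambda}_0})^{-1}(m, n') = -\sum_{\substack{u \in \Lambda_m,\ v \notin \Lambda_m}} (R_{\Lambda_m} A R_{\Lambda_m})^{-1}(m, u)\, A(u, v)\, (R_{\tilde{\Lambda}_0} A R_{\tilde{\Lambda}_0})^{-1}(v, n').
\end{equation*}
The key point is that $|m - u| \lesssim M$ while $u$ lies within distance $1$ of the boundary (by the sub-exponential decay \eqref{GO} of $A$, only $v$ with $|u - v|$ not too large contribute meaningfully), so one gains a factor roughly $e^{-c_2 (M/10)^{\tilde\sigma}}$ (from \eqref{ggood}) when $\Lambda_m$ is Good, or at worst $e^{M^\sigma}\cdot K e^{-c_1}$ (from \eqref{gboundnov26}) when $\Lambda_m$ is bad, and the new base point $v$ has moved a definite distance $\gtrsim M$ toward $n'$ — or, more precisely, one peels off distance while paying the decay.

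Next I would iterate this expansion, reexpanding the Green's function at the new base point $v$ again via its own size-$M$ elementary region, and so on, stopping a given branch of the expansion once the running base point lands within distance, say, $10M$ of $n'$ (where one bounds the final factor crudely by $e^{N^\sigma}$ via \eqref{gboundnov26} applied at scale $N$). A path of length $\ell$ in this expansion accumulates a product of $\ell$ factors; along the "pivot" chain connecting $n$ to $n'$ the total displacement telescopes to at least $|n - n'| - 10M$, and by subadditivity of $t \mapsto t^{\tilde\sigma}$ along increments of size $\sim M$ the product of the decay factors is at least $e^{-c_2 (|n-n'| - 10M)^{\tilde\sigma}}$ up to the correction from the non-additivity of the power, which is where the loss $N^{-\vartheta}$ in the exponent \eqref{gthm1} comes from (one compares $\sum_i (M)^{\tilde\sigma}$ against $(\sum_i M)^{\tilde\sigma}$; the discrepancy is controlled because $M = N^\xi$ and the number of steps is at most $|n-n'|/M \leq N^{1-\xi}$, contributing a relative error of size $N^{-\xi\tilde\sigma}$ or so). The bad regions are handled by the counting hypothesis \eqref{gnbad}: along any chain, since the size-$M$ elementary regions selected at the distinct pivot points can be taken pairwise disjoint (shrinking them by a bounded factor if necessary, again using the width condition), at most $N^\varsigma/N^\xi$ of them are bad, so the total price paid for bad steps across the whole expansion is at most $\big(e^{M^\sigma + C\log K}\big)^{N^{\varsigma - \xi}} = e^{N^{\varsigma - \xi + \xi\sigma}\cdot(1 + o(1))}$, and one needs $\varsigma - \xi + \xi\sigma < \tilde\sigma$ (more carefully, that this is beaten by the gain $c_2 |n-n'|^{\tilde\sigma} \gtrsim c_2 (N/10)^{\tilde\sigma}$); this forces $\vartheta$ to depend on $\sigma, \tilde\sigma, \xi, \varsigma$ in the stated way.

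The combinatorial bookkeeping — controlling the number of branches in the fully expanded resolvent series and showing the "good" decay dominates both the branching multiplicity and the bad-region cost — is the main obstacle. The standard device I would use is to organize the expansion as a sum over self-avoiding (or bounded-multiplicity) paths through the size-$M$ blocks: the number of blocks within distance $r$ of a point is $\lesssim (r/M)^d$, the off-diagonal decay \eqref{GO} of $A$ gives a factor that is summable over the choice of exit point $v$ from each block (this is exactly where subexponential rather than merely polynomial decay matters, so that $\sum_v e^{-c_1|u-v|^{\tilde\sigma}}$ converges with room to spare and absorbs the $(r/M)^d$ geometric multiplicities), and the number of bad blocks on any path is capped by \eqref{gnbad}. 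Putting these together: the bad cost $e^{N^{\varsigma-\xi+\xi\sigma}}$, the entropy/multiplicity cost (subexponential in $|n-n'|$), and the "rounding" loss from subadditivity of $(\cdot)^{\tilde\sigma}$ are all of the form $e^{o(|n-n'|^{\tilde\sigma})}$, so they can be absorbed into $N^{-\vartheta}|n-n'|^{\tilde\sigma}$ for a suitable $\vartheta = \vartheta(\sigma,\tilde\sigma,\xi,\varsigma) > 0$ once $N$ is large depending on the listed parameters, yielding \eqref{gthm1}. I would pin down $\vartheta$ at the end by taking it slightly smaller than $\min\{\tilde\sigma - \varsigma + \xi - \xi\sigma,\ \xi\tilde\sigma,\ \tilde\sigma - \sigma\}$ (or whatever the explicit book-keeping produces), taking care that each constraint is strict so the estimate survives for all large $N$.
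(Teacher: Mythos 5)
Your proposal takes a genuinely different route from the paper, and it has a gap that I do not think can be repaired without essentially reinventing the paper's structure.

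The paper's proof is a \emph{multi-scale} induction: it introduces a whole chain of intermediate scales $M_0 = \lfloor N^\xi\rfloor$, $M_{j+1}=\lfloor M_j^\rho\rfloor$ with $\rho\in(1,1+\tilde\sigma-\sigma)$, reaching $M_k=N$ after $k\approx \log(1/\xi)/\log\rho$ steps. At each scale $M_j$ it defines an $M_j$-region to be \emph{good} if, for every exhaustion-by-annuli of it with width $M_{j-1}$, the number of bad annuli is at most $B_j=\kappa M_j/M_{j-1}$ with $\kappa = N^{-\delta}$. Within a single good region the iteration is not a free walk from $n$ to $n'$; it is organized along the nested shells $J_0\subset J_1\subset\cdots\subset J_g$ (adjacent good or bad annuli merged), so the number of resolvent-identity steps at each scale is $g\le 2B_j$, hence of order $\kappa M_j/M_{j-1}$, and the per-scale multiplicative loss in the decay rate is only $O(\kappa)$, i.e.\ $O(N^{-\delta})$. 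The counting hypothesis \eqref{gnbad} then shows $\#\mathcal{F}_j\le \kappa^{-j}N^\varsigma/M_j$, and with $\kappa=N^{-\delta}$ and a suitable $\delta$ this drops below $1$ at $j=k$, so there is no bad region at the top scale. The total loss $c_2\to c_2-N^{-\vartheta}$ is a product of $k$ factors each of size $1-O(\kappa)$, so $\vartheta\approx\delta$ and one never needs the gain at scale $N$ to beat a term of size $e^{N^{\varsigma-\xi+\xi\sigma}}$.

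Your single-scale scheme replaces all of this by one long resolvent iteration with size-$M$ blocks going directly from $n$ to $n'$, and bounds the total bad cost by $\big(e^{M^\sigma}\big)^{N^{\varsigma-\xi}} = e^{N^{\varsigma-\xi+\xi\sigma}}$. For this to be dominated by the gain $e^{-c_2(N/10)^{\tilde\sigma}}$ you need
\begin{equation*}
  \varsigma-\xi+\xi\sigma < \tilde\sigma,
\end{equation*}
and this is \emph{not} implied by the hypotheses of the theorem. Take, for instance, $\tilde\sigma=1/2$, $\sigma=2/5$, $\varsigma=9/10$, $\xi=1/10$: then $\varsigma-\xi(1-\sigma)=0.84 > 0.5 = \tilde\sigma$ and the bad cost overwhelms the main decay. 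So your argument proves a strictly weaker statement than the theorem (it would need an extra parameter constraint), and this is not a matter of more careful bookkeeping — it is intrinsic to trying to cross the whole gap $N^\xi\to N$ in one pass. The point of the multi-scale structure is precisely to trade the brutal bound $N^{\varsigma-\xi}$ bad regions in a single sweep for a much smaller allowance $\kappa M_j/M_{j-1}$ bad annuli per good region at each intermediate scale, with $\kappa$ tunable to $N^{-\delta}$.

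Two further concerns, both of which the paper's exhaustion/width construction is designed to handle and your sketch elides: (i) consecutive pivot regions in a chain with steps of size $\sim M$ necessarily overlap, so the disjointness required by \eqref{gnbad} is lost, and the ``take a sparse subsequence'' fix leaves the in-between regions uncounted; (ii) the entropy of the path sum (your ``combinatorial bookkeeping'') is genuinely the crux, and the annulus construction is exactly what tames it — by peeling off one shell at a time you never actually sum over paths, you sum over one exit point per shell, with the sum controlled by Lemma~\ref{res1}. As written, your proposal identifies all the correct ingredients and the correct place where the $N^{-\vartheta}$ loss should come from, but the one-scale architecture does not close.
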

Here are several comments   about Theorem \ref{thmmul}.
\begin{remark}\label{remul}
\begin{enumerate}
 \item For $d=1$ and $\tilde{\sigma}=1$, a  similar statement    was proved by Bourgain \cite{bkick}.
  For $d=2$ and $\tilde{\sigma}=1$,  a similar statement  was proved for the particular case where $A$ is given by the discrete Laplacian \cite{bgs}. 
  \item   The   statement in Theorem   \ref{thmmul}  is a robust approach to deal with the spectral theory for quasi-periodic operators and also  the construction of  quasi-periodic solutions for nonlinear Schr\"odinger/wave equations. See   \cite{bwjems,bwcmp,bgs,bkick,bgscmp} for applications.  Some particular cases of Theorem \ref{thmmul}  have been used  as  ingredients  to construct quasi-periodic solutions for  PDEs  and   have been stated in  \cite{bwjems,bwcmp,wangduke} without detailed proof.  There are no explicit bound estimates in their arguments either.
       

  \item 
  In applications, $\varsigma$ is chosen to be arbitrarily close to $1$, namely $\varsigma=1-\varepsilon$ with arbitrarily  small $\varepsilon>0$. Then the upper bound  in \eqref{gnbad}   equals   $N^{1-\xi-\varepsilon}$.
 Theorem \ref{thmmul} says that the ``goodness" of Green's functions at small size $N^{\xi}$ will ensure the ``goodness" of Green's functions at larger size $N$
  under the following two conditions:
  \begin{itemize}
       \item  The number of bad Green's functions of size  $N^{\xi}$ in   $[-N,N]^d$ is less than $N^{1-\xi-\varepsilon}$ (referred to  as the sub-linear bound).
       \item  The   Green's functions can not be ``super bad" in the sense that they  are controlled by \eqref{gboundnov26}. The upper bound $e^{L^{\sigma}}$ with $\sigma<1$ is referred to as the sub-exponential bound.
     \end{itemize}


\end{enumerate}
\end{remark}
Let $b=\sum_{i=1}^k b_i$, where $b_i\in \N$. Let $x=(x_1,x_2,\cdots,x_k)$, where $x_i\in \T^{b_i}=(\R/\Z)^{b_i}$, $i=1,2,\cdots,k$.
For any $x\in\mathbb{T}^b$ and $1\leq i\leq k$, let $${x}_i^\neg=(x_1,\cdots,x_{i-1},x_{i+1}\cdots,x_k)\in\mathbb{T}^{b-b_j}.$$
For any $y\in\mathbb{T}^{d_1}$ and $X\subset\mathbb{T}^{d_1+d_2}$,  denote   the $y$-section of $X$:
$$X(y):=\{z\in\mathbb{T}^{d_2}:\  (y,z)\in X\}.$$
Write  ${\rm Leb}(S)$ for the Lebesgue measure.

Assume each element of  the operator $A$  is a function on $ \T^b$. Sometimes, we  indicate the dependence and denote by the element   $A(x;n,n^\prime)$.
Assume every element $A(z;n,n^\prime)$   is analytic in the strip $\{z\in \C^b:|\Im z|\leq \rho\}$, $\rho>0$,  and  satisfies for any $n,n^\prime\in\Z^d$ and  $x\in\T^b$,
\begin{equation}\label{GOnew}
|A(x;n,n')|\leq  K e^{-c_1|n-n'|^{\tilde{\sigma}}}, K>0,  c_1>0,0<\tilde{\sigma}\leq1.
\end{equation}

Assume that  there exists $K_1>1$ such that for any $x\in\T^b$ and $z\in \{z\in\C^b:|\Im z|\leq \rho\}$ with $||x-z||\leq e^{ -\left(\log  (|n|+|n^\prime|+2)\right)^{K_1}}$, 
\begin{equation}\label{glc1}
|A(x;n,n')-A(z;n,n')|\leq K ||x-z||^\gamma,
\end{equation}
where $||z||={\rm dist}(z,\Z^b)$.

{\bf Example 3.} 
If $A$ satisfies \eqref{GOnew} and for any $n,n^\prime\in\Z^d$, $A(x;n,n^\prime)$  is a trigonometric  polynomial of degree  at most  $e^{ \left(\log  (|n|+|n^\prime|+2)\right)^{K_1}}$, then \eqref{glc1} holds.

 We say  an elementary region $ \Lambda\in \mathcal{E}_{N}$ is  in class {\it SG$_{N}$} (strongly good with size $N$) if
 \begin{equation}\label{ggoodt1}
 ||R_{\Lambda}A  R_{\Lambda})^{-1}||\leq  e^{N^{\sigma}},
 \end{equation}
 and
\begin{equation}\label{ggoodt2}
  |(R_{\Lambda}A  R_{\Lambda})^{-1}(n,n^\prime)|\leq  e^{-c_2|n-n^\prime|^{\tilde{\sigma}}}, \text{ for } |n-n^\prime|\geq \frac{N}{10},
\end{equation}
where $0<c_2\leq\frac{5^{\tilde{\sigma}}-1}{5^{\tilde{\sigma}}}c_1$ and $0<\sigma<\tilde{\sigma}\leq1$. When there is no confusion, we drop  the dependence of $N$ from  the notation {\it SG$_{N}$}.
\begin{theorem}\label{thmmu2}
Assume  $A$ satisfies \eqref{GOnew} and \eqref{glc1}.
Fix $\sigma,\delta,\tilde{\sigma},\zeta\in (0,1)$ and $ \mu\in(1-\delta,1)$, $\sigma<\tilde{\sigma}$.
Suppose $\mathcal{R}\subset[-N_3,N_3]^d$ has width at least $N_2$.
For $x\in \T^b$, define $\mathcal{B}_{\mathcal{R}}(x)$ as
\begin{equation*}
  \mathcal{B}_{\mathcal{R}}(x)=\{ n\in \mathcal{R}: \text{ there exists } Q_{N_1}\in  \mathcal{E}_{N_1}^{0} \text{ such that } n+Q_{N_1}\notin \text{{\it SG}}_{N_1}\}
\end{equation*}
Assume that for any $x\in\T^b$,
\begin{equation}\label{gsublinear}
  \#  \mathcal{B}_{\mathcal{R}}(x)\leq L^{1-\delta}.
\end{equation}
Assume that  there exists a subset $X_{N_2}\subset \T^b$, such that
\begin{equation}\label{Gstarnov}
 \sup_{1\leq i\leq k,x_i^\neg\in \T^{b-b_i}}\mathrm{Leb}( X_{N_2}(x_i^\neg))\leq e^{-{N_2}^{\zeta}},\end{equation}
and  for  any    $ Q_{N_2}\in \mathcal{E}_{N_2}^0$, $x\notin  X_{N_2}$ and $n\in \mathcal{R}$, the region $n+Q_{N_2}$  is  in class {\it SG$_{N_2}$}.
Let
\begin{equation*}
  \tilde{X}_{\mathcal{R}}(x)=\{x\in \T^b: ||(R_{\mathcal{R}} A(x)R_{\mathcal{R}})^{-1}||\geq e^{L^{\mu}}  \}.
\end{equation*}
Suppose $ N_3\leq e^{ N_1^{\frac{1}{2K_1}}}$, $N_2\geq N_1^{\frac{2}{\zeta}}$ and $L\geq N_2^{\frac{2d+b+2}{\mu-1+\delta}}$.
 Then  there exists $N_0=N_0(K_1,K, c_1, c_2,\tilde{\sigma}, \sigma,\delta,\gamma,\rho,\mu)$\footnote{It depends on the lower bound of $c_2$.}
 such that for any $N_1\geq N_0$ and $i=1,2,\cdots,k$,
\begin{equation}\label{Gstar}
 \sup_{x_i^\neg\in \T^{b-b_i}}\mathrm{Leb}( \tilde{X}_{\mathcal{R}}(x_i^\neg))\leq  e^{-\left(\frac{{L}^{\mu-1+\delta}}{N_2^{2d+b+2}}\right)^{1/b_i}}.
 \end{equation}
\end{theorem}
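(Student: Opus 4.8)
\emph{Overall strategy.} The plan is to combine a deterministic resolvent (Neumann) expansion, which produces a pointwise bound on a set of almost full measure and thereby an anchor point, with the matrix-valued Cartan estimate for non-self-adjoint matrices from the earlier section; the role of the sub-linear bound \eqref{gsublinear} is to cut the relevant matrix dimension down from $\#\mathcal{R}\asymp N_3^d$ (which is hopelessly large relative to the threshold $e^{L^\mu}$) to something polynomial in $N_1$ and $L^{1-\delta}$. Write $T(z)=R_{\mathcal{R}}A(z)R_{\mathcal{R}}$. \emph{Step 1 (pointwise bound and anchor).} For $x\notin X_{N_2}$, \eqref{Gstarnov} gives that every translated elementary region $n+Q_{N_2}$ with $n\in\mathcal{R}$ is in class SG$_{N_2}$, and since $\mathcal{R}$ has width $\geq N_2$ each point of $\mathcal{R}$ then lies in such a good block contained in $\mathcal{R}$ at distance $\geq N_2/2$ from the rest. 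Running the resolvent expansion along these good blocks — using $\sigma<\tilde\sigma$, so that the off-diagonal decay $e^{-c_1(N_2/2)^{\tilde\sigma}}$ beats the internal factor $e^{N_2^\sigma}$, and using $N_3\leq e^{N_1^{1/(2K_1)}}$, $N_2\geq N_1^{2/\zeta}$ to control the combinatorial weight — yields $\|T(x)^{-1}\|\leq e^{2N_2^\sigma}$ for all $x\notin X_{N_2}$. Since $L\geq N_2^{(2d+b+2)/(\mu-1+\delta)}$ makes $L^\mu$ huge compared with $N_2^\sigma$, this already forces $\tilde X_{\mathcal{R}}(x_i^\neg)\subset X_{N_2}(x_i^\neg)$; more importantly, because $\mathrm{Leb}(X_{N_2}(x_i^\neg))<1$, for each fixed $x_i^\neg$ there is an anchor point $x^{(0)}$ at which $T$ and its sub-block resolvents are controlled by $e^{2N_2^\sigma}$.

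\emph{Step 2 (dimension reduction — the crux).} For each $x$ I would excise from $\mathcal{R}$ a fattened version $\mathcal{B}=\mathcal{B}(x)$ of $\mathcal{B}_{\mathcal{R}}(x)$ (its $O(N_1)$-neighbourhood, so $\#\mathcal{B}\lesssim N_1^dL^{1-\delta}$ by \eqref{gsublinear}), arranged so that every point of $\mathcal{G}=\mathcal{R}\setminus\mathcal{B}$ sits in a class-SG$_{N_1}$ block inside $\mathcal{G}$; the same expansion gives $\|(R_{\mathcal{G}}A(x)R_{\mathcal{G}})^{-1}\|\leq e^{2N_1^\sigma}$, and the Schur-complement identity bounds $\|T(x)^{-1}\|$ by $e^{N_2^{C_0}}\|S(x)^{-1}\|$, where $S(x)$ is the Schur complement on $\mathcal{B}$, a matrix of dimension $m\lesssim N_1^dL^{1-\delta}$ and norm $\leq e^{N_2^{C_0}}$, with $C_0=C_0(d,\sigma,\zeta,K_1)$. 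Hence $x\in\tilde X_{\mathcal{R}}$ forces $\|S(x)^{-1}\|\geq e^{L^\mu/2}$, so by Cramer $|\det S(x)|$ is exponentially small; the decisive point is that dividing the exponent $L^\mu$ by the now-small dimension $m$ produces $\asymp L^{\mu-1+\delta}/N_1^d$, which is exactly the quantity appearing in \eqref{Gstar}. The main obstacle is that $\mathcal{B}(x)$ varies with $x$; I would handle this by using analyticity on $\{|\Im z|\leq\rho\}$ together with the low-complexity hypothesis \eqref{glc1} (with $|n|+|n'|\lesssim N_3\leq e^{N_1^{1/(2K_1)}}$, so that the relevant trigonometric degrees are $\leq e^{N_1^{1/2}}$) to see that the events $\{n+Q_{N_1}\notin\mathrm{SG}_{N_1}\}$ are cut out by approximately semi-algebraic sets of controlled complexity, so that only sub-exponentially many configurations $\mathcal{B}(x)=\mathcal{B}$ occur; one then fixes $\mathcal{B}$, works with the genuinely analytic family $S_{\mathcal{B}}(z)$, and sums the resulting estimates over $\mathcal{B}$ at the end.

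\emph{Step 3 (Cartan in $b_i$ variables and bookkeeping).} For each fixed $\mathcal{B}$, freeze the coordinates $x_i^\neg$ and apply the matrix-valued Cartan estimate for non-self-adjoint matrices to $S_{\mathcal{B}}(z_i)$, a matrix-valued analytic function of $z_i$ on the $b_i$-dimensional strip, of dimension $\leq m$, norm $\leq e^{N_2^{C_0}}$, and with anchor bound $\|S_{\mathcal{B}}(x^{(0)})^{-1}\|\leq e^{N_2^{C_0}}$ from Step 1; this bounds $\mathrm{Leb}_{b_i}\{z_i:\|S_{\mathcal{B}}(z_i)^{-1}\|\geq e^{L^\mu/2}\}$ by $\exp\big(-c\,(L^\mu/(m\,N_2^{C_0}))^{1/b_i}\big)$, and since $m\lesssim N_1^dL^{1-\delta}\lesssim N_2^dL^{1-\delta}$ the exponent is $\gtrsim (L^{\mu-1+\delta}/N_2^{d+C_0})^{1/b_i}$. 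The constraint $L\geq N_2^{(2d+b+2)/(\mu-1+\delta)}$, the hierarchy $N_1\leq N_2^{\zeta/2}\leq N_2$ (so that $N_2$ dwarfs any power of $N_1^{1/(2K_1)}$), and the slack built into the exponent $2d+b+2$ then make this at most $\exp\big(-(L^{\mu-1+\delta}/N_2^{2d+b+2})^{1/b_i}\big)$, with enough room left over to absorb both the Cartan constants and the sub-exponentially many configurations $\mathcal{B}$ from Step 2. Summing over $\mathcal{B}$ and doing this for each $i=1,\dots,k$ yields \eqref{Gstar}, taking $N_0$ large (in terms of the listed parameters and the lower bound of $c_2$) to absorb all implied constants. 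Beyond the $x$-dependence of $\mathcal{B}(x)$, the one thing that genuinely needs care is verifying that these combinatorial and complexity losses are dominated by the Cartan gain — which is exactly where the strong separation of scales encoded in the hypotheses gets used.
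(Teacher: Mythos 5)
Your overall skeleton is the right one and matches the paper: use the sub-linear bound \eqref{gsublinear} to cut the effective matrix dimension from $\#\mathcal{R}$ down to $\lesssim N_1^{2d}L^{1-\delta}$ by excising a fattened neighbourhood $\bar\Lambda$ of $\mathcal{B}_{\mathcal{R}}(x)$, get $\|G_{\mathcal{R}\setminus\bar\Lambda}\|\leq e^{2N_1^\sigma}$ from Lemma \ref{res1} on width-$N_1$ blocks, verify the measure hypothesis of the matrix-valued Cartan lemma using $X_{N_2}$ and $N_2\geq N_1^{2/\zeta}$, and then feed $M=|\bar\Lambda|$, $B_2=e^{2N_1^\sigma}$, $B_3=e^{2N_2^\sigma}$, $\epsilon=e^{-L^\mu}$ into Lemma \ref{mcl} in $b_i$ variables.

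However, Step~2 contains a genuine gap in how the $x$-dependence of $\bar\Lambda(x)$ is frozen. You propose to invoke a \emph{low-complexity / semi-algebraic} argument, citing \eqref{glc1}. But \eqref{glc1} is a H\"older-continuity hypothesis, not a complexity bound, and Theorem~\ref{thmmu2} does not assume the low-complexity hypothesis \eqref{comu} (that is introduced only for Theorem~\ref{thmunew}) nor the trigonometric-polynomial approximation \eqref{glc1news1} used elsewhere to manufacture semi-algebraic sets. So the route ``only sub-exponentially many configurations $\mathcal{B}$ occur; fix $\mathcal{B}$ and sum'' is not available from the stated hypotheses; a priori there could be $\binom{\#\mathcal{R}}{|\bar\Lambda|}$ configurations, which is far too many to sum over. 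The paper avoids the issue entirely by exploiting the fact that Lemma~\ref{mcl} is formulated on a \emph{tiny} polydisk $[-\delta,\delta]^{J}$: it fixes $x_1\in\T^{b_1}$, sets $\eta=c_1/\gamma$ and $\delta=\delta_1=e^{-\eta N_1}$, and uses \eqref{glc1} together with $N_3\leq e^{N_1^{1/(2K_1)}}$ to show that $A$ changes by at most $Ke^{-c_1N_1}$ over the disk $\mathcal{D}$ of radius $e^{-\eta N_1}$ around $x_1$. Consequently, the SG$_{N_1}$-classification degrades by at most a factor~$2$ across $\mathcal{D}$, so the excision set $\bar\Lambda$ constructed at $x_1$ serves as the set $V$ in condition~(ii) of Lemma~\ref{mcl} uniformly on $\mathcal{D}$, with $B_2=e^{2N_1^\sigma}$, $B_1=O(1)$. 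The condition $N_2\geq N_1^{2/\zeta}$ then makes $e^{-N_2^\zeta}$ (from $X_{N_2}$, which supplies the measure hypothesis~(iii) with $B_3=e^{2N_2^\sigma}$) smaller than the required threshold $10^{-3b_1}b_1^{-b_1}\delta_1^{b_1}(1+B_1)^{-b_1}(1+B_2)^{-b_1}$. The Cartan estimate yields a bound $\sim e^{-c(L^{\mu-1+\delta}/(N_2^\sigma N_1^{2d+\sigma}))^{1/b_1}}$ on each disk, and covering $\T^{b_1}$ by $\sim e^{CN_1}$ such disks produces a harmless $e^{CN_1}$ loss that is swallowed by the slack $N_2^{2d+b+2}\gg N_2^\sigma N_1^{2d+\sigma}$ coming from $N_1\leq N_2^{\zeta/2}$. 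Your ``anchor point'' phrasing is morally the same as Cartan's condition~(iii), but note the anchor must exist \emph{inside each} polydisk $\mathcal{D}$, which is exactly why the measure bound $e^{-N_2^\zeta}$ has to be compared against $\delta_1^{b_1}(1+B_2)^{-b_1}$ rather than against $1$.
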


Let $f$ be a function from $\Z^d\times\T^b$ to $\T^b$.
Assume for any $m_1,m_2,\cdots,m_d\in \Z^d$ and  $n_1,n_2,\cdots,n_d\in \Z^d$,
\begin{equation*}
 f({m_1+n_1},{m_2+n_2},\cdots, {m_d+n_d},x)= f({m_1},{m_2},\cdots, {m_d}, f({n_1},{n_2},\cdots,{n_d},x)).
\end{equation*}
Sometimes, we write down $f^n(x)$ for $f(n,x)$ for convenience, where $n\in \Z^d$ and $x\in \T^b$.
We say $A$ is a Toeplitz  (operator) matrix on $\ell^2(\Z^d)$ with respect to $f$,  if
\begin{equation}\label{gdec72}
  A(x;n+k,n^\prime+k)= A(f^k(x);n,n^\prime), 
\end{equation}
for any $n\in \Z^d, n^\prime\in \Z^d$ and $k\in \Z^d.$ We note that $A$ is not necessarily self-adjoint.



 We say the Green's function  of an operator $A(x)$  satisfies property $P$ with parameters  $(\mu,\zeta,c_2)$ at size $N$ if the following statement is true:
     there exists a  subset $X_N\subset \mathbb{T}^b$  such that
 \begin{equation*}
 \sup_{1\leq i\leq k,x_i^\neg\in \T^{b-b_i}}\mathrm{Leb}(X_N(x_i^\neg))\leq e^{-{N}^{\zeta}},
 \end{equation*}
and for any $x\notin  X_N \mod \Z^b$ and $Q_N\in \mathcal{E}_N^0$,
\begin{eqnarray*}
 || (R_{Q_N}A(x)R_{Q_N})^{-1} ||&\leq& e^{N^{\mu}} ,\\
  |(R_{Q_N}AR_{Q_N})^{-1}(x;n,n^\prime)| &\leq& e^{-c_2|n-n^\prime|^{\tilde{\sigma}}}, \text{ for } |n-n^\prime|\geq \frac{N}{10}.
\end{eqnarray*}

\begin{theorem}\label{thmu}
	Assume $A(x)$ satisfies  \eqref{GOnew}, \eqref{glc1} and \eqref{gdec72}, and 
	$$ 0<	c_2<(1-5^{-\tilde{\sigma}})c_1,1-\delta<\sigma<\tilde{\sigma}\leq 1, \delta>\iota>0,\text{ and } 0<\mu<\tilde{\sigma}.$$ 
	Let $c=\frac{1}{2}\min\{\frac{1}{  K_1},\tilde{\sigma}\}$.
	Fix any sufficiently small $\varepsilon>0$.
	There exists a large constant $C$ depending on all parameters   such that the following statements are true.
	Let $N_1$ be sufficiently large, $N_2\in [N_1^C, e^{N_1^{c/2}}]$ and $N_3\in [N_2^C,e^{N_1^{c}}]$.
	Assume that  the  Green's function  satisfies the property ${P}$ with parameters  $(\mu,\zeta,c_2)$ at sizes $N_1$ and $N_2$.
	Assume for any $L\in [N_3^{\delta-\iota},N_3]$ and any $x\in\T^b$,
	\begin{equation}\label{gdec5}
	\#\{ n\in \Z^d: |n|\leq L, f(n,x)\in X_{N_1}\mod \Z^b\}\leq L^{1-\delta}.
	\end{equation}
	Then
	there exists  ${X}_{N_3}\subset \mathbb{T}^b$  such that
	\begin{equation}\label{gdec3u}
	\sup_{1\leq i\leq k,x_i^\neg\in \T^{b-b_i}}\mathrm{Leb}(X_{N_3}(x_i^\neg))\leq e^{-{N_3}^{\frac{\sigma-1}{b_i}\delta+\frac{\delta^2}{b_i}-\varepsilon}},
	\end{equation}
	and for any $x\notin  X_{N_3}$ and $Q_{N_3}\in \mathcal{E}_{N_3}^0$,
	\begin{eqnarray}
	|| (R_{Q_{N_3}}A(x)R_{Q_{N_3}})^{-1} ||&\leq& e^{N_3^{\sigma}},\label{equfeb6}
	\end{eqnarray}
	and for $|n-n^\prime|\geq \frac{N_3}{10}$,
	\begin{eqnarray}
	|(R_{Q_{N_3}}AR_{Q_{N_3}})^{-1}(x;n,n^\prime)| &\leq& e^{-(c_2-2N_1^{-\vartheta_1}-N_{3}^{-\vartheta_2})
		|n-n^\prime|^{^{\tilde{\sigma}}}},\label{gdec49}
	\end{eqnarray}
where $\vartheta_1=\vartheta_1(\tilde{\sigma},\mu,c)$ and $\vartheta_2=\vartheta_2(\tilde{\sigma},\sigma,\delta,\varepsilon)$.
\end{theorem}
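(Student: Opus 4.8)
\emph{Strategy.} The plan is to run the deterministic inductive step (Theorem~\ref{thmmul}) and the Cartan/discrepancy measure estimate (Theorem~\ref{thmmu2}) in tandem, using the Toeplitz relation \eqref{gdec72} as a dictionary: since \eqref{gdec72} gives $(R_{k+\Lambda}A(x)R_{k+\Lambda})^{-1}(x;n+k,n'+k)=(R_{\Lambda}A(f^{k}(x))R_{\Lambda})^{-1}(f^{k}(x);n,n')$ for every region $\Lambda$ and every $k\in\Z^{d}$, a translate $k+Q_{N_{j}}$ fails to lie in class \emph{SG}$_{N_{j}}$ for $A(x)$ only when $f^{k}(x)\in X_{N_{j}}\bmod\Z^{b}$ ($j=1,2$), where $X_{N_{j}}$ is the exceptional set from property $P$ at size $N_{j}$. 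Consequently the bad set of Theorem~\ref{thmmu2} satisfies $\mathcal B_{\mathcal R}(x)\subseteq\{k\in\mathcal R:\ f^{k}(x)\in X_{N_{1}}\}$, whose size is $\le(\mathrm{diam}\,\mathcal R)^{1-\delta}$ by \eqref{gdec5} whenever $\mathrm{diam}\,\mathcal R\in[N_{3}^{\delta-\iota},N_{3}]$, and the $N_{2}$-hypothesis \eqref{Gstarnov} is met by using $\bigcup_{|k|\le N_{3}}f^{-k}(X_{N_{2}})$ in place of ``$X_{N_{2}}$'', the enlargement factor $N_{3}^{d}$ being swallowed into $e^{-N_{2}^{\zeta}}$ because $N_{3}\le e^{N_{1}^{c}}$ while $N_{2}\ge N_{1}^{C}$ (at the cost of a harmless shrinking of $\zeta$). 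One matches the remaining parameters by reading the ``$\sigma$'' of Theorem~\ref{thmmu2} as the present $\mu$ and its ``$\mu$'' as the present $\sigma$, which is legitimate since $1-\delta<\sigma<\tilde\sigma$ and $0<\mu<\tilde\sigma$.

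\emph{Step 1: norm bounds off a small set.} I would apply Theorem~\ref{thmmu2} with $\mathcal R$ running over all generalized elementary regions of width $\ge N_{2}$ and diameter $L\in[N_{3}^{\delta-\iota},N_{3}]$ inside $Q_{N_{3}}$, with $L$ itself as the sub-linear parameter; the side conditions $N_{3}\le e^{N_{1}^{c}}\le e^{N_{1}^{1/(2K_{1})}}$, $N_{2}\ge N_{1}^{2/\zeta}$ and $L\ge N_{2}^{(2d+b+2)/(\sigma-1+\delta)}$ hold by the choices $c=\tfrac12\min\{1/K_{1},\tilde\sigma\}$ and $C$ large. Each such $\mathcal R$ yields, off a set $\tilde X_{\mathcal R}$ with $x_{i}^{\neg}$-sections of measure $\le\exp\!\big(-(L^{\sigma-1+\delta}/N_{2}^{2d+b+2})^{1/b_{i}}\big)$, the bound $\|(R_{\mathcal R}A(x)R_{\mathcal R})^{-1}\|\le e^{L^{\sigma}}$; taking $\mathcal R=Q_{N_{3}}$ and $L=N_{3}$ gives \eqref{equfeb6}. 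Put $X_{N_{3}}:=\bigcup_{\mathcal R}\tilde X_{\mathcal R}$, a union over polynomially-in-$N_{3}$ many shapes, positions and scales. Using $N_{3}\ge N_{2}^{C}$ to absorb both $N_{2}^{2d+b+2}$ and the polynomial count, the dominant term --- that of the smallest admissible scale $L\approx N_{3}^{\delta-\iota}$, where the freedom to take $\iota$ small is exploited --- produces exactly the section bound \eqref{gdec3u}. (Regions of diameter below $N_{3}^{\delta-\iota}$ are handled with the sub-linear parameter pinned at $N_{3}^{\delta-\iota}$; the resulting weaker bound is never used in Step~2 because, as stressed in the Introduction, the exhaustion keeps every region fat, of width $\ge N_{3}^{\delta-\iota}\ge N_{2}^{(2d+b+2)/(\sigma-1+\delta)}$, so that thin annuli get absorbed into square-related pieces.)

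\emph{Step 2: off-diagonal decay.} Fix $x\notin X_{N_{3}}$. Apply Theorem~\ref{thmmul} first on each size-$N_{3}^{\delta-\iota}$ elementary region $\Lambda'\subset Q_{N_{3}}$ with $N=N_{3}^{\delta-\iota}$ and small scale $N_{1}$: the input \eqref{gboundnov26} comes from Step~1, the count \eqref{gnbad} from the dictionary together with \eqref{gdec5} (the number of size-$N_{1}$ subregions not in class $G$ is $\le(N_{3}^{\delta-\iota})^{1-\delta}\le(N_{3}^{\delta-\iota})^{\varsigma}/N_{1}$ once $N_{1}\le N_{3}^{1/C}$ and $\varsigma$ is close to $1$), and the good size-$N_{1}$ regions carry decay rate $c_{2}$ by property $P$. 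This places each $\Lambda'$ in class $G$, after which Theorem~\ref{thmmul} applies once more with $\tilde\Lambda_{0}=Q_{N_{3}}$, $N=N_{3}$ and small scale $N_{3}^{\delta-\iota}$, again feeding \eqref{gboundnov26} from Step~1 and \eqref{gnbad} from \eqref{gdec5} at $L=N_{3}$, and it delivers \eqref{gdec49}; the correction $2N_{1}^{-\vartheta_{1}}+N_{3}^{-\vartheta_{2}}$ to the rate $c_{2}$ collects the loss $N_{3}^{-\vartheta_{2}}$ of this final application together with the losses $O(N_{1}^{-\vartheta_{1}})$ incurred at the bottom of the ladder and inside the gluing of the scale-$N_{1}$ pieces in Theorem~\ref{thmmu2}. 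With \eqref{equfeb6} from Step~1, this is the whole statement.

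\emph{Main obstacle.} The substantive work is the bookkeeping of Step~1: one must design the exhaustion of $Q_{N_{3}}$ so that every scale and width that the two applications of Theorem~\ref{thmmul} in Step~2 will call upon is attainable by Theorem~\ref{thmmu2} (hence the insistence on keeping all regions fat and on absorbing thin annuli), and one must control the resulting union bound tightly enough that the smallest admissible scale yields precisely the exponent $\tfrac{\sigma-1}{b_{i}}\delta+\tfrac{\delta^{2}}{b_{i}}$ in \eqref{gdec3u} --- which forces one to check, using $C$ large, that the exponents $(\sigma-1+\delta)(1-\delta)$ and $\delta(\sigma-1+\delta)$ dominate $2d+b+2$ --- all while respecting the scale separation $N_{2}\in[N_{1}^{C},e^{N_{1}^{c/2}}]$, $N_{3}\in[N_{2}^{C},e^{N_{1}^{c}}]$.
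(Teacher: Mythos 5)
Your Step 1 is essentially the paper's: pull back $X_{N_{2}}$ through the dynamics, apply Theorem~\ref{thmmu2} with the present $\sigma$ playing the role of its $\mu$, union over polynomially many generalized elementary regions of width $\ge N_{3}^{\xi}$ (with $\xi\approx\delta$), and read off \eqref{gdec3u} and \eqref{equfeb6}. The trouble is in Step 2.

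You propose to first apply Theorem~\ref{thmmul} to every size-$N_{3}^{\delta-\iota}$ elementary region $\Lambda'$ with small scale $N_{1}$, asserting that ``the input \eqref{gboundnov26} comes from Step~1.'' It does not. For this application of Theorem~\ref{thmmul} one must verify \eqref{gboundnov26} for every generalized elementary region $\Lambda\subset\Lambda'$ of width $\ge(N_{3}^{\delta-\iota})^{\xi'}=N_{1}$ and diameter from $N_{1}$ up to $N_{3}^{\delta-\iota}$. But Theorem~\ref{thmmu2}, and hence Step~1, only produces norm bounds for regions of width at least $N_{2}$ (indeed the paper uses width $\ge N_{3}^{\xi}\gg N_{2}$), and $N_{1}\ll N_{2}$. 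So the hypothesis \eqref{gboundnov26} at the intermediate scale is simply unavailable, and your first application of Theorem~\ref{thmmul} does not get off the ground. The same issue also undercuts your concluding claim that ``this places \emph{each} $\Lambda'$ in class $G$'': for those $\Lambda'$ containing bad $N_{1}$-subregions one would genuinely need the intermediate norm bounds that are missing, and for those containing none one does not need Theorem~\ref{thmmul} at all.

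The paper's way around this is precisely the ingredient your proposal never invokes: Theorem~\ref{res2}. Once \eqref{gdec5} bounds the number of bad size-$N_{1}$ subregions in $[-N_{3},N_{3}]^{d}$ by $N_{3}^{1-\delta+\varepsilon}$, one discards the (few) size-$\lfloor N_{3}^{\xi}\rfloor$ elementary regions that touch a bad $N_{1}$-region and applies Theorem~\ref{res2} to the rest; since every $N_{1}$-subregion of such a region is good, no intermediate norm input is needed, and Theorem~\ref{res2} upgrades them directly to class $G$ with rate $c_{2}-O(N_{1}^{-\vartheta_{1}})$. This verifies \eqref{gnbad} at scale $M=\lfloor N_{3}^{\xi}\rfloor$ with $\varsigma=1-\varepsilon$, and then Theorem~\ref{thmmul} is applied exactly \emph{once}, with $\tilde\Lambda_{0}=Q_{N_{3}}$, $N=N_{3}$ and small scale $N_{3}^{\xi}$, where the needed norm bounds \eqref{gboundnov26} really are all of width $\ge N_{3}^{\xi}\ge N_{2}$ and hence covered by Step~1. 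Your loss bookkeeping also misattributes the $2N_{1}^{-\vartheta_{1}}$ term to ``the gluing inside Theorem~\ref{thmmu2}'': Theorem~\ref{thmmu2} produces no decay-rate loss; the two $N_{1}^{-\vartheta_{1}}$ terms come from \eqref{gdec410} in Theorem~\ref{res2} (one from $M_{0}^{\tilde\sigma-s}$, one from $M_{0}^{\tilde\sigma-\sigma}$), and the $N_{3}^{-\vartheta_{2}}$ from the single application of Theorem~\ref{thmmul}.
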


Our Theorems  work  for  Topelitz matrices with  low complexity interactions.
Let $U$ be  an operator on  $\ell^2(\Z^d)$ satisfying
\begin{equation*} 
|U(n,n')|\leq  K e^{-c_1|n-n'|^{\tilde{\sigma}}}.
\end{equation*}  
Given $m\in \Z^d$,    define the  operator $U^{m}$ by 
$$U^{m}(n,n^\prime)=U(m+n,m+n^\prime),n\in\Z^d,n^\prime\in\Z^d.$$
We say $U$ has low complexity if  there exists $0<a<1$ such that  for any $N>1$,
\begin{equation}\label{comu}
\#\{ R_{Q_N}U^m R_{Q_N}: m\in\Z^d,Q_{N}\in \mathcal{E}_{N}^0 \}\leq K e^{N^{a}}.
\end{equation}

For any $m\in \Z^d$, denote by
\begin{equation}\label{Aint}
\tilde{A}^m(x;n,n^\prime)=A(x;n,n^\prime)+U^m(n,n^\prime).
\end{equation}
 We say that  the Green's function  of an operator  ${A}(x)$  satisfies property $\tilde{P}$ with parameters  $(\mu,\zeta,c_2)$ at size $N$ if the following statement is true:
there exists a  set $X_N\subset \mathbb{T}^b$  such that
\begin{equation*}
\sup_{1\leq i\leq k,x_i^\neg\in \T^{b-b_i}}\mathrm{Leb}(X_N(x_i^\neg))\leq e^{-{N}^{\zeta}},
\end{equation*}
and for any $x\notin  X_N \mod \Z^b$, $m\in\Z^d$, and $Q_N\in \mathcal{E}_N^0$
\begin{eqnarray*}
	|| (R_{Q_N}\tilde{A}^m(x)R_{Q_N})^{-1} ||&\leq& e^{N^{\mu}} ,\\
	|(R_{Q_N}\tilde{A}^mR_{Q_N})^{-1}(x;n,n^\prime)| &\leq& e^{-c_2|n-n^\prime|^{\tilde{\sigma}}}, \text{ for } |n-n^\prime|\geq \frac{N}{10}.
\end{eqnarray*}
We have
\begin{theorem}\label{thmunew}
	Assume $A(x)$ satisfies  \eqref{GOnew}, \eqref{glc1} and \eqref{gdec72}, $U$ has low complexity,  
	$$ 0<	c_2<(1-5^{-\tilde{\sigma}})c_1,1-\delta<\sigma<\tilde{\sigma}\leq 1, \delta>\iota>0,  0<\mu<\tilde{\sigma}, $$
	and 
	\begin{equation}\label{gcom}
	a\leq \frac{1}{2}\min_i\left\{\frac{\sigma-1}{b_i}\delta+\frac{\delta^2}{b_i}\right\}.
	\end{equation}	
	 Let $\tilde{A}^m$ be given by \eqref{Aint} and $c=\frac{1}{2}\min\{\frac{1}{  K_1},\tilde{\sigma}\}$.
Fix any sufficiently small $\varepsilon>0$.
Then
there exists a large constant $C$ depending on all parameters   such that the following statements are true.
Let $N_1$ be sufficiently large,  $N_2\in [N_1^C, e^{N_1^{c/2}}]$ and   $N_3\in [N_2^C,e^{N_1^{c}}]$.
Assume the  Green's function  satisfies the property $\tilde{P}$ with parameters  $(\mu,\zeta,c_2)$ at sizes $N_1$ and $N_2$.
Assume for any $L\in [N_3^{\delta-\iota},N_3]$ and any $x\in\T^b$,
	\begin{equation*}
	\#\{ n\in \Z^d: |n|\leq L, f(n,x)\in X_{N_1}\mod \Z^b\}\leq L^{1-\delta}.
	\end{equation*}
	Then
	there exists a  subset ${X}_{N_3}\subset \mathbb{T}^b$  such that
	\begin{equation*}
	\sup_{1\leq i \leq k,x_i^\neg\in \T^{b-b_i}}\mathrm{Leb}(X_{N_3}(x_i^\neg))\leq e^{-{N_3}^{\frac{\sigma-1}{b_i}\delta+\frac{\delta^2}{b_i}-\varepsilon}},
	\end{equation*}
	and for any $x\notin  X_{N_3}$, $m\in \Z^d$ and  $Q_{N_3}\in \mathcal{E}_{N_3}^0$,
	\begin{eqnarray*}
	|| (R_{Q_{N_3}}\tilde{A}^m(x)R_{Q_{N_3}})^{-1} ||&\leq& e^{N_3^{\sigma}},
	\end{eqnarray*}
and  for $|n-n^\prime|\geq \frac{N_3}{10}$, 
	\begin{eqnarray*}
	|(R_{Q_{N_3}}\tilde{A}^mR_{Q_{N_3}})^{-1}(x;n,n^\prime)| &\leq& e^{-(c_2-N_1^{-\vartheta_1}-N_{3}^{-\vartheta_2})|n-n^\prime|^{^{\tilde{\sigma}}}},
\end{eqnarray*}
where $\vartheta_1=\vartheta_1(\tilde{\sigma},\mu,c)$ and $\vartheta_2=\vartheta_2(\tilde{\sigma},\sigma,\delta,\varepsilon)$.
\end{theorem}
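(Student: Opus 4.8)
The plan is to deduce Theorem \ref{thmunew} from Theorem \ref{thmu} by treating the family of perturbed operators $\{\tilde{A}^m : m\in\Z^d\}$ essentially as a single object, exploiting the low-complexity hypothesis \eqref{comu} to control the total measure of bad parameters by a union bound. The key observation is that $\tilde{A}^m$ is itself a Toeplitz operator with respect to the same dynamics $f$ up to the $m$-translation of $U$: since $A(x)$ satisfies \eqref{gdec72} and $U^m$ depends only on the combinatorial data $R_{Q_N}U^mR_{Q_N}$, the inductive step of Theorem \ref{thmu} — restriction to generalized elementary regions, resolvent expansion, and the measure estimate of Theorem \ref{thmmu2} — applies verbatim to each $\tilde{A}^m$ with the \emph{same} constants $K, c_1, c_2, \tilde{\sigma}, \sigma, \mu$, because all of \eqref{GOnew}, \eqref{glc1} hold for $\tilde{A}^m$ (the off-diagonal bound picks up at most a factor $2K$, harmless after adjusting $K$, and analyticity/local-continuity of $U$ is not needed since $U$ is $x$-independent).

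First I would run the proof of Theorem \ref{thmu} with $A$ replaced by $\tilde{A}^m$ for a fixed $m$. This produces, for each $m$, an exceptional set $X_{N_3}^{(m)}\subset\T^b$ with the section bound
\begin{equation*}
\sup_{1\leq i\leq k,\,x_i^\neg\in\T^{b-b_i}}\mathrm{Leb}\big(X_{N_3}^{(m)}(x_i^\neg)\big)\leq e^{-N_3^{\frac{\sigma-1}{b_i}\delta+\frac{\delta^2}{b_i}-\varepsilon/2}},
\end{equation*}
outside of which $R_{Q_{N_3}}\tilde{A}^m(x)R_{Q_{N_3}}$ enjoys the desired norm and off-diagonal decay bounds; here I allow a slightly larger exponent $\varepsilon/2$ in the exponent-loss so as to leave room for the union bound. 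Then I would set $X_{N_3}=\bigcup_{m\in\Z^d} X_{N_3}^{(m)}$. By \eqref{comu}, as $R_{Q_{N_3}}\tilde{A}^mR_{Q_{N_3}}$ is determined by $R_{Q_{N_3}}A(\cdot)R_{Q_{N_3}}$ and the finite data $R_{Q_{N_3}}U^mR_{Q_{N_3}}$, the set $\{X_{N_3}^{(m)}:m\in\Z^d\}$ has at most $Ke^{N_3^{a}}$ distinct members; moreover, one only needs the perturbations seen at scales $\le N_3$, so the union is effectively over $\le Ke^{N_3^{a}}$ sets. Hence
\begin{equation*}
\mathrm{Leb}\big(X_{N_3}(x_i^\neg)\big)\leq Ke^{N_3^{a}}\cdot e^{-N_3^{\frac{\sigma-1}{b_i}\delta+\frac{\delta^2}{b_i}-\varepsilon/2}}\leq e^{-N_3^{\frac{\sigma-1}{b_i}\delta+\frac{\delta^2}{b_i}-\varepsilon}},
\end{equation*}
where the last inequality uses hypothesis \eqref{gcom}, which guarantees $a\leq\frac12\min_i\{\frac{\sigma-1}{b_i}\delta+\frac{\delta^2}{b_i}\}$, so that $N_3^{a}$ is swallowed by the gain $N_3^{\varepsilon/2}$ for $N_1$ (hence $N_3$) large. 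For $x\notin X_{N_3}$, every $\tilde{A}^m$ simultaneously satisfies \eqref{equfeb6}–\eqref{gdec49}, which is exactly the conclusion.

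One point that needs care, and which I expect to be the main technical obstacle, is ensuring that the sub-linear bound hypothesis \eqref{gdec5} used inside Theorem \ref{thmu} holds uniformly in $m$ for $\tilde{A}^m$. The count in \eqref{gdec5} involves $\{n:f(n,x)\in X_{N_1}\}$ where $X_{N_1}$ is the scale-$N_1$ exceptional set for the \emph{unperturbed} $A$; for $\tilde{A}^m$ the relevant scale-$N_1$ set is $X_{N_1}^{(m)}$, and these vary with $m$. The resolution is again low complexity: since the hypothesis of Theorem \ref{thmunew} assumes property $\tilde P$ at scales $N_1$ and $N_2$ — which by definition gives a \emph{single} set $X_{N_1}$ (resp.\ $X_{N_2}$) valid for all $m$ simultaneously — the input sub-linear bound is stated with that common $X_{N_1}$, exactly as written, and no uniformity issue arises; one simply feeds the common $X_{N_1}$ into the inductive step. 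The remaining bookkeeping — verifying that the scale constraints $N_2\in[N_1^C,e^{N_1^{c/2}}]$, $N_3\in[N_2^C,e^{N_1^{c}}]$ absorb the extra factors of $2$ in the decay constants, and that $\vartheta_1,\vartheta_2$ are unchanged — is identical to Theorem \ref{thmu} and I would simply remark that the proof goes through mutatis mutandis.
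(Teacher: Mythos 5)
Your proposal is correct and follows essentially the same route as the paper: apply Theorem \ref{thmu} to each $\tilde{A}^m$, take the union of the exceptional sets, and use low complexity \eqref{comu} together with \eqref{gcom} to absorb the $e^{N_3^a}$ factor. You are in fact a bit more careful than the paper's own write-up — both in explicitly running Theorem \ref{thmu} with $\varepsilon/2$ to leave room for the absorption (the paper writes this as a chain of inequalities that only holds after the same implicit relabeling), and in noting that the common $X_{N_1}$ supplied by property $\tilde P$ is what makes the sublinear-bound input uniform in $m$.
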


\begin{remark}\label{redec81}
\begin{enumerate}

\item    Theorem \ref{thmu} improves the parameters from $(\mu,\zeta,c_2)$ to $$(\sigma,\frac{\sigma-1}{b_i}\delta+\frac{\delta^2}{b_i}-\varepsilon,c_2-N_1^{-\vartheta_1}-N_{3}^{-\vartheta_2}).$$
Theorem \ref{thmu} gives us opportunities to combine perturbative approaches with non-perturbative approaches. After establishing the property  P  for initial scales by non-perturbative methods, we can adapt the parameters to establish property  P with explicit  bounds for larger scales. See Theorems \ref{thmapp1},  \ref{thmapp1'} and \ref{coroapp1},  and Corollaries  \ref{coroapp10}, \ref{thmapp1'new} and \ref{coroapp1new} for examples.

  \item  Roughly speaking
   Theorem \ref{thmu} says that  under the assumption on the sublinear bound,  the large deviation theorem  at sizes   $N=N_1$ and $N=N_2$ will  ensure the large deviation theorem  at size $N=N_3$.

\end{enumerate}
\end{remark}

We are going to discuss the modulus of continuity of the integrated density of states (IDS).
  In order to make it as general as possible, we do not require the existence of the integrated density of states first. 
Let $E_1< E_2$ and define
\begin{equation}\label{ggids}
  k(x,E_1,E_2)=\limsup_{N\to\infty}\frac{1}{(2N+1)^d}\#\{\text{ eigenvalues of } R_{{[-N,N]^d}}A(x)R_{{[-N,N]^d}} \text{ in  } [E_1,E_2]\}.
\end{equation}
Fix $x\in \T^b$.
Assume  for any measurable set $\mathcal{S}\subset\T^b$, we have
\begin{equation}\label{gx}
  \limsup_{N\to\infty}\frac{1}{(2N+1)^d}\#\{n\in\Z^d: |n|\leq N,  f(n_1,n_2,\cdots,n_d,x)\in \mathcal{S}\}\leq {\rm Leb}(\mathcal{S}).
\end{equation}
For an operator $A(x)$   on $\ell^2(\Z^d)$,
denote by the energy dependent  Green's functions
\begin{equation}\label{ge}
  G_{\Lambda}(E,x)=(R_{\Lambda} (A(x)-E)R_{\Lambda})^{-1}.
\end{equation}
 Instead of $ G_{\Lambda}(E,x) $,  we will  write $G_{\Lambda}$, $G_{\Lambda}(E)$, or $G_{\Lambda}(x)$ when there is no ambiguity.
We will write
 $G_{\Lambda}(n,n^\prime)$, $G_{\Lambda}(E;n,n^\prime)$, $G_{\Lambda}(x;n,n^\prime)$, or $G_{\Lambda}(E,x;n,n^\prime)$  for the element of  matrices.


\begin{theorem}\label{thmholder}
Assume $A(x)$ is a Toeplitz  (operator) matrix on $\ell^2(\Z^d)$ with respect to $f$ in the sense of \eqref{gdec72}.
Let $  \zeta\in(0,1)$ and $0<\sigma<\tilde{\sigma}\leq 1$.
Assume  for any $E\in \R$, there exists a set $X_N\subset \T^b $ such that
\begin{equation*}
  {\rm Leb}(X_N)\leq e^{-N^{\zeta}}
\end{equation*}
and for any $x\notin X_N$ and  any $Q_N\in \mathcal{E}_{N}^0$,
\begin{eqnarray*}
  ||G_{Q_N}(E,x)|| &\leq & e^{N^{\sigma}} \\
  |G_{Q_N}(E,x;n,n^\prime)| &\leq& e^{-c|n-n^\prime|^{^{\tilde{\sigma}}}} \text{ for } |n-n^\prime|\geq \frac{N}{10},
\end{eqnarray*}
where $c>0$.
Assume \eqref{gx} holds for  some $x_0\in \T^b$.
Then  for any $\varepsilon>0$, we have
\begin{equation*}
 | k(x_0,E_1,E_2)|\leq e^{-|\log|E_1-E_2||^{\frac{\zeta}{\sigma}-\varepsilon}},
\end{equation*}
provided that $|E_1-E_2|$ is sufficiently small.
\end{theorem}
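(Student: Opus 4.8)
The plan is to relate the counting function $k(x_0,E_1,E_2)$ to the number of lattice points $n$ for which the finite-volume Green's function $G_{[-N,N]^d}(E,x_0)$ fails to be bounded, for a well-chosen energy $E$ and a well-chosen scale $N$ depending on $|E_1-E_2|$. First I would fix $E=\frac{E_1+E_2}{2}$ and set $\eta=|E_1-E_2|$; the scale will be chosen so that $e^{-N^\zeta}$ is comparable to the target bound, i.e.\ $N\sim |\log\eta|^{1/\sigma}$ up to the $\varepsilon$-loss, which forces $N^\sigma\sim|\log\eta|^{1-\varepsilon'}$ and hence $N^\zeta\sim |\log\eta|^{\zeta/\sigma-\varepsilon}$. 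The key analytic input is the standard observation that if $x_0\notin X_N$, then for every $Q_N\in\mathcal{E}_N^0$ (in particular the full cube $[-N,N]^d$), the resolvent $G_{Q_N}(E,x_0)=(R_{Q_N}(A(x_0)-E)R_{Q_N})^{-1}$ exists with norm at most $e^{N^\sigma}\ll \eta^{-1}$ once $\eta$ is small, so $R_{Q_N}A(x_0)R_{Q_N}$ has no eigenvalue within distance $e^{-N^\sigma}$ of $E$; since $[E_1,E_2]\subset (E-\eta, E+\eta)$ and $\eta< e^{-N^\sigma}$, the box $R_{Q_N}A(x_0)R_{Q_N}$ has \emph{no} eigenvalues in $[E_1,E_2]$ whenever its center $m$ satisfies $f(m,x_0)\notin X_N$.

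The main step is then a covering/counting argument on $[-N_{\mathrm{big}},N_{\mathrm{big}}]^d$ for $N_{\mathrm{big}}\to\infty$. I would tile $[-N_{\mathrm{big}},N_{\mathrm{big}}]^d$ by translates $m+[-N,N]^d$ along a sublattice of spacing $\sim N$, so there are $\sim (N_{\mathrm{big}}/N)^d$ boxes. For each box whose center $m$ has $f(m,x_0)\notin X_N$, the restricted operator contributes no eigenvalues in $[E_1,E_2]$; by a Schur-complement / interlacing bound (rank considerations for the boundary coupling of a box to its complement, together with the subexponential decay \eqref{GO} which makes the effective boundary rank at most $CN^{d-1}$ up to subexponential corrections — here one uses $c>0$ and $\tilde\sigma>0$ to truncate the tails), the number of eigenvalues of $R_{[-N_{\mathrm{big}},N_{\mathrm{big}}]^d}A(x_0)R_{[-N_{\mathrm{big}},N_{\mathrm{big}}]^d}$ in $[E_1,E_2]$ is bounded by $CN^{d-1}$ times the number of \emph{bad} boxes, plus $CN^{d-1}$ times the total number of boxes from the interface errors. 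By hypothesis \eqref{gx} applied to $\mathcal{S}=X_N$, the number of centers $m$ with $|m|\le N_{\mathrm{big}}$ and $f(m,x_0)\in X_N$ is at most $(2N_{\mathrm{big}}+1)^d({\rm Leb}(X_N)+o(1))\le (2N_{\mathrm{big}}+1)^d(e^{-N^\zeta}+o(1))$, so the number of bad boxes among the $\sim(N_{\mathrm{big}}/N)^d$ tiles is at most $C(N_{\mathrm{big}}/N)^d N^d e^{-N^\zeta}\cdot(1+o(1))$ after accounting for the sublattice spacing. Dividing by $(2N_{\mathrm{big}}+1)^d$ and letting $N_{\mathrm{big}}\to\infty$ gives
\begin{equation*}
 |k(x_0,E_1,E_2)|\le C N^{-1}\bigl(CN^{d}e^{-N^\zeta}+ \text{(interface term)}\bigr)\le e^{-\frac12 N^\zeta}
\end{equation*}
for $N$ large, where the interface term — coming from eigenvalues created purely by the artificial Dirichlet cuts — must be shown to be negligible; this is where one needs either a two-scale argument (first a coarse tiling to localize, then the decay estimate $|G_{Q_N}(E,x_0;n,n')|\le e^{-c|n-n'|^{\tilde\sigma}}$ to show the true eigenvector count near $[E_1,E_2]$ is genuinely controlled) or a direct spectral-shift/rank-one perturbation bound telescoping over the $O((N_{\mathrm{big}}/N)^d)$ cuts, each of rank $O(N^{d-1})$ up to subexponential error — so the total spurious count is $O((N_{\mathrm{big}}/N)^d N^{d-1})=O(N_{\mathrm{big}}^d/N)$, contributing $O(1/N)$ after normalization, which is absorbed into $e^{-\frac12 N^\zeta}$ as well. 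Finally, choosing $N=N(\eta)$ so that $N^\sigma\le |\log\eta|$ (feasible once $\eta$ is small) yields $\tfrac12 N^\zeta\ge \tfrac12|\log\eta|^{\zeta/\sigma}$, and a harmless adjustment of constants produces the stated exponent $|\log|E_1-E_2||^{\zeta/\sigma-\varepsilon}$.

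The part I expect to be the genuine obstacle is controlling the \textbf{interface/spurious-eigenvalue term}: passing from ``each good box has no eigenvalues in $[E_1,E_2]$'' to ``the global operator on $[-N_{\mathrm{big}},N_{\mathrm{big}}]^d$ has few eigenvalues in $[E_1,E_2]$'' is not literally additive, since cutting the lattice into boxes is a finite-rank perturbation only when the coupling is finite range; for the subexponentially decaying, non-self-adjoint operators here one must first truncate the off-diagonal tails at distance $\sim N^{\beta}$ (losing an error $e^{-cN^{\tilde\sigma\beta}}$ which is still negligible against the target) to make the boundary coupling genuinely finite rank $\le CN^{d-1+(d-1)\beta}$, and then verify that this perturbation does not move more than $CN^{d-1+(d-1)\beta}$ eigenvalues across $[E_1,E_2]$ — for which I would invoke the results from \cite{blmp2000,schcmp} cited in the introduction, which are precisely designed to turn a Green's-function large-deviation estimate into an IDS modulus-of-continuity bound, rather than reproving the eigenvalue-counting lemma from scratch. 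One must also check that the non-self-adjointness does not spoil the eigenvalue count: here $k$ counts eigenvalues (with algebraic multiplicity) in an interval of the real line, and the relevant bound is that the number of such eigenvalues is controlled by the rank of the perturbation plus the number of bad boxes, which holds for general (not necessarily normal) matrices via a determinant/resultant argument, and this is exactly the flavor of estimate developed for the Cartan-type lemmas earlier in the paper.
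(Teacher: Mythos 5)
Your approach is genuinely different from the paper's, and the key step you flag as the obstacle is not just hard -- as you have sketched it, it actually fails. You tile $[-N_{\mathrm{big}},N_{\mathrm{big}}]^d$ into $\sim(N_{\mathrm{big}}/N)^d$ disjoint boxes and propose to control the spurious eigenvalues produced by the artificial cuts via a rank bound of $O(N^{d-1})$ per cut. That gives a total spurious count $O(N_{\mathrm{big}}^d/N)$, hence $O(1/N)$ after dividing by $(2N_{\mathrm{big}}+1)^d$. You then assert this is ``absorbed into $e^{-\frac12 N^\zeta}$,'' but $1/N$ is \emph{vastly larger} than $e^{-\frac12 N^\zeta}$, so the interface term dominates the target bound and the argument collapses at exactly the point you yourself identify as the genuine obstacle. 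No amount of choosing $N$ as a function of $|E_1-E_2|$ fixes this, since both quantities are evaluated at the same $N$. In short, a tiling-plus-rank-per-cut scheme inherently cannot produce a bound smaller than $1/N$, while the theorem demands a stretched-exponential bound in $N$.

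The paper sidesteps the interface problem entirely. Instead of tiling, it keeps a \emph{single} region $\Lambda=[-N_1,N_1]^d$ and removes only a thin bad set $\bar\Lambda$ (the $\sim N$-neighborhood of those $n$ with $f^n(x_0)\in X_N$), so $|\bar\Lambda|\le C(d)N^{2d}(2N_1+1)^d e^{-N^\zeta}$ by \eqref{gx}. Every point of $\Lambda\setminus\bar\Lambda$ is covered by a good $N$-box, so the patching Lemma \ref{res1} -- which uses the \emph{decay} estimate $|G_{Q_N}(n,n')|\le e^{-c|n-n'|^{\tilde\sigma}}$, not just the norm bound -- yields a global resolvent bound $\|G_{\Lambda\setminus\bar\Lambda}(\tilde E,x_0)\|\le 8(2N+1)^d e^{N^\sigma}$ for all $\tilde E\in[E_1,E_2]$ (using $|E_1-E_2|\le e^{-N^{\sigma+\varepsilon}}$ as you also noted). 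Then, following \cite{blmp2000,schcmp}, one writes the eigenvalue equation for $H_\Lambda$ as a relation between $R_{\Lambda\setminus\bar\Lambda}\xi$ and a projection $P$ whose rank is at most $|\bar\Lambda|$, and the Bessel-type bound $M\le \frac{M}{2}+6|\bar\Lambda|$ gives $M\le C|\bar\Lambda|$. Dividing by $(2N_1+1)^d$ and letting $N_1\to\infty$ gives $k(x_0,E_1,E_2)\le C(d)N^{2d}e^{-N^\zeta}$ with no $1/N$ interface loss at all, because there is no tiling: there is only one ``boundary'' (between $\Lambda\setminus\bar\Lambda$ and $\bar\Lambda$), and its rank is $|\bar\Lambda|$, which is already stretched-exponentially small. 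Your scale choice $N\sim|\log|E_1-E_2||^{1/\sigma-\varepsilon}$ and the use of \eqref{gx} to count bad centers match the paper; it is the geometric/counting step -- replace ``disjoint tiling plus per-cut rank bound'' by ``single region, small bad set, patched Green's function plus one rank-$|\bar\Lambda|$ projection'' -- that you need to change.
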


The rest of this paper is organized as follows. Except for some statements in applications (Section 3), this paper is entirely self contained.
 We will introduce many applications to quasi-periodic operators   in Section 3.
 Sections  4, 5, 6, 7  are devoted to  prove Theorems \ref{thmmul}, \ref{thmmu2},  \ref{thmu},  \ref{thmunew} and \ref{thmholder}.
 We  will introduce the discrepancy  for  semi-algebraic sets  in Section 8. In Section 9, we will give  the proof for  all   the results in Section 3.
 
\section{Applications}\label{sapp}

Let $S$ be a Toeplitz  (operator) matrix on $\ell^2(\Z^d)$ with respect to $f$, namely,
\begin{equation}\label{gdec72s}
  S(x;n+k,n^\prime+k)= S(f^k(x);n,n^\prime),
\end{equation}
for any $n\in \Z^d, n^\prime\in \Z^d$ and $k\in \Z^d.$
Assume every element $S(z;n,n^\prime)$, $n,n^\prime \in\Z^d$,   is analytic in a strip $\{z:|\Im z|\leq \rho\}$ with $\rho>0$ and  satisfies for any $x\in\R$ and $n,n^\prime\in \Z^d$, 
\begin{equation}\label{GOnewnews}
|S(x;n,n')|\leq  K e^{-c_1|n-n'|},  K>0, c_1>0.
\end{equation}
Assume that  there exists $K_1>1$ such that for any $x\in\T^b$ and $z\in \{z\in\C^b:|\Im z|\leq \rho\}$ with $||x-z||\leq e^{ -\left(\log  (|n|+|n^\prime|+2)\right)^{K_1}}$, 
\begin{equation}\label{glc1news}
|A(x;n,n')-A(z;n,n')|\leq  K ||x-z||^\gamma.
\end{equation}

Assume for any $N>1$, $n,n^\prime\in\Z^d$ with $|n|\leq N$ and $|n^\prime|\leq N$, there exists a trigonometric polynomial  $\tilde{S}(x;n,n^\prime)$ of degree less than $ e^{(\log N)^{K_1}}$ such that 
\begin{equation}\label{glc1news1}
\sup_{x\in\T^b} |S(x;n,n')-\tilde{S}(x;n,n^\prime)|\leq  Ke^{-N^{2}}.
\end{equation}

Define a family of  operators $H(x)$  on $\ell^2(\Z^d)$:
\begin{equation}\label{ops}
  H(x)=\lambda^{-1}S+v(f(n,x))\delta_{nn^\prime},
\end{equation}
where  $v$ is an  analytic   function on $\T^b$.

 In this section,   we always assume 
 \begin{itemize}
 	\item $v$ is non-constant,
 \item 	$f$ is   a   frequency shift or skew-shift,
 \item except for subsection \ref{Sinter}, $S$ is   a Toeplitz  (operator) matrix on $\ell^2(\Z^d)$ with respect to $f$ and satisfies   \eqref{gdec72s}-\eqref{glc1news1}.
 \end{itemize}

 {\bf Example 4}:   
\begin{itemize}
	\item If $S$ is a long range operator, namely,  $S$ does not depend on $x$ and 
	$$S(n,n^\prime)\leq K e^{-c_1|n-n'|},n,n^\prime\in\Z^d,$$
	then  \eqref{GOnewnews}, \eqref{glc1news} and \eqref{glc1news1}  hold.
	\item Let $\phi_k(x)$, $k\in\Z$,  be a  trigonometric polynomial  on $\T^b$ of degree less than $e^{(\log (1+|k|))^{K_1}}$satsifying 
	\begin{equation*}
	\sup_{x\in\T^b}|\phi_k(x)|\leq K e^{-c_1|k|}.
	\end{equation*}
	Let
	\begin{equation*}
	S(x;n,n^\prime)=\phi_{n-n^\prime} (f(n,x)) +\overline{\phi_{n^\prime-n}(f(n^\prime, x))}.
	\end{equation*}
	Then \eqref{GOnewnews}, \eqref{glc1news} and \eqref{glc1news1}  hold. 
\end{itemize}

 \begin{remark}\label{rebound}
 	For $d\geq 2$, our settings \eqref{gdec72s}-\eqref{glc1news1} is the first time to allow every  entry of $S$ to depend on $x$, which beyonds  the long range operators. 
 	For $d=1$,  Bourgain \cite{bkick} studied the case in Example 4  under the assumption that   $\phi_k(x)$ is  a  trigonometric polynomial   of  degree  at most $N^C$. 
 \end{remark}

 We will apply Theorems \ref{thmmul}, \ref{thmmu2},  \ref{thmu} and \ref{thmholder}  to  operators  $$A(x)=H(x)=\lambda^{-1}S+v(f(n,x))\delta_{nn^\prime}.$$ In this section, the Green's functions always depend on energy $E$.
 See \eqref{ge}.
 
The IDS appearing in applications is always  existed, namely, the following limit
 \begin{equation*}
 k(x, E)=\lim_{N\to\infty}\frac{1}{(2N+1)^d}\#\{\text{ eigenvalues of } R_{{[-N,N]^d}}A(x)R_{{[-N,N]^d}} \text{ smaller than } E\},
 \end{equation*}
 converges to $k(E)$ for almost every $x$.   We  write  $k(E)$ for the IDS  when it exists.
 
 For the large deviation theorem,  $S$ is not necessarily self-adjoint. However, in order to  establish
 pure point spectrum property,  self-adjointness is necessary because of the energy elimination.

\subsection{Shifts: $d=1$, arbitrary $b$}
Denote by $\Delta$ the discrete Laplacian on $\ell^2(\Z)$, that is, for $\{u(n)\}\in \ell^2(\Z)$,
\begin{equation*}
(\Delta u)(n)=\sum_{|n-n^\prime|=1} u(n^\prime).
\end{equation*}
We say that $\omega=(\omega_1,\omega_2,\cdots,\omega_b)$ satisfies Diophantine condition ${\rm DC}(\kappa,\tau)$, if
\begin{equation}\label{gdc}
||k\omega||\geq \frac{\tau}{|k|^{\kappa}},k\in \Z^{b}\backslash \{(0,0,\cdots,0)\}.
\end{equation}
By the Dirichlet principle, one has $\kappa\geq b$.
When $\kappa>b$, $\cup_{\tau>0}{\rm DC}(\kappa,\tau)$ has full Lebesgue measure.

We say that $\omega\in \R$ satisfies strong Diophantine conditions   if there exist $\kappa>1$ and $\tau>0$ such that
\begin{equation}\label{gsdc}
||k\omega||\geq \frac{\tau}{k(1+\log k)^\kappa} \text{ for all } k\in \N.
\end{equation}
It is easy to see that almost every $\omega $ satisfies strong Diophantine conditions.

Let
\begin{equation*}
  f^n(x)=x+n\omega=(x_1+n\omega_1,x_2+n\omega_2,\cdots,x_b+ n\omega_b)\mod\Z^b,
\end{equation*}
where $x=(x_1,x_2,\cdots,x_b)\in \T^b$, $n\in \Z$ and $\omega=(\omega_1,\omega_2,\cdots,\omega_b)\in \R^b$.

Let $H(x)$ on $\ell^2(\Z)$ be given by
\begin{equation}\label{opapp1}
  H(x)=\Delta+ v(f^n(x))=\Delta+ v(x_1+n\omega_1,x_2+n\omega_2,\cdots,x_b+n\omega_b)\delta_{nn'},
\end{equation}
where $n,n^\prime\in \Z$.

  Let
\begin{equation}\label{G.transfer}
A_{k}^E(x)=\prod_{j=k-1}^{0 }A^E(x+j\omega)=A^E(x+(k-1)\omega)A^E(x+(k-2)\omega)\cdots A^E(x)
\end{equation}
and
\begin{equation}\label{G.transfer1}
A_{-k}^E(x)=(A_{k}^{E}(x-k\omega))^{-1}
\end{equation}
for $k\geq 1$,
where $A^E(x)=\left(
             \begin{array}{cc}
               E- v(x) & -1 \\
               1& 0\\
             \end{array}
           \right)
$.
$A_{k}^E$  is called the (k-step) transfer matrix.
  The Lyapunov exponent
is given  by
 \begin{equation}\label{G21}
    L(E)=\lim_{k\rightarrow\infty} \frac{1}{k}\int_{\T^b} \ln \| A_k^E(x)\|dx.
 \end{equation}

\begin{theorem}\label{thmapp1}
	Let  $\omega\in {\rm DC}(\kappa,\tau)$ and $1-\frac{1}{b\kappa}<\sigma<1$. Let  $H(x)$ be given by \eqref{opapp1}.  
Assume the Lyapunov exponent  $L(E)$ is positive.
Then for any $\varepsilon>0$
and large $N$, there exists a subset $X_N\subset \T^b$ such that
\begin{equation*}
 {\rm Leb} (X_N)\leq e^{-N^{\frac{\sigma-1}{b^2\kappa}+\frac{1}{b^3\kappa^2}-\varepsilon}},
\end{equation*}
and for any $x  \notin X_N$, we have
\begin{equation*}
  ||  G_{[-N,N]}(E,x)||\leq e^{N^{\sigma}},
\end{equation*}
and
\begin{equation*}
  |G_{[N,-N]}(E,x;n,n^\prime)|\leq  e^{-(L(E)-\varepsilon)|n-n^\prime|} \text{ for } |n-n^\prime|\geq N/10.
\end{equation*}
\end{theorem}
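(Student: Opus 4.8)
The plan is to deduce Theorem \ref{thmapp1} from the general machinery of Theorems \ref{thmmul}, \ref{thmmu2} and \ref{thmu}, specialized to $A(x)=H(x)-E=\Delta+v(f^n(x))\delta_{nn'}-E$ on $\ell^2(\Z)$ with $d=1$, $k=1$, $b_1=b$, and $\tilde{\sigma}=1$. First I would record that $A$ satisfies \eqref{GOnew} and \eqref{glc1}: the off-diagonal part is the Laplacian (finite range, so exponential decay is trivial), and since $v$ is analytic on $\T^b$ it is Lipschitz, so \eqref{glc1} holds with $\gamma=1$ and any $K_1>1$; the Toeplitz relation \eqref{gdec72} with respect to the shift $f^n(x)=x+n\omega$ is immediate. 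The off-diagonal constant $c_1$ may be taken arbitrarily large (the Laplacian has finitely supported entries), so the constraint $c_2<(1-5^{-\tilde\sigma})c_1$ is harmless and we may take $c_2$ as close to $L(E)-\varepsilon$ as we like — more precisely, we run the induction with $c_2$ slightly larger than $L(E)-\varepsilon$ and absorb the $N_1^{-\vartheta_1}+N_3^{-\vartheta_2}$ losses of Theorem \ref{thmu} into the $\varepsilon$.

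The core of the argument is to \emph{verify the initial scale}, i.e.\ to check that the Green's function of $H(x)-E$ satisfies property $P$ with parameters $(\mu,\zeta,c_2)$ at two initial scales $N_1$ and $N_2$, for a suitable choice of $\mu<1$ and $\zeta\in(0,1)$. This is exactly the one-dimensional positive-Lyapunov-exponent large deviation theorem: by the Bourgain–Goldstein approach (avalanche principle / subharmonicity of $\frac1k\log\|A_k^E\|$), for $\omega\in{\rm DC}(\kappa,\tau)$ and $L(E)>0$ one has, outside a set $X_N$ of measure $\le e^{-N^{\zeta}}$ for some $\zeta=\zeta(\kappa)>0$, both $\|G_{[-N,N]}(E,x)\|\le e^{N^{\mu}}$ and off-diagonal decay $|G_{[-N,N]}(E,x;n,n')|\le e^{-(L(E)-\varepsilon)|n-n'|}$ for $|n-n'|\ge N/10$; applying this at $N=N_1$ and $N=N_2$ gives property $P$ at both scales. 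I would cite the standard references (\cite{bg20,bgs,bbook,gs08}) for this input.

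Next, I would check the \emph{sublinear (arithmetic) hypothesis} \eqref{gdec5}: one needs $\#\{n:|n|\le L,\ x+n\omega\in X_{N_1}\bmod\Z^b\}\le L^{1-\delta}$ for all $L\in[N_3^{\delta-\iota},N_3]$ and all $x$. Here $X_{N_1}$ is a semi-algebraic (or sub-level) set of small measure $e^{-N_1^{\zeta}}$ and bounded complexity (degree polynomial in $N_1$, after the polynomial truncation allowed by $K_1$); the count of shift-orbit points landing in such a set is controlled by the discrepancy estimates for $\{n\omega\}$ under ${\rm DC}(\kappa,\tau)$, which is the content of Section 8. Choosing $\delta$ slightly below $\frac{1}{b\kappa}$ makes \eqref{gdec5} hold, and this is the origin of the restriction $1-\frac{1}{b\kappa}<\sigma<1$: we need $1-\delta<\sigma$. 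With all hypotheses of Theorem \ref{thmu} in place, its conclusion gives, at scale $N_3$, a set $X_{N_3}$ with ${\rm Leb}(X_{N_3})\le e^{-N_3^{\frac{\sigma-1}{b}\delta+\frac{\delta^2}{b}-\varepsilon'}}$ together with $\|G_{[-N_3,N_3]}(E,x)\|\le e^{N_3^{\sigma}}$ and $|G_{[-N_3,N_3]}(E,x;n,n')|\le e^{-(c_2-N_1^{-\vartheta_1}-N_3^{-\vartheta_2})|n-n'|}$; renaming $N_3\rightsquigarrow N$, optimizing $\delta\uparrow\frac{1}{b\kappa}$ so that $\frac{\sigma-1}{b}\delta+\frac{\delta^2}{b}\to\frac{\sigma-1}{b^2\kappa}+\frac{1}{b^3\kappa^2}$, and replacing $\varepsilon'$ and the $c_2$-loss by $\varepsilon$ yields precisely the stated bounds. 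The main obstacle is not any single step but the \emph{bookkeeping of parameters}: one must choose $\mu,\zeta,\delta,\iota,C$ and the two initial scales $N_1,N_2$ consistently with the scale constraints $N_2\in[N_1^C,e^{N_1^{c/2}}]$, $N_3\in[N_2^C,e^{N_1^{c}}]$, and the exponent inequalities buried in Theorems \ref{thmmu2} and \ref{thmu}, while keeping $\delta$ close enough to $\frac{1}{b\kappa}$ that the final exponent matches the claim; I would handle this by fixing $\delta=\frac{1}{b\kappa}-\varepsilon_0$ first, then reading off the admissible $\sigma$, $\mu$, $\zeta$, and finally letting $\varepsilon_0\to0$.
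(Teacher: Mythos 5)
Your proposal follows the same route as the paper's own proof: (i) establish the initial-scale large deviation theorem (the paper invokes Lemma~\ref{leinitial}, i.e.\ Bourgain's Prop.~7.19, which is exactly the avalanche-principle/subharmonicity input you describe), (ii) render $X_{N_1}$ semi-algebraic of subpolynomial degree and combine Lemma~\ref{ledissh} with Theorem~\ref{thmdis} to get the sublinear bound with $\delta$ arbitrarily close to $\tfrac{1}{b\kappa}$, and (iii) feed this into Theorem~\ref{thmu} with $k=1$, $b_1=b$, $\tilde\sigma=1$, reading off the exponent $\tfrac{\sigma-1}{b^2\kappa}+\tfrac{1}{b^3\kappa^2}-\varepsilon$. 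The parameter bookkeeping you outline is consistent with the paper's, so the proposal is correct and takes essentially the same approach.
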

\begin{theorem}\label{thmapp1'}
Let $\omega\in {\rm DC}(\kappa,\tau)$ and $H(x)$ be given by \eqref{opapp1}.
Suppose the Lyapunov exponent  $L(E)>0$ for every  $E$ in  an interval $I$.
Then for any $\varepsilon>0$,  
\begin{equation*}
  |k(E_1)-k(E_2)|\leq  e^{-\left(\log\frac{1}{ |E_1-E_2|}\right)^{\frac{1}{b^3\kappa^2}-\varepsilon}}
\end{equation*}
provided that $|E_1-E_2|$ is sufficiently small and $E_1,E_2\in I$.
\end{theorem}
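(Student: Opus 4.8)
The plan is to deduce the modulus of continuity of the IDS from Theorem \ref{thmholder} together with the large deviation estimate furnished by Theorem \ref{thmapp1}. First I would verify that the operator $H(x)=\Delta+v(f^n(x))\delta_{nn'}$ is of the form covered by Theorem \ref{thmholder}: it is a Toeplitz operator with respect to the shift $f^n(x)=x+n\omega$ in the sense of \eqref{gdec72} (since $\Delta$ is translation invariant and the potential is $v(f^n(x))$), the off-diagonal decay \eqref{GOnew} holds trivially with $\tilde\sigma=1$ (indeed $\Delta$ has finite range), and the regularity \eqref{glc1} holds because $v$ is analytic. The shift also satisfies the equidistribution hypothesis \eqref{gx} for every $x_0$ when $\omega$ is rationally independent (which is implied by $\omega\in{\rm DC}(\kappa,\tau)$); this is just Weyl equidistribution of $\{x_0+n\omega\}$ on $\T^b$.

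The next step is to produce, for each energy $E\in I$ and each large $N$, the sets $X_N$ and the exponents required by Theorem \ref{thmholder}. This is exactly the content of Theorem \ref{thmapp1}: for any $\varepsilon'>0$ one gets $X_N$ with ${\rm Leb}(X_N)\le e^{-N^{\zeta}}$ where $\zeta=\frac{\sigma-1}{b^2\kappa}+\frac{1}{b^3\kappa^2}-\varepsilon'$, and on the complement one has the operator-norm bound $\|G_{[-N,N]}(E,x)\|\le e^{N^\sigma}$ together with the off-diagonal decay $|G_{[-N,N]}(E,x;n,n')|\le e^{-(L(E)-\varepsilon')|n-n'|}$ for $|n-n'|\ge N/10$. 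Here $\sigma\in(1-\frac{1}{b\kappa},1)$ is a free parameter that we will optimize. Note $L(E)>0$ on $I$ is assumed, so the decay constant $c=L(E)-\varepsilon'$ is genuinely positive; a mild point to address is uniformity of $L(E)$ over $I$, but since $L$ is lower semicontinuous and positive on the (closed, or closure of the) interval, one can pass to a uniform positive lower bound on compact subintervals, which suffices since $E_1,E_2$ are close.

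Plugging these inputs into Theorem \ref{thmholder} with $b=1$-lattice (i.e.\ $d=1$) yields
\begin{equation*}
 |k(x_0,E_1,E_2)|\le e^{-|\log|E_1-E_2||^{\frac{\zeta}{\sigma}-\varepsilon''}}
\end{equation*}
for $|E_1-E_2|$ small, with $\frac{\zeta}{\sigma}=\frac{1}{\sigma}\left(\frac{\sigma-1}{b^2\kappa}+\frac{1}{b^3\kappa^2}-\varepsilon'\right)$. Since $k(x_0,E_1,E_2)=|k(E_1)-k(E_2)|$ once the IDS exists (which it does for the shift, by unique ergodicity), it remains to choose $\sigma$ to maximize $\frac{\zeta}{\sigma}$. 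Writing $g(\sigma)=\frac{1}{\sigma}\big(\frac{\sigma-1}{b^2\kappa}+\frac{1}{b^3\kappa^2}\big)=\frac{1}{b^2\kappa}-\frac{1}{\sigma}\big(\frac{1}{b^2\kappa}-\frac{1}{b^3\kappa^2}\big)=\frac{1}{b^2\kappa}-\frac{1}{\sigma}\cdot\frac{b\kappa-1}{b^3\kappa^2}$, one sees $g$ is increasing in $\sigma$ on $(0,1)$ (the coefficient $\frac{b\kappa-1}{b^3\kappa^2}$ is positive since $b\kappa\ge b\ge 1$), so we push $\sigma\uparrow 1$; the supremum is $g(1)=\frac{1}{b^2\kappa}-\frac{b\kappa-1}{b^3\kappa^2}=\frac{1}{b^3\kappa^2}$. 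Absorbing the losses $\varepsilon',\varepsilon''$ into a single $\varepsilon$, this gives the claimed bound $|k(E_1)-k(E_2)|\le e^{-\left(\log\frac{1}{|E_1-E_2|}\right)^{\frac{1}{b^3\kappa^2}-\varepsilon}}$.

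The main obstacle is not any one deep step — all the heavy lifting is in Theorems \ref{thmapp1} and \ref{thmholder} — but rather the bookkeeping: choosing $\sigma$ close enough to $1$ and $\varepsilon'$ small enough that the exponent $\frac{\zeta}{\sigma}-\varepsilon''$ exceeds $\frac{1}{b^3\kappa^2}-\varepsilon$, while simultaneously respecting the constraint $\sigma>1-\frac{1}{b\kappa}$ from Theorem \ref{thmapp1} and the constraint $0<\sigma<\tilde\sigma=1$ from Theorem \ref{thmholder}, and verifying that the hypothesis ${\rm Leb}(X_N)\le e^{-N^\zeta}$ of Theorem \ref{thmholder} is met with the $\zeta$ actually delivered by Theorem \ref{thmapp1} (it is, by construction). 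A secondary technical point worth a sentence is the reduction from $k(x_0,E_1,E_2)$ to $|k(E_1)-k(E_2)|$: this uses that for the shift the IDS exists and is $x$-independent, so $k(x_0,E_1,E_2)\ge k(E_2)-k(E_1)$ with equality in the limit.
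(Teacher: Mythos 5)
Your proposal is correct and follows essentially the same route the paper takes: the paper's entire proof is the one-line remark ``Applying Theorem \ref{thmholder} with $\sigma=1-\varepsilon$, Theorem \ref{thmapp1'} follows from Theorem \ref{thmapp1},'' which is precisely your reduction with the optimization $\sigma\uparrow 1$ made explicit. Your elaboration of the algebra showing $\frac{\zeta}{\sigma}\uparrow\frac{1}{b^3\kappa^2}$ and your remarks on uniformity of $L(E)$ and on \eqref{gx} are just unpackings of what the paper leaves implicit.
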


\begin{theorem}\label{coroapp1}
Let $H(x)$ be given by \eqref{opapp1}.  Then the following statement is true for almost every $\omega$.
Assume  the Lyapunov exponent  $L(E)>0$ for every $E$ in  an interval $I$. Then for any $\varepsilon>0$,
\begin{equation*}
  |k(E_1)-k(E_2)|\leq  e^{-\left(\log\frac{1}{ |E_1-E_2|}\right)^{\frac{1}{b^3}-\varepsilon}},
\end{equation*}
provided that $|E_1-E_2|$ is sufficiently small and  $E_1,E_2\in I$.
\end{theorem}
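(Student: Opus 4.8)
\emph{Proof proposal.}~ The plan is to repeat the argument behind Theorem~\ref{thmapp1'} almost unchanged, but to feed in, at the one arithmetic step, the sharp logarithmic Diophantine condition that holds for almost every $\omega$ in place of a fixed ${\rm DC}(\kappa,\tau)$, and then to optimize the resulting exponents. The skeleton of that argument is: apply Theorem~\ref{thmholder} to the energy-dependent operator $A(x)=H(x)-E$, so that its Green's functions are the $G_{\Lambda}(E,x)$ of \eqref{ge}. Since the shift $f^{n}(x)=x+n\omega$ is uniquely ergodic on $\T^{b}$, the averaging property \eqref{gx} holds for \emph{every} $x_{0}$; and because the integrated density of states is known to exist for these self-adjoint operators, the quantity $k(x_{0},E_{1},E_{2})$ of \eqref{ggids} equals $k(E_{2})-k(E_{1})$. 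Hence it is enough to verify the hypothesis of Theorem~\ref{thmholder}: for every $E$ near $I$ and every large $N$ there is a set $X_{N}\subset\T^{b}$ with ${\rm Leb}(X_{N})\le e^{-N^{\zeta}}$ outside which $\|G_{Q_{N}}(E,x)\|\le e^{N^{\sigma}}$ and $|G_{Q_{N}}(E,x;n,n')|\le e^{-(L(E)-\varepsilon)|n-n'|}$ for $|n-n'|\ge N/10$, with $\sigma$ as close to $1$ and $\zeta$ as close to $1/b^{3}$ as we wish; Theorem~\ref{thmholder} then returns the modulus $e^{-|\log|E_{1}-E_{2}||^{\zeta/\sigma-\varepsilon}}$, which is of the asserted form.

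The large-deviation estimate is built by bootstrapping in $N$. The initial scale is handed over by the one-dimensional theory: for $d=1$ the Green's function on an interval is a ratio formed from the transfer matrices \eqref{G.transfer}, so positivity of $L(E)$ on $I$, together with the avalanche-principle and subharmonicity bounds for $\frac1N\log\|A_{N}^{E}(x)\|$, yields property $P$ at a fixed large scale $N_{1}$ with \emph{some} exponent $\zeta_{0}>0$. One then iterates Theorem~\ref{thmu} (whose measure input is supplied by Theorem~\ref{thmmu2}, itself resting on the non-self-adjoint Cartan-type estimates established here and the subharmonic-function machinery of \cite{bgs,gs08,jls}) along a tower of scales, each step producing property $P$ at a new scale from its validity at the two preceding ones and driving $\sigma$ up toward $1$. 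The only ingredient of Theorem~\ref{thmu} that carries arithmetic information is the sublinear bound \eqref{gdec5}, i.e.\ $\#\{\,n\in\Z:|n|\le L,\ x+n\omega\in X_{N_{1}}\bmod\Z^{b}\,\}\le L^{1-\delta}$ for $L\in[N_{3}^{\delta-\iota},N_{3}]$ and all $x$.

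So it remains to secure \eqref{gdec5} with $\delta$ close to $1/b$. By analyticity of $v$, $X_{N_{1}}$ is contained in a semi-algebraic set of degree polynomial in $N_{1}$ and measure $\le 2e^{-N_{1}^{\zeta}}$; the scale relations make $N_{1}$ of order $(\log N_{3})^{1/c}=(\log L)^{O(1)}$, so this degree is poly-logarithmic in $L$. The discrepancy estimates for semi-algebraic sets of Section~8 --- which exploit the bounded number of connected components cut out on lines, together with the Diophantine equidistribution of the orbit $\{x+n\omega\}_{|n|\le L}$ --- then reduce \eqref{gdec5} to a suitable lower bound on $\|k\cdot\omega\|$. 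For almost every $\omega\in\R^{b}$ the required bound holds: one has $\|k\cdot\omega\|\ge \tau\,r(k)^{-1}(\log(2+r(k)))^{-b-\varepsilon}$ for all $k\in\Z^{b}\setminus\{0\}$, with $r(k)=\prod_{i}\max(1,|k_{i}|)$ and $\tau=\tau(\omega)>0$, since the Borel--Cantelli series $\sum_{k\ne0}r(k)^{-1}(\log(2+r(k)))^{-b-\varepsilon}$ converges. This loses only a logarithmic, not a polynomial, factor against the Dirichlet bound, so it costs nothing in the power of $L$, and \eqref{gdec5} holds with $\delta$ arbitrarily close to $1/b$ --- whereas a fixed ${\rm DC}(\kappa,\tau)$ yields only $\delta=1/(b\kappa)\le 1/b^{2}$, which is exactly why Theorem~\ref{thmapp1'} stops at $1/(b^{3}\kappa^{2})$. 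Pushing $\delta\uparrow1/b$ and $\sigma\uparrow1$ (subject to $1-\delta<\sigma<1$) in the conclusion of Theorem~\ref{thmu} gives property $P$ at all large scales with $\zeta=\frac{\sigma-1}{b}\delta+\frac{\delta^{2}}{b}-\varepsilon$ arbitrarily close to $1/b^{3}$, which completes the verification.

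The principal difficulty is this last step: extracting the \emph{near-optimal} exponent $\delta\approx1/b$ in \eqref{gdec5} for a.e.\ $\omega$ --- that is, the semi-algebraic discrepancy bound under the sharp logarithmic Diophantine condition --- while keeping the scale constraints of Theorem~\ref{thmu} ($N_{2}\in[N_{1}^{C},e^{N_{1}^{c/2}}]$, $N_{3}\in[N_{2}^{C},e^{N_{1}^{c}}]$, $L\in[N_{3}^{\delta-\iota},N_{3}]$) mutually consistent as $\delta\to1/b$ and $\sigma\to1$; the remainder is assembling tools already in hand.
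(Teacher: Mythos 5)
Your overall architecture is right and matches the paper: reduce to Theorem~\ref{thmholder}, supply the large deviation estimate by iterating Theorem~\ref{thmu}/\ref{thmmu2} from the initial scale given by positivity of $L(E)$, and observe that the only place the arithmetic of $\omega$ enters is the sublinear bound \eqref{gdec5}, where a fixed ${\rm DC}(\kappa,\tau)$ forces $\delta=\tfrac{1}{b\kappa}$ while the a.e.\ statement should allow $\delta\to\tfrac{1}{b}$. That identification of where the $\tfrac{1}{b^3}$ (versus $\tfrac{1}{b^3\kappa^2}$) comes from is exactly the right diagnosis.

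The gap is in how you propose to obtain $\delta\to\tfrac1b$. Theorem~\ref{thmdis} does not ``reduce \eqref{gdec5} to a suitable lower bound on $\|k\cdot\omega\|$'' --- it reduces it to an upper bound on the \emph{discrepancy} $D_L$ of the Kronecker orbit $\{n\omega\}_{|n|\le L}$; the Diophantine lemmas of Section~8 are then a separate layer that converts arithmetic hypotheses into discrepancy bounds. You replace that layer with a Borel--Cantelli argument producing, for a.e.\ $\omega$, the multiplicative condition $\|k\cdot\omega\|\ge\tau\,r(k)^{-1}\bigl(\log(2+r(k))\bigr)^{-b-\varepsilon}$, and then assert that this ``costs nothing in the power of $L$''. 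That assertion is precisely the step that is missing. Feeding your condition into the Erd\H{o}s--Tur\'an inequality gives
\begin{equation*}
D_L \le C\left(\frac1H+\frac1L\sum_{0<|k|\le H}\frac{1}{r(k)}\min\!\left(L,\frac{1}{\|k\cdot\omega\|}\right)\right)\le C\left(\frac1H+\frac{H^b(\log H)^{b+\varepsilon}}{L}\right),
\end{equation*}
and optimizing $H$ produces $D_L\lesssim L^{-1/(b+1)+o(1)}$, which is \emph{worse} than the $L^{-1/\kappa}$ of Lemma~\ref{ledissh} for $\kappa$ near $b$ and nowhere near the $L^{-1+o(1)}$ needed for $\delta\to\tfrac1b$. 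The pointwise Diophantine route, without a substantially sharper argument, simply does not reach the required discrepancy estimate; a near-optimal logarithmic lower bound on $\|k\cdot\omega\|$ is necessary but not sufficient.

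What the paper uses instead (and all it changes relative to Theorem~\ref{thmapp1'}) is Lemma~\ref{ledissh1}, Schmidt's metric theorem \cite{s64}: for a.e.\ $\omega\in\R^b$, $D_L(\omega)\le C(\omega)L^{-1}(\log L)^{b+2}$. This is a genuinely metric result --- proved by second-moment methods, not deduced from a pointwise Diophantine condition --- and it is exactly what, plugged into Theorem~\ref{thmdis}, gives $\delta$ arbitrarily close to $\tfrac1b$ and hence $\zeta\to\tfrac{1}{b^3}$. To repair your argument, replace the Borel--Cantelli/Diophantine detour by a direct appeal to Lemma~\ref{ledissh1} (or prove a statement of that strength), after which the rest of your write-up goes through as in the proof of Theorem~\ref{thmapp1'}.
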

\begin{remark}\label{reb}
	Under the same assumptions, the large deviation theorem and
	the modulus of  continuity  of the IDS were shown in  \cite{gs01} (also see \cite{bbook}).  When $b=2$,   a better bound $b=1/3$ was obtained in \cite{gs01}.
	However, there are no explicit bounds in \cite{gs01,bbook} when $b\geq 3$.
\end{remark}
Putting  a coupling constant $\lambda^{-1}$ in front of the Laplacian $\Delta$, the operator given by \eqref{opapp1}  becomes
\begin{equation}\label{opapp1new}
  H(x)=\lambda^{-1}\Delta+ v(x+n\omega)\delta_{nn^\prime}.
\end{equation}
For large $\lambda$ only depending on the potential $v$,  the Lyapunov exponent $L(E)$ is  positive for every $E$ \cite{bjam}.
Therefore, we have the  following three corollaries
\begin{corollary}\label{coroapp10}
Assume $\omega\in {\rm DC}(\kappa,\tau)$ and $1-\frac{1}{b\kappa}<\sigma<1$.
Let $H(x)$ be given by \eqref{opapp1new}.
Then there exists $\lambda_0=\lambda_0(v)$ such that for any $\varepsilon>0$, $\lambda>\lambda_0$
and large $N$, there exists  $X_N\subset \T^b$ such that
\begin{equation}\label{gdec89}
 {\rm Leb} (X_N)\leq e^{-N^{\frac{\sigma-1}{b^2\kappa}+\frac{1}{b^3\kappa^2}-\varepsilon}},
\end{equation}
and for any $x  \notin X_N$, we have
\begin{equation*}
  ||  G_{[-N,N]}(E,x)||\leq e^{N^{\sigma}},
\end{equation*}
and
\begin{equation*}
  |G_{[N,-N]}(E,x;n,n^\prime)|\leq  e^{-(L(E)-\varepsilon)|n-n^\prime|} \text{ for } |n-n^\prime|\geq N/10.
\end{equation*}
\end{corollary}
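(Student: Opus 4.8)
The plan is to obtain Corollary~\ref{coroapp10} directly from Theorem~\ref{thmapp1} combined with the large-coupling positivity of the Lyapunov exponent. First I would invoke \cite{bjam}: there exists $\lambda_0=\lambda_0(v)$, depending only on the analytic potential $v$, such that for every $\lambda>\lambda_0$ the Lyapunov exponent $L(E)$ of the family \eqref{opapp1new} is positive for every $E\in\R$. This is precisely the input needed for Theorem~\ref{thmapp1}.

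Next I would check that $H(x)=\lambda^{-1}\Delta+v(x+n\omega)\delta_{nn'}$ stays inside the framework of Section~\ref{sapp}. The off-diagonal part $\lambda^{-1}\Delta$ is independent of $x$, is nearest-neighbor, and satisfies $|\lambda^{-1}\Delta(n,n')|\le\lambda^{-1}e^{-c_1|n-n'|}$ trivially for any $c_1>0$, so \eqref{GOnewnews}, \eqref{glc1news} and \eqref{glc1news1} hold with $S=\lambda^{-1}\Delta$. Equivalently, the proof of Theorem~\ref{thmapp1} uses $\Delta$ only through the abstract inductive estimates of Theorems~\ref{thmmul}, \ref{thmmu2} and \ref{thmu} (which see the Laplacian only via the decay rate $c_1$, here any positive number) and through the transfer-matrix identification of the Green's-function decay rate with $L(E)$; both inputs are insensitive to the scalar $\lambda^{-1}$, so the proof carries over verbatim once $\lambda>\lambda_0$.

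Then, for each fixed $E$, positivity of $L(E)$ lets me apply Theorem~\ref{thmapp1} with the stated range $1-\frac{1}{b\kappa}<\sigma<1$: for every $\varepsilon>0$ and large $N$ this produces a set $X_N=X_N(E)\subset\T^b$ with $\mathrm{Leb}(X_N)\le e^{-N^{\frac{\sigma-1}{b^2\kappa}+\frac{1}{b^3\kappa^2}-\varepsilon}}$ such that $\|G_{[-N,N]}(E,x)\|\le e^{N^\sigma}$ and $|G_{[N,-N]}(E,x;n,n')|\le e^{-(L(E)-\varepsilon)|n-n'|}$ for $|n-n'|\ge N/10$, whenever $x\notin X_N$. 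These are exactly the assertions of Corollary~\ref{coroapp10}. The only point needing attention --- and it is bookkeeping rather than a real obstacle --- is that inserting $\lambda^{-1}$ changes neither the admissible range of $\sigma$ nor the measure exponent; this is immediate, since the Diophantine condition on $\omega$ and the inductive machinery do not involve $\lambda$, and $\lambda$ enters the final bounds only through $L(E)$, which \cite{bjam} keeps positive uniformly in $E$ for $\lambda>\lambda_0(v)$.
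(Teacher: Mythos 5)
Your proposal is correct and mirrors the paper's own (very brief) argument: the paper simply notes, via \cite{bjam}, that for $\lambda>\lambda_0(v)$ the Lyapunov exponent of \eqref{opapp1new} is positive for every $E$, and then reads off the conclusion from Theorem \ref{thmapp1}. The only nit is your intermediate inequality $|\lambda^{-1}\Delta(n,n')|\le\lambda^{-1}e^{-c_1|n-n'|}$, which as written fails for $|n-n'|=1$ and any $c_1>0$; it should read $\le K e^{-c_1|n-n'|}$ for a suitable $K$, and in any case Theorem \ref{thmapp1} is stated for the Laplacian specifically, so the rescaling $\lambda^{-1}\Delta+v-E=\lambda^{-1}(\Delta+\lambda v-\lambda E)$ is the cleanest way to phrase why the $\lambda^{-1}$ factor is harmless.
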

\begin{corollary}\label{thmapp1'new}
Let $\omega\in {\rm DC}(\kappa,\tau)$  and $H(x)$ be given by \eqref{opapp1new}.
Then there exists $\lambda_0=\lambda_0(v)$ such that for any $\varepsilon>0$ and $\lambda>\lambda_0$,  
\begin{equation*}
  |k(E_1)-k(E_2)|\leq  e^{-\left(\log\frac{1}{ |E_1-E_2|}\right)^{\frac{1}{b^3\kappa^2}-\varepsilon}}
\end{equation*}
provided that $|E_1-E_2|$ is sufficiently small.
\end{corollary}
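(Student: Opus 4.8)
\emph{Proof proposal.} The plan is to deduce the corollary by combining the large deviation estimate of Corollary~\ref{coroapp10} with the integrated density of states mechanism of Theorem~\ref{thmholder}; the only new ingredient needed is that a large coupling constant forces a uniformly positive Lyapunov exponent.

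First I would record the effect of the coupling constant. By \cite{bjam} there is $\lambda_0=\lambda_0(v)$ such that for every $\lambda>\lambda_0$ the Lyapunov exponent $L(E)$ of the operator \eqref{opapp1new} is positive for all $E\in\R$. Since $H(x)=\lambda^{-1}\Delta+v(x+n\omega)\delta_{nn'}$ is bounded, $k(E)$ is constant outside a fixed bounded interval $I$, so it suffices to prove the estimate for $E_1,E_2\in I$; on $I$, using continuity of $L$ where it is positive, one has $L(E)\ge c_0$ for some $c_0>0$. Next, fix $\varepsilon>0$ and choose $\sigma$ with $1-\frac{1}{b\kappa}<\sigma<1$; this range is exactly what makes the exponent $\zeta:=\frac{\sigma-1}{b^2\kappa}+\frac{1}{b^3\kappa^2}-\varepsilon$ positive for $\varepsilon$ small. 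The operator \eqref{opapp1new} has the form \eqref{ops} with $S=\Delta$, a nearest-neighbour (hence long-range) Toeplitz operator independent of $x$, so by Example 4 it satisfies \eqref{GOnewnews}, \eqref{glc1news} and \eqref{glc1news1} with $\tilde\sigma=1$; therefore Corollary~\ref{coroapp10} applies and gives, for each $E\in I$ and all large $N$, a set $X_N=X_N(E)\subset\T^b$ with $\mathrm{Leb}(X_N)\le e^{-N^{\zeta}}$ such that $\|G_{[-N,N]}(E,x)\|\le e^{N^{\sigma}}$ and $|G_{[-N,N]}(E,x;n,n')|\le e^{-(L(E)-\varepsilon)|n-n'|}$ for $|n-n'|\ge N/10$ whenever $x\notin X_N$.

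These are precisely the hypotheses of Theorem~\ref{thmholder} with $d=1$, $\tilde\sigma=1$, the above $\sigma$ and $\zeta$, and $c=c_0-\varepsilon>0$ (after shrinking $\varepsilon$). Moreover, since $\omega\in{\rm DC}(\kappa,\tau)$ is Diophantine, the shift $f^n(x)=x+n\omega$ is uniquely ergodic on $\T^b$, so the equidistribution hypothesis \eqref{gx} holds for every $x_0\in\T^b$. Theorem~\ref{thmholder} then yields, for any $\varepsilon'>0$ and $|E_1-E_2|$ small, $|k(x_0,E_1,E_2)|\le e^{-|\log|E_1-E_2||^{\zeta/\sigma-\varepsilon'}}$. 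Since the IDS $k$ exists and $|k(E_1)-k(E_2)|\le k(x_0,E_1,E_2)$ by the standard relation between the eigenvalue counting function and the IDS (applied, if necessary, to a slightly enlarged interval and using monotonicity of $k$), the same bound holds for $|k(E_1)-k(E_2)|$.

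Finally I would optimize the exponent: as $\sigma\uparrow1$ through admissible values one checks $\zeta/\sigma=\frac{1}{b^2\kappa}+\frac{(\frac{1}{b^3\kappa^2}-\varepsilon)-\frac{1}{b^2\kappa}}{\sigma}\to\frac{1}{b^3\kappa^2}-\varepsilon$ from below, so for any prescribed $\varepsilon>0$ one can choose $\sigma$ close to $1$ and $\varepsilon,\varepsilon'$ small enough that $\zeta/\sigma-\varepsilon'>\frac{1}{b^3\kappa^2}-\varepsilon$, which is the claimed exponent. I expect no substantive obstacle in this corollary: all the genuine analysis is contained in Theorems~\ref{thmmul}--\ref{thmholder} and in Corollary~\ref{coroapp10}, and the only points requiring routine care are verifying that prepending the constant $\lambda^{-1}$ to $\Delta$ does not affect the verification of \eqref{GOnewnews}--\eqref{glc1news1} (immediate) and the elementary bookkeeping in letting $\sigma\uparrow1$.
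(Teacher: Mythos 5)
Your proposal is correct and follows essentially the same route as the paper: positivity of the Lyapunov exponent for all $E$ once $\lambda>\lambda_0(v)$ (from \cite{bjam}), then the large deviation estimate of Corollary~\ref{coroapp10}, fed into Theorem~\ref{thmholder}, with the exponent $\zeta/\sigma$ pushed to $\frac{1}{b^3\kappa^2}-\varepsilon$ by letting $\sigma\uparrow 1$. The paper presents Corollary~\ref{thmapp1'new} as an immediate consequence of Theorem~\ref{thmapp1'} once $L(E)>0$ everywhere; your version simply unrolls that chain one level deeper and handles the passage from $k(x_0,E_1,E_2)$ to $|k(E_1)-k(E_2)|$ explicitly, which is consistent with the paper's intent.
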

\begin{corollary}\label{coroapp1new}
Let $H(x)$ be given by \eqref{opapp1new}.
 Then there exists $\lambda_0=\lambda_0(v)$ such that   the following statement is true for   almost every $\omega$.  For any $\varepsilon>0$ and $\lambda>\lambda_0$, 
\begin{equation*}
  |k(E_1)-k(E_2)|\leq  e^{-\left(\log\frac{1}{ |E_1-E_2|}\right)^{\frac{1}{b^3}-\varepsilon}},
\end{equation*}
provided that $|E_1-E_2|$ is sufficiently small.
\end{corollary}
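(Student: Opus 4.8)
\emph{Proof proposal.}
The plan is to obtain Corollary~\ref{coroapp1new} by running the argument behind Theorem~\ref{coroapp1}, applied this time to the operator $H(x)=\lambda^{-1}\Delta+v(x+n\omega)\delta_{nn'}$ of \eqref{opapp1new} instead of $\Delta+v(x+n\omega)\delta_{nn'}$, with the positivity of the Lyapunov exponent supplied externally. The first point is that replacing $\Delta$ by $\lambda^{-1}\Delta$ is structurally harmless: $\lambda^{-1}\Delta$ is independent of $x$ and of range one, so it still satisfies \eqref{GOnewnews}, \eqref{glc1news} and \eqref{glc1news1} (the last two trivially, as there is no $x$-dependence), and $H(x)$ is again of the form \eqref{ops} covered by this subsection, with $S=\Delta$ and $f^n(x)=x+n\omega$. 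The second point is the large-coupling theorem of \cite{bjam}: there is a threshold $\lambda_0=\lambda_0(v)$ depending only on $v$, hence uniform in $\omega$ and in $E$, such that $\lambda>\lambda_0$ forces $L(E)>0$ for every $E\in\R$. Positivity of $L$ on the energy range of interest is exactly the hypothesis that the proof of Theorem~\ref{coroapp1} consumes, and the only effect of the coupling constant on the transfer-matrix cocycle is a harmless rescaling; so for each fixed $\lambda>\lambda_0$ that proof goes through unchanged.

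It then remains only to dispose of energies outside the spectrum. Since $\|H(x)\|\le 2\lambda^{-1}+\|v\|_{L^\infty(\T^b)}=:R$ uniformly in $x$, the spectrum of every $H(x)$ lies in $[-R,R]$, so the integrated density of states obeys $k(E)=0$ for $E<-R$ and $k(E)=1$ for $E>R$. Running the argument above on the fixed interval $I=(-R-2,R+2)$, on which $L>0$, yields, for almost every $\omega$, every $\varepsilon>0$, and all $E_1,E_2\in I$ with $|E_1-E_2|$ sufficiently small,
\begin{equation*}
|k(E_1)-k(E_2)|\le e^{-\left(\log\frac{1}{|E_1-E_2|}\right)^{\frac{1}{b^3}-\varepsilon}}.
\end{equation*}
For an arbitrary pair $E_1<E_2$ with $E_2-E_1<1$, either $E_1>R$ (then $k(E_1)=k(E_2)=1$), or $E_2<-R$ (then $k(E_1)=k(E_2)=0$), or $E_1\le R$ and $E_2\ge -R$, in which case $E_1,E_2\in[-R-1,R+1]\subset I$ and the displayed estimate applies. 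This covers all nearby pairs and proves the corollary.

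For orientation on where the content actually sits: the passage from the Diophantine-fixed exponent $\tfrac{1}{b^3\kappa^2}$ of Theorem~\ref{thmapp1'} to the almost-every-$\omega$ exponent $\tfrac{1}{b^3}$ is driven solely by the sublinear bound \eqref{gdec5}. One propagates the large deviation estimate to all scales through Theorems~\ref{thmmul}, \ref{thmmu2}, \ref{thmu}, and then applies Theorem~\ref{thmholder}; the only $\omega$-dependent ingredient is the number of $n$ with $|n|\le L$ and $x+n\omega\in X_{N_1}$, which the discrepancy estimates of Section~8 bound once $X_{N_1}$ is enclosed in a semi-algebraic set of controlled degree. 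For a fixed Diophantine class $\omega\in{\rm DC}(\kappa,\tau)$ this produces \eqref{gdec5} with $\delta=\tfrac{1}{b\kappa}-\varepsilon$, whereas for almost every $\omega$ the near-optimal simultaneous equidistribution of $\{n\omega\bmod\Z^b\}$ pushes $\delta$ arbitrarily close to $\tfrac1b$; tracking $\delta\to\tfrac1b$ and $\sigma\to1$ through Theorem~\ref{thmu} gives a limiting final parameter $\zeta=\tfrac{(\sigma-1)\delta+\delta^2}{b}\to\tfrac{1}{b^3}$, and Theorem~\ref{thmholder} converts this into the modulus of continuity $e^{-|\log|E_1-E_2||^{\zeta/\sigma-\varepsilon}}$. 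Consequently the main obstacle is not in the present reduction — where one only needs the uniformity in $\omega$ and $E$ of \cite{bjam} and the trivial handling of energies outside $[-R,R]$ — but in that sublinear-bound step, namely controlling the returns of a shift orbit to a super-polynomially small yet polynomial-degree semi-algebraic subset of $\T^b$, which requires combining the discrepancy machinery of Section~8 with the metric theory of simultaneous Diophantine approximation.
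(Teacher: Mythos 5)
Your proposal is correct and follows essentially the same route as the paper, which obtains Corollary~\ref{coroapp1new} from Theorem~\ref{coroapp1} applied to $\lambda^{-1}\Delta+v$ together with the large-coupling positivity of the Lyapunov exponent from~\cite{bjam} (with $\lambda_0=\lambda_0(v)$ uniform in $\omega$ and $E$). The extra details you supply — that inserting $\lambda^{-1}$ in front of $\Delta$ leaves all structural hypotheses intact, and that energies outside the bounded spectrum are handled trivially because $k$ is locally constant there — are correct and make explicit what the paper leaves implicit; your closing paragraph also accurately locates where the $\tfrac{1}{b^3}$ comes from (Schmidt's metric discrepancy bound, Lemma~\ref{ledissh1}, feeding $\delta\to 1/b$ into Theorem~\ref{thmu} and then Theorem~\ref{thmholder}).
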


Let $H(x)$ on $\ell^2(\Z)$ be given by
\begin{equation}\label{opapp2}
  H(x)=\lambda^{-1}S+ v(f^n(x))=\lambda^{-1}S+ v(x+n\omega)\delta_{nn^\prime},
\end{equation}
where $x,\omega\in \R^b$.

\begin{theorem}\label{thmapp4}
Let $H(x)$ be given by \eqref{opapp2}.
Assume  $\omega\in {\rm DC}(\kappa,\tau)$ and $1-\frac{1}{b\kappa}<\sigma<1$. Then for any $\varepsilon>0$, there exists $$\lambda_0=\lambda_0(\varepsilon,\kappa,\tau,\rho,\sigma,\gamma,K,K_1,c_1,v)$$ such that
for any $\lambda>\lambda_0$ and  any $N$,  there exists $X_N\subset \T^b$ such that
\begin{equation*}
 {\rm Leb} (X_N)\leq e^{-N^{\frac{\sigma-1}{b^2\kappa}+\frac{1}{b^3\kappa^2}-\varepsilon}},
\end{equation*}
and for any $x\notin X_N$, we have
\begin{equation*}
  ||  G_{[-N,N]}(E,x)||\leq e^{N^{\sigma}},
\end{equation*}
and
\begin{equation*}
  |G_{[N,-N]}(E,x;n,n^\prime)|\leq  e^{-\frac{1}{2}c_1|n-n^\prime|} \text{ for } |n-n^\prime|\geq N/10.
\end{equation*}
\end{theorem}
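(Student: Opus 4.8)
The plan is to deduce Theorem~\ref{thmapp4} from the inductive Theorem~\ref{thmu}, letting the large coupling constant furnish the \emph{base case} and the discrepancy machinery of Section~8 propagate it. Fix the energy $E$ and apply everything to $A(x)=H(x)-E=\lambda^{-1}S+(v(x+n\omega)-E)\delta_{nn'}$, which still obeys \eqref{GOnew}, \eqref{glc1} and the Toeplitz relation \eqref{gdec72} for the shift $f^n(x)=x+n\omega$, with $\tilde\sigma=1$ and (since the frequency vector is a single $b$-tuple) $k=1$, $b_1=b$. Choose $\delta$ just below $\frac1{b\kappa}$ — so that the hypothesis $1-\frac1{b\kappa}<\sigma$ gives $1-\delta<\sigma<1$ — set $\mu=\sigma$, and pick $\iota\in(0,\delta)$, $\zeta\in(0,\sigma)$, and $c_2\in\big(\frac12 c_1,(1-5^{-1})c_1\big)$.

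\emph{Base case.} For $\lambda$ large the Green's function of $A(x)$ satisfies property $P$ with parameters $(\mu,\zeta,c_2)$ at two scales $N_1$ and $N_2=N_1^{C}$, with $N_1$ a large constant depending on all parameters. Writing $A(x)=D(x)+\lambda^{-1}S$ with $D(x)=\mathrm{diag}(v(x+n\omega)-E)$, the bad set may be taken to be
$$
X_N=\Big\{x\in\T^b:\ \min_{|n|\le N}|v(x+n\omega)-E|<2e^{-N^{\mu}}\Big\},
$$
which is contained in a union of $2N+1$ translates of the sub-level set $\{|v-E|<2e^{-N^{\mu}}\}$; since $v$ is non-constant and analytic, the sub-level set estimate (uniform in $E$) gives $\mathrm{Leb}(X_N)\le e^{-N^{\zeta}}$ for $N$ large, and approximating $v$ by a trigonometric polynomial of degree $O(N^{\mu})$ makes $X_N$ semi-algebraic of degree polynomial in $N$. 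For $x\notin X_N$ one has $\|R_{Q_N}D(x)^{-1}R_{Q_N}\|\le\frac12 e^{N^{\mu}}$, so — because $\|S\|<\infty$ — for $\lambda$ exceeding a constant multiple of $e^{N_2^{\mu}}$ the Neumann series $(D+\lambda^{-1}S)^{-1}=\sum_{j\ge0}(-\lambda^{-1}D^{-1}S)^jD^{-1}$ converges on $Q_N$, giving $\|(R_{Q_N}A(x)R_{Q_N})^{-1}\|\le e^{N^{\mu}}$, and a routine path bound on the same series yields $|(R_{Q_N}A(x)R_{Q_N})^{-1}(n,n')|\le e^{-c_2|n-n'|}$ for $|n-n'|\ge N/10$ (the rate tends to $c_1>c_2$ as $\lambda\to\infty$). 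This pins down $\lambda_0$ as a function of $N_1,N_2$, hence of all the parameters.

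\emph{Sub-linear bound and one inductive step.} Next I would verify \eqref{gdec5}: for every $x$ and every $L\in[N_3^{\delta-\iota},N_3]$,
$$
\#\{n\in\Z:\ |n|\le L,\ x+n\omega\in X_{N_1}\}\le (2N_1+1)\,\#\{|m|\le 2L:\ x+m\omega\in\mathcal S\}\le L^{1-\delta},
$$
where $\mathcal S$ is a semi-algebraic set of some degree $B_{N_1}$ with $\mathrm{Leb}(\mathcal S)\le e^{-cN_1^{\mu}}$. Since $\omega\in\mathrm{DC}(\kappa,\tau)$, the discrepancy estimates of Section~8 bound the middle quantity by $O(L\,\mathrm{Leb}(\mathcal S))+B_{N_1}^{O(1)}(\log L)^{O(1)}L^{1-1/(b\kappa)}$. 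Here is where the three scales must be balanced: choosing $N_3$ \emph{quasi-polynomial} in $N_1$, say $N_3\in[\,e^{(\log N_1)^{K_1+1}},e^{N_1^{c}}\,]$ (compatible with the admissible range $[N_2^{C},e^{N_1^{c}}]$ of Theorem~\ref{thmu} for $N_1$ large), makes $\log L\gg(\log N_1)^{K_1}$, so $B_{N_1}^{O(1)}$ — which in later steps, where $X_{N_1}$ is the bad set produced by the previous application of Theorem~\ref{thmu}, is at worst $e^{O((\log N_1)^{K_1})}$ by \eqref{glc1news1} — is $\le L^{\iota}$, while $L\,\mathrm{Leb}(\mathcal S)\le N_3\,e^{-cN_1^{\mu}}$ is negligible; hence the count is $\le L^{1-\delta}$ since $\delta<\frac1{b\kappa}$. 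All hypotheses of Theorem~\ref{thmu} now hold, and it produces property $P$ at every scale in $[N_2^{C},e^{N_1^{c}}]$ with parameters $\big(\sigma,\ \frac{(\sigma-1)\delta+\delta^2}{b}-\varepsilon,\ c_2-2N_1^{-\vartheta_1}-N_3^{-\vartheta_2}\big)$; because $\mu=\sigma$, this is again property $P$ of the same shape.

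\emph{Iteration and conclusion.} I would then iterate, at step $j+1$ picking two scales $N_1^{(j+1)}$, $N_2^{(j+1)}=(N_1^{(j+1)})^{C}$ inside the range produced at step $j$ (e.g. $N_1^{(j+1)}=e^{(N_1^{(j)})^{c/(2C^2)}}$), whose output ranges overlap and whose tops increase without bound; this reaches property $P$ at all sufficiently large $N$, small $N$ being handled directly by the base-case argument. Along the way the off-diagonal rate decreases only by $\sum_j\big(2(N_1^{(j)})^{-\vartheta_1}+(N_3^{(j)})^{-\vartheta_2}\big)$, a rapidly convergent series, so it stays $\ge\frac12 c_1$; and after the first step the measure exponent is frozen at $\frac{(\sigma-1)\delta+\delta^2}{b}=\frac{\sigma-1}{b^2\kappa}+\frac1{b^3\kappa^2}$, up to the $\varepsilon$-loss caused by taking $\delta<\frac1{b\kappa}$ and by Theorem~\ref{thmu}. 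Rewriting ``property $P$ for $A(x)=H(x)-E$'' as the claimed bounds on $G_{[-N,N]}(E,x)$ completes the argument. \emph{The main obstacle} is precisely the sub-linear bound: forcing $\#\{|n|\le L:x+n\omega\in X_{N_1}\}\le L^{1-\delta}$ uniformly in $x$ \emph{and in $E$} with $\delta$ arbitrarily close to $\frac1{b\kappa}$ requires the scales to be large enough that the (possibly super-polynomial, $e^{(\log N_1)^{K_1}}$) semi-algebraic complexity of $X_{N_1}$ is dominated by a small power of $L$, yet small enough that $L\cdot\mathrm{Leb}(X_{N_1})$ is still $o(L^{1-\delta})$, while keeping all of Theorem~\ref{thmu}'s constraints in force throughout the iteration.
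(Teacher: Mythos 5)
Your proposal is correct and follows essentially the same route as the paper's proof: establish property $P$ at two initial scales via Neumann series for large $\lambda$, observe that the bad sets are semi-algebraic of degree $e^{(\log N)^{C}}$, combine Lemma~\ref{ledissh} (Diophantine discrepancy) with Theorem~\ref{thmdis} to get the sub-linear bound $L^{1-\frac{1}{b\kappa}+\varepsilon}$, apply Theorem~\ref{thmu} with $\delta\approx\frac{1}{b\kappa}$ and $b_i=b$ to produce the exponent $\frac{\sigma-1}{b^2\kappa}+\frac{1}{b^3\kappa^2}$, and iterate. The paper compresses the base case and the iteration into citations (to [jls, Theorem 4.3] and [jls, pp.~15--16] via the proof of Theorem~\ref{thmapp2}), whereas you spell them out, but the substance is identical.
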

\begin{theorem}\label{thmapp4'}
		Assume $S$ is self-adjoint and $\omega\in {\rm DC}(\kappa,\tau)$.  Let $H(x)$ be given by \eqref{opapp2}. Then for any $\varepsilon>0$, there exists $$\lambda_0=\lambda_0(\varepsilon,\kappa,\tau,\rho,\gamma,K,K_1,c_1,v)$$ such that
for any $\lambda>\lambda_0$,
\begin{equation*}
  |k(E_1)-k(E_2)|\leq  e^{-\left(\log\frac{1}{ |E_1-E_2|}\right)^{\frac{1}{b^3\kappa^2}-\varepsilon}},
\end{equation*}
provided that $|E_1-E_2|$ is sufficiently small.
\end{theorem}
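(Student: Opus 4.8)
The plan is to deduce Theorem~\ref{thmapp4'} from the large deviation estimate of Theorem~\ref{thmapp4} together with the regularity criterion of Theorem~\ref{thmholder}; the only genuine work is the choice of the free exponent $\sigma$. Fix $\varepsilon>0$. Since the function
\[
\sigma\longmapsto \frac{1}{\sigma}\left(\frac{\sigma-1}{b^2\kappa}+\frac{1}{b^3\kappa^2}\right)
=\frac{1}{b^2\kappa}\Bigl(1-\tfrac1\sigma\Bigr)+\frac{1}{\sigma b^3\kappa^2}
\]
is non-decreasing on $\bigl(1-\tfrac1{b\kappa},1\bigr)$ and tends to $\tfrac{1}{b^3\kappa^2}$ as $\sigma\uparrow 1$, I can pick $\sigma\in\bigl(1-\tfrac1{b\kappa},1\bigr)$ and $\varepsilon_1>0$, depending only on $\varepsilon,b,\kappa$, with
\[
\frac{1}{\sigma}\left(\frac{\sigma-1}{b^2\kappa}+\frac{1}{b^3\kappa^2}-\varepsilon_1\right)>\frac{1}{b^3\kappa^2}-\frac{\varepsilon}{2}.
\]
Put $\zeta:=\frac{\sigma-1}{b^2\kappa}+\frac{1}{b^3\kappa^2}-\varepsilon_1$; then $0<\zeta<1$, the lower bound holding because $\sigma>1-\tfrac1{b\kappa}$ and $\varepsilon_1$ is small.

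Next I would apply Theorem~\ref{thmapp4} with this $\sigma$ and with $\varepsilon_1$ in the role of its $\varepsilon$. It supplies a threshold $\lambda_0=\lambda_0(\varepsilon_1,\kappa,\tau,\rho,\sigma,\gamma,K,K_1,c_1,v)$, which---$\sigma$ and $\varepsilon_1$ now being determined by $\varepsilon,\kappa,b$---takes the form $\lambda_0(\varepsilon,\kappa,\tau,\rho,\gamma,K,K_1,c_1,v)$, such that for every $\lambda>\lambda_0$ and every $E\in\R$ there is a set $X_N=X_N(E)\subset\T^b$ with $\mathrm{Leb}(X_N)\le e^{-N^{\zeta}}$, and, for $x\notin X_N$, $\|G_{[-N,N]}(E,x)\|\le e^{N^{\sigma}}$ together with $|G_{[-N,N]}(E,x;n,n')|\le e^{-\frac12 c_1|n-n'|}$ whenever $|n-n'|\ge N/10$. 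These are precisely the hypotheses of Theorem~\ref{thmholder} applied to $A(x)=H(x)$: $H(x)$ is a Toeplitz matrix with respect to the shift $f^n(x)=x+n\omega$; by \eqref{GOnewnews} the off-diagonal decay of $S$ is exponential, so one takes $\tilde\sigma=1$ and the admissible range $0<\sigma<\tilde\sigma$ is met, with $c=\tfrac12 c_1>0$. Moreover $\omega\in\mathrm{DC}(\kappa,\tau)$ makes $x\mapsto x+\omega$ uniquely ergodic on $\T^b$, so that condition \eqref{gx} holds for every $x_0\in\T^b$.

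Theorem~\ref{thmholder} then gives, for any $x_0$ and any $\varepsilon_2>0$, the estimate $|k(x_0,E_1,E_2)|\le e^{-|\log|E_1-E_2||^{\frac{\zeta}{\sigma}-\varepsilon_2}}$ once $|E_1-E_2|$ is sufficiently small. Choosing $\varepsilon_2=\varepsilon/2$ and recalling that $\frac{\zeta}{\sigma}=\frac{1}{\sigma}\bigl(\frac{\sigma-1}{b^2\kappa}+\frac{1}{b^3\kappa^2}-\varepsilon_1\bigr)>\frac{1}{b^3\kappa^2}-\frac{\varepsilon}{2}$, and using that $t\mapsto e^{-|\log s|^{t}}$ is decreasing in $t$ for $s$ small, the right-hand side is at most $e^{-|\log|E_1-E_2||^{\frac{1}{b^3\kappa^2}-\varepsilon}}$.

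Finally I would convert this into the asserted bound on the IDS. Since $S$ is self-adjoint and $\lambda^{-1}$ real, $H(x)$ is self-adjoint; by unique ergodicity the IDS $k(E)$ exists and is independent of $x$. For $E_1<E_2$ the counting ratios $\frac{1}{2N+1}\#\{\text{eigenvalues of }R_{[-N,N]}H(x_0)R_{[-N,N]}\le E\}$ converge to $k(E)$, hence
\[
0\le k(E_2)-k(E_1)=\lim_{N\to\infty}\frac{1}{2N+1}\#\{\text{eigenvalues in }(E_1,E_2]\}\le k(x_0,E_1,E_2),
\]
the last inequality because $(E_1,E_2]\subset[E_1,E_2]$ and $k(x_0,E_1,E_2)$ is the $\limsup$ of the corresponding closed-interval counts. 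Combining with the previous paragraph yields $|k(E_1)-k(E_2)|\le e^{-\left(\log\frac{1}{|E_1-E_2|}\right)^{\frac{1}{b^3\kappa^2}-\varepsilon}}$ for $|E_1-E_2|$ small, as claimed. The only point needing care is the parameter bookkeeping of the first two paragraphs---checking that $\zeta(\sigma)/\sigma$ really sweeps up to $\tfrac{1}{b^3\kappa^2}$ and that $\lambda_0$ loses its $\sigma$-dependence once $\sigma$ is pinned to $\varepsilon,\kappa,b$---together with the standard ergodic facts (existence and $x$-independence of $k(E)$, and validity of \eqref{gx} for the Diophantine shift); everything of substance is already contained in Theorems~\ref{thmapp4} and~\ref{thmholder}.
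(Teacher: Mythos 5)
Your proposal is correct and follows the same route as the paper: apply the large deviation estimate of Theorem~\ref{thmapp4} to verify the hypotheses of Theorem~\ref{thmholder}, then send $\sigma\uparrow 1$ so that $\zeta/\sigma$ approaches $\tfrac{1}{b^3\kappa^2}$. The paper compresses all of this into the single remark ``Applying Theorem~\ref{thmholder} with $\sigma=1-\varepsilon$''; your version merely spells out the bookkeeping (monotonicity of $\zeta(\sigma)/\sigma$, positivity of $\zeta$, transfer of parameter dependence into $\lambda_0$, verification of \eqref{gx} by unique ergodicity, and passage from $k(x_0,E_1,E_2)$ to $k(E_2)-k(E_1)$), all of which is sound.
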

\begin{theorem}\label{coroapp4}
		Assume $S$ is self-adjoint.
Let $H(x)$ be given by \eqref{opapp2}.
 Then  for almost every $\omega\in \R^b$ the following is true. For  any $\varepsilon>0$, there exists $$\lambda_0=\lambda_0(\varepsilon,\omega,\rho,\gamma,K,K_1,c_1,v)$$ such that
for any $\lambda>\lambda_0$,
\begin{equation*}
  |k(E_1)-k(E_2)|\leq  e^{-\left(\log\frac{1}{ |E_1-E_2|}\right)^{\frac{1}{b^3}-\varepsilon}},
\end{equation*}
provided that $|E_1-E_2|$ is sufficiently small.
\end{theorem}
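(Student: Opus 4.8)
Theorem~\ref{coroapp4} stands to Theorem~\ref{thmapp4'} as Corollary~\ref{coroapp1} does to Theorem~\ref{thmapp1'}: the plan is to rerun the proof of Theorem~\ref{thmapp4'} with the polynomial Diophantine condition ${\rm DC}(\kappa,\tau)$ replaced by the Lebesgue-generic strong Diophantine condition. First I would record that arithmetic input. For every fixed $\kappa>1/b$ and Lebesgue-a.e.\ $\omega\in\R^{b}$ there is $\tau=\tau(\omega)>0$ with
\begin{equation*}
\|n\omega\|\geq\frac{\tau}{n^{1/b}(1+\log n)^{\kappa}}\qquad\text{for all }n\in\N .
\end{equation*}
This is a first Borel--Cantelli argument: the sets $A_{n}=\{\omega\in[0,1)^{b}:\ \|n\omega\|<n^{-1/b}(1+\log n)^{-\kappa}\}$ satisfy $\mathrm{Leb}(A_{n})\lesssim n^{-1}(1+\log n)^{-b\kappa}$ --- the $n^{b}$-fold covering $\omega\mapsto n\omega$ of $\T^{b}$ pulls an $\eta$-neighbourhood of $0$ back to a set of measure $\asymp\eta^{b}$ --- so $\sum_{n}\mathrm{Leb}(A_{n})<\infty$ because $b\kappa>1$, whence a.e.\ $\omega$ lies in only finitely many $A_{n}$ and, after shrinking the constant to absorb those, satisfies the displayed bound. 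Let $\Omega_{0}\subset\R^{b}$ denote the resulting full-measure set. Every $\omega\in\Omega_{0}$ is rationally independent, so the shift $f^{n}(x)=x+n\omega$ is uniquely ergodic on $\T^{b}$ and the averaging hypothesis \eqref{gx} holds for all $x_{0}$, as Theorem~\ref{thmholder} requires.

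Now fix $\omega\in\Omega_{0}$ and $\varepsilon>0$ and carry out the proof of Theorem~\ref{thmapp4'} for $H(x)=\lambda^{-1}S+v(x+n\omega)\delta_{nn'}$: for $\lambda$ large the Green's function property $P$ with parameters $(\mu,\zeta,c_{2})$ is established at two initial scales $N_{1},N_{2}$ (the large-disorder step of that proof, unchanged here), the multiscale induction of Theorem~\ref{thmu} propagates property $P$ to all scales, and Theorem~\ref{thmholder} turns the resulting bound $\mathrm{Leb}(X_{N})\leq e^{-N^{\zeta}}$ into the modulus of continuity of $k$. The only place an improvement over Theorem~\ref{thmapp4'} enters is the verification of the sub-linear bound \eqref{gdec5} for the orbit $\{x+n\omega\}_{|n|\leq L}$ against $X_{N_{1}}$, which by \eqref{glc1news1} is, up to an exponentially small error, semi-algebraic of degree $\leq e^{(\log N_{1})^{O(1)}}$ and of measure $\leq e^{-N_{1}^{\zeta}}$; this is the content of the discrepancy estimate for semi-algebraic sets in Section~8. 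For $\omega\in{\rm DC}(\kappa,\tau)$ that estimate only permits the gain $\delta$ in \eqref{gdec5} to be taken comparable to $1/(b\kappa)$, which is exactly what produces the exponent $\tfrac{\sigma-1}{b^{2}\kappa}+\tfrac{1}{b^{3}\kappa^{2}}$ of Theorem~\ref{thmapp1} and $\tfrac{1}{b^{3}\kappa^{2}}-\varepsilon$ of Theorem~\ref{thmapp1'}. Under the condition on $\Omega_{0}$, by contrast, the orbit $\{n\omega\}_{|n|\leq L}$ is $cL^{-1/b}(\log L)^{-\kappa}$-separated, hence can meet a semi-algebraic set of small measure and degree $e^{(\log N_{1})^{O(1)}}$ only inside the $\sim L^{-1/b}$-fattening of its boundary, a region of volume $\lesssim e^{(\log N_{1})^{O(1)}}L^{-1/b}$; this forces the number of orbit points hit to be $\lesssim L^{1-1/b+o(1)}$, so $\delta$ may be taken arbitrarily close to $1/b$. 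Then the measure exponent delivered by Theorem~\ref{thmu}, namely $\tfrac{\sigma-1}{b}\delta+\tfrac{\delta^{2}}{b}-\varepsilon$, tends to $1/b^{3}$ as $\delta\uparrow 1/b$ and $\sigma\uparrow 1$, and Theorem~\ref{thmholder} gives $|k(E_{1})-k(E_{2})|\leq e^{-\left(\log\frac{1}{|E_{1}-E_{2}|}\right)^{1/b^{3}-\varepsilon}}$ once $|E_{1}-E_{2}|$ is small enough. The constant $\lambda_{0}$ is taken large enough for the two initial scales, hence depends on $\varepsilon$, on $\tau(\omega)$ and $\kappa$ (so on $\omega$), and on $\rho,\gamma,K,K_{1},c_{1},v$; self-adjointness of $S$ is used only to keep $H(x)$ self-adjoint, so that $k(E)$ is well defined and monotone and the energy-elimination in the semi-algebraic step is legitimate.

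The step I expect to be the main obstacle is the one just sketched: one must check that the quantitative discrepancy statement of Section~8, which is calibrated to ${\rm DC}(\kappa,\tau)$, degrades by no more than a sub-polynomial factor when it is fed the generic strong condition, so that the gain $\delta$ genuinely reaches arbitrarily close to $1/b$ instead of stalling at some smaller value --- in other words, that the separation/net heuristic above is faithful to the actual estimate. Everything else --- establishing property $P$ at the initial scales, and pushing a near-optimal $\delta$ through Theorems~\ref{thmmul}, \ref{thmmu2}, \ref{thmu} and \ref{thmholder} while keeping each loss inside the prescribed $\varepsilon$ --- is the bookkeeping already done for Theorem~\ref{thmapp4'}.
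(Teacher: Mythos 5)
The paper's proof of Theorem~\ref{coroapp4} is a one-line modification of the proof of Theorem~\ref{thmapp4'}: ``use Lemma~\ref{ledissh1} instead of Lemma~\ref{ledissh}.'' Lemma~\ref{ledissh1} is Schmidt's 1964 metrical theorem asserting that for Lebesgue-a.e.\ $\omega\in\R^{b}$ the Kronecker discrepancy obeys $D_{N}(\omega)\leq C(\omega)N^{-1}(\log N)^{b+2}$; plugged into Theorem~\ref{thmdis} this yields the sub-linear bound $A(X_{N_{1}};\{n\omega\})\lesssim (1+B)^{C}N^{1-\frac{1}{b}+o(1)}$, i.e.\ $\delta$ arbitrarily close to $1/b$, and then Theorems~\ref{thmu} and \ref{thmholder} produce the exponent $\frac{1}{b^{3}}-\varepsilon$ exactly as in the proof you transcribed. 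Your proposal gets to the same $\delta\uparrow 1/b$ by a genuinely different route: a Borel--Cantelli argument giving the simultaneous Diophantine condition $\|n\omega\|\geq\tau\,n^{-1/b}(1+\log n)^{-\kappa}$, hence $\asymp L^{-1/b}$-separation of the orbit, combined with the Yomdin--Gromov covering (Lemma~\ref{ledis}): covering $X_{N_{1}}$ by $\lesssim (1+B)^{C}\epsilon^{1-b}$ balls of radius $\epsilon$ just below the separation and putting $O(1)$ orbit points per ball gives the count $\lesssim (1+B)^{C}L^{1-\frac{1}{b}+o(1)}$. That argument is sound (so the worry you flag at the end is resolved affirmatively), but note two points of comparison. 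First, your ``fattening of the boundary'' phrasing is a heuristic; the precise statement you want is exactly the ball-covering count of Lemma~\ref{ledis} combined with pairwise separation, as just described. Second, your concern is predicated on the idea that Section~8's discrepancy machinery is ``calibrated to ${\rm DC}(\kappa,\tau)$,'' but it isn't: Theorem~\ref{thmdis} is agnostic about the source of the bound $D_{N}\leq N^{-\varsigma}$, and the paper already records the metric input (Lemma~\ref{ledissh1}) you need. Invoking it directly is simpler than re-deriving the orbit-counting via separation; the Borel--Cantelli route also gives the pointwise simultaneous approximation condition on $\|n\omega\|_{\Z^{b}}$, which is the right dual for separation, but it is not the same as (and does not immediately reprove) Schmidt's discrepancy bound, so if you insist on the separation route you must keep the argument in the form ``separation $+$ Lemma~\ref{ledis}'' rather than re-entering through Theorem~\ref{thmdis}.
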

\begin{theorem}\label{thmapp4new}
	Let   $H(x)$ be given by \eqref{opapp2}. Then for any $\varrho>0$, there is  $\lambda_0=\lambda_0(\varrho,\rho,\gamma,K,K_1,c_1,v)>0$ such that the following statement holds.
	For any $\lambda> \lambda_0$ and any $x\in\mathbb{T} $, there exists  $\Omega=\Omega(x,\lambda, S,v,\varrho)\subset \mathbb{T}^{b}$ with $\mathrm{Leb} (\mathbb{T}^{b}\setminus \Omega)\leq \varrho $ such that for any $\omega\in \Omega$, $H(x)$ satisfies Anderson localization.
\end{theorem}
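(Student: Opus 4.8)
\emph{Proof proposal.} Since Anderson localization (pure point spectrum with exponentially decaying eigenfunctions) requires self-adjointness, we work under the standing assumption of this section with $S$ self-adjoint and $v$ real, so that $H(x)$ is self-adjoint; the problem then reduces, via Schnol's theorem, to showing that for $\omega$ in a set $\Omega$ of measure $>1-\varrho$ every polynomially bounded solution $\psi\not\equiv 0$ of $H_\omega(x)\psi=E\psi$ decays exponentially. The plan is to combine the large deviation input already packaged in Theorem~\ref{thmapp4} with the Poisson (Green's function) formula and a semi-algebraic elimination of the energy, in the spirit of Bourgain--Goldstein. First fix $\kappa>b$ and $\tau>0$ so small that $\mathrm{Leb}\big(\T^b\setminus{\rm DC}(\kappa,\tau)\big)<\varrho$, and fix $\sigma\in(1-\tfrac{1}{b\kappa},1)$; let $\lambda_0$ be the threshold furnished by Theorem~\ref{thmapp4} for these parameters. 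Then for every $\lambda>\lambda_0$ and every $\omega\in{\rm DC}(\kappa,\tau)$ the energy dependent Green's functions of $H_\omega(x)$ satisfy, at every large scale $N$ and every $E$, the bound $\mathrm{Leb}\big(\{y\in\T^b:[-N,N]\text{ is not }SG_N\text{ for }(H_\omega(y),E)\}\big)\le e^{-N^{\eta}}$ with off-diagonal decay $e^{-\frac12 c_1|n-n'|}$ on the complement; here and below $\eta>0$ denotes a positive exponent produced by Theorems~\ref{thmmul}--\ref{thmu} (not always the same one). It remains, for the fixed phase $x$, to delete a further null set of frequencies.

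The deterministic heart is the following. Fix $\omega\in{\rm DC}(\kappa,\tau)$ and a polynomially bounded $\psi\not\equiv 0$ with $H_\omega(x)\psi=E\psi$. Applying the Poisson formula $\psi(j)=-\sum_{n\in\Lambda,\,n'\notin\Lambda}G_\Lambda(E,x;j,n)\,H_\omega(x;n,n')\,\psi(n')$ with $\Lambda=[-N,N]$, and using the norm bound $e^{N^\sigma}$, the off-diagonal decay of $G_\Lambda$, the exponential off-diagonal decay of $H_\omega(x)$, and $\sigma<1$, one gets $|\psi(j)|\le e^{-\eta N}(1+|j|+N)^{C}$ for $|j|\le N/10$ \emph{whenever} $[-N,N]$ is $SG_N$ for $(H_\omega(x),E)$; letting $N\to\infty$ shows that if this held for arbitrarily large $N$ then $\psi\equiv 0$. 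Hence there is $N_\psi$ with: $[-N,N]$ is not $SG_N$ for all $N\ge N_\psi$. By the inductive structure of Theorem~\ref{thmmul} (a failure at scale $N$ propagates, through the sub-exponential bound \eqref{gboundnov26} and the sub-linear count \eqref{gnbad}, down to norm blow-ups at smaller scales) and self-adjointness of the truncations, this pins $E$ into a union of at most $N^{C}$ intervals of length $e^{-N^{\eta}}$ centred at ``resonant values'' $E_1(\omega),\dots,E_{N^C}(\omega)$; after replacing $S$ and $v$ by their trigonometric-polynomial approximations of degree $\le e^{(\log N)^{K_1}}$ (hypotheses \eqref{glc1news}, \eqref{glc1news1}) these are algebraic functions of $\omega$ of controlled degree. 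Symmetrically, by the covariance \eqref{gdec72s}, ``the block $n+Q_{N_2}$ is not $SG_{N_2}$ for $(H_\omega(x),E)$'' reads $x+n\omega\in X_{N_2}(E,\omega)$, where the bad phase set $X_{N_2}(E,\omega)\subset\T^b$ has measure $\le e^{-N_2^{\eta}}$ for every $(E,\omega)$ and is semi-algebraic of complexity $\le e^{(\log(|n|+N_2))^{C}}$.

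The elimination of $E$ is carried out scale by scale. For each $n$ set $N_1=\lceil(\log|n|)^{C_0}\rceil$ with $C_0\ge 2\max\{K_1,1/\tilde\sigma\}$ (so that the relevant scales obey the hierarchy $N_3\le e^{N_1^{c}}$ of Theorem~\ref{thmu}) and $N_2=\lfloor|n|/10\rfloor$, and let
\[
\mathcal{B}_n=\big\{\omega\in{\rm DC}(\kappa,\tau):\ \exists E,\ [-N_1,N_1]\text{ and }n+Q_{N_2}\ (\text{some }Q_{N_2}\in\mathcal{E}_{N_2}^0)\text{ are both not }SG\text{ for }(H_\omega(x),E)\big\}.
\]
Feeding in the previous paragraph, on the branch $E=E_j(\omega)+O(e^{-N_1^{\eta}})$ one is reduced to bounding $\mathrm{Leb}\{\omega: x+n\omega\in X_{N_2}(E_j(\omega),\omega)\}$ for each of the $\le N_1^{C}$ values of $j$. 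If $X_{N_2}(E_j(\omega),\omega)$ were independent of $\omega$ this would simply equal its (exponentially small) phase-measure, since $\omega\mapsto n\omega$ is measure preserving on $\T^b$; the genuine $\omega$-dependence is absorbed by discretising $\omega$ on a grid of mesh $e^{-N_2^{\eta'}}$ and invoking the discrepancy and semi-algebraic estimates of Section~8 to show that, being cut out by polynomials of bounded degree and coefficients, the set moves by $\le e^{-N_2^{\eta'/2}}$ in Hausdorff distance across each cell. This yields $\mathrm{Leb}(\mathcal{B}_n)\le N_1^{C}e^{-N_2^{\eta}}\le(\log|n|)^{C}e^{-|n|^{\eta}}$, which is summable in $n$.

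Finally, by Borel--Cantelli $\mathrm{Leb}(\limsup_n\mathcal{B}_n)=0$, so $\Omega:={\rm DC}(\kappa,\tau)\setminus\limsup_n\mathcal{B}_n$ has $\mathrm{Leb}(\T^b\setminus\Omega)<\varrho$. Fix $\omega\in\Omega$; then $\omega\notin\mathcal{B}_n$ for all $|n|\ge n_0(\omega)$. Given $\psi\not\equiv0$ as above, for $|n|$ large enough that $N_1(n)\ge N_\psi$ and $|n|\ge n_0(\omega)$ the origin block $[-N_1(n),N_1(n)]$ is not $SG$ for $(H_\omega(x),E)$, hence $\omega\notin\mathcal{B}_n$ forces $n+Q_{N_2}$ to be $SG_{N_2}$ for every $Q_{N_2}\in\mathcal{E}_{N_2}^0$; the Poisson formula on such a block then gives $|\psi(n)|\le e^{-\frac12 c_1(|n|/10)}(1+|n|)^{C}$. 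Thus every polynomially bounded generalized eigenfunction of $H_\omega(x)$ decays exponentially, and Schnol's theorem yields Anderson localization for each $\omega\in\Omega$. The main difficulty is the third step: carrying out the elimination of $E$ \emph{uniformly}, which demands keeping the semi-algebraic complexity of the bad sets sub-polynomial (whence the trigonometric-polynomial hypotheses and the scale hierarchy) while transferring the exponentially small phase-measure through the orbit map $\omega\mapsto x+n\omega$ --- precisely the task that the combined inductive Green's function estimates, Cartan-type bounds, and discrepancy estimates of this paper are designed to make quantitative.
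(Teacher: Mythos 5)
Your overall plan is the right one and matches what the paper has in mind (the paper does not supply a proof of Theorem~\ref{thmapp4new} at all; it cites \cite[Section 3]{gafa}, \cite[Section 6]{bgs}, \cite[Chapter XV]{bbook} and remarks only that the semi-algebraic degree here is $e^{(\log N)^{C}}$ rather than $N^{C}$). Reduction via Schnol's theorem to exponential decay of generalized eigenfunctions, the Poisson formula on good boxes, forcing the origin box to be non-$SG$ for all large $N$ when $\psi\not\equiv 0$, energy elimination through the finitely many ``resonant'' $E_j(\omega)$, and a Borel--Cantelli argument over $n$ --- all of this is the correct Bourgain--Goldstein--Schlag scheme and you state it cleanly.

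The genuine gap is in the frequency-measure estimate for $\mathcal{B}_n$, specifically the step where you pass from the exponentially small \emph{phase} measure of $X_{N_2}(E_j(\omega),\omega)$ to a bound on $\mathrm{Leb}\{\omega:\,x+n\omega\in X_{N_2}(E_j(\omega),\omega)\}$ by ``discretising $\omega$ on a grid of mesh $e^{-N_2^{\eta'}}$'' and asserting the bad set ``moves by $\le e^{-N_2^{\eta'/2}}$ in Hausdorff distance across each cell.'' This claim is false for general semi-algebraic sets: a set cut out by a polynomial $P(\omega,x)$ of degree $D$ can have a connected component whose diameter scales like $\epsilon^{1/D}$ (not $\epsilon$) as the parameter changes by $\epsilon$, and components can appear or disappear entirely. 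With $D\sim e^{(\log N_2)^{C}}$ the Hausdorff drift across a cell of size $e^{-N_2^{\eta'}}$ is only controlled to within $\exp(-N_2^{\eta'}/e^{(\log N_2)^{C}})$, which is not exponentially small in $N_2$. Moreover, the tools you invoke from Section~8 of the paper (Theorem~\ref{thmdis} and the discrepancy lemmas) bound how often the \emph{orbit in $x$} hits a small semi-algebraic set for a fixed $\omega$; they say nothing about the measure in $\omega$ of a set described by the diagonal condition $(\omega,x+n\omega)\in\mathcal{S}$. The needed ingredient is a different semi-algebraic measure lemma --- the ``steepness''/projection estimate for sets $\mathcal{S}\subset[0,1]^{2b}$ sliced along $\omega\mapsto(\omega,x+n\omega)$ --- which lives precisely in the references the paper cites (e.g.\ Chapter~15 of \cite{bbook}) and is not in the present paper. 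Your measure-preservation observation is a useful heuristic for why the bound \emph{should} hold, but absorbing the $\omega$-dependence of the bad set is exactly the content of those lemmas, and the discretization argument cannot substitute for them.

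A smaller point: with $N_1=\lceil(\log|n|)^{C_0}\rceil$ and $N_2=\lfloor|n|/10\rfloor$, matching the hierarchy of Theorem~\ref{thmu} requires $N_2\le e^{N_1^{c/2}}$ with $c=\tfrac12\min\{1/K_1,\tilde\sigma\}$, i.e.\ $C_0\ge 4\max\{K_1,1/\tilde\sigma\}$, twice the constant you wrote; this is cosmetic but worth fixing when making the argument precise.
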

\subsection{Shifts: $b=1$, arbitrary $d$}
Let $v $ be analytic on  $\T$.
Let
\begin{equation*}
  f^n(x)=x+n\omega=x+n_1\omega_1+n_2\omega_2+\cdots +n_d\omega_d\mod\Z,
\end{equation*}
where $n=(n_1,n_2,\cdots,n_d)\in \Z^d$ and $x\in \T$.
Let $H(x)$ on $\ell^2(\Z^d)$ be given by
\begin{equation}\label{opapp4}
  H(x)=\lambda^{-1}S+ v(f^n(x))\delta_{nn^\prime}=\lambda^{-1}S+ v(x+n_1\omega_1+n_2\omega_2+\cdots +n_d\omega_d)\delta_{nn^\prime}.
\end{equation}

\begin{theorem}\label{thmapp2}
Let $\omega\in {\rm DC}(\kappa,\tau)$ and  $H(x)$ be given by \eqref{opapp4}. Then for any $\varepsilon>0$, there exists $\lambda_0=\lambda_0(\varepsilon,\kappa,\tau,\rho,\sigma,\gamma,K,K_1,c_1,v)$ such that
for any $\lambda>\lambda_0$ and any $N$, there exists $X_N\subset \T$ such that
\begin{equation}\label{gdec2}
 {\rm Leb} (X_N)\leq e^{-N^{\sigma-\varepsilon}},
\end{equation}
and for any $x\notin X_N$ and any $Q_N\in \mathcal{E}_N^0$, we have
\begin{equation*}
  ||  G_{Q_N}(E,x)||\leq e^{N^{\sigma}},
\end{equation*}
and
\begin{equation}\label{gdec1}
  |G_{Q_N}(E,x;n,n^\prime)|\leq  e^{-\frac{1}{2}c_1|n-n^\prime|} \text{ for } |n-n^\prime|\geq N/10.
\end{equation}
\end{theorem}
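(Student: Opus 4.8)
The plan is to recognize $A(x):=H(x)-E=\lambda^{-1}S+\big(v(f^n(x))-E\big)\delta_{nn'}$ as an instance of the framework of Theorems \ref{thmmul}--\ref{thmu} and to run that machinery starting from a perturbative base case. Since $S$ satisfies \eqref{GOnewnews}--\eqref{glc1news1} with exponent $1$, $A(x)$ satisfies \eqref{GOnew} with $\tilde\sigma=1$ (enlarging $K$ to absorb $\|v\|_\infty+|E|$; we may restrict to $E$ in a fixed bounded interval, off the spectrum the resolvent bounds being trivial), \eqref{glc1} (from analyticity of $v$ together with \eqref{glc1news}), and the Toeplitz relation \eqref{gdec72} with $f^n(x)=x+n\omega$, $n\in\Z^d$; everything below is proved for each fixed $E$. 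Fix parameters: $\delta<1$ close to $1$ so that $(\sigma-1)\delta+\delta^2>\sigma-\varepsilon/2$ (possible because $(\sigma-1)\delta+\delta^2\to\sigma$ as $\delta\to1$); a small $\iota\in(0,\delta)$; exponents $\zeta<\mu<1$; and a decay rate $c_2\in\big(\tfrac12 c_1,(1-5^{-1})c_1\big)$, say $c_2=\tfrac35 c_1$.

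Base case (property $P$ at two initial scales). Let $k_v,C_v,D_v$ be such that $\mathrm{Leb}\{y\in\T:|v(y)-E|<\eta\}\le C_v\eta^{1/k_v}$ for all $E,\eta$, with this set a union of at most $D_v$ arcs (a standard transversality estimate for the non-constant analytic $v$). Fix $N_1$ a large constant and $N_2=N_1^C$, let $\lambda_0$ be large (both depending only on $\varepsilon,\kappa,\tau,\rho,\sigma,\gamma,K,K_1,c_1,v$), set $\eta(N)=e^{-k_vN^\zeta}N^{-dk_v}$ and
\[
X_N:=\{x\in\T:\ \min_{m\in\Z^d,\ |m|\le N}|v(x+m\omega)-E|<\eta(N)\}\qquad(N=N_1,N_2),
\]
so $\mathrm{Leb}(X_N)\le C_vD_vN^d\,\eta(N)^{1/k_v}\le e^{-N^{\zeta}}$ and $X_N$ is a union of $\le D_vN^d$ arcs. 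For $x\notin X_N$ and $Q_N\in\mathcal E_N^0$ the diagonal operator $R_{Q_N}\big(v(f^n(x))\delta_{nn'}-E\big)R_{Q_N}$ is invertible with norm $\le\eta(N)^{-1}\le e^{N^{\mu}}$ (using $\mu>\zeta$); expanding $\big(R_{Q_N}(H(x)-E)R_{Q_N}\big)^{-1}$ as a Neumann series in $\lambda^{-1}R_{Q_N}SR_{Q_N}$ (norm $\lesssim\lambda^{-1}$, hence convergent for $\lambda>\lambda_0$) gives $\|(R_{Q_N}(H(x)-E)R_{Q_N})^{-1}\|\le e^{N^{\mu}}$, and, after splitting each factor $e^{-c_1|m|}$ into a decay part $e^{-c_2|m|}$ and a summable path-counting part, the off-diagonal bound $e^{-c_2|n-n'|}$ for $|n-n'|\ge N/10$. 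Thus property $P$ with parameters $(\mu,\zeta,c_2)$ holds at $N_1$ and $N_2$. The finitely many scales $N<N_1$ (the small-$N$ range of the conclusion) are handled by the same expansion with $\eta(N)$ replaced by $\sim e^{-k_vN^{\sigma-\varepsilon}}$.

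Sublinear bound. Since $\omega\in\mathrm{DC}(\kappa,\tau)$, for $0\ne k\in\Z^d$ one has $\|k\omega\|\ge\tau|k|^{-\kappa}$, so the orbit points $\{x+n\omega\bmod\Z:\ |n|\le L\}$ are pairwise $\ge\tau(2L)^{-\kappa}$-separated in $\T$; an arc of length $\ell$ therefore contains at most $\ell(2L)^{\kappa}\tau^{-1}+1$ of them. Summing over the $\le D_vN_1^d$ arcs of $X_{N_1}$,
\[
\#\{n\in\Z^d:\ |n|\le L,\ x+n\omega\in X_{N_1}\}\ \le\ \mathrm{Leb}(X_{N_1})\,(2L)^{\kappa}\tau^{-1}+D_vN_1^d .
\]
For $L\in[N_3^{\delta-\iota},N_3]$ with $N_3\le e^{N_1^{c}}$ the first term is $\lesssim e^{\kappa N_1^{c}-N_1^{\zeta}}$, which is $\ll L^{1-\delta}$ (using $\zeta>\tfrac1{2K_1}=c$, or, for small $\sigma$, bootstrapping only by polynomial jumps $N_3=N_2^{C}$ so that polynomial $\times$ super-exponentially-small always wins), while $L^{1-\delta}\ge N_1^{C^2(\delta-\iota)(1-\delta)}\gg D_vN_1^d$ once $C$ is large (using $N_3\ge N_2^{C}=N_1^{C^2}$). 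This verifies the hypothesis \eqref{gdec5} of Theorem \ref{thmu} for the pair $(N_1,N_2)$.

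Iteration and conclusion. Theorem \ref{thmu} now upgrades property $P$ from $(\mu,\zeta,c_2)$ at $\{N_1,N_2\}$ to $\big(\sigma,\,(\sigma-1)\delta+\delta^2-\varepsilon/2,\,c_2-N_1^{-\vartheta_1}-N_3^{-\vartheta_2}\big)$ at every $N_3\in[N_2^{C},e^{N_1^{c}}]$, the new exponent being $>\sigma-\varepsilon$. Re-applying Theorem \ref{thmu} with a fresh base pair drawn from this range, and iterating, covers all scales after $\lesssim\log^* N$ steps: the exponent stabilizes (it does not depend on the input $\zeta$), the accumulated losses $\sum_k\big((N_1^{(k)})^{-\vartheta_1}+(N_3^{(k)})^{-\vartheta_2}\big)$ are summable and $<\tfrac1{10}c_1$ once $N_1$ is large (so the decay rate stays $>\tfrac12 c_1$), and at each step the sublinear bound is re-derived as above. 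This produces, for every $N$, a set $X_N$ with $\mathrm{Leb}(X_N)\le e^{-N^{\sigma-\varepsilon}}$ outside which $\|G_{Q_N}(E,x)\|\le e^{N^{\sigma}}$ and $|G_{Q_N}(E,x;n,n')|\le e^{-\frac12 c_1|n-n'|}$ for $|n-n'|\ge N/10$, i.e., \eqref{gdec2}--\eqref{gdec1}. The main obstacle is exactly this iteration: one must keep the bad sets $X_{N_1^{(k)}}$ output by successive applications of Theorem \ref{thmu} to be unions of polynomially many arcs in $N_1^{(k)}$ (so the separation count re-runs), which in this large-coupling, analytic-$v$ setting holds because the relevant semi-algebraic approximations behind Theorem \ref{thmmu2} have polynomial degree, and one must choose the constant $C$ governing $N_2\in[N_1^{C},e^{N_1^{c/2}}]$ and $N_3\in[N_2^{C},e^{N_1^{c}}]$ uniformly over all iterations; everything else is the perturbative base case and the Diophantine-separation count.
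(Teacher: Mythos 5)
Your overall strategy coincides with the paper's: establish property $P$ at two initial scales by a Neumann series (large $\lambda$, transversality of $v$), verify the sublinear-bound hypothesis of Theorem~\ref{thmu} via the Diophantine condition, and then bootstrap. The sublinear count is computed slightly differently: the paper observes that each arc of $X_{N_1}$ has length $<\tau N_3^{-\kappa}$ (which already forces $\zeta>c$), so by the Diophantine separation each arc meets at most one orbit point, giving $A(X_{N_1};\{n\omega\}_{n=1}^{N_3})\le e^{(\log N_1)^C}\le N_3^\varepsilon$; you instead bound each arc by $\ell(2L)^\kappa\tau^{-1}+1$ and sum. Both yield a sublinear bound under the same $\zeta>c$ constraint, so this is only a cosmetic difference.

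There is, however, a genuine error in the iteration step. You assert that ``the relevant semi-algebraic approximations behind Theorem~\ref{thmmu2} have polynomial degree'' and propose, as an alternative, to bootstrap ``only by polynomial jumps $N_3=N_2^C$.'' In the Toeplitz setting of Theorem~\ref{thmapp2} the entries $S(x;n,n')$ are only assumed to admit trigonometric approximants of degree $e^{(\log N)^{K_1}}$ (hypothesis~\eqref{glc1news1}), so by Lemma~\ref{lediscom} the bad set $X_N$ is a union of $e^{(\log N)^{C'}}$ arcs --- quasi-polynomial, not polynomial, in $N$. With polynomial jumps $N_3=N_1^{C^2}$ the target $L^{1-\delta}\sim N_1^{C^2(\delta-\iota)(1-\delta)}$ is a fixed power of $N_1$, while $e^{(\log N_1)^{C'}}$ (with $C'\ge K_1>1$) exceeds every power of $N_1$ for $N_1$ large, so the sublinear bound is eventually violated and the induction does not close. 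You must take the exponential jump $N_3\sim e^{N_1^{c}}$ as the paper does, which makes $L^{1-\delta}\gtrsim e^{(\delta-\iota)(1-\delta)N_1^c}$ overwhelm the quasi-polynomial arc count; this in turn requires $\zeta>c$, which is available after one pass of Theorem~\ref{thmu} because the output exponent $(\sigma-1)\delta+\delta^2-\varepsilon$ stabilizes near $\sigma-\varepsilon$. Once this correction is made your argument agrees with the paper's proof, which indeed uses exponential scale jumps and then refers to \cite{jls} for the standard induction.
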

\begin{theorem}\label{thmapp2'}
		Assume $S$ is self-adjoint and $\omega\in {\rm DC}(\kappa,\tau)$.
Let $H(x)$ be given by \eqref{opapp4}.
 Then for any $\varepsilon>0$, there exists $$\lambda_0=\lambda_0(\varepsilon,\kappa,\tau,\rho,\gamma,K,K_1,c_1,v)$$ such that
for any $\lambda>\lambda_0$,
\begin{equation*}
  |k(E_1)-k(E_2)|\leq  e^{-\left(\log\frac{1}{ |E_1-E_2|}\right)^{1-\varepsilon}},
\end{equation*}
provided that $|E_1-E_2|$ is sufficiently small.
\end{theorem}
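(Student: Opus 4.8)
The plan is to obtain Theorem~\ref{thmapp2'} by feeding the large deviation estimate of Theorem~\ref{thmapp2} into the abstract regularity statement of Theorem~\ref{thmholder}, and then identifying the finite-volume eigenvalue counting function with the integrated density of states. Fix the target $\varepsilon>0$. First I would fix $\sigma\in(0,1)$ (say $\sigma=3/4$, or close to $1$) and an auxiliary small $\varepsilon_1>0$, and apply Theorem~\ref{thmapp2} with parameters $(\varepsilon_1,\sigma)$ to $A(x)=H(x)=\lambda^{-1}S+v(x+n\cdot\omega)\delta_{nn'}$. Since $S$ decays exponentially this operator has $\tilde\sigma=1$, so for $\lambda$ larger than the threshold furnished there (depending on $\varepsilon_1,\sigma,\kappa,\tau,\rho,\gamma,K,K_1,c_1,v$) we get, for every $E\in\R$ and every $N$, a set $X_N=X_N(E)\subset\T$ with $\mathrm{Leb}(X_N)\le e^{-N^{\sigma-\varepsilon_1}}$ such that for all $x\notin X_N$ and all $Q_N\in\mathcal{E}_N^0$ the Green's functions obey $\|G_{Q_N}(E,x)\|\le e^{N^\sigma}$ and $|G_{Q_N}(E,x;n,n')|\le e^{-\frac{1}{2}c_1|n-n'|}$ for $|n-n'|\ge N/10$. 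This is precisely the input required by Theorem~\ref{thmholder}, with $\zeta=\sigma-\varepsilon_1$, $\tilde\sigma=1$, $c=\frac{1}{2}c_1$; note also that $H(x)$ is Toeplitz with respect to the shift, as demanded there.

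Next I would check the hypothesis \eqref{gx} of Theorem~\ref{thmholder} for \emph{every} $x_0\in\T$. Since $\omega\in\mathrm{DC}(\kappa,\tau)$, the finite orbit $\{x+n\cdot\omega\bmod 1:n\in\Z^d,\ |n|\le N\}$ equidistributes on $\T$ with discrepancy tending to $0$ uniformly in $x$; approximating an arbitrary measurable $\mathcal{S}\subset\T$ from outside by a finite union of intervals then yields \eqref{gx} for all $x_0$. Applying Theorem~\ref{thmholder} with any such $x_0$ gives, for any $\varepsilon_2>0$ and all sufficiently small $|E_1-E_2|$,
\begin{equation*}
  |k(x_0,E_1,E_2)|\le e^{-|\log|E_1-E_2||^{\,\zeta/\sigma-\varepsilon_2}}=e^{-|\log|E_1-E_2||^{\,1-\varepsilon_1/\sigma-\varepsilon_2}}.
\end{equation*}

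It remains to pass from $k(x_0,\cdot,\cdot)$ to the IDS $k(E)$. Here self-adjointness of $S$ (and reality of $v$) is used: each $H(x)$ is self-adjoint, the $\Z^d$-ergodic family $\{H(x)\}$ admits an IDS $k(E)$, and by the Pastur--Shubin trace formula $k(E)$ is the a.e.\ limit of the normalized eigenvalue counting functions at every continuity point $E$ of $k$. Choosing $x_0$ in the full-measure set where this holds (which, by the previous step, also satisfies \eqref{gx}), one gets $k(E_2)-k(E_1)\le k(x_0,E_1,E_2)$ for continuity points $E_1<E_2$; arbitrary $E_1<E_2$ are handled by approximation with continuity points, the discontinuity set of $k$ being countable, with negligible error after letting the approximants converge. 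Choosing finally $\varepsilon_1,\varepsilon_2$ so that $\varepsilon_1/\sigma+\varepsilon_2<\varepsilon$ yields $|k(E_1)-k(E_2)|\le e^{-|\log|E_1-E_2||^{1-\varepsilon}}$ for $|E_1-E_2|$ small, with $\lambda_0=\lambda_0(\varepsilon,\kappa,\tau,\rho,\gamma,K,K_1,c_1,v)$.

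I expect the main difficulty to be bookkeeping rather than conceptual, since the analytic substance already sits in Theorems~\ref{thmapp2} and \ref{thmholder}. The one genuinely delicate point is the measure-theoretic identification of the ergodic IDS with the finite-volume counting function evaluated at the particular $x_0$ handed to Theorem~\ref{thmholder} (together with the continuity-point technicality), and, to a lesser extent, the uniform-in-$x$ equidistribution \eqref{gx} for the Diophantine $\Z^d$-shift; once these are settled the exponent $1-\varepsilon$ emerges simply because $b=1$ forces $\zeta/\sigma$ to approach $1$.
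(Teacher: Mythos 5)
Your proposal is correct and follows exactly the route the paper takes: feed the large deviation estimate of Theorem~\ref{thmapp2} into the abstract modulus-of-continuity bound of Theorem~\ref{thmholder}, where $\tilde\sigma=1$, $b=1$, and $\zeta=\sigma-\varepsilon_1$ make the exponent $\zeta/\sigma-\varepsilon_2$ approach $1$. The paper compresses this to a single line in Section~9, whereas you additionally spell out the equidistribution hypothesis~\eqref{gx} (which the paper takes as given) and the standard Pastur--Shubin identification of the ergodic IDS with the finite-volume eigenvalue counting function at a suitable full-measure set of $x_0$.
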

\begin{theorem}\label{thmapp2''}
		Assume $S$ is self-adjoint.
Let $H(x)$ be given by \eqref{opapp4}.
Then for any $\varrho>0$, there is  $\lambda_0=\lambda_0(\varrho,\rho,\gamma,K,K_1,c_1,v)>0$ such that the following statement holds.
For any $\lambda> \lambda_0$ and any $x\in\mathbb{T} $, there exists  $\Omega=\Omega(x,\lambda, S,v,\varrho)\subset \mathbb{T}^{d}$ with $\mathrm{Leb} (\mathbb{T}^{d}\setminus \Omega)\leq \varrho $ such that for any $\omega\in \Omega$, $H(x)$ satisfies Anderson localization.
\end{theorem}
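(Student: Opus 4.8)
The plan is to deduce Anderson localization from the large deviation theorem, the semialgebraic structure of the bad phase sets, and the discrepancy estimates of Section 8, following the standard multi-scale scheme. Fix $x\in\T$ and $\varrho>0$; $\omega$ ranges over $\T^{d}$ and we will produce the good set $\Omega$. By the Schnol/Berezanskii theorem, available because $S$ and hence $H(x)$ is a bounded self-adjoint operator, it suffices to show that for every $\omega\in\Omega$ each polynomially bounded solution $\psi$ of $H(x)\psi=E\psi$, $E\in\R$, decays exponentially: such generalized eigenfunctions carry the spectral measure, so their exponential decay forces pure point spectrum with exponentially localized eigenfunctions. This is the one place where self-adjointness is essential, and it is also what makes the energy elimination below work, since it forces $E$ to be real and to lie in the spectrum.

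First I would record the all-scale large deviation estimate. Fixing $\sigma\in(1-\tfrac1\kappa,1)$ (here $b=1$) and applying Theorem \ref{thmapp2} at every large scale $N$ and every $E\in\R$, there is $X_N(E)\subset\T$ with $\mathrm{Leb}(X_N(E))\le e^{-N^{\sigma-\varepsilon}}$ such that for $x'\notin X_N(E)$ and every $Q_N\in\mathcal{E}_N^{0}$ one has $\|G_{Q_N}(E,x')\|\le e^{N^{\sigma}}$ and $|G_{Q_N}(E,x';p,q)|\le e^{-\frac{c_1}{2}|p-q|}$ for $|p-q|\ge N/10$; by the Toeplitz covariance \eqref{gdec72} and $f^{m}(x)=x+m\omega$, the translate $m+Q_N$ is in class $SG_N$ for $H(x)$ at $E$ precisely when $x+m\omega\notin X_N(E)$. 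Since $v$ is analytic and $S$ obeys the trigonometric polynomial approximation \eqref{glc1news1}, each $X_N(E)$ may be replaced, up to an error $e^{-N}$, by a semialgebraic subset of $\T$ of degree $\le e^{(\log N)^{CK_1}}$ depending on $E$, so that the event ``$m+Q_N$ is not $SG_N$ at $E$'' becomes a semialgebraic condition in $(\omega,E)$ of comparable degree.

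Next I would build $\Omega$, which is where the energy is eliminated. For a large scale $R$ put $N_1=\lfloor R^{\beta}\rfloor$ with $\beta>0$ small and $N_2=N_1^{C}$, arranged so that the scale constraints of Theorem \ref{thmu} hold for $R$ large. Let $\mathcal{Y}_R$ be the set of $\omega$ for which there are $E\in[-\|H(x)\|,\|H(x)\|]$ and $m_0\in\Z^{d}$, $|m_0|\le 5R$, with the cube $m_0+[-R,R]^{d}$ containing more than $R^{1-\delta}$ indices $m$ such that $x+m\omega\in X_{N_1}(E)$. The key claim is $\mathrm{Leb}(\mathcal{Y}_R)\le e^{-R^{c'}}$ for some $c'>0$, and I would prove it by: (i) for fixed $m_0$ and $E$, bounding $\#\{m\in m_0+[-R,R]^{d}:\ x+m\omega\in X_{N_1}(E)\}$ by $(2R+1)^{d}e^{-N_1^{\sigma-\varepsilon}}$ plus a discrepancy term, and invoking the discrepancy estimates of Section 8 for Diophantine $\omega$ to show this count exceeds $R^{1-\delta}$ only for $\omega$ in a set of measure $\le e^{-R^{2c'}}$; (ii) eliminating the continuum of energies by noting, via self-adjointness, that a generalized eigenvalue lies within $e^{-c_1N_0/3}$ of the spectrum of a fixed box of size $N_0=N_0(\psi)$, hence within a semialgebraically parametrized finite family of candidate values (the modulus of continuity of the IDS, Theorem \ref{thmapp2'}, controls how many boxes can be simultaneously resonant at one such value); and (iii) summing over the $\le(10R)^{d}$ admissible $m_0$. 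Setting
\[
\Omega:=\Big(\T^{d}\setminus\bigcup_{R\ge R_0}\mathcal{Y}_R\Big)\cap\{\omega\ \text{Diophantine}\}\cap\{\omega:\ \text{Theorem \ref{thmapp2} applies}\},
\]
one gets $\mathrm{Leb}(\T^{d}\setminus\Omega)\le\sum_{R\ge R_0}e^{-R^{c'}}\le\varrho$ once $R_0=R_0(\varrho)$ is large and $\lambda\ge\lambda_0$.

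Finally I would establish the decay. Let $\psi\neq0$ solve $H(x)\psi=E\psi$ with $|\psi(n)|\le B(1+|n|)^{d}$, choose $n_0$ with $\psi(n_0)\neq0$, and take $|n|$ so large that $R:=\lfloor|n-n_0|/4\rfloor\ge R_0^{2}$. The cube $\Lambda_n:=n+[-R,R]^{d}\in\mathcal{E}_R$ avoids $n_0$, has width $\asymp R$, and lies in $[-5|n|,5|n|]^{d}$; since $\omega\notin\mathcal{Y}_R$, at most $R^{1-\delta}$ of the scale-$N_1$ sub-boxes of $\Lambda_n$ fail to be $SG_{N_1}$ at $E$, which is exactly the sublinear hypothesis \eqref{gdec5} needed to run Theorem \ref{thmu} with $f^{m}(x)=x+m\omega$ and property $P$ at scales $N_1,N_2$ supplied by Theorem \ref{thmapp2} (alternatively, one applies the deterministic Theorem \ref{thmmul} directly to $\Lambda_n$). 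Hence $\Lambda_n$ is good: $\|G_{\Lambda_n}(E,x)\|\le e^{R^{\sigma}}$ and $|G_{\Lambda_n}(E,x;p,q)|\le e^{-\frac{c_1}{3}|p-q|}$ for $|p-q|\ge R/10$. Plugging this into the resolvent identity $\psi(n)=-\sum_{p\in\Lambda_n,\,q\notin\Lambda_n}G_{\Lambda_n}(E,x;n,p)\,\lambda^{-1}S(p,q)\,\psi(q)$ and using $\mathrm{dist}(n,\partial\Lambda_n)=R$, the exponential off-diagonal decay of $S$, and $|\psi(q)|\le B(1+3|n|)^{d}$, one gets $|\psi(n)|\le C|n|^{2d}e^{-\frac{c_1}{7}R}\le e^{-c''|n-n_0|}$ for $|n|$ large. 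Therefore $\psi\in\ell^{2}$, $E$ is an eigenvalue, and $\psi$ decays exponentially; as $E$ was an arbitrary generalized eigenvalue, $H(x)$ has pure point spectrum with exponentially decaying eigenfunctions for every $\omega\in\Omega$. I expect step (ii), the elimination of the energy — making the estimates uniform over the a priori unknown $E$ via semialgebraic geometry, the finite-box localization of $E$, and the IDS regularity — to be the main obstacle.
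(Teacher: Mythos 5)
The paper does not actually write out a proof of Theorem \ref{thmapp2''}: after establishing the large deviation theorem (Theorem \ref{thmapp2}), it states that the deduction of Anderson localization is ``rather standard'' and defers to \cite[Section 3]{gafa}, \cite[Section 6]{bgs}, and \cite[Chapter XV]{bbook}, with the single remark that the semi-algebraic degree of the bad sets is now $e^{(\log N)^{C}}$ rather than $N^{C}$. Your plan follows exactly that standard Bourgain--Goldstein--Schlag route: the Schnol reduction to decay of generalized eigenfunctions, the all-scale LDT from Theorem \ref{thmapp2}, the semi-algebraic description of $X_{N}(E)$, the elimination of a small-measure set of $\omega$ (depending on $x$) at each scale, and the paving/resolvent-identity step producing exponential decay. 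So the overall structure matches what the paper intends.

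One substantive misstep in your step (ii): the standard energy elimination does not invoke the modulus of continuity of the IDS (Theorem \ref{thmapp2'}). What it actually uses is (a) that a nonzero polynomially bounded generalized eigenfunction forces $E$ to lie within $e^{-cN_{0}}$ of one of the $O(N_{0}^{d})$ eigenvalues of $H_{[-N_{0},N_{0}]}(x)$, so the continuum of energies is replaced by a finite, $N_{0}$-parametrized list; and (b) the semi-algebraic complexity and covering estimates --- Lemma \ref{lediscom} bounding the number of connected components and Lemma \ref{ledis} covering a low-measure semi-algebraic set by $\epsilon$-balls --- applied to the set of $(\omega,E)$ with too many resonant sub-boxes, in order to project out $E$. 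This is also precisely where the remark about degree $e^{(\log N)^{C}}$ matters: the combinatorial counts from Lemmas \ref{lediscom}--\ref{ledis} now grow like $e^{(\log N)^{C'}}$ instead of polynomially, but this is still negligible against the super-polynomial measure gain $e^{-N^{\zeta}}$ from the LDT, so the $\omega$-elimination still closes. Replacing your IDS invocation with this projection argument gives the proof the paper has in mind.
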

\begin{remark}
Theorem \ref{thmapp2''} is a generalization of Theorem 2 in p.138 of  \cite{bbook} and main result in \cite{cd93}.
\end{remark}
\subsection{ Skew-shifts: $d=1$, arbitrary $b$}
Let $f$: $\T^b\rightarrow \T^b$ be the skew-shift   defined as follows
\begin{equation}\label{gdec5skew}
f(x_1,x_2,...,x_b)=(x_1+\omega, x_2+x_1,...,x_b+x_{b-1}).
\end{equation}

Let $H(x)$ on $\ell^2(\Z)$ be given by
\begin{equation}\label{opapp5}
  H(x)=\lambda^{-1}S(x)+ v(f^{n}(x))\delta_{nn^\prime}, 
\end{equation}
$v $ is analytic on  $\T^b$.
\begin{theorem}\label{thmapp5}
		
Let $H(x)$ be given by \eqref{opapp5}.
Assume   $\omega\in {\rm DC}(\kappa,\tau)$ and $ 1-\frac{1}{2^{b-1}b\kappa}<\sigma<1$. Then for any $\varepsilon>0$, there exists $\lambda_0=\lambda_0(\varepsilon,\kappa,\tau,\rho,\sigma,\gamma,K,K_1,c_1,v)$ such that
for any $\lambda>\lambda_0$ and  any $N$, there exists $X_N\subset \T^b$ such that
\begin{equation*}
 {\rm Leb} (X_N)\leq e^{-N^{\frac{\sigma-1}{2^{b-1}b^2\kappa}+\frac{1}{4^{b-1}b^3\kappa^2}-\varepsilon}},
\end{equation*}
and for any $x\notin X_N$, we have
\begin{equation*}
  ||  G_{[-N,N]}(E,x)||\leq e^{N^{\sigma}}
\end{equation*}
and
\begin{equation*}
  |G_{[N,-N]}(E,x;n,n^\prime)|\leq  e^{-\frac{1}{2}c_1|n-n^\prime|} \text{ for } |n-n^\prime|\geq N/10.
\end{equation*}
\end{theorem}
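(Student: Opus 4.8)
The plan is to apply the general machinery of Theorems~\ref{thmmul}, \ref{thmmu2} and~\ref{thmu} to the energy-dependent family
\[
A(x)=H(x)-E=\lambda^{-1}S(x)+\bigl(v(f^{n}(x))-E\bigr)\delta_{nn'}\qquad\text{on }\ell^{2}(\Z),
\]
in the regime $d=1$, $k=1$, $b_{1}=b$, $\tilde{\sigma}=1$ (the entries of $S$ decay exponentially, \eqref{GOnewnews}), with $f$ the skew-shift \eqref{gdec5skew}. Everything is organized around the choice $\delta:=\frac{1}{2^{b-1}b\kappa}$: then the hypothesis $1-\frac{1}{2^{b-1}b\kappa}<\sigma<1$ is exactly $1-\delta<\sigma<\tilde{\sigma}$, so Theorem~\ref{thmu} applies and the exponent it returns for the exceptional set (which for $k=1$ is the full set, not just a section) is
\[
\frac{\sigma-1}{b}\delta+\frac{\delta^{2}}{b}-\varepsilon=\frac{\sigma-1}{2^{b-1}b^{2}\kappa}+\frac{1}{4^{b-1}b^{3}\kappa^{2}}-\varepsilon ,
\]
which is precisely the bound claimed. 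We may also enlarge $K_{1}$ in~\eqref{glc1news1} without affecting the hypotheses (a trigonometric polynomial of degree $<e^{(\log N)^{K_{1}}}$ also has degree $<e^{(\log N)^{K_{1}'}}$ for $K_{1}'\ge K_{1}$, and the Lipschitz-in-$x$ bound only gets weaker); we take $K_{1}$ large enough that $c:=\frac{1}{2}\min\{\frac{1}{K_{1}},\tilde{\sigma}\}=\frac{1}{2K_{1}}$ is smaller than the output $\zeta$-parameter above, which is what lets the scale-to-scale step be iterated.

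\emph{Initial scales.} At a fixed scale $N_{0}$ (depending only on the admissible parameters) and for $\lambda>\lambda_{0}(N_{0},v,\dots)$, property~$P$ with parameters $(\mu,\zeta,c_{2})$, $\mu$ small, $\zeta$ close to $1$, $c_{2}$ close to $c_{1}$, holds at $N_{0}$ and at a companion scale $N_{0}'$ in the window permitted by Theorem~\ref{thmu}. Indeed, for $\lambda$ large $A(x)$ is diagonally dominant off the resonant set where $v(f^{n}(x))$ lies within $O(\lambda^{-1})$ of $E$, and a Neumann / resolvent-identity expansion in $\lambda^{-1}S(x)(D(x)-E)^{-1}$, $D(x)=\mathrm{diag}(v(f^{n}(x)))$, yields $\|(R_{Q_{N}}A(x)R_{Q_{N}})^{-1}\|\le e^{N^{\mu}}$ and off diagonal decay at a rate $c_{2}$ as close to $c_{1}$ as desired (each term of the series is a path sum weighted by the exponentially decaying entries of $S$, and the factor $\lambda^{-1}$ does not affect the rate once $\lambda$ is large); the set of $x$ for which a box of the fixed size $N_{0}$ nevertheless fails is controlled through the non-constancy and analyticity of $v$ (its sublevel sets obey $\mathrm{Leb}\{y:|v(y)-E|<\eta\}\le C\eta^{1/s_{0}}$ uniformly in $E$) together with the fact that the skew-shift preserves Lebesgue measure and depends on each coordinate $x_{j}$ through an affine map with nonvanishing leading binomial coefficient, which also bounds every section $X_{N_{0}}(x_{i}^{\neg})$; for $N_{0}$ fixed this is $\le e^{-N_{0}^{\zeta}}$ once $\lambda$ is large. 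Finitely many smaller scales are covered by a Combes--Thomas estimate.

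\emph{Sublinear bound.} By~\eqref{glc1news1} and analyticity of $v$, for every scale $M$ the bad set $X_{M}$ lies, up to an exponentially small error, inside a semi-algebraic set $\mathcal{S}\subset\T^{b}$ of degree $\le e^{C(\log M)^{K_{1}}}$ with $\mathrm{Leb}(\mathcal{S})\le e^{-M^{\zeta}}$. Since the coordinates of $f^{n}(x)$ are polynomials in $n$ with binomial-coefficient coefficients, $\{n:f^{n}(x)\in\mathcal{S}\}$ is the integer trace of a semi-algebraic subset of $\R$ of controlled degree, and the discrepancy estimates for semi-algebraic sets under the skew-shift of Section~8, combined with $\omega\in{\rm DC}(\kappa,\tau)$, give
\[
\#\{n\in\Z:|n|\le L,\ f^{n}(x)\in\mathcal{S}\}\le L^{1-\delta}
\]
for every $x$ and every $L$ in the range $[N_{3}^{\delta-\iota},N_{3}]$ with $N_{3}\le e^{M^{c}}$ occurring in Theorem~\ref{thmu}: since $\log L\le M^{c}$ is dominated by $M^{\zeta}$ (this is where $c<\zeta$ enters), the main term $L\,\mathrm{Leb}(\mathcal{S})$ is $\le\frac12 L^{1-\delta}$, while $N_{3}^{\delta-\iota}$ exceeds a fixed power of the degree $e^{C(\log M)^{K_{1}}}$ once $N_{3}$ is taken in the upper part of its window. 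The factor $2^{b-1}$ is produced by inducting on the coordinates $x_{b},x_{b-1},\dots,x_{2}$ of the skew-shift: at each step one freezes the lower coordinates, reduces to a shift-type count governed by the Diophantine exponent $\kappa$, and pays a squaring of the relevant scale — hence a halving of the saving exponent. This is the crux of the proof.

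\emph{Iteration and conclusion.} Theorem~\ref{thmu}, fed with property~$P$ at two comparable scales $(M,M')$ and the sublinear bound above, upgrades property~$P$ to scale $N_{3}\in[(M')^{C},e^{M^{c}}]$, improving the parameters to $\bigl(\sigma,\ \frac{\sigma-1}{b}\delta+\frac{\delta^{2}}{b}-\varepsilon,\ c_{2}-2M^{-\vartheta_{1}}-N_{3}^{-\vartheta_{2}}\bigr)$; because $K_{1}$ was chosen so that the new $\zeta$-parameter still exceeds $c$, this step can be iterated. To reach an arbitrary target scale $N$ one runs the step from the initial scales upward along a sequence growing roughly like iterated exponentials — equivalently, a recursion of depth $O(\log^{*}N)$ whose base is the fixed perturbative scale $N_{0}$ — checking at each level that the companion scale and the range of $L$ in the sublinear bound lie in the permitted windows. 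The accumulated off diagonal losses form a convergent series bounded by $CN_{0}^{-\vartheta_{1}}$, so the rate stays $\ge c_{2}-CN_{0}^{-\vartheta_{1}}\ge\frac12 c_{1}$ for $\lambda$ large (recall $c_{2}$ may be taken close to $(1-5^{-1})c_{1}$ and $\tilde{\sigma}=1$). Translating property~$P$ at scale $N$ back into Green's functions gives $\|G_{[-N,N]}(E,x)\|\le e^{N^{\sigma}}$, off diagonal decay $e^{-\frac12 c_{1}|n-n'|}$ for $|n-n'|\ge N/10$, and an exceptional set $X_{N}$ with $\mathrm{Leb}(X_{N})\le e^{-N^{\frac{\sigma-1}{2^{b-1}b^{2}\kappa}+\frac{1}{4^{b-1}b^{3}\kappa^{2}}-\varepsilon}}$, as claimed. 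The only genuinely new input, and the main obstacle, is the sublinear bound for the skew-shift with the \emph{sharp} exponent $\delta=\frac{1}{2^{b-1}b\kappa}$, where the coordinate-peeling induction and the semi-algebraic discrepancy analysis of Section~8 are essential.
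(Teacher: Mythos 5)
Your proposal is correct and follows essentially the same route as the paper: set $\delta=\frac{1}{2^{b-1}b\kappa}$, use Lemma~\ref{ledissk} (skew-shift discrepancy, in place of the shift Lemma~\ref{ledissh}) together with Theorem~\ref{thmdis} to get the sublinear bound with exponent $\delta-\varepsilon$, and then iterate Theorem~\ref{thmu} with $k=1$, $b_1=b$, $\tilde{\sigma}=1$ from the perturbative initial scales — exactly the paper's scheme of ``replace Lemma~\ref{ledissh} by Lemma~\ref{ledissk} and follow the proof of Theorem~\ref{thmapp4}.'' The extra unpacking you supply (the $2^{b-1}$ from iterated Weyl differencing, the semi-algebraic degree $e^{(\log N)^{K_1}}$, the geometric accumulation of the losses in $c_2$) correctly fills in what the paper handles by citation and is consistent with the computation $\frac{\sigma-1}{b}\delta+\frac{\delta^2}{b}=\frac{\sigma-1}{2^{b-1}b^2\kappa}+\frac{1}{4^{b-1}b^3\kappa^2}$.
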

\begin{remark}
Under stronger assumptions that $\omega\in {\rm DC}(2,\tau)$, $v$ and  each element of $S$   are   nonconstant trigonometric polynomials,
the large deviation theorem appearing in Theorem \ref{thmapp5} without explicit bounds was proved   for $d=2$ \cite{bbook} and arbitrary $d$ \cite{sy19}.
\end{remark}
\begin{theorem}\label{thmapp5'}
		Assume $S$ is self-adjoint  and $\omega\in {\rm DC}(\kappa,\tau)$.
Let $H(x)$ be given by \eqref{opapp5}.
 Then for any $\varepsilon>0$, there exists $$\lambda_0=\lambda_0(\varepsilon,\kappa,\tau,\rho,\gamma,K,K_1,c_1,v)$$ such that
for any $\lambda>\lambda_0$, we have
\begin{equation*}
  |k(E_1)-k(E_2)|\leq  e^{-\left(\log\frac{1}{ |E_1-E_2|}\right)^{\frac{1}{4^{b-1}b^3\kappa^2}-\varepsilon}},
\end{equation*}
provided that $|E_1-E_2|$ is sufficiently small.
\end{theorem}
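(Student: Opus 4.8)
The plan is to deduce Theorem~\ref{thmapp5'} by feeding the large deviation theorem for skew-shifts, Theorem~\ref{thmapp5}, into the abstract modulus-of-continuity estimate Theorem~\ref{thmholder}, and then optimizing the internal exponent $\sigma$. First I would record the structural facts. The operator $A(x)=H(x)=\lambda^{-1}S(x)+v(f^n(x))\delta_{nn'}$ is a Toeplitz operator matrix on $\ell^2(\Z)$ with respect to the skew-shift $f$ in the sense of \eqref{gdec72}, and it satisfies the decay and regularity bounds \eqref{GOnew} and \eqref{glc1} with $\tilde\sigma=1$ (the hopping term $S$ decays exponentially and the potential term is diagonal); the same is true of $A(x)-E$ for each $E\in\R$. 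Self-adjointness of $S$ makes $H(x)$ self-adjoint, so the finite-volume eigenvalue-counting functions in \eqref{ggids} make sense, and by the usual ergodic-theoretic argument the limit $k(x,E)$ exists and equals a deterministic non-decreasing function $k(E)$, with $k(x,E_1,E_2)=k(E_2)-k(E_1)$, for Lebesgue-a.e.\ $x\in\T^b$.

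Next I would verify the equidistribution hypothesis \eqref{gx}. Since $\omega\in\mathrm{DC}(\kappa,\tau)$ is irrational, the skew-shift $f$ on $\T^b$ is uniquely ergodic with respect to Lebesgue measure (Furstenberg/Weyl). Hence for every $x_0\in\T^b$ and every open $U\subset\T^b$, $\frac1{2N+1}\#\{|n|\le N:\ f^n(x_0)\in U\}\to\mathrm{Leb}(U)$, and by outer regularity $\limsup_{N\to\infty}\frac1{2N+1}\#\{|n|\le N:\ f^n(x_0)\in\mathcal S\}\le\mathrm{Leb}(\mathcal S)$ for every measurable $\mathcal S$. Thus \eqref{gx} holds for \emph{every} $x_0$; in particular I may fix one $x_0$ that also lies in the full-measure set where $k(x_0,E_1,E_2)=k(E_2)-k(E_1)$.

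Now fix $\varepsilon>0$. For a parameter $\sigma\in\bigl(1-\tfrac{1}{2^{b-1}b\kappa},1\bigr)$ and a small $\varepsilon'>0$ to be chosen, Theorem~\ref{thmapp5} produces $\lambda_0$ so that for all $\lambda>\lambda_0$ and all $E$ there is $X_N(E)\subset\T^b$ with $\mathrm{Leb}(X_N(E))\le e^{-N^{\zeta}}$, $\zeta:=\tfrac{\sigma-1}{2^{b-1}b^2\kappa}+\tfrac{1}{4^{b-1}b^3\kappa^2}-\varepsilon'$, such that for $x\notin X_N(E)$ one has $\|G_{[-N,N]}(E,x)\|\le e^{N^{\sigma}}$ and $|G_{[-N,N]}(E,x;n,n')|\le e^{-\frac12 c_1|n-n'|}$ for $|n-n'|\ge N/10$ (for $|E|$ large these are trivial with $X_N(E)=\emptyset$, so no interval restriction on $E$ is needed). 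These are exactly the hypotheses of Theorem~\ref{thmholder} with $\tilde\sigma=1$, $c=c_1/2>0$ and this $\zeta$. Applying Theorem~\ref{thmholder} at $x_0$ gives, for any $\varepsilon''>0$, $|k(x_0,E_1,E_2)|\le e^{-|\log|E_1-E_2||^{\zeta/\sigma-\varepsilon''}}$ for $|E_1-E_2|$ small, and $|k(x_0,E_1,E_2)|=|k(E_1)-k(E_2)|$ by the choice of $x_0$. Finally, as $\sigma\to1^-$ and $\varepsilon',\varepsilon''\to0$,
\[
\frac{\zeta}{\sigma}-\varepsilon''=\frac1\sigma\Bigl(\frac{\sigma-1}{2^{b-1}b^2\kappa}+\frac{1}{4^{b-1}b^3\kappa^2}-\varepsilon'\Bigr)-\varepsilon''\ \longrightarrow\ \frac{1}{4^{b-1}b^3\kappa^2},
\]
so one can pick $\sigma$ close enough to $1$ and $\varepsilon',\varepsilon''$ small enough (all depending only on $\varepsilon$ and the fixed data, which is why the final $\lambda_0$ does not list $\sigma$) that $\zeta/\sigma-\varepsilon''\ge\tfrac{1}{4^{b-1}b^3\kappa^2}-\varepsilon$; since $a\mapsto e^{-t^a}$ is decreasing for $t=|\log|E_1-E_2||>1$, this yields the claimed bound for all $E_1,E_2$ with $|E_1-E_2|$ sufficiently small.

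I do not expect a genuine obstacle here, since the substantive content is entirely in Theorems~\ref{thmapp5} and~\ref{thmholder}; the proof of Theorem~\ref{thmapp5'} is an assembly. The only points needing (standard) care are unique ergodicity of the skew-shift for \eqref{gx}, the identification of $k(x_0,\cdot,\cdot)$ with the deterministic IDS at a good base point $x_0$, and the elementary parameter optimization that extracts the exponent $\tfrac{1}{4^{b-1}b^3\kappa^2}$.
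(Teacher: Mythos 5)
Your proposal is correct and follows the paper's approach exactly: the paper proves Theorem~\ref{thmapp5'} by applying Theorem~\ref{thmholder} with $\sigma=1-\varepsilon$ to the large deviation theorem of Theorem~\ref{thmapp5}, which is precisely your assembly of the LDT with $\zeta=\tfrac{\sigma-1}{2^{b-1}b^2\kappa}+\tfrac{1}{4^{b-1}b^3\kappa^2}-\varepsilon'$ into the modulus-of-continuity theorem, followed by the limit $\sigma\to1^-$. The verifications you spell out (structural hypotheses, unique ergodicity for \eqref{gx}, identification of $k(x_0,\cdot,\cdot)$ with the IDS, and triviality for large $|E|$) are treated as standard by the paper but are exactly the right points to check.
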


\begin{corollary}\label{coroapp5}
		Assume $S$ is self-adjoint.
Let $H(x)$ be given by \eqref{opapp5}.
 Then  for almost every $\omega\in \R$ the following is true. For  any $\varepsilon>0$, there exists $\lambda_0=\lambda_0(\varepsilon,\omega,\rho,\gamma,K,K_1,c_1,v)$ such that
for any $\lambda>\lambda_0$,
\begin{equation*}
  |k(E_1)-k(E_2)|\leq  e^{-\left(\log\frac{1}{ |E_1-E_2|}\right)^{\frac{1}{4^{b-1}b^3}-\varepsilon}},
\end{equation*}
provided that $|E_1-E_2|$ is sufficiently small.
\end{corollary}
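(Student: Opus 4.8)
The plan is to derive Corollary \ref{coroapp5} from Theorem \ref{thmapp5'} by letting the Diophantine exponent $\kappa$ decrease to its optimal value $\kappa\to 1^{+}$ over a full-measure set of frequencies, exactly as Corollary \ref{coroapp1} is obtained from Theorem \ref{thmapp1'}. Recall that the skew-shift \eqref{gdec5skew} underlying \eqref{opapp5} is driven by a single scalar $\omega\in\R$, so the hypothesis of Theorem \ref{thmapp5'} is the one-dimensional condition $\omega\in{\rm DC}(\kappa,\tau)$, for which $\kappa\ge 1$.

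First I would produce the relevant full-measure set. Fix a sequence $\kappa_j\downarrow 1$ with each $\kappa_j>1$. For each fixed $j$, a standard Borel--Cantelli estimate — on $\omega\in[0,1]$ the set $\{\omega:\|k\omega\|<\tau k^{-\kappa_j}\}$ has Lebesgue measure $O(\tau k^{-\kappa_j})$, and $\sum_{k\ge1}k^{-\kappa_j}<\infty$ — shows that $\bigcup_{\tau>0}{\rm DC}(\kappa_j,\tau)$ has full Lebesgue measure in $\R$. Consequently $\Omega_0:=\bigcap_j\bigcup_{\tau>0}{\rm DC}(\kappa_j,\tau)$ has full measure, and for every $\omega\in\Omega_0$ and every $\kappa>1$ there exists $\tau=\tau(\omega,\kappa)>0$ with $\omega\in{\rm DC}(\kappa,\tau)$ (pick any $\kappa_j\le\kappa$).

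Now fix $\omega\in\Omega_0$ and $\varepsilon>0$. Since $\frac{1}{4^{b-1}b^{3}\kappa^{2}}\to\frac{1}{4^{b-1}b^{3}}$ as $\kappa\downarrow1$, choose $\kappa=\kappa(\varepsilon)>1$ with $\frac{1}{4^{b-1}b^{3}\kappa^{2}}\ge\frac{1}{4^{b-1}b^{3}}-\frac{\varepsilon}{2}$ and take the corresponding $\tau=\tau(\omega,\kappa)>0$ from $\Omega_0$. Apply Theorem \ref{thmapp5'} with this pair $(\kappa,\tau)$ and with $\varepsilon/2$ in place of $\varepsilon$: it produces a threshold $\lambda_0=\lambda_0(\varepsilon/2,\kappa,\tau,\rho,\gamma,K,K_1,c_1,v)$, and since $\kappa$ and $\tau$ are determined by $\omega$ and $\varepsilon$ alone, this may be written as $\lambda_0(\varepsilon,\omega,\rho,\gamma,K,K_1,c_1,v)$. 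For $\lambda>\lambda_0$ and $|E_1-E_2|$ small enough that $t:=\log\frac{1}{|E_1-E_2|}>1$, Theorem \ref{thmapp5'} gives $|k(E_1)-k(E_2)|\le e^{-t^{1/(4^{b-1}b^{3}\kappa^{2})-\varepsilon/2}}$; since $s\mapsto t^{s}$ is increasing for $t>1$ and $\frac{1}{4^{b-1}b^{3}\kappa^{2}}-\frac{\varepsilon}{2}\ge\frac{1}{4^{b-1}b^{3}}-\varepsilon$, the right-hand side is bounded by $e^{-t^{1/(4^{b-1}b^{3})-\varepsilon}}$, which is the claimed bound.

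I expect no genuine obstacle: the entire analytic content of the corollary sits inside Theorem \ref{thmapp5'} (the Cartan-type and discrepancy estimates for the skew-shift, the inductive Green's function bounds, and the IDS estimate via Theorem \ref{thmholder}, with the optimization over $\sigma$ already absorbed into the $\varepsilon$ of that theorem). The only points needing care are the elementary measure-theoretic fact that almost every $\omega\in\R$ is $(\kappa,\tau)$-Diophantine with $\kappa$ arbitrarily close to $1$, and the bookkeeping that the resulting threshold $\lambda_0$ depends on $\omega$ solely through the pair $(\kappa,\tau)$ it determines; both are routine, and since the exponent $\frac{1}{4^{b-1}b^{3}\kappa^{2}}$ improves monotonically toward $\frac{1}{4^{b-1}b^{3}}$ as $\kappa\downarrow1$, there is no real optimization problem to solve.
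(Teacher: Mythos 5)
Your argument is correct and is essentially the paper's proof: the paper derives Corollary \ref{coroapp5} directly from Theorem \ref{thmapp5'} by applying it to strong Diophantine $\omega$ (a full-measure set, and any such $\omega$ lies in ${\rm DC}(\kappa,\tau_\kappa)$ for every $\kappa>1$), which amounts to the same $\kappa\downarrow 1$ limit you carry out. Your only cosmetic deviation is working with the full-measure set $\bigcap_{\kappa>1}\bigcup_{\tau>0}{\rm DC}(\kappa,\tau)$ instead of the strong Diophantine set \eqref{gsdc}; since the latter is contained in the former and both have full measure, this makes no difference.
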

Assume $S$ is taken the particular case, i.e., $S=\Delta.$  Let $b=2$. In this case, by Corollary \ref{coroapp5},
$\frac{1}{4^{b-1}b^3}=\frac{1}{32}$.  A  bound $\frac{1}{24}$ was shown by
 Bourgain, Goldstein and Schlag \cite{bgscmp}.
By  combining the arguments in Bourgain, Goldstein and Schlag \cite{bgscmp} with  the proof of Corollary \ref{coroapp5},  we are able to  improve the bound.
\begin{corollary}\label{coroapp5b}
		Assume $S$ is self-adjoint.
Let $b=2$
 and $H(x)$ be given by \eqref{opapp5}. 
 Then  for almost every $\omega\in \R$ the following is true. For  any $\varepsilon>0$, there exists $\lambda_0=\lambda_0(\varepsilon,\omega,\rho,\gamma,K,K_1,c_1,v)$ such that
for any $\lambda>\lambda_0$,
\begin{equation*}
  |k(E_1)-k(E_2)|\leq  e^{-\left(\log\frac{1}{ |E_1-E_2|}\right)^{\frac{1}{18}-\varepsilon}},
\end{equation*}
provided that $|E_1-E_2|$ is sufficiently small.
\end{corollary}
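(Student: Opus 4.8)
The plan is to run the proof of Theorem~\ref{thmapp5'} and Corollary~\ref{coroapp5} for $b=2$ and to replace exactly one ingredient --- the sublinear bound --- by the sharper estimate of Bourgain--Goldstein--Schlag \cite{bgscmp}. Recall the scheme one follows for \eqref{opapp5}: for $\lambda$ large the Lyapunov exponent $L(E)$ is positive for every $E$, which (exactly as in the proof of Theorem~\ref{thmapp5}) lets one check that the Green's function of $H(x)=\lambda^{-1}S+v(f^n(x))\delta_{nn'}$ satisfies property $P$ with parameters $(\mu,\zeta,c_2)$ at two initial scales $N_1$ and $N_2$; one then bootstraps by Theorem~\ref{thmu}, whose hypotheses are furnished by the measure estimate \eqref{Gstar} of Theorem~\ref{thmmu2} (itself resting on Theorem~\ref{thmmul} and the matrix-valued Cartan estimates) together with the sublinear bound \eqref{gdec5}; iterating the bootstrap propagates property $P$ to all scales, and feeding the resulting large deviation estimate into Theorem~\ref{thmholder} --- whose hypothesis \eqref{gx} holds for the skew-shift by unique ergodicity --- produces $|k(x_0,E_1,E_2)|\leq e^{-|\log|E_1-E_2||^{\zeta/\sigma-\varepsilon}}$. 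For the $b=2$ skew-shift the torus $\T^2$ is treated as a single block ($k=1$, $b_1=b=2$), so the exponent coming out of Theorem~\ref{thmu} is $\zeta=\frac{\sigma-1}{2}\delta+\frac{\delta^2}{2}-\varepsilon$, which tends to $\frac{\delta^2}{2}$ as $\sigma\uparrow1$; hence the IDS modulus-of-continuity exponent is, up to an arbitrarily small loss, $\frac{\delta^2}{2}$, and the whole question reduces to maximizing the admissible exponent $\delta$ in \eqref{gdec5}.

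The heart of the matter will be the sublinear bound. For the skew-shift $f^n(x_1,x_2)=(x_1+n\omega,\ x_2+nx_1+\binom{n}{2}\omega)$, \eqref{gdec5} asks that $\#\{n:|n|\leq L,\ f^n(x)\in X_{N_1}\}\leq L^{1-\delta}$, where, after approximating the entries of $S$ and $v$ by trigonometric polynomials of degree $\leq e^{(\log N_1)^{K_1}}$ as in \eqref{glc1news1}, the bad set $X_{N_1}$ is semi-algebraic of degree $\leq e^{(\log N_1)^{K_1}}$ with $\mathrm{Leb}(X_{N_1})\leq e^{-N_1^{\zeta}}$. In Corollary~\ref{coroapp5} this is obtained, for a.e.\ $\omega$ (so that the Diophantine exponent $\kappa$ of $\omega$ may be taken arbitrarily close to $1$), from the standard discrepancy estimate for the quadratic orbit --- which exploits the degree-two Weyl-sum cancellation governing the second coordinate $x_2+nx_1+\binom{n}{2}\omega$ conditionally on the degree-one equidistribution of the first coordinate $x_1+n\omega$ --- combined with the semi-algebraic counting bound of Section~8, and yields any $\delta<\frac14$. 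I would instead import the refinement of \cite{bgscmp}: a more careful treatment of the degree-two exponential sums attached to the second coordinate (passing to pairs $n,n+h$, using that $f^h$ acts on the orbit as an explicit affine-plus-quadratic shift of $\T^2$, and summing the resulting correlation sums against the semi-algebraic structure by the estimates of Section~8) improves the admissible exponent to any $\delta<\frac13$, still for a.e.\ $\omega$.

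Granting the improved sublinear bound, the conclusion is a substitution. Putting $\delta$ arbitrarily close to $\frac13$ into Theorems~\ref{thmmu2} and \ref{thmu} and iterating, property $P$ holds at all large scales with $\zeta$ arbitrarily close to $\frac{(1/3)^2}{2}=\frac{1}{18}$ and $\sigma$ arbitrarily close to $1$ (the constraint $1-\delta<\sigma<1$ of Theorem~\ref{thmu} being compatible with $\sigma\uparrow1$). Theorem~\ref{thmholder} then gives, for every $\varepsilon>0$ and $|E_1-E_2|$ sufficiently small, $|k(x_0,E_1,E_2)|\leq e^{-|\log|E_1-E_2||^{1/18-\varepsilon}}$; since $S$ is self-adjoint the IDS $k(E)$ exists, is non-decreasing, and $k(x_0,E_1,E_2)=k(E_2)-k(E_1)=|k(E_1)-k(E_2)|$, which is the assertion.

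The step I expect to be the main obstacle is the import of \cite{bgscmp}. That refinement was carried out for $S=\Delta$ and for explicit analytic sublevel sets, and one must verify it survives (i) replacing the bad set by an arbitrary semi-algebraic set of subpolynomial degree, (ii) ranging over all elementary regions $Q_{N_1}\in\mathcal{E}_{N_1}^0$ with the width and elementary-region bookkeeping demanded by Theorems~\ref{thmmul}--\ref{thmu}, and (iii) that the passage to pairs $n,n+h$ costs only the stated amount --- raising the exponent from $\frac14$ to $\frac13$ --- with no further loss in the mutual sizes of $N_1,N_2,N_3$ nor in the measure exponent $\zeta$. Once the improved sublinear bound is secured, the remaining steps are a routine substitution into the quantitative machinery already set up in Sections~4--8.
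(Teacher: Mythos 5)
Your proposal is correct and follows essentially the same route as the paper: import the improved sublinear bound from \cite{bgscmp} (the paper cites its formula (3.53), which gives $A(X_{N_1};\{f^n(x)\}_{n=1}^{N_3})\leq N_3^{1-1/3+\varepsilon}$ for a.e.\ $\omega$), plug $\delta$ close to $1/3$, $\tilde{\sigma}=1$, $\sigma=1-\varepsilon$, and $b_i=b=2$ into Theorem~\ref{thmu} to obtain $\zeta\approx(1/3)^2/2=1/18$, and feed this into Theorem~\ref{thmholder}. The three concerns you raise about transferring the \cite{bgscmp} estimate are not treated as obstacles in the paper because (3.53) is used as a black box already furnishing the counting bound for semi-algebraic bad sets along the skew-shift orbit (and in the $d=1$ setting the elementary-region bookkeeping in your concern (ii) is trivial since elementary regions are just intervals).
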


\subsection{Skew-shifts: $d=b=1$}
Let $P_b$ be the projection on the $b$th coordinate of $\T^b$, namely, $P_b(x_1,x_2,\cdots,x_b)=x_b$, where $(x_1,x_2,\cdots,x_b)\in\R^b$.
Define $H(x)$ on $\ell^2(\Z)$,
 \begin{equation}\label{opapp7}
  H(x)=\lambda^{-1}\Delta+ v(P_{b}(f^{n}(x)))\delta_{nn^\prime},
\end{equation}
where $v$ is analytic on $\T$ and $f$ is the skew-shift on $\T^b$.
\begin{theorem}\label{thmapp7}
Let $H(x)$ be given by \eqref{opapp7}.
Assume    $\omega$ is strong Diophantine and $ 1-\frac{1}{2^{b-1}b}<\sigma<1$.
Then there exists $\lambda_0=\lambda_0(v)$ such that for any $\varepsilon>0$, $\lambda>\lambda_0$
and large $N$, there exists   $X_N\subset \T^b$ such that
\begin{equation*}
 {\rm Leb} (X_N)\leq e^{-N^{\frac{\sigma-1}{2^{b-1}}+\frac{1}{4^{b-1}}-\varepsilon}},
\end{equation*}
and for any $x\notin X_N$, we have
\begin{equation*}
  ||  G_{[-N,N]}(E,x)||\leq e^{N^{\sigma}}
\end{equation*}
and
\begin{equation*}
  |G_{[N,-N]}(E,x;n,n^\prime)|\leq  e^{-\frac{1}{2}c_1|n-n^\prime|} \text{ for } |n-n^\prime|\geq N/10.
\end{equation*}
\end{theorem}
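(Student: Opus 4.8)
The plan is to derive Theorem~\ref{thmapp7} from the inductive machinery of Theorems~\ref{thmmul}, \ref{thmmu2} and, chiefly, \ref{thmu}, applied to $A(x)=H(x)=\lambda^{-1}\Delta+v(P_b(f^n(x)))\delta_{nn'}$ with $d=1$, $\tilde\sigma=1$, the torus $\T^b$ split into $k=b$ blocks of dimension $b_i=1$, and with $\delta$ chosen arbitrarily close to $\tfrac1{2^{b-1}}$ and $\sigma$ as in the hypothesis. Since $\sigma>1-\tfrac1{2^{b-1}b}\ge 1-\tfrac1{2^{b-1}}$, the requirement $1-\delta<\sigma<\tilde\sigma=1$ of Theorem~\ref{thmu} is satisfied. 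Three ingredients are needed: a base (large coupling) estimate at a fixed scale; the sublinear bound \eqref{gdec5} for the skew-shift; and Theorem~\ref{thmu}, iterated to climb from the base scale to an arbitrary large $N$ while improving the parameters to the asserted ones. For definiteness one reads ``large $N$'' as ``$N\ge N_0(\varepsilon,\lambda,v,\dots)$'', the starting scale being allowed to depend on $\lambda$.

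\emph{Base estimate.} Here one uses the perturbative input for large $\lambda$. As $v$ is analytic and non-constant, its sublevel sets obey $\mathrm{Leb}\{t\in\T:|v(t)-E|<\eta\}\le C_v\,\eta^{1/m}$ with $m=m(v)$, uniformly in $E$. Hence, off the set $X_{N_0}=\{x:\ |v(P_b(f^n(x)))-E|<e^{-N_0^{\zeta'}}\text{ for some }|n|\le N_0\}$, whose $x_i^\neg$-sections have measure $\le(2N_0+1)C_v e^{-N_0^{\zeta'}/m}\le e^{-N_0^{\zeta}}$, the diagonal part of $R_{Q_{N_0}}HR_{Q_{N_0}}$ is invertible with norm $\le e^{N_0^{\zeta'}}$ and $\lambda^{-1}\Delta$ is a Neumann-small perturbation of it once $\lambda>e^{2N_0^{\zeta'}}$; this produces property $P$ (in fact class $SG$) at the two scales used to start Theorem~\ref{thmu}, with $\lambda_0=\lambda_0(v)$ and uniformly in $E$. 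This is why only $\lambda_0(v)$, rather than positivity of a Lyapunov exponent, enters. The finitely many sites $0\le n<b$ at which the polynomial below degenerates contribute only a bounded-rank perturbation and affect neither class $SG$ nor the section bound.

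\emph{Sublinear bound --- the heart.} Iterating \eqref{gdec5skew}, the last coordinate of the orbit is the degree-$b$ polynomial
\[
P_b(f^n(x))=x_b+n x_{b-1}+\binom n2 x_{b-2}+\cdots+\binom n{b-1}x_1+\binom nb\omega\pmod 1,
\]
with leading coefficient $\omega/b!$. Approximating $v$ by a trigonometric polynomial of degree $\le e^{(\log N_1)^{K_1}}$ (via \eqref{glc1}, \eqref{glc1news1}) exhibits $X_{N_1}$, up to an error absorbed below, as a semi-algebraic subset of $\T^b$ of degree $\le e^{(\log N_1)^{K_1+O(1)}}$ whose $x_i^\neg$-sections have measure $\le e^{-N_1^{\zeta}}$. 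Feeding this, together with the strong Diophantine condition on $\omega$ (whose logarithmic form costs only a negligible amount, so that no factor $\kappa$ survives in the final exponent), into the discrepancy and intersection estimates for semi-algebraic sets along polynomial orbits of Section~8 gives, uniformly in $x$,
\[
\#\{n\in\Z:|n|\le L,\ f(n,x)\in X_{N_1}\bmod\Z^b\}\le L\,e^{-cN_1^{\zeta}}+L^{1-\delta}\le L^{1-\delta}
\]
for every $L$ in the range demanded by Theorem~\ref{thmu} --- a range exponentially large in a power of $N_1$, hence far exceeding $\deg X_{N_1}$. Here $\delta$ may be taken as close to $\tfrac1{2^{b-1}}$ as desired; the exponent $2^{b-1}$ is precisely the cost of the $b-1$ van der Corput (Weyl) differencings needed to reduce the degree-$b$ polynomial $P_b(f^n(x))$ to a linear one, at which point the Diophantine property of $\omega$ is invoked.

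\emph{Conclusion and main difficulty.} With the base estimate and the sublinear bound in hand, Theorem~\ref{thmu} applies with $\tilde\sigma=1$, $b_i=1$ and the chosen $\delta$; iterating it --- climbing scales roughly exponentially, which keeps $L$ large enough for the sublinear bound at each step, and extracting at every round the two consecutive scales needed from the range on which Theorem~\ref{thmu} furnishes the improved property $P$ --- produces, for all large $N$, a set $X_N\subset\T^b$ with $\sup_{1\le i\le b,\,x_i^\neg}\mathrm{Leb}(X_N(x_i^\neg))\le e^{-N^{(\sigma-1)\delta+\delta^2-\varepsilon'}}$, together with $||G_{[-N,N]}(E,x)||\le e^{N^{\sigma}}$ and $|G_{[-N,N]}(E,x;n,n')|\le e^{-(c_2-o(1))|n-n'|}$ for $|n-n'|\ge N/10$ and $x\notin X_N$, with $c_2>\tfrac12 c_1$. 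Letting $\delta\to\tfrac1{2^{b-1}}$ and absorbing $\varepsilon'$ and the discretisation loss into $\varepsilon$ turns the measure exponent into $\tfrac{\sigma-1}{2^{b-1}}+\tfrac1{4^{b-1}}-\varepsilon$, and for $N$ large the $o(1)$ correction drops below $\tfrac12 c_1$; this is the assertion. The hard part is the sublinear bound: obtaining the sharp exponent $\tfrac1{2^{b-1}}$ \emph{uniformly in the phase $x$} for the degree-$b$ skew-shift orbit, which forces one to run the $(b-1)$-fold differencing while keeping the degree and structure of $X_{N_1}$ controlled so that the main term $L\,\mathrm{Leb}(\cdot)$ and the discrepancy term both stay below $\tfrac12 L^{1-\delta}$; this in turn dictates the lower restriction on $L$, and hence the admissible scale ratios when Theorem~\ref{thmu} is iterated.
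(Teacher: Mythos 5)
Your overall route is the paper's: run Theorem~\ref{thmu} (hence Theorems~\ref{thmmul} and \ref{thmmu2}) with the skew-shift discrepancy feeding the sublinear bound, and deduce the statement for the operator of \eqref{opapp7}. The paper's own proof is only three sentences long --- it is ``similar to that of Theorem~\ref{thmapp1}'', using the initial-scale estimate from \cite{taojde} in place of Lemma~\ref{leinitial} and Lemma~\ref{ledisskp} in place of Lemma~\ref{ledissk} --- so you have filled in substantially more detail, but on the same skeleton. One genuine divergence: for the base case the paper invokes Tao's non-perturbative positivity of the Lyapunov exponent for large $\lambda$ (and the associated initial LDT); you instead run a Neumann-series argument off the sublevel sets of $v$. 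Both work, but your parenthetical ``rather than positivity of a Lyapunov exponent'' misrepresents the paper's proof, and the clean $\lambda_0=\lambda_0(v)$ independent of $\varepsilon$ is cleaner to extract from the Lyapunov-exponent threshold than from the Neumann threshold tied to a fixed initial scale.

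There is, however, a real gap in the central step, and it is exactly the place the paper's pointer to Lemma~\ref{ledisskp} (versus Lemma~\ref{ledissk}) is meant to address. You write $X_{N_1}$ ``as a semi-algebraic subset of $\T^b$'' and then claim that the discrepancy machinery of Section~8 yields the sublinear bound with $\delta$ arbitrarily close to $\frac{1}{2^{b-1}}$. But Theorem~\ref{thmdis} applied to a semi-algebraic set in ambient dimension $n$ costs a factor $1/n$, so a set that is genuinely $b$-dimensional would only give $\delta\approx\frac{1}{b\cdot 2^{b-1}}$ and the final exponent would acquire exactly the $b^3$ loss that the paper's remark after \eqref{gdec61} emphasises is \emph{absent} from Theorem~\ref{thmapp7}. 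To avoid the $1/b$ loss one must use that, for the operator \eqref{opapp7}, the potential depends only on the $x_b$-coordinate of the orbit, so the bad set is effectively one-dimensional and the counting is against $\{P_b(f^n(x))\}$ in $\T$ --- this is the content of Lemma~\ref{ledisskp} and is what your ``$(b-1)$ van~der~Corput differencings'' compute, but you do not tie it to the dimension-loss term in Theorem~\ref{thmdis}. Relatedly, your partition $k=b$, $b_i=1$ forces property~$P$ to hold with small $x_i^\neg$-sections for \emph{every} $i$, yet for $i<b$ the degenerate sites $n$ with $\binom{n}{b-1}=0$ make $X_{N_0}(x_i^\neg)$ equal to all of $\T$ on a positive-measure set of $x_i^\neg$; the ``bounded-rank perturbation'' sentence gestures at the fix (remove those $n$ from the definition of $X_{N_0}$ and absorb them by a Schur complement) but does not carry it out, and as written the base case does not satisfy your own hypotheses. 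Both issues are resolvable --- the first by explicitly working with the $P_b$-pull-back and Lemma~\ref{ledisskp}, the second by only tracking the $i=b$ section through the induction, which is all that Theorem~\ref{thmmu2}'s proof actually uses for a fixed $i$ --- but as stated the proposal does not justify the exponent it claims.
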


\begin{theorem}\label{thmapp7'}
Let $\omega$ be strong Diophantine  and $H(x)$ be given by \eqref{opapp7}.
Then there exists $\lambda_0=\lambda_0(v)$ such that for any $\varepsilon>0$ and $\lambda>\lambda_0$,  
\begin{equation}\label{gdec61}
  |k(E_1)-k(E_2)|\leq  e^{-\left(\log\frac{1}{ |E_1-E_2|}\right)^{\frac{1}{4^{b-1}}-\varepsilon}},
\end{equation}
provided that $|E_1-E_2|$ is sufficiently small.
\end{theorem}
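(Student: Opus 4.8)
The plan is to derive Theorem~\ref{thmapp7'} as a direct application of the modulus-of-continuity result, Theorem~\ref{thmholder}, once the required large-deviation input has been verified for the skew-shift operator~\eqref{opapp7}. First I would note that $\Delta$ trivially satisfies \eqref{GOnewnews} and that $v(P_b(f^n(x)))\delta_{nn'}$ is analytic, so the operator $H(x)=\lambda^{-1}\Delta+v(P_b(f^n(x)))\delta_{nn'}$ fits the framework of Section~3 with $d=1$, $\tilde\sigma=1$, and $c_1$ determined by the (finite-range) Laplacian; in particular the Toeplitz property \eqref{gdec72s} holds with respect to the skew-shift $f$. For large $\lambda=\lambda_0(v)$, the Lyapunov exponent is positive for all $E$ by \cite{bjam}, which is the starting point for the non-perturbative scheme.

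The core step is to establish that the Green's function of $H(x)-E$ satisfies property~$P$ at two initial scales $N_1,N_2$ with parameters $(\mu,\zeta,c_2)$ for suitable $\mu<1$, $\zeta$, and $c_2<(1-5^{-1})c_1$, and then to iterate Theorem~\ref{thmu} (the self-improving inductive step) up through scales $N_3$, using the sublinear bound \eqref{gdec5} for the skew-shift. This is exactly the pattern already used to prove Theorem~\ref{thmapp7}: the positivity of $L(E)$ together with the matrix-valued Cartan estimates (Theorem~\ref{thmmu2}) and the avalanche/paving of Green's functions (Theorem~\ref{thmmul}) give the measure estimate ${\rm Leb}(X_N)\le e^{-N^{\zeta}}$ with $\zeta$ arbitrarily close to $\frac{\sigma-1}{2^{b-1}}+\frac{1}{4^{b-1}}$, the exponent appearing in Theorem~\ref{thmapp7}. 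The sublinear bound for the skew-shift is where the factor $4^{b-1}$ enters: one needs that for strong Diophantine $\omega$, the semialgebraic set $X_{N_1}$ (of complexity controlled by the analyticity and the polynomial approximation \eqref{glc1news1}) meets the skew-shift orbit $\{f(n,x):|n|\le L\}$ in at most $L^{1-\delta}$ points, with $\delta$ close to $\frac{1}{2^{b-1}b}$; this follows from the discrepancy estimates for semialgebraic sets developed in Section~8 combined with the strong Diophantine hypothesis \eqref{gsdc}, which is the analog for $b=1$ base frequency of the skew-shift equidistribution.

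With property~$P$ established at all large scales with $\sigma$ as in Theorem~\ref{thmapp7}, and since $\Delta$ gives $\tilde\sigma=1$ and the off-diagonal decay constant in \eqref{gdec1}-type bounds can be taken to be $\tfrac12 c_1$ (any fixed $c<(1-5^{-1})c_1$ works), I would feed these bounds into Theorem~\ref{thmholder} with $x_0=x$ arbitrary (the skew-shift is uniquely ergodic on $\T^b$ when $\omega$ is irrational, so \eqref{gx} holds for every $x_0$). Theorem~\ref{thmholder} then yields
\[
|k(x_0,E_1,E_2)|\le e^{-|\log|E_1-E_2||^{\frac{\zeta}{\sigma}-\varepsilon}},
\]
and since the IDS exists (the limit in $k(x,E)$ converges for a.e.\ $x$, as noted in Section~3), this bounds $|k(E_1)-k(E_2)|$. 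Optimizing over $\sigma\in(1-\frac{1}{2^{b-1}b},1)$ and letting $\sigma\to1$, the exponent $\frac{\zeta}{\sigma}$ tends to $\frac{\sigma-1}{2^{b-1}}+\frac{1}{4^{b-1}}$ evaluated with the best admissible $\delta$; taking $\delta\to\frac{1}{2^{b-1}b}$ and $\sigma\to1$, the loss in the first term vanishes and the exponent approaches $\frac{1}{4^{b-1}}$, which gives \eqref{gdec61} after absorbing the discrepancy losses into $\varepsilon$.

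The main obstacle I expect is not the application of Theorem~\ref{thmholder} itself but the bookkeeping of the sublinear-bound exponent: one must track precisely how the strong Diophantine condition on $\omega$ (a single frequency feeding the skew-shift on $\T^b$) propagates through the $b$ successive ``integrations'' defining $f$ in \eqref{gdec5skew}, since each layer of the skew-shift costs a factor $2$ in the exponent, producing the $2^{b-1}$ in $\delta$ and hence the $4^{b-1}=(2^{b-1})^2$ in the final H\"older-type exponent. Verifying that the discrepancy estimate for the relevant semialgebraic sets genuinely delivers $\delta$ arbitrarily close to $\frac{1}{2^{b-1}b}$ — rather than something weaker — under only the strong Diophantine hypothesis \eqref{gsdc} (as opposed to a polynomial Diophantine condition) is the delicate point, and it is where the analyticity of $v$ and the complexity control \eqref{glc1news1} must be used in tandem with the machinery of Section~8.
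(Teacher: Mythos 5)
Your approach coincides with the paper's: Theorem~\ref{thmapp7'} is proved by applying Theorem~\ref{thmholder} with $\sigma = 1 - \varepsilon$ to the large deviation theorem of Theorem~\ref{thmapp7}, so that the measure exponent $\zeta$ near $\frac{\sigma-1}{2^{b-1}} + \frac{1}{4^{b-1}}$ gives the modulus exponent $\zeta/\sigma \to \frac{1}{4^{b-1}}$; the re-derivation of Theorem~\ref{thmapp7} that you sketch is already supplied by the paper and need not be repeated. One internal inconsistency worth flagging: you twice state that the sublinear exponent $\delta$ should be taken near $\frac{1}{2^{b-1}b}$, which, with $b_i = b$ in \eqref{gdec3u}, would produce $\frac{\delta^2}{b_i} = \frac{1}{4^{b-1}b^3}$, the exponent of Theorem~\ref{thmapp5'}, rather than the $\frac{1}{4^{b-1}}$ appearing in \eqref{gdec61}. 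Because the potential in \eqref{opapp7} is $v\circ P_b$ and hence lives on $\T$ rather than $\T^b$, the relevant semialgebraic sets are one-dimensional, Lemma~\ref{ledisskp} together with Theorem~\ref{thmdis} at $n=1$ gives $\delta$ near $\frac{1}{2^{b-1}}$, the Cartan step uses $b_i=1$, and then $(\sigma-1)\delta + \delta^2 = \frac{\sigma-1}{2^{b-1}} + \frac{1}{4^{b-1}}$ is consistent with the $\zeta$ you quote from Theorem~\ref{thmapp7}; this removal of the dimension loss is precisely the content of the remark following Theorem~\ref{thmapp7'}.
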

\begin{remark}
\begin{itemize}
\item Comparing to  Theorems \ref{thmapp5} and \ref{thmapp5'}, there is no dimension ($b^3$) loss in  the bounds of  Theorems \ref{thmapp7} and \ref{thmapp7'}. This is because  the potential $v$ is defined on $\T$.
  \item The large deviation theorem and  the modulus of continuity of  Lyapunov exponents (the IDS) without explicit bounds was  obtained in \cite{taojde}.
   \item Let $b=2$. The constant in \eqref{gdec61} becomes $ \frac{1}{4^{b-1}}=\frac{1}{4}$. It is possible to improve the bound from  $1/4$ to $1/3$ by incorporating the arguments in  \cite{bgscmp}.  A weaker result was proved  by Tao \cite{taoejde}, where a constant $\frac{1}{30}$ was obtained. 

\end{itemize}

\end{remark}

\subsection{Shifts: $d=b=2$}
Assume $v$ is analytic on $\T^2=(\R/\Z)^2$.
Let
\begin{equation*}
  f^n(x)=(x_1+n_1\omega_1,x_2+n_2\omega_2)\mod\Z^2,
\end{equation*}
where $n=(n_1,n_2)\in \Z^2$, $\omega=(\omega_1,\omega_2)\in \R^2$ and $x=(x_1,x_2)\in \T^2$.
Let $H(x)$ on $\ell^2(\Z^2)$ be given by
\begin{equation}\label{opapp6}
  H(x)=\lambda^{-1}S(x)+v(f^n(x))\delta_{nn^\prime}=\lambda^{-1}S(x_1,x_2)+
   v(x_1+n_1\omega_1,x_2+n_2\omega_2)\delta_{nn^\prime}.
\end{equation}

\begin{theorem}\label{thmapp3}
Let $H(x)$ be given by \eqref{opapp6}.
Suppose $v$ is nonconstant on any line segment contained in $[0,1)^2$, $\omega_1\in {\rm DC}(\kappa,\tau)$ and $\omega_2\in {\rm DC}(\kappa,\tau)$ with $1\leq  \kappa<\frac{13}{12}$. Assume $$3\kappa-\frac{9}{4}<\sigma<1.$$
Then there exists $\lambda_0=\lambda_0(\varepsilon,\kappa,\tau,\rho,\sigma,\gamma,K,K_1,c_1,v)$ such that
for any $\lambda>\lambda_0$ and any $N$,  there exists $X_N\subset \T^2$ such that for any line segment $L\subset [0,1)^2$,
\begin{equation}\label{gdec33}
 {\rm Leb} (X_N\cap L)\leq e^{-N^{(\sigma-1)(13/4-3\kappa)+ (13/4-3\kappa)^2-\varepsilon}},
\end{equation}
and   for any $x\notin X_N$ and $Q_N\in \mathcal{E}_N^0$, we have
\begin{equation*}
  ||  G_{Q_N}(E,x)||\leq e^{N^{\sigma}}
\end{equation*}
and
\begin{equation*}
  |G_{Q_N}(E,x;n,n^\prime)|\leq  e^{-\frac{1}{2}c_1|n-n^\prime|} \text{ for } |n-n^\prime|\geq N/10.
\end{equation*}
\end{theorem}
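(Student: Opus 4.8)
The plan is to specialize Theorems~\ref{thmmul}, \ref{thmmu2} and~\ref{thmu} to $A(x)=H(x)=\lambda^{-1}S(x)+v(f^n(x))\delta_{nn'}$ on $\ell^2(\Z^2)$, with $k=2$, $b_1=b_2=1$, $\tilde\sigma=1$ and $f^n(x)=(x_1+n_1\omega_1,x_2+n_2\omega_2)$, and to run the scale induction of Theorem~\ref{thmu} from a perturbative initial step. First I would fix the parameters: put $\delta=\tfrac{13}{4}-3\kappa$, so that $1\le\kappa<\tfrac{13}{12}$ forces $\delta\in(0,\tfrac14]$ and the hypothesis $3\kappa-\tfrac94<\sigma<1$ is precisely $1-\delta<\sigma<1$; then pick $\iota\in(0,\delta)$, $\mu\in(1-\delta,\sigma)$, $c_2\in(\tfrac12 c_1,(1-5^{-1})c_1)$ with a fixed margin $c_2-\tfrac12 c_1$, and $\zeta\in(0,\mu)$ chosen slightly above $\tfrac12\min\{1/K_1,1\}$ (possible since $\mu>\tfrac34$). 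Since $v$ and the entries of $S$ are analytic, $H(x)$ satisfies \eqref{GOnew} (same $c_1$, with $K$ enlarged on the diagonal), \eqref{glc1} with $\gamma=1$ (analyticity gives a Lipschitz modulus), and the Toeplitz identity \eqref{gdec72} because $f$ is a shift; furthermore each entry may be replaced, at the cost of $e^{-N^2}$, by a trigonometric polynomial of degree $\le e^{(\log N)^{K_1}}$ (cf.\ \eqref{glc1news1}), which is what will make the exceptional sets semi-algebraic of controlled degree.

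The initial step provides, for $\lambda$ large (depending on $\varepsilon,\sigma,\kappa,\tau,\rho,\gamma,K,K_1,c_1,v$ but not on the scale), property $P$ with $(\mu,\zeta,c_2)$ at two scales $N_1$ and $N_2=N_1^C$. Write $H(x)-E=(D(x)-E)+\lambda^{-1}S(x)$, $D(x)=\operatorname{diag}(v(f^nx))$, and set
\[
X_N=\{x\in\T^2:\ \exists\, n,\ |n|\le N,\ |v(x_1+n_1\omega_1,x_2+n_2\omega_2)-E|<\lambda^{-1/2}\}.
\]
Because $v$ is real-analytic and nonconstant on every line segment of $[0,1)^2$, a uniform \L ojasiewicz-type inequality gives $\alpha=\alpha(v)>0$ and $C(v)$ with $\mathrm{Leb}(\{y\in L:|v(y)-E|<\epsilon\})\le C(v)\epsilon^{\alpha}$ on every segment $L\subset[0,1)^2$ and every $E$; hence $\mathrm{Leb}(X_N\cap L)\le C(v)N^2\lambda^{-\alpha/2}\le e^{-N^{\zeta}}$ for $\lambda$ large relative to $N_2$. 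For $x\notin X_N$ one has $\|(R_{Q_N}(D-E)R_{Q_N})^{-1}\|\le\lambda^{1/2}$, while $\|\lambda^{-1}S\|\le\lambda^{-1}\sup_x\|S(x)\|$ is small ($\sup_x\|S(x)\|<\infty$ by \eqref{GOnewnews}), so the Neumann series $(R_{Q_N}(H-E)R_{Q_N})^{-1}=\sum_{j\ge0}\big((D-E)^{-1}(-\lambda^{-1}S)\big)^j(D-E)^{-1}$ converges and yields $\|G_{Q_N}\|\le 2\lambda^{1/2}\le e^{N^{\mu}}$ and, since each term inherits the exponential decay of $S$, $|G_{Q_N}(n,n')|\le e^{-c_2|n-n'|}$ for $|n-n'|\ge N/10$. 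Thus $P$ holds at $N_1,N_2$; replacing $v$ by its degree-$e^{(\log N_1)^{K_1}}$ Fourier truncation and enlarging $X_{N_1}$ by a set of measure $\le e^{-N_1^{\zeta}}$, we may take $X_{N_1}$ semi-algebraic of degree $\le e^{(\log N_1)^{K_1}}$.

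The sublinear bound \eqref{gdec5} is the heart of the matter. For fixed $x$ and $L\in[N_3^{\delta-\iota},N_3]$, the set $\{(n_1,n_2):|n_i|\le L,\ (x_1+n_1\omega_1,x_2+n_2\omega_2)\in X_{N_1}\}$ is precisely the sublinear-bound count \eqref{gsub} for a semi-algebraic set (degree $\le e^{(\log N_1)^{K_1}}$, section-measure $\le e^{-N_1^{\zeta}}$) against the product orbit $(n_1\omega_1,n_2\omega_2)$. Using the discrepancy estimates for semi-algebraic sets from Section~8 together with $\omega_1,\omega_2\in\mathrm{DC}(\kappa,\tau)$ — and the dimension reduction for the two-frequency geometry as in \cite{bgs} — this count is $\le L^{1-\delta}$ with $\delta=\tfrac{13}{4}-3\kappa$; the number $\tfrac{13}{4}$ is dictated by that geometry, and $\kappa<\tfrac{13}{12}$ is exactly what keeps $\delta>0$. (That the truncated entries have degree $e^{(\log N_1)^{K_1}}$, negligible against the scales $N_3\ge N_2^C$ at play, is used here.) I expect this step to be where essentially all of the effort goes: extracting the sharp exponent from the interplay between the arithmetic of $(\omega_1,\omega_2)$ and the complexity of the potential.

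Finally, the induction. Theorem~\ref{thmu}, applied with $N_1$ large, $N_2=N_1^C$ and $N_3\in[N_2^C,e^{N_1^{c}}]$ ($c=\tfrac12\min\{1/K_1,1\}$), upgrades property $P$ from $(\mu,\zeta,c_2)$ to the conclusion at $N_3$: a set $X_{N_3}$ with $\mathrm{Leb}(X_{N_3}(x_i^\neg))\le e^{-N_3^{(\sigma-1)\delta+\delta^2-\varepsilon}}$, outside which $\|G_{Q_{N_3}}\|\le e^{N_3^{\sigma}}$ and $|G_{Q_{N_3}}(n,n')|\le e^{-(c_2-2N_1^{-\vartheta_1}-N_3^{-\vartheta_2})|n-n'|}\le e^{-\tfrac12 c_1|n-n'|}$ for $|n-n'|\ge N_3/10$, the last inequality using the fixed margin in $c_2$ and $N_1$ large. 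With $\delta=\tfrac{13}{4}-3\kappa$ the exponent $(\sigma-1)\delta+\delta^2-\varepsilon$ is exactly $(\sigma-1)(13/4-3\kappa)+(13/4-3\kappa)^2-\varepsilon$, matching \eqref{gdec33}. Feeding the scales just obtained back in as the new $(N_1,N_2)$ — legitimate since the new exceptional set is again semi-algebraic of controlled degree and the new parameter $(\sigma-1)\delta+\delta^2-\varepsilon$ is positive — reaches every $N$, with the summable cumulative loss in $c_2$ absorbed by taking $N_1$ large. The one point outside the scope of Theorem~\ref{thmu} is the passage from the coordinate-section bound to the line-segment bound \eqref{gdec33}: this is obtained by running the matrix-valued Cartan / subharmonic-function estimates behind Theorem~\ref{thmmu2} along the complex line through an arbitrary segment $L\subset[0,1)^2$, which is legitimate since every entry of $H(x)$ is analytic in all of $x\in\C^2$, just as in the $d=b=2$ treatment of \cite{bgs}.
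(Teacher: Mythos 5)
Your strategy matches the paper's: set $\delta=\tfrac{13}{4}-3\kappa$, feed a sublinear bound $L^{1-\delta}$ into the multiscale induction of Theorem~\ref{thmu} with effectively $b_i=1$ (no dimension loss in the Cartan step, achieved by carrying the measure estimate along line segments), and seed from a perturbative large-$\lambda$ step using that $v$ is nonconstant on line segments. Your arithmetic reproducing $(\sigma-1)\delta+\delta^2=(\sigma-1)(13/4-3\kappa)+(13/4-3\kappa)^2$ is exactly what the paper computes.

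The gap is precisely where you flag it. The sublinear count
\[
\#\{|k|\le L:(k_1\omega_1,k_2\omega_2)\in X_{N_1}\bmod\Z^2\}\le L^{1-\delta}
\]
does \emph{not} come from Theorem~\ref{thmdis} (which transfers discrepancy of a one-parameter Kronecker orbit against a semi-algebraic set, and cannot produce the exponent $3\kappa-\tfrac94$ for the independent two-parameter family $(n_1\omega_1,n_2\omega_2)$), nor from an unspecified "dimension reduction as in \cite{bgs}." The tool the paper actually invokes is Lemma~\ref{lebk} (Theorem~5.1 of Bourgain--Kachkovskiy \cite{bk}), quoted in Section~8 for exactly this purpose: if $\omega_1,\omega_2\in\mathrm{DC}(\kappa,\tau)$, $\mathcal{S}$ is semi-algebraic of degree $\le B$, and the longest line segment satisfies $l(\mathcal{S})\le\tfrac12\min_{1\le|k|\le 2N}\|k\omega\|$, then the count is $\le(1+B)^{C}C(\kappa,\tau)N^{3\kappa-9/4}$. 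This is where $\tfrac{13}{4}$ comes from; it is also a second, essential use of the hypothesis that $v$ is nonconstant on every line segment, which is what keeps $l(X_{N_1})$ small enough for the hypothesis of Lemma~\ref{lebk}. For the same reason, the line-segment measure bound must be carried through the induction from the start (property~$P$ stated with $\mathrm{Leb}(X_N\cap L)$ at every scale, Cartan run along the complexified segment with $b_i=1$, so that $l(X_{N})$ stays controlled at the next step), rather than obtained post hoc from coordinate-section bounds after applying Theorem~\ref{thmu} verbatim. Once you make these two corrections — invoke Lemma~\ref{lebk} and run the induction in the line-segment formulation throughout — your proof coincides with the paper's.
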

\begin{theorem}\label{thmapp3'}
		Assume $S$ is self-adjoint, $v$ is nonconstant on any line segment contained in $[0,1)^2$, $\omega_1\in {\rm DC}(\kappa,\tau)$ and $\omega_2\in {\rm DC}(\kappa,\tau)$ with $1\leq  \kappa<\frac{13}{12}$.  
Let $H(x)$ be given by \eqref{opapp6}. Then for any $\varepsilon$, there exists $\lambda_0=\lambda_0(\varepsilon,\kappa,\tau,\rho,\gamma,K,K_1,c_1,v)$ such that
for any $\lambda>\lambda_0$,
\begin{equation*}
  |k(E_1)-k(E_2)|\leq  e^{-\left(\log\frac{1}{ |E_1-E_2|}\right)^{(13/4-3\kappa)^2-\varepsilon}},
\end{equation*}
provided that $|E_1-E_2|$ is sufficiently small.
\end{theorem}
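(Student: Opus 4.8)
The plan is to deduce Theorem~\ref{thmapp3'} from the large deviation estimate in Theorem~\ref{thmapp3} together with the regularity result for the integrated density of states in Theorem~\ref{thmholder}. First I would verify that the operator $H(x)=\lambda^{-1}S(x)+v(f^n(x))\delta_{nn'}$ falls under the hypotheses of Theorem~\ref{thmapp3}: since $S$ is self-adjoint and satisfies \eqref{gdec72s}--\eqref{glc1news1}, $v$ is non-constant on any line segment in $[0,1)^2$, and $\omega_1,\omega_2\in {\rm DC}(\kappa,\tau)$ with $1\le\kappa<\frac{13}{12}$, we may pick $\sigma\in(0,1)$ close enough to $1$ so that $3\kappa-\frac94<\sigma<1$ holds (possible exactly because $\kappa<\frac{13}{12}$ forces $3\kappa-\frac94<1$). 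For that choice, Theorem~\ref{thmapp3} supplies, for each $E$, an exceptional set $X_N=X_N(E)\subset\T^2$ with ${\rm Leb}(X_N\cap L)\le e^{-N^{(\sigma-1)(13/4-3\kappa)+(13/4-3\kappa)^2-\varepsilon'}}$ for every line segment $L\subset[0,1)^2$, off which the Green's functions $G_{Q_N}(E,x)$ obey the norm bound $e^{N^\sigma}$ and the off-diagonal decay $e^{-\frac12 c_1|n-n'|}$ for $|n-n'|\ge N/10$.

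Next I would feed these estimates into Theorem~\ref{thmholder} with $d=b=2$, $\tilde\sigma=1$, $c=\frac12 c_1$, and with the exponent $\zeta$ identified as $\zeta=(\sigma-1)(13/4-3\kappa)+(13/4-3\kappa)^2-\varepsilon'$. One subtlety: Theorem~\ref{thmholder} as stated asks for ${\rm Leb}(X_N)\le e^{-N^\zeta}$ on the full torus, whereas Theorem~\ref{thmapp3} only gives a bound on slices $X_N\cap L$. This is, however, exactly the form one needs, because the relevant mechanism in the proof of Theorem~\ref{thmholder} (via \eqref{gx} and the pointwise orbit) only ever requires the measure of one-dimensional sections; for the shift $f^n(x)=(x_1+n_1\omega_1,x_2+n_2\omega_2)$ the orbit of a fixed $x_0$ lies in a union of horizontal/vertical lines, so the slice bound is what controls $\#\{n:|n|\le N,\ f(n,x_0)\in X_N\}$. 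I would therefore invoke (the proof of) Theorem~\ref{thmholder}, or a routine variant of it adapted to slice-measure hypotheses, to conclude that for the IDS $k(E)$ (which exists here since $H(x)$ is self-adjoint and $f$ is a shift, so \eqref{gx} holds for Lebesgue-a.e. $x_0$ and in fact for all $x_0$ by unique ergodicity of the Diophantine shift),
\begin{equation*}
|k(E_1)-k(E_2)|\le e^{-|\log|E_1-E_2||^{\frac{\zeta}{\sigma}-\varepsilon''}}
\end{equation*}
for $|E_1-E_2|$ small.

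Finally I would optimize the exponent. With $\zeta=(\sigma-1)(13/4-3\kappa)+(13/4-3\kappa)^2-\varepsilon'$ we get $\frac{\zeta}{\sigma}\to (13/4-3\kappa)^2$ as $\sigma\to1$ and $\varepsilon'\to0$ (the term $(\sigma-1)(13/4-3\kappa)\to0$), so choosing $\sigma$ sufficiently close to $1$ and $\varepsilon',\varepsilon''$ sufficiently small relative to $\varepsilon$ yields the claimed bound $|k(E_1)-k(E_2)|\le e^{-(\log\frac1{|E_1-E_2|})^{(13/4-3\kappa)^2-\varepsilon}}$. The main obstacle I anticipate is the bookkeeping at the interface between the two theorems: reconciling the slice-measure form of the exceptional set coming from Theorem~\ref{thmapp3} with the hypothesis format of Theorem~\ref{thmholder}, and tracking that the loss $\frac{\zeta}{\sigma}$ versus $\zeta$ together with the two auxiliary $\varepsilon$'s can all be absorbed into a single arbitrarily small $\varepsilon$; the analytic content (large deviations, subharmonicity, discrepancy) is already packaged inside Theorems~\ref{thmapp3} and~\ref{thmholder}, so the remaining work is genuinely one of careful parameter chasing rather than new estimates.
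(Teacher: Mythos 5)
Your proposal is correct and follows the paper's proof exactly: in Section 9 the paper states that Theorem~\ref{thmapp3'} follows by applying Theorem~\ref{thmholder} (with $\sigma=1-\varepsilon$) to the large deviation estimate of Theorem~\ref{thmapp3}, and your exponent bookkeeping, $(\sigma-1)(13/4-3\kappa)+(13/4-3\kappa)^2 \to (13/4-3\kappa)^2$ as $\sigma\to 1$ and $\varepsilon'\to 0$, is what the paper intends. The only place you over-engineer is the slice-versus-full-measure concern: since Theorem~\ref{thmapp3} bounds $\mathrm{Leb}(X_N\cap L)$ for \emph{every} line segment $L$, in particular for all horizontal lines $L_t=\{(s,t):s\in[0,1)\}$, Fubini immediately gives $\mathrm{Leb}(X_N)=\int_0^1\mathrm{Leb}_1(X_N\cap L_t)\,dt\le e^{-N^{\zeta}}$, so Theorem~\ref{thmholder} applies verbatim with no ``routine variant'' needed.
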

\begin{corollary}\label{coroapp3}
		Assume $S$ is self-adjoint and $v$ is nonconstant on any line segments contained in $[0,1)^2$.
Let $H(x)$ be given by \eqref{opapp6}.
Then  for almost every $\omega\in \R^2$ the following is true.  For  any $\varepsilon>0$, there exists $\lambda_0=\lambda_0(\varepsilon,\omega,\rho,\gamma,K,K_1,c_1,v)$ such that
for any $\lambda>\lambda_0$,
\begin{equation*}
  |k(E_1)-k(E_2)|\leq  e^{-\left(\log\frac{1}{ |E_1-E_2|}\right)^{\frac{1}{16}-\varepsilon}},
\end{equation*}
provided that $|E_1-E_2|$ is sufficiently small.
\end{corollary}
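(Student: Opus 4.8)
The plan is to obtain Corollary~\ref{coroapp3} from Theorem~\ref{thmapp3'} by letting the Diophantine exponent $\kappa$ tend to its optimal value $1$.

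First I would record the metric input. For each fixed $\kappa>1$ the set $\bigcup_{\tau>0}{\rm DC}(\kappa,\tau)\subset\R$ (in the one-dimensional sense of \eqref{gdc}) has full Lebesgue measure; intersecting over a sequence $\kappa_j\downarrow 1$ shows that Lebesgue-almost every $\alpha\in\R$ lies in $\bigcup_{\tau>0}{\rm DC}(\kappa,\tau)$ for \emph{every} $\kappa>1$. By Fubini, the set of $\omega=(\omega_1,\omega_2)\in\R^2$ such that both $\omega_1$ and $\omega_2$ have this property has full measure in $\R^2$; fix such an $\omega$, together with the remaining data $\rho,\gamma,K,K_1,c_1,v$ attached to \eqref{opapp6}. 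This is the ``almost every $\omega$'' of the statement.

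Given $\varepsilon>0$, I would then choose $\kappa=\kappa(\varepsilon)$. The function $\kappa\mapsto(13/4-3\kappa)^2$ is continuous and takes the value $1/16$ at $\kappa=1$, so there is $\kappa\in(1,13/12)$ with $(13/4-3\kappa)^2>\tfrac{1}{16}-\tfrac{\varepsilon}{2}$; for this $\kappa$ there is $\tau=\tau(\omega,\kappa)>0$ with $\omega_1,\omega_2\in{\rm DC}(\kappa,\tau)$. All hypotheses of Theorem~\ref{thmapp3'} now hold (self-adjointness of $S$ and non-constancy of $v$ on line segments are assumed in the corollary), so Theorem~\ref{thmapp3'} applied with this $(\kappa,\tau)$ and with $\varepsilon/2$ in place of $\varepsilon$ yields $\lambda_0=\lambda_0(\varepsilon/2,\kappa,\tau,\rho,\gamma,K,K_1,c_1,v)$; since $\kappa=\kappa(\varepsilon)$ and $\tau=\tau(\omega,\kappa)$, this is ultimately a function of $\varepsilon,\omega,\rho,\gamma,K,K_1,c_1,v$, as required. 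For $\lambda>\lambda_0$ it gives
\[
|k(E_1)-k(E_2)|\leq e^{-\left(\log\frac{1}{|E_1-E_2|}\right)^{(13/4-3\kappa)^2-\varepsilon/2}}
\]
once $|E_1-E_2|$ is small.

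Finally I would absorb the gap between the exponents: shrinking $|E_1-E_2|$ further so that $\log\frac{1}{|E_1-E_2|}\geq 1$, and using $(13/4-3\kappa)^2-\tfrac{\varepsilon}{2}>\tfrac{1}{16}-\varepsilon$, one gets $\left(\log\frac{1}{|E_1-E_2|}\right)^{(13/4-3\kappa)^2-\varepsilon/2}\geq\left(\log\frac{1}{|E_1-E_2|}\right)^{1/16-\varepsilon}$, which is exactly the asserted bound. There is no real obstacle: all the analytic content sits in Theorem~\ref{thmapp3'} (hence in Theorems~\ref{thmmul}, \ref{thmmu2}, \ref{thmu} and the discrepancy estimates for semi-algebraic sets), and the only points needing care are the order of quantifiers --- the exponent $\kappa$, and therefore $\tau$ and $\lambda_0$, must be allowed to depend on $\varepsilon$ --- and checking that this dependence is consistent with the claimed form of $\lambda_0$.
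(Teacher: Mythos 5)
Your proof is correct and follows essentially the same route as the paper, which disposes of this corollary in one line by ``applying strong Diophantine frequencies to Theorem~\ref{thmapp3'}'': in both cases the point is that for almost every $\omega$ one may take $\kappa$ arbitrarily close to $1$, sending $(13/4-3\kappa)^2\to 1/16$. Your use of the full-measure set $\bigcap_j\bigcup_\tau{\rm DC}(\kappa_j,\tau)$ in place of the strong Diophantine condition \eqref{gsdc} is a cosmetic difference, since either hypothesis produces, for every $\kappa>1$, some $\tau>0$ with $\omega_1,\omega_2\in{\rm DC}(\kappa,\tau)$.
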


\begin{remark}
   Theorems  \ref{thmapp3} and \ref{thmapp3'}  follow the arguments  in \cite{bk}. Our quantitative approaches developed in the paper allow us to obtain the explicit bounds.

\end{remark}

 \subsection{Sub-exponentially decaying matrices with interactions}\label{Sinter}
 Our applications  can be  wider.   
 Here are several examples.
  Instead of \eqref{GOnewnews}, assume   
 \begin{equation}\label{GOnewnewsnew}
 |S(x;n,n')|\leq  K e^{-c_1|n-n'|^{\tilde{\sigma}}}, 0<\tilde{  \sigma}\leq1,c_1>0,
 \end{equation}
 for any $n,n^\prime\in \Z^d$.
  
 	Assume for any $N>1$, $n,n^\prime\in\Z^d$ with $|n|\leq N$ and $|n^\prime|\leq N$, there exists a trigonometric polynomial  $\tilde{S}(x;n,n^\prime)$ of degree less than $ e^{N^a}$ such that 
 	\begin{equation}\label{glc1news2}
 	\sup_{x\in\T^b} |S(x;n,n')-\tilde{S}(x;n,n^\prime)|\leq  	Ke^{-N^{2}}.
 	\end{equation}

 In this subsection, assume $S$ satisfies  \eqref{glc1news}, \eqref{GOnewnewsnew} and \eqref{glc1news2}.
 
 Let $\tilde{U}$ be a diagonal matrix on  $\ell^2(\Z^d) $ satisfying 
 $$||U||\leq K. $$
 Given $m\in \Z^d$, define the  diagonal matrix  $\tilde{U}^{m}$ on $\ell^2(\Z^d)$ by 
 $$\tilde{U}^{m}(n)=\tilde{U}(m+n),n\in\Z^d.$$
 We say $\tilde{U}$ has low complexity if  there exists $0<a<1$ such that  for any $N>1$,
 \begin{equation}\label{comu1}
 \#\{ R_{Q_N} U^m(n)\delta_{nn^\prime} R_{Q_N}: m\in\Z^d,Q_{N}\in \mathcal{E}_{N}^0 \}\leq Ke^{N^a}.
 \end{equation}
 
%
 
 Let 
 \begin{equation}\label{newh}
\tilde{H}(x)=H(x)+\lambda^{-1}U+\tilde{U}=\lambda^{-1}(S+U)+(\tilde{U}(n) +v(f(n,x)))\delta_{nn^\prime}.
 \end{equation}
 For any $m\in \Z^d$,  
 let 
 \begin{equation}\label{newhm}
 \tilde{H}^m(x)=H(x)+\lambda^{-1}U^m+\tilde{U}^m=\lambda^{-1}(S+U^m)+(\tilde{U}^m+v(f(n,x)))\delta_{nn^\prime}.
 \end{equation}
 Denote by $\tilde{G}^m$ the Green's function of $\tilde{H}^m$.
%

\begin{theorem}\label{ldtfeb2new}
	Assume   $\alpha$ is strong Diophantine, and   $U $  and $\tilde{U}$ have low complexity in the sense of \eqref{comu} and \eqref{comu1} repectively.
	 Assume
$$1-\frac{1}{b}<\sigma<\tilde{\sigma} \text{ and } a\leq \frac{1}{4} \left\{\frac{1}{K_1}, \frac{\sigma-1}{b^2}+\frac{1}{b^3}\right\}.$$ Let $H(x)$ and $\tilde{H}^m(x)$ be given by \eqref{opapp2}   and  \eqref{newhm} respectively. 
Then for any $\varepsilon>0$, there exists $$\lambda_0=\lambda_0(\varepsilon,\alpha,\rho,c_1,\sigma,\tilde{\sigma}, \gamma,K,K_1,c_1,v)$$ such that
	for any $\lambda>\lambda_0$ and  any $N$,  there exists $X_N\subset \T^b$ such that
	\begin{equation*}
	{\rm Leb} (X_N)\leq e^{-N^{\frac{\sigma-1}{b^2 }+\frac{1}{b^3}-\varepsilon}},
	\end{equation*}
	and for  any  $x\notin X_N$ and  $m\in\Z$, we have
	\begin{equation*}
	||  \tilde{G}_{[-N,N]}^m(E,x)||\leq e^{N^{\sigma}},
	\end{equation*}
	and
	\begin{equation*}
	|\tilde{G}_{[-N,N]}^m(E,x;n,n^\prime)|\leq  e^{-\frac{c}{2}|n-n^\prime|} \text{ for } |n-n^\prime|\geq N/10,
	\end{equation*}
	where $c=\frac{5^{\tilde{\sigma}}-1}{5^{\tilde{\sigma}}}$.
\end{theorem}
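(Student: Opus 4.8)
The plan is to obtain Theorem~\ref{ldtfeb2new} as a consequence of the low-complexity induction Theorem~\ref{thmunew}, the discrepancy estimates of Section 8, and a perturbative (large $\lambda$) construction of the initial-scale estimates, in the same spirit as Theorems~\ref{thmapp4} and \ref{thmapp5}. First I would fix two initial scales $N_1\le N_2$ with $N_2\in[N_1^{C},e^{N_1^{c_0/2}}]$, where $c_0=\tfrac12\min\{1/K_1,\tilde\sigma\}$ and $N_1$ is a constant large enough to exceed the threshold appearing in Theorem~\ref{thmunew}; at these fixed scales the low-complexity bounds \eqref{comu} and \eqref{comu1} reduce the operators $R_{Q_{N_j}}\tilde A^{m}(x)R_{Q_{N_j}}$, $m\in\Z$, to at most $K^{2}e^{2N_j^{a}}$ distinct interaction patterns. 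For $\lambda$ large (depending on $N_1,N_2$ and all parameters) I would then verify property $\tilde P$ with parameters $(\mu,\zeta,c_2)$ at sizes $N_1$ and $N_2$ by a Neumann-series / Schur-test argument: since $\|\lambda^{-1}(S+U^{m})\|\le C(K,c_1,\tilde\sigma)\lambda^{-1}$, the Green's function on any elementary region is controlled once the diagonal $\tilde U^{m}(n)+v(f(n,x))-E$ stays bounded away from $0$, and because $v$ is analytic and nonconstant the set of bad $x$ for a fixed pattern and a fixed elementary region has measure $\le e^{-N_j^{\zeta}}$; unioning over the $\le K^{2}e^{2N_j^{a}}$ patterns, over the polynomially many elementary regions, and over $Q_{N_j}\in\mathcal E_{N_j}^{0}$ still leaves total measure $\le e^{-N_j^{\zeta}}$ after slightly shrinking $\zeta$, and this bound is uniform in $m$.

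Next I would check the arithmetic hypothesis of Theorem~\ref{thmunew}, namely $\#\{n\in\Z:|n|\le L,\ f(n,x)\in X_{N_1}\ \mathrm{mod}\ \Z^{b}\}\le L^{1-\delta}$ for $L\in[N_3^{\delta-\iota},N_3]$, with $\delta$ taken arbitrarily close to $1/b$. Using analyticity of $v$ and of the entries of $S$ together with \eqref{glc1news} and \eqref{glc1news2}, the inequalities defining $X_{N_1}$ can, at the cost of an $e^{-N_1^{2}}$ error, be replaced by polynomial inequalities in $(\cos,\sin)$ of degree bounded in terms of $N_1$, so that $X_{N_1}$ is contained in a thin neighborhood of a semi-algebraic set of bounded degree. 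The discrepancy estimate of Section 8 applied to the rotation $n\mapsto x+n\omega$ with $\omega$ strong Diophantine (i.e.\ of Diophantine type arbitrarily close to $b$) then delivers the sublinear bound with $\delta\to 1/b$.

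Finally I would iterate Theorem~\ref{thmunew}: from property $\tilde P$ at $N_1,N_2$ it produces property $\tilde P$ at every $N_3\in[N_2^{C},e^{N_1^{c_0}}]$, now with exponent $\tfrac{\sigma-1}{b}\delta+\tfrac{\delta^2}{b}-\varepsilon$ (here $k=1$, $b_1=b$), which for $\delta\to 1/b$ is $\tfrac{\sigma-1}{b^{2}}+\tfrac{1}{b^{3}}-\varepsilon$, and with decay rate $c_2-N_1^{-\vartheta_1}-N_3^{-\vartheta_2}$; re-feeding each output as new initial data and telescoping over scales propagates this to all $N$, the corrections to the decay rate forming a summable series so that the final rate may be taken to be $\tfrac{c}{2}$ as in the statement, with $c$ the largest value permitted by the constraint $c_2<(1-5^{-\tilde\sigma})c_1$. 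The hypothesis \eqref{gcom} on the complexity exponent $a$ is precisely what prevents the union over the $\le Ke^{N^{a}}$ interaction patterns from destroying the measure bounds, and it matches the requirement $a\le\tfrac12\min_i\{\tfrac{\sigma-1}{b_i}\delta+\tfrac{\delta^2}{b_i}\}$ of Theorem~\ref{thmunew}. I expect the main obstacle to be the uniform-in-$m$ initial estimate in the presence of the off-diagonal interaction $U^{m}$: one must make the Neumann/Schur argument robust to adding $\lambda^{-1}U^{m}$ while keeping the exceptional set inside a controlled-degree semi-algebraic neighborhood so that the discrepancy bound applies, and then carefully organize the scale telescoping needed to reach arbitrary $N$ rather than merely $N\le e^{N_1^{c_0}}$.
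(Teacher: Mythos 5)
Your proposal follows essentially the paper's route: the paper itself only records that Theorem~\ref{ldtfeb2new} follows from the proof of Theorem~\ref{thmapp4} with Theorem~\ref{thmunew} in place of Theorem~\ref{thmu}, and you have correctly reconstructed those omitted details (Neumann series at initial scales, union bound over the $\le Ke^{N^a}$ low-complexity patterns, semi-algebraic approximation of $X_{N_1}$, discrepancy-based sublinear count, iteration of Theorem~\ref{thmunew}). One inaccuracy worth flagging: your parenthetical reading of ``strong Diophantine'' as ``Diophantine type arbitrarily close to $b$'' would, via Lemma~\ref{ledissh} and Theorem~\ref{thmdis}, yield only $\delta\to 1/b^2$ and hence the weaker exponent $(\sigma-1)/b^3+1/b^5$, contradicting the $\delta\to 1/b$ you then (correctly) use. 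What strong Diophantine actually supplies here, as in Lemma~\ref{ledissh1} for a.e.\ $\omega$, is the much stronger discrepancy bound $D_N\lesssim N^{-1}(\log N)^C$, so $\varsigma\to 1$ and $\delta=\varsigma/b\to 1/b$, which is what gives the stated exponent $(\sigma-1)/b^2+1/b^3-\varepsilon$.
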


\begin{theorem}\label{thmapp2new}

		Assume    $\omega\in {\rm DC}(\kappa,\tau)$, $U $   and $\tilde{U}$ have low complexity.
Assume  $0<\sigma<\tilde{\sigma}\leq 1$ and  $a\leq \frac{1}{4}\min\{\frac{1}{K_1}, \sigma\}$.
	Let $H(x)$  and $\tilde{H}^m(x)$ be given by  \eqref{opapp4}  and  \eqref{newhm} respectively.
 Then for any $\varepsilon>0$, there exists $$\lambda_0=\lambda_0(\varepsilon,\kappa,\tau,\sigma,\tilde{\sigma},\rho,\gamma,K,K_1,c_1,v)$$ such that
for any $\lambda>\lambda_0$ and any $N$, there exists $X_N\subset \T$ such that
\begin{equation}\label{gdec2new}
 {\rm Leb} (X_N)\leq e^{-N^{\sigma-\varepsilon}},
\end{equation}
and for any $x\notin X_N$, any $m\in \Z^d$ and  any $Q_N\in \mathcal{E}_N^0$, we have
\begin{equation*}
  ||  \tilde{G}^m_{Q_N}(E,x)||\leq e^{N^{\sigma}},
\end{equation*}
and
\begin{equation}\label{gdec1new}
  |\tilde{G}^m_{Q_N}(E,x;n,n^\prime)|\leq  e^{-\frac{c}{2}|n-n^\prime|^{\tilde{\sigma}} }\text{ for } |n-n^\prime|\geq N/10,
\end{equation}
where $c=\frac{5^{\tilde{\sigma}}-1}{5^{\tilde{\sigma}}}$.
\end{theorem}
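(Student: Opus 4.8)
The plan is to derive Theorem~\ref{thmapp2new} from the self-improving induction of Theorem~\ref{thmunew}, supplying its two hypotheses by (i) a large-coupling base step producing property $\tilde P$ at an initial pair of scales, and (ii) a sublinear counting estimate obtained from the Diophantine condition on $\omega$ together with the discrepancy estimates of Section~8; iterating Theorem~\ref{thmunew} then propagates the estimates to all scales. First I would fit $\tilde H^m$ into the form \eqref{Aint}: set $A(x;n,n')=\lambda^{-1}S(x;n,n')+v(f(n,x))\delta_{nn'}$, which is Toeplitz with respect to $f$ and satisfies \eqref{GOnew}, \eqref{glc1}, \eqref{gdec72} with the same $c_1$ and the $\tilde\sigma$ of \eqref{GOnewnewsnew} (one works with $A(x)-E$, tracking uniformity in $E$); and put $W(n,n')=\lambda^{-1}U(n,n')+\tilde U(n)\delta_{nn'}$, so $W^m(n,n')=\lambda^{-1}U(m+n,m+n')+\tilde U(m+n)\delta_{nn'}$ and $\tilde H^m=A+W^m$ exactly as in \eqref{newhm}. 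Then $W$ satisfies the off-diagonal bound (with constant $2K$), and $R_{Q_N}W^mR_{Q_N}$ ranges over at most $(Ke^{N^a})^2$ matrices by \eqref{comu}--\eqref{comu1}, so $W$ has low complexity with exponent $a'$ for any $a'>a$. With $b=1$ (hence $k=1$, $b_1=1$) I choose $\delta<1$ close to $1$, $\iota$ small, $0<\mu<\tilde\sigma$ and $\zeta\in(c,1)$ where $c=\tfrac12\min\{1/K_1,\tilde\sigma\}$; since $a\le\tfrac14\sigma<\tfrac12\sigma=\lim_{\delta\to1}\tfrac12\delta(\sigma-1+\delta)$ there is room to make \eqref{gcom} hold for $a'$, and the remaining hypotheses of Theorem~\ref{thmunew} are then met (note also $a\le c/2$ since $\sigma<\tilde\sigma$).

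\emph{Base step.} For $\lambda$ large there is a scale $N_*(\lambda)\to\infty$ such that property $\tilde P$ with parameters $(\mu,\zeta,c_2)$ holds at every scale $N\le N_*(\lambda)$, uniformly in $m$ and $E$. On a box the diagonal of $\tilde H^m(x)-E$ is $\tilde U(m+n)+v(x+n\cdot\omega)-E$; since $v$ is non-constant analytic, $\mathrm{Leb}\{x:|v(x+t)-E'|<\kappa_0\}\lesssim\kappa_0^{1/s}$ uniformly in $t,E'$ for some $s=s(v)$, and by \eqref{comu1} the tuple $(\tilde U(m+n))_{n\in Q_N}$ takes at most $Ke^{N^a}$ values, so the set $X_N$ on which some diagonal entry has modulus $<\kappa_0:=\lambda^{-1/2}$ (over all $m$ and all $Q_N\in\mathcal E_N^0$) has measure $\lesssim Ke^{N^a}N^d\kappa_0^{1/s}$, which is $\le e^{-N^\zeta}$ once $N^\zeta\le\tfrac{1}{3s}\log\lambda$, i.e.\ $N\le N_*(\lambda)$. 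For $x\notin X_N$ a Schur-test Neumann series for $(R_{Q_N}(\tilde H^m(x)-E)R_{Q_N})^{-1}=(D+\lambda^{-1}R_{Q_N}(S+U^m)R_{Q_N})^{-1}$ converges, giving norm $\le 2\lambda^{1/2}\le e^{N^\mu}$ and, using the subadditivity of $t\mapsto t^{\tilde\sigma}$ and a standard path count, off-diagonal decay at a rate as close to $c_1$ as one wishes; capping it at $c_2$ yields \eqref{ggoodt1}--\eqref{ggoodt2}. I then fix a pair $N_1=N_1(\lambda)\to\infty$ and $N_2\le N_*(\lambda)$ with $N_2$ much larger than a suitable power of $N_1$ (dictated by the next paragraph).

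\emph{Sublinear bound.} The crucial hypothesis of Theorem~\ref{thmunew} is $\#\{n\in\Z^d:|n|\le L,\ f(n,x)\in X_{N_1}\}\le L^{1-\delta}$ for every $x$ and every $L\in[N_3^{\delta-\iota},N_3]$, $N_3\le e^{N_1^c}$ (cf.\ \eqref{gdec5}). After an $O(\kappa_0)$ enlargement one may take $X_{N_1}\subset\T$ semi-algebraic: it is a union, over the $\le Ke^{N_1^a}$ complexity profiles of $U,\tilde U$ and over $n\in Q_{N_1}$, of sublevel sets of the trigonometric-polynomial approximants of $v$ and $S$ (of degree $\le e^{N_1^a}$ at scale $N_1$, by \eqref{glc1news2} and analyticity of $v$), hence a union of $B_1\le e^{CN_1^a}$ intervals, each of length $\le e^{-N_1^\zeta}$. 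If $|n|,|n'|\le L$ and $x+n\cdot\omega$, $x+n'\cdot\omega$ lie in one such interval, then $\|(n-n')\cdot\omega\|<e^{-N_1^\zeta}$, so $\omega\in\mathrm{DC}(\kappa,\tau)$ forces $|n-n'|\ge r_1:=(\tau e^{N_1^\zeta})^{1/\kappa}$; thus the hitting set is $r_1$-separated in a box of side $2L$ and has $\le(1+2L/r_1)^d$ points, and summing over the $B_1$ intervals with $L\le N_3\le e^{N_1^c}$ and $a<c<\zeta$ makes the exponent $CN_1^a+dN_1^c-\tfrac{d}{\kappa}N_1^\zeta$ negative. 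Hence the count is $\le e^{CN_1^a}$, which is $\le L^{1-\delta}$ once $N_2$ (so that $(N_3^{\delta-\iota})^{1-\delta}$) is at least $e^{C'N_1^a}$ for a large $C'$; this is compatible with $N_2\le e^{N_1^{c/2}}$ because $a$ is small relative to $c$. The same reasoning applies at each later scale, the bad set produced by Theorem~\ref{thmunew} again being semi-algebraic of degree $e^{O(\cdot^{a})}$.

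\emph{Iteration.} With property $\tilde P$ at $N_1,N_2$ and the sublinear bound verified, Theorem~\ref{thmunew} gives property $\tilde P$ at every $N_3\in[N_2^C,e^{N_1^c}]$ with parameters $\big(\sigma,\ (\sigma-1)\delta+\delta^2-\varepsilon,\ c_2-N_1^{-\vartheta_1}-N_3^{-\vartheta_2}\big)$. Since the output measure exponent does not depend on the input $\zeta$ and is $\ge\sigma-\varepsilon$ (take $\delta$ close to $1$ and the $\varepsilon$ of Theorem~\ref{thmunew} equal to $\varepsilon/2$), since $\sigma$ is a fixed point of the norm exponent, and since along a rapidly growing sequence of base scales the losses in $c_2$ sum to something negligible, reapplying Theorem~\ref{thmunew} along that sequence yields property $\tilde P$ at all sufficiently large $N$ with measure exponent $\sigma-\varepsilon$ and the decay exponent of \eqref{gdec1new}, which is the assertion \eqref{gdec2new}--\eqref{gdec1new}. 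I expect the main obstacle to be the sublinear bound: one must simultaneously keep the semi-algebraic degree $e^{O(N_1^a)}$ of the bad set below $(N_3^{\delta-\iota})^{1-\delta}$ (this is what forces the scale relations and the role of \eqref{gcom} and $a\le c/2$), keep $\zeta>c$ so the Diophantine separation $r_1$ dominates $N_3\le e^{N_1^c}$, and do so uniformly in $x$ and in the low-complexity index $m$; arranging the intervals $[N_2^C,e^{N_1^c}]$ to chain together and cover all scales is comparatively routine.
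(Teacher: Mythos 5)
Your proposal follows exactly the route the paper takes: the paper states that Theorem~\ref{thmapp2new} "follows from that of Theorem~\ref{thmapp2}" by replacing Theorem~\ref{thmu} with Theorem~\ref{thmunew}, i.e., a Neumann-series base step at large coupling, a Diophantine separation argument giving the sublinear bound on hits of the semi-algebraic bad set (whose degree now picks up the factor $e^{O(N^a)}$ from the low-complexity profiles of $U,\tilde U$, controlled since $a\le c/2$), and iteration of the self-improving inductive theorem. You have supplied the details that the paper intentionally omits, and the argument is sound.
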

\begin{theorem}\label{thmapp2'new}
		Assume $S$  is self-adjoint, $\omega\in {\rm DC}(\kappa,\tau)$ and  $a\leq\frac{1}{4}\min\{\frac{1}{K_1},\tilde{\sigma}\}$.
	Let $H(x)$ be given by  \eqref{opapp4}.
  Then for any $\varepsilon>0$, there exists $$\lambda_0=\lambda_0(\varepsilon,\kappa,\tau,\tilde{\sigma},\rho,\gamma,K,K_1,c_1,v)$$ such that
for any $\lambda>\lambda_0$,
\begin{equation*}
  |k(E_1)-k(E_2)|\leq  e^{-\left(\log\frac{1}{ |E_1-E_2|}\right)^{1-\varepsilon}},
\end{equation*}
provided that $|E_1-E_2|$ is sufficiently small.
\end{theorem}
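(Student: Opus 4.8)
The plan is to combine the large deviation theorem for the operator \eqref{opapp4} — supplied by Theorem \ref{thmapp2new} with the interaction terms switched off — with the general modulus-of-continuity statement for the integrated density of states, Theorem \ref{thmholder}; this parallels the proof of Theorem \ref{thmapp2'}, with the sub-exponential, low-complexity version of the large deviation theorem used in place of the original one. Since $S$ is self-adjoint, $H(x)=\lambda^{-1}S+v(f^n(x))\delta_{nn'}$ is self-adjoint, and by the cocycle identity for $f$ together with \eqref{gdec72s} the operators $H(x)$ and $H(x)-E$ are Toeplitz with respect to $f$ in the sense of \eqref{gdec72}. Because $\omega\in{\rm DC}(\kappa,\tau)$, the numbers $1,\omega_1,\dots,\omega_d$ are rationally independent, so $n\mapsto x+n\cdot\omega$ defines a minimal, uniquely ergodic map of $\T$; consequently \eqref{gx} holds at every $x_0\in\T$ for the (Jordan-measurable) sets that occur in the proof of Theorem \ref{thmholder}, and for a.e.\ $x_0$ the finite-volume eigenvalue counts converge to the IDS. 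I would fix $x_0$ in the full-measure set where both properties hold.

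First I would fix the scaling exponent. Given $\varepsilon>0$, choose $\sigma$ with $4a\le\sigma<\tilde{\sigma}$ — possible since $a\le\frac14\min\{\frac1{K_1},\tilde{\sigma}\}$, which also yields $a\le\frac1{4K_1}$ — so that $a\le\frac14\min\{\frac1{K_1},\sigma\}$; also fix a small auxiliary $\varepsilon_1>0$ to be pinned down at the end. Then apply Theorem \ref{thmapp2new} with $U\equiv\tilde{U}\equiv0$ (which trivially have low complexity, the relevant count being $1\le Ke^{N^a}$) and with $\varepsilon_1$ in the role of its $\varepsilon$: there is $\lambda_0$, depending only on $\varepsilon,\kappa,\tau,\tilde{\sigma},\rho,\gamma,K,K_1,c_1,v$, such that for every $\lambda>\lambda_0$, every $N$ and every $E\in\R$ there is a set $X_N=X_N(E)\subset\T$ with ${\rm Leb}(X_N(E))\le e^{-N^{\sigma-\varepsilon_1}}$ for which, writing $G_{Q_N}(E,x)$ for the Green's function of $H(x)$,
\begin{equation*}
||G_{Q_N}(E,x)||\le e^{N^{\sigma}},\qquad |G_{Q_N}(E,x;n,n')|\le e^{-\frac{c}{2}|n-n'|^{\tilde{\sigma}}}\ \text{ for }|n-n'|\ge N/10,
\end{equation*}
whenever $x\notin X_N(E)$ and $Q_N\in\mathcal{E}_N^0$, with $c=\frac{5^{\tilde{\sigma}}-1}{5^{\tilde{\sigma}}}>0$.

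These are precisely the hypotheses of Theorem \ref{thmholder} with $\zeta:=\sigma-\varepsilon_1\in(0,1)$, the same $\sigma\in(0,\tilde{\sigma})$, and off-diagonal constant $\frac{c}{2}>0$. Hence, for any $\varepsilon'>0$,
\begin{equation*}
|k(x_0,E_1,E_2)|\le e^{-|\log|E_1-E_2||^{\frac{\zeta}{\sigma}-\varepsilon'}}
\end{equation*}
for $|E_1-E_2|$ small. Since $H$ is self-adjoint and $x_0$ lies in the ergodic full-measure set, $k(x_0,E_1,E_2)=k(x_0,E_2)-k(x_0,E_1)=k(E_2)-k(E_1)$ (the IDS having no atoms, by its unconditional log-H\"older continuity). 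As $\frac{\zeta}{\sigma}=1-\frac{\varepsilon_1}{\sigma}$, choosing $\varepsilon_1\le\frac{\sigma\varepsilon}{2}$ and $\varepsilon'=\frac{\varepsilon}{2}$ makes the exponent at least $1-\varepsilon$, which is the claim.

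Most of the substance is imported from Theorems \ref{thmapp2new} and \ref{thmholder}; the work here is the bookkeeping that drives the exponent to $1-\varepsilon$ and the verification of the side conditions. The main point requiring care is the reliance on self-adjointness — it is what lets the eigenvalue-counting function $k(x_0,E_1,E_2)$ represent the real IDS and what makes the energy-elimination step inside Theorem \ref{thmholder} available — while the arithmetic constraint $a\le\frac14\min\{\frac1{K_1},\tilde{\sigma}\}$ enters exactly to guarantee that a legitimate $\sigma$ (with $4a\le\sigma<\tilde{\sigma}$) can be chosen to feed Theorem \ref{thmapp2new}.
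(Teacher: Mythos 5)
Your proposal is correct and follows essentially the same route the paper intends: the paper's proof of Theorem \ref{thmapp2'new} is declared (at the end of Section \ref{sapp} and the start of Section 9) to follow that of Theorem \ref{thmapp2'}, namely feed the large deviation theorem — here Theorem \ref{thmapp2new}, with the low-complexity interactions $U,\tilde U$ taken to be zero — into Theorem \ref{thmholder}, and your bookkeeping with $\zeta=\sigma-\varepsilon_1$ and $\frac{\zeta}{\sigma}=1-\frac{\varepsilon_1}{\sigma}\ge 1-\varepsilon$ is exactly the calculation needed. The only cosmetic caveat is the boundary case $4a=\tilde\sigma$, where no $\sigma$ with $4a\le\sigma<\tilde\sigma$ exists; this is inherited from the interplay of the stated hypotheses of Theorems \ref{thmapp2new} and \ref{thmapp2'new} themselves, and is clearly meant to be read with a strict inequality, so it is not a gap in your argument.
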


Using Theorem \ref{thmunew} instead of  Theorem \ref{thmu}, 
the proof of Theorems \ref{ldtfeb2new}, \ref{thmapp2new} and \ref{thmapp2'new} follows  from that of
Theorems \ref{thmapp4}, \ref{thmapp2} and \ref{thmapp2'} respectively. In order to avoid repetitions, we skip the details.

\section{Multi-scale analysis}
\subsection{Exhaustion construction for  an elementary region}

For $m\in \Z^d$ and  $ \Lambda\subset\mathbb{Z}^d$, define the distance by
$$\mathrm{dist}(m,\Lambda)=\inf_{n\in \Lambda}|m-n| .$$

Fix an elementary region $\Lambda\in \mathcal{E}_N$. Let $x\in \Lambda$.
Given $M\leq N/10$, we will construct exhaustion  at $x$ with width $M$.
Set
\begin{eqnarray*}
  \tilde{S}_0(x) &=& (x+[-2M,2M]^d)\cap \Lambda\\
  \tilde{S}_j(x) &=& \bigcup_{y\in S_{j-1}(x)}(y+[-4M,4M]^d)\cap \Lambda, 1\leq j\leq \tilde{l}
\end{eqnarray*}
where $\tilde{l}$ is the  minimum  such that $\tilde{S}_{\tilde{l}}(x)=\Lambda$.
We  set $S_{-1}(x)=\emptyset$ for convenience.

When $\tilde{S}_{j-1}(x)$ is very close to the boundary of $ \Lambda$,
$\tilde{A}_{j}(x)=\tilde{S}_{j}(x)\backslash \tilde{S}_{j-1}(x)$ and $\tilde{S}_j$ may have width less than $M $. However, there are at most finitely many $j$ with $0\leq j \leq \tilde{l}$, saying $C(d)$, such that $\tilde{A}_{j}(x)=\tilde{S}_{j}(x)\backslash \tilde{S}_{j-1}(x)$ has width less than $M$, where $C(d)$ is a constant depending on $d$.

We will delete $j$ if $\tilde{A}_{j}(x)=\tilde{S}_{j}(x)\backslash \tilde{S}_{j-1}(x)$ has small width and then rearrange exhaustion.  Here are the details.
Let $j_0\in \{0,1,\cdots, \tilde{l}-1\}$  be the possibly smallest number such that  both  $\tilde{S}_{j_0}(x)$ and $ \tilde{S}_{\tilde{l}}(x)\backslash\tilde{S}_{j_0}(x)$ have width at least $M$.  Otherwise, set  $j_0=\tilde{l}$. Let $S_0(x)=\tilde{S}_{j_0}(x)$.
Let $j_1 \in \{j_0,j_0+1,\cdots, \tilde{l} -1\}$  be the  possibly smallest number  such that  both $\tilde{S}_{j_1}(x)\backslash\tilde{S}_{j_0}(x)$  and  $\tilde{S}_{\tilde{l}}(x)\backslash\tilde{S}_{j_1}(x)$  have width at least $M$.  Otherwise, set  $j_1=\tilde{l}$.
Let $S_1(x)=\tilde{S}_{j_1}(x)$. Suppose we have defined $j_0$, $j_1\cdots, j_k$ and  corresponding $S_1(x), S_2(x),\cdots S_k(x)$.
Let $j_{k+1} \in \{j_k,j_k+1,\cdots, \tilde{l}-1\}$  be the  possibly smallest number  such that  $\tilde{S}_{j_{k+1}}(x)\backslash\tilde{S}_{j_k}(x)$ and $\tilde{S}_{\tilde{l}}(x)\backslash\tilde{S}_{j_{k+1}}(x)$
  have width at least $M$.  Otherwise, set $j_{k+1}=\tilde{l}$. Let $S_{k+1}(x)=\tilde{S}_{j_{k+1}}(x)$.
Let $l$ be such that $S_l(x)=\Lambda$.
 By our constructions,  
  $\tilde{l}-C(d)\leq l\leq \tilde{l}$.

  Here is an example. Assume  $x$ locates exactly  at the left upmost corner. In Fig.3, $\tilde{A}_k(x)=\tilde{S}_{k}(x)\backslash \tilde{S}_{k-1}(x)$ and  $\tilde{S}_{\tilde{l}}(x)=\tilde{S}_{\tilde{l}}(x)\backslash \tilde{S}_{\tilde{l}-1}(x)$  are the only two annuli which have width less than $M$.
  Therefore,
  \begin{itemize}
    \item  $ l=\tilde{l}-2$
    \item For $j=0, 1,2,\cdots,k-2$, $ S_j(x)=\tilde{S}_{j}(x)$.
    \item  For $j=k-1,k-2,\cdots, l-3$, $ S_j(x)=\tilde{S}_{j+1}(x)$. $ S_{l-2}(x)=\tilde{S}_{\tilde{l}}(x)$.
  \end{itemize}
For any elementary region $\Lambda$, $x\in \Lambda$ and $M$, we call $\{S_j(x)\}_{j=0}^{l}$ the exhaustion of $\Lambda$ at  $x$ with width $M$.
We call $A_j(x)=S_j(x)\backslash S_{j-1}(x)$ the $j$th annulus. For any $y\in S_j(x)\backslash S_{j-1}(x)$, $j=1,2,\cdots,l$, one has
\begin{equation}\label{Gdist}
4(j-1)M\leq |y-x|\leq 4jM+C(d)M.
\end{equation}

	By our constructions,  any $\{A_j(x)\}$  has width at least $M$. Namely,  for any  $n\in A_j(x)$ there exists  $W(n)\in \mathcal{E}_M$ such that
\begin{equation*}
n\in  W(n) \subset A_j(x)
\end{equation*}
and
\begin{equation*}
\text{ dist }(n,  A_j(x)\backslash  W(n))\geq M/2.
\end{equation*}
\begin{center}

\begin{tikzpicture}[thick, scale=2.4]
\draw (0,0)--(1.98,0)--(1.98,-1.98)--(3.96,-1.98)--(3.96,-3.96)--(0,-3.96)--(0,0);
\draw[](0,0)rectangle (0.1,-0.1);
\draw[](0,0)rectangle (0.3,-0.3);
\draw[](0,0)rectangle (0.5,-0.5);
\draw[](0,0)rectangle (0.7,-0.7);
\draw[](0,0)rectangle (0.9,-0.9);
\draw[](0,0)rectangle (1.1,-1.1);
\draw[](0,0)rectangle (1.3,-1.3);
\draw[](0,0)rectangle (1.5,-1.5);
\draw[](0,0)rectangle (1.7,-1.7);
\draw[dashed] (0,0)rectangle (1.9,-1.9);

\draw(0,-2.1)--(2.1,-2.1)--(2.1,-1.98);
\draw (0,-2.3)--(2.3,-2.3)--(2.3,-1.98);
\draw (0,-2.5)--(2.5,-2.5)--(2.5,-1.98);
\draw (0,-2.7)--(2.7,-2.7)--(2.7,-1.98);
\draw (0,-2.9)--(2.9,-2.9)--(2.9,-1.98);
\draw (0,-3.1)--(3.1,-3.1)--(3.1,-1.98);
\draw (0,-3.3)--(3.3,-3.3)--(3.3,-1.98);
\draw (0,-3.5)--(3.5,-3.5)--(3.5,-1.98);
\draw  (0,-3.7)--(3.7,-3.7)--(3.7,-1.98);
\draw [dashed](0,-3.9)--(3.9,-3.9)--(3.9,-1.98);


\draw (2.05,-4.1) node[below]   {Fig.3: Exhaustion Construction };
\draw (0,0) node[above]   { $x$};
\draw [->](0.3,0.3)--(0.05,-0.05);
\draw (0.3,0.4) node    { $\tilde{S}_0(x)$};

\draw (0.9,-1.8) node    { $\tilde{A}_{k-1}(x)$};
\draw (1,-2) node    { $\tilde{A}_k(x)$};
\draw (1.1,-2.2) node    { $\tilde{A}_{k+1}(x)$};
\draw (2.3,-3.8) node    { $\tilde{A}_{\tilde{l}-1}(x)$};
\draw (2.1,-3.6) node    { $\tilde{A}_{\tilde{l}-2}(x)$};
\draw [->](4.2,-3.22)--(3.92,-3.2);
\draw (4.2,-3.22) node[right]    { $\tilde{A}_{\tilde{l}}(x)$};




\end{tikzpicture}
\end{center}

\subsection{Resolvent identities}
For simplicity, assume $K=1$, namely
\begin{equation}\label{GOk1}
|A(n,n')|\leq   e^{-c_1|n-n'|^{\tilde{\sigma}}},  0<\tilde{\sigma}\leq 1, c_1>0,
\end{equation}
for any $n,n^\prime\in \Z^d$.
For any $\Lambda\subset \Z^d$, denote by $A_{\Lambda}=R_{\Lambda} AR_{\Lambda}$, where $R_{\Lambda}$ is the restriction on $\Lambda$, and the Green's function
\begin{equation*}
  G_{\Lambda}=(R_{\Lambda} AR_{\Lambda})^{-1},
\end{equation*}
provided $R_{\Lambda} AR_{\Lambda}$ is invertible.  Denote  by $G_{\Lambda}(n,n^\prime)$ its  elements, $n,n^\prime\in\Lambda\subset \Z^d$.

Assume $\Lambda_1$ and $\Lambda_2$ are wo disjoint subsets of  $\Z^d$. 
Namely,  $\Lambda_1,\Lambda_2 \subset \Z^d$ and $\Lambda_1\cap\Lambda_2=\emptyset$. Let $\Lambda=\Lambda_1\cup \Lambda_2$.
 Suppose that $ R_{\Lambda}AR_{\Lambda}$ and $ R_{\Lambda_i}AR_{\Lambda_i}$, $i=1,2$ are invertible.
 Then
 \begin{equation*}
   G_{\Lambda}=G_{\Lambda_1}+G_{\Lambda_2}-(G_{\Lambda_1}+G_{\Lambda_2})( A_{\Lambda}-A_{\Lambda_1}-A_{\Lambda_2})G_{\Lambda}.
 \end{equation*}
 If $m\in \Lambda_1$ and $n\in \Lambda$, we have
 \begin{equation}\label{Greso}
    |G_{\Lambda}(m,n)|\leq |G_{\Lambda_1}(m,n)|\chi_{\Lambda_1}(n)+ \sum_{n^{\prime}\in \Lambda_1,n^{\prime\prime}\in \Lambda_2} e^{-c_1|n^{\prime}-n^{\prime\prime}|^{\tilde{\sigma}}}|G_{\Lambda_1}(m,n^{\prime})||G_{\Lambda}(n^{\prime\prime},n)|.
 \end{equation}
  If $n\in \Lambda_2$ and $m\in \Lambda$, we have
 \begin{equation}\label{Greson}
    |G_{\Lambda}(m,n)|\leq |G_{\Lambda_2}(m,n)|\chi_{\Lambda_2}(n)+ \sum_{n^{\prime}\in \Lambda_1,n^{\prime\prime}\in \Lambda_2} e^{-c_1|n^{\prime}-n^{\prime\prime}|^{\tilde{\sigma}}}|G_{\Lambda}(m,n^{\prime})||G_{\Lambda_2}(n^{\prime\prime},n)|.
 \end{equation}

\begin{lemma}[Schur  test]\label{schur}
Suppose $A=A_{ij}$ is a    matrix. Then
\begin{equation*}
  \|A\|\leq \sqrt{\left(\sup_{i}\sum_{j}|A_{ij}|\right)\left(\sup_{j}\sum_{i}|A_{ij}|\right)}.
\end{equation*}
\end{lemma}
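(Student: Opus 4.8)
The plan is to estimate the bilinear form $\langle Au,v\rangle$ directly and extract the operator norm bound via duality. Write $R=\sup_i\sum_j|A_{ij}|$ and $C=\sup_j\sum_i|A_{ij}|$, so that the claimed bound is $\|A\|\le\sqrt{RC}$. Given vectors $u=(u_j)$ and $v=(v_i)$ in $\ell^2$, I would start from the triangle inequality
\begin{equation*}
  |\langle Au,v\rangle|\;=\;\Bigl|\sum_{i,j}A_{ij}u_j\overline{v_i}\Bigr|\;\le\;\sum_{i,j}|A_{ij}|\,|u_j|\,|v_i|.
\end{equation*}

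The key step is the symmetric splitting $|A_{ij}|\,|u_j|\,|v_i|=\bigl(|A_{ij}|^{1/2}|u_j|\bigr)\bigl(|A_{ij}|^{1/2}|v_i|\bigr)$ followed by the Cauchy--Schwarz inequality over the index pair $(i,j)$:
\begin{equation*}
  \sum_{i,j}|A_{ij}|\,|u_j|\,|v_i|\;\le\;\Bigl(\sum_{i,j}|A_{ij}|\,|u_j|^2\Bigr)^{1/2}\Bigl(\sum_{i,j}|A_{ij}|\,|v_i|^2\Bigr)^{1/2}.
\end{equation*}
Then I would interchange the order of summation in each factor: $\sum_{i,j}|A_{ij}|\,|u_j|^2=\sum_j|u_j|^2\sum_i|A_{ij}|\le C\|u\|^2$, and similarly $\sum_{i,j}|A_{ij}|\,|v_i|^2=\sum_i|v_i|^2\sum_j|A_{ij}|\le R\|v\|^2$. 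Combining gives $|\langle Au,v\rangle|\le\sqrt{RC}\,\|u\|\,\|v\|$, and taking the supremum over unit vectors $u,v$ yields $\|A\|\le\sqrt{RC}$, as desired.

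I do not anticipate a serious obstacle here; the only point requiring a word of care is the legitimacy of the rearrangements of the double sums, which is immediate when $A$ has finitely many nonzero entries (the case actually used in the multi-scale arguments, where all matrices are restrictions to finite boxes), and otherwise follows from Tonelli's theorem since all summands are nonnegative. If one wants to avoid duality, an alternative is to bound $\|Au\|^2=\sum_i|\sum_j A_{ij}u_j|^2$ by the same $|A_{ij}|^{1/2}$ splitting inside the inner sum; this also works but the bilinear-form route above is cleaner and produces the constant $\sqrt{RC}$ directly.
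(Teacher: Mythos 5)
Your proof is correct and is the canonical argument for the Schur test; the paper itself states Lemma 4.1 without proof, treating it as standard, so there is no in-paper argument to compare against. Your remark on the legitimacy of rearranging the double sums (Tonelli for nonnegative summands, or finiteness in the application) is exactly the right thing to flag and is handled correctly.
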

The following lemma is a generalization of  Lemma 3.2 in \cite{jls}.
\begin{lemma}\label{res1}
 Let $ {c_2}\in [\tilde{c}_1,c_1]$, $\sigma<\tilde{\sigma}$ and $M_0\leq M_1\leq N$.  Assume $\Lambda$ is a subset of $\Z^d$ with  ${\rm diam}(\Lambda)\leq 2N+1$.  Suppose that for any $n\in \Lambda $, there exists some  $ W=W(n)\in \mathcal{E}_M$ with
$M_0\leq M\leq M_1$ such that
$n\in W\subset \Lambda$,  ${\rm dist} (n,\Lambda \backslash W)\geq \frac{M}{2}$ and
\begin{eqnarray}
\label{w1}&& \|G_{W(n)}\|\leq2 e^{M^\sigma},\\
\label{w2}&& |G_{W(n)}(n,n')|\leq  2e^{- {c}_2|n-n'|^{\tilde{\sigma}}}\  {\mathrm{for} \ |n-n'|\geq \frac{M}{10}}.
\end{eqnarray}
  We assume further that $M_0$ is large enough so that
\begin{equation}\label{ml}
\sup_{M_0\leq M\leq M_1} \sup_{ {c_2}\in [\tilde{c}_1,c_1]}2e^{{M}^\sigma}(2M+1)^{d}e^{\frac{{c}_2}{10^{\tilde{\sigma}}}M^{\tilde{\sigma}}}\sum_{j=0}^{\infty}(M+2j+1)^de^{-{c}_2(j+M/2)^{\tilde{\sigma}}}\leq \frac{1}{2}.
\end{equation}
Then
\begin{equation*}
  \|G_{\Lambda}\|\leq 4 (2M_1+1)^d e^{ M_1^\sigma}.
\end{equation*}
\end{lemma}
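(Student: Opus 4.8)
The plan is to derive a Neumann-type series for $G_\Lambda$ in terms of the local Green's functions $G_{W(n)}$, using the resolvent identities \eqref{Greso}--\eqref{Greson} with the splitting $\Lambda=W(n)\cup(\Lambda\setminus W(n))$, and then to close the estimate by absorbing the off-diagonal tail into the smallness hypothesis \eqref{ml}. Concretely, fix $n\in\Lambda$ and an auxiliary $m\in\Lambda$; apply \eqref{Greson} with $\Lambda_2=W(n)$ (note $n\in\Lambda_2$ here, since $W=W(n)$ surrounds $n$) and $\Lambda_1=\Lambda\setminus W(n)$, so that
\begin{equation*}
|G_\Lambda(m,n)|\leq |G_{W(n)}(m,n)|\chi_{W(n)}(m)+\sum_{n'\in\Lambda\setminus W(n),\,n''\in W(n)}e^{-c_1|n'-n''|^{\tilde\sigma}}|G_\Lambda(m,n')||G_{W(n)}(n'',n)|.
\end{equation*}
Because $\mathrm{dist}(n,\Lambda\setminus W(n))\geq M/2$, every $n'$ in the sum satisfies $|n'-n''|\geq$ a positive amount only when $n''$ is near the boundary of $W(n)$; more importantly $|n''-n|\geq M/2-|n''-\partial W|$ can be small, so I will split the $n''$ sum into the ``far'' part $|n''-n|\geq M/10$, where \eqref{w2} applies, and the ``near'' part $|n''-n|< M/10$, which has at most $(2M/10+1)^d$ terms and where I use $\|G_{W(n)}\|\leq 2e^{M^\sigma}$ from \eqref{w1}. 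In the far part, pairing $n''$ (ranging over the annular shell at distance $\geq M/2$ from $\partial(\Lambda\setminus W)$) with $n'$ gives a factor $e^{-c_2|n''-n|^{\tilde\sigma}}$ times $\sum_{j\geq 0}(M+2j+1)^d e^{-c_1(j+M/2)^{\tilde\sigma}}$; bounding $c_1\geq c_2$ and $|n''-n|\geq M/2$ recovers exactly the left side of \eqref{ml}.

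The upshot of this bookkeeping is a self-referential inequality of the form
\begin{equation*}
\sup_{m\in\Lambda}|G_\Lambda(m,n)|\leq 2e^{M^\sigma}+\tfrac12\sup_{m\in\Lambda,\,n'\in\Lambda}|G_\Lambda(m,n')|,
\end{equation*}
where the coefficient $\tfrac12$ is precisely condition \eqref{ml} (the $2e^{M^\sigma}(2M+1)^d e^{\frac{c_2}{10^{\tilde\sigma}}M^{\tilde\sigma}}$ prefactor accounts for the near-part count and the worst single-entry bound, and the $j$-sum is the far part). Taking the supremum over $n$ as well and rearranging yields $\sup_{m,n}|G_\Lambda(m,n)|\leq 4e^{M_1^\sigma}$, hence by Schur's test (Lemma \ref{schur}) applied with the crude count $\#\Lambda\leq(2N+1)^d$ — or, better, using that each row/column sum is controlled by the same geometric tail plus the $m=n$ term — one gets $\|G_\Lambda\|\leq 4(2M_1+1)^d e^{M_1^\sigma}$. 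One subtlety: to run this argument I first need $R_\Lambda A R_\Lambda$ to be invertible at all; this follows by a standard bootstrap, first establishing the resolvent identity on a finite approximation or by noting that the iterated inequality shows $R_\Lambda A R_\Lambda$ has no kernel (if $R_\Lambda A R_\Lambda u=0$ with $u$ supported in $\Lambda$, the same expansion forces $u\equiv 0$), and a dimension count then gives invertibility.

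The main obstacle I anticipate is purely organizational rather than conceptual: making the geometry of the shells match the exact combinatorial sum in \eqref{ml}. Specifically, one must verify that when $W(n)$ is not a full cube but one of the nonconvex elementary regions (the $L$-shapes etc.\ of Fig.~1), the condition $\mathrm{dist}(n,\Lambda\setminus W(n))\geq M/2$ still guarantees that the number of lattice points $n''\in W(n)$ at distance exactly $k$ from $n$ with a neighbor outside $W(n)$ grows at most like $(M+2k+1)^d$, and that $|n'-n''|\geq (k+M/2)$-type lower bounds hold after the triangle inequality — this is where the uniform width hypothesis is essential and where one must be slightly careful, but it is exactly the scenario \eqref{ml} was engineered for. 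The factor $e^{\frac{c_2}{10^{\tilde\sigma}}M^{\tilde\sigma}}$ in \eqref{ml} is the ``interest'' one pays for the near-diagonal block $|n''-n|<M/10$ where only \eqref{w1} (not the exponential decay \eqref{w2}) is available, and tracking that this single extra exponential factor suffices — rather than accumulating one such factor per scale — is the one place the argument could go wrong if set up carelessly. I would handle all of this by doing the estimate once for a genuine cube and then remarking that the width condition reduces the general elementary region to that case verbatim.
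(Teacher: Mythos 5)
The overall architecture — split $\Lambda$ into $W(n)$ and its complement, combine \eqref{w1}--\eqref{w2} into the uniform bound $|G_{W(n)}(n,n')|\le 2e^{M^\sigma+\frac{c_2}{10^{\tilde\sigma}}M^{\tilde\sigma}}e^{-c_2|n-n'|^{\tilde\sigma}}$, use a resolvent identity together with the width hypothesis and the subadditivity of $t\mapsto t^{\tilde\sigma}$, and let \eqref{ml} supply the $\tfrac12$ contraction — is exactly the paper's and is sound. The gap is in the final step. Your self-referential inequality is posed on the \emph{maximum entry} of $G_\Lambda$: after taking suprema you get $\sup_{m,n}|G_\Lambda(m,n)|\le 4e^{M_1^\sigma}$, a true bound, but not enough. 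Schur's test needs row and column \emph{sum} bounds, and converting a max-entry bound to a row sum by ``the crude count $\#\Lambda\le(2N+1)^d$'' yields $\|G_\Lambda\|\le (2N+1)^d\cdot 4 e^{M_1^\sigma}$, a factor $(2N+1)^d$ where the lemma demands $(2M_1+1)^d$; in the intended application $M_1\sim N^\xi$ with $\xi<1$, so this loss is fatal. Your parenthetical ``or, better, using that each row/column sum is controlled by the same geometric tail'' names the right repair but does not carry it out, and it is not a cosmetic point — keeping the prefactor at size $M_1$ rather than $N$ is the whole content of the lemma.

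The fix is to run the same expansion but sum over the free index \emph{before} taking suprema. Starting from the one-step inequality
\begin{equation*}
  |G_{\Lambda}(n,n')|\leq |G_{W(n)}(n,n')|\chi_{W(n)}(n')+2(2M+1)^de^{{M}^\sigma+\frac{c_2}{10^{\tilde{\sigma}}}M^{\tilde{\sigma}}}\sum_{n_2\in \Lambda,\ |n_2-n|\geq M/2}   e^{-c_2|n-n_2|^{\tilde{\sigma}}}\,|G_{\Lambda}(n_2,n')|,
\end{equation*}
sum over $n'\in\Lambda$. The first term contributes at most $|W(n)|\cdot\|G_{W(n)}\|\le 2(2M_1+1)^de^{M_1^\sigma}$ (each matrix entry is bounded by the operator norm), and the second contributes at most $\tfrac12\sup_{n_2}\sum_{n'}|G_\Lambda(n_2,n')|$ by \eqref{ml}. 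This is a self-referential inequality on the row-sum supremum, and since $\Lambda$ is finite that supremum is a priori finite, so it rearranges to $\sup_n\sum_{n'}|G_\Lambda(n,n')|\le 4(2M_1+1)^de^{M_1^\sigma}$. The column-sum bound follows symmetrically from \eqref{Greson}, and Schur's test then gives the stated bound with the correct $(2M_1+1)^d$ prefactor.

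A minor remark: your concern about non-cubical elementary regions is unfounded at this stage. The factor $(M+2j+1)^d$ in \eqref{ml} counts lattice points in a cube of $\Z^d$ centered at $n$, not in $W(n)$. The only geometric facts about $W(n)$ the proof uses are $|W(n)|\le(2M+1)^d$ and ${\rm dist}(n,\Lambda\setminus W(n))\ge M/2$, both of which hold for every $W(n)\in\mathcal{E}_M$ by definition, regardless of its shape.
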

\begin{proof}

Under the assumption of \eqref{ml}, it is easy to check that for any $M$ with  $M_0\leq M\leq M_1$ and any $n\in \Lambda$,
\begin{equation}\label{Large1}
 2 (2M+1)^de^{{M}^\sigma+\frac{c_2}{10^{\tilde{\sigma}}}M^{\tilde{\sigma}}}\sum_{n_2\in \Lambda\atop |n_2-n|\geq \frac{M}{2}}   e^{-c_2|n-n_2|^{\tilde{\sigma}}}\leq \frac{1}{2}.
\end{equation}

By \eqref{w1} and \eqref{w2},
one has
\begin{equation}\label{Ugood}
   |G_{W(n)}(n,n')|\leq 2e^{M^\sigma+\frac{{c_2}}{10^{\tilde{\sigma}}}M^{\tilde{\sigma}}}e^{- {c}_2|n-n'|^{\tilde{\sigma}}}.
\end{equation}
For each $n\in \Lambda$, applying \eqref{Greso} with $\Lambda_1=W(n)$, one has
\begin{equation*}
  |G_{\Lambda}(n,n')|\leq |G_{W(n)}(n,n')|\chi_{W(n)}(n')+ \sum_{n_1\in W(n)\atop n_2\in \Lambda\backslash W(n)} e^{-c_1|n_1-n_2|^{\tilde{\sigma}}}|G_{W(n)}(n,n_1)||G_{\Lambda}(n_2,n')|.
\end{equation*}
It is easy to see for $0<\tilde{\sigma}\leq 1$,
\begin{equation}\label{gdec101}
|x+y|^{\tilde{\sigma}}\leq |x|^{\tilde{\sigma}}+|y|^{\tilde{\sigma}}.
\end{equation}
By \eqref{Ugood} and the fact that $|W(n)|\leq (2M+1)^d$, one has
\begin{align}
  |G_{\Lambda}(n,n')| \leq &|G_{W(n)}(n,n')|\chi_{W(n)}(n')\nonumber\\
  &+
  2 \sum_{n_1\in W(n)\atop n_2\in \Lambda\backslash W(n)} e^{{M}^\sigma+\frac{c_2}{10^{\tilde{\sigma}}}M^{\tilde{\sigma}}}e^{-c_2|n-n_1|^{\tilde{\sigma}}} e^{-c_1|n_1-n_2|^{\tilde{\sigma}}}|G_{\Lambda}(n_2,n')|\nonumber   \\
  \leq &|G_{W(n)}(n,n')|\chi_{W(n)}(n') \nonumber\\
  &+2(2M+1)^de^{{M}^\sigma
  	+\frac{c_2}{10^{\tilde{\sigma}}}M^{\tilde{\sigma}}}\sum_{n_2\in \Lambda\backslash W(n)}   e^{-c_2|n-n_2|^{\tilde{\sigma}}}|G_{\Lambda}(n_2,n')| \nonumber\\
  \leq&|G_{W(n)}(n,n')|\chi_{W(n)}(n')\nonumber\\
  &+2(2M+1)^de^{{M}^\sigma+\frac{c_2}{10^{\tilde{\sigma}}}M^{\tilde{\sigma}}}\sum_{n_2\in \Lambda\atop |n_2-n|\geq \frac{M}{2}}   e^{-c_2|n-n_2|^{\tilde{\sigma}}}|G_{\Lambda}(n_2,n')|
  \label{BGSle1}.
\end{align}
where the second inequality holds by \eqref{gdec101} and the  last inequality holds by the assumption ${\rm dist} (n,\Lambda\backslash W(n))\geq \frac{M}{2}$.

Summing over $n'\in \Lambda$ in \eqref{BGSle1} and noticing \eqref{Large1} yields
\begin{eqnarray}\label{sch1}
 \sup_{n\in \Lambda}\sum_{n'\in \Lambda} |G_{\Lambda}(n,n')|
  &\leq&2(2M_1+1)^de^{{M}_1^\sigma}+\frac{1}{2}\sup_{n_2\in \Lambda}\sum_{n'\in \Lambda}|G_{\Lambda}(n_2,n')|.
\end{eqnarray}
Similarly, using \eqref{Greson} instead of \eqref{Greso}, one has
\begin{equation}\label{sch2}
 \sup_{n\in \Lambda} \sum_{n'\in \Lambda} |G_{\Lambda}(n',n)|\leq  2(2M_1+1)^de^{{M}_1^\sigma}+\frac{1}{2}\sup_{n_2\in \Lambda}\sum_{n'\in \Lambda}|G_{\Lambda}(n',n_2)|.
\end{equation}

Now the lemma follows from \eqref{sch1}, \eqref{sch2} and  Lemma \ref{schur}.

\end{proof}

\subsection{ Proof of Theorem \ref{thmmul}}
\begin{proof}
Choose a constant $\rho\in(1,1+ \tilde{\sigma}-\sigma)$. Calculation shows $\rho\sigma<\tilde{\sigma}$.

Define inductively $M_{j+1}=\lfloor M_j^\rho\rfloor$, $M_0=M$.
Let $\gamma_0=c_2$.
Fix an elementary region $\tilde{\Lambda}_1\in \mathcal{E}_{M_1}$ and $\tilde{\Lambda}_1\subset \tilde{\Lambda}_0$. For any $x\in \tilde{\Lambda}_1$, consider the exhaustion $\{S_j(x)\}_{j=0}^l$ of $\tilde{\Lambda}_1$  at $x$  with width $M_0$. Denote by $\{A_k(x)\}$  the annuli.

 We call the annulus  $A_k(x)$  good, if for any $y\in A_k(x)$, there exists $W(y)\in \mathcal{E}_{{M}_0}$ such that
\begin{equation*}
  y\in W(y)\subset A_k(x), \text { dist}(y, A_k(x)\backslash W(y))\geq M_0/2,
\end{equation*}
and  for $|n-n^\prime|\geq \frac{M_0}{10}$,
\begin{equation}\label{ggoodnov26}
  |(R_{W(y)}A  R_{W(y)})^{-1}(n,n^\prime)|\leq  e^{-\gamma_0|n-n^\prime|^{\tilde{\sigma}}}.
\end{equation}
Otherwise, we call the annulus $A_k$ bad.

Fix   $\kappa>0$, which  will be determined later.
An elementary region $\tilde{\Lambda}_1\subset \tilde{\Lambda}_0$ is called  bad if  there exists  $x\in \tilde{\Lambda}_1$  such that the number of bad annuli $\{A_k(x)\}$
exceeds
\begin{equation*}
  B_1:=\kappa \frac{M_1}{M_0}.
\end{equation*}
Otherwise, we call   $\tilde{\Lambda}_1$ good.
Let $\mathcal{F}_1$ be an arbitrary family of pairwise disjoint bad  elementary regions in $\mathcal{E}_{M_1}$ contained in $\tilde{\Lambda}_0$.
Since every annulus in $\{A_k\}$ has width at least $M_0$ by our construction, one has  that  every bad annulus contains at least one elementary region in $\mathcal{E}_{M_0}$ without satisfying \eqref{ggoodnov26} and  hence
\begin{equation}\label{gnumber1}
 \# \mathcal{F}_1\leq \frac{N^{\varsigma}}{\kappa M_1}.
\end{equation}
Assume that $\tilde{\Lambda}_1\subset \tilde{\Lambda}_0$ is a good elementary region in $\mathcal{E}_{M_1}$.
 We will first show that $\tilde{\Lambda}_1$ is in class G with slightly smaller $\gamma_0$.
Consider the exhaustion $\{S_j(x)\}$ of $\tilde{\Lambda}_1$ at $x$ with width $M_0$. By the assumption, there are no more than $B_1$ bad annuli in this exhaustion.
Denote by $\{A_j(x)\}_{j=0}^{{l}}$ annuli.
By putting adjacent  good annuli or bad annuli together,
we obtain   a new exhaustion
\begin{equation}\label{ggdec12124}
  \emptyset={J}_{-1}\subset {J}_0\subset {J}_1\subset \cdots \subset{J}_{{g}}=\tilde{\Lambda}_1.
\end{equation}

More precisely, $\{J_s(x)\}$, $s=0,1,2,\cdots,g$, satisfies   the following rules.
\begin{itemize}
  \item    $J_s(x)\backslash J_{s-1}(x)=\{A_j(x)\}_{j=t_s}^{j=t_s^\prime}$  for some $t_s<t_s^\prime$. 
  \item $x\in J_0(x)$.
\item  The annuli $A_j(x)$, $j=t_s,t_s+1,\cdots,t_s^\prime$  are either all good or all bad.
 \item Take $J_s(x)$ maximal with the above three properties.
\end{itemize}
We remind that $J_s(x)\backslash J_{s-1}(x)$ has width at least $M_0$ for any $s=0,1,2,\cdots,g$.
By our construction, if all annuli in  $J_s(x)\backslash J_{s-1}(x)$  are good (bad), then all annuli in $J_{s+1}(x)\backslash J_{s}(x)$  are bad (good).

For any $n\in \tilde{\Lambda}_1$, let $k(n)$   be the number of good  annuli  between $x$ and $n$. Namely, for any 
 $n\in A_j(x)$,  
 \begin{equation*}
 k(n)=\#\{A_t(x): A_t(x) \text{ is a good annulus}, 0\leq t\leq j\}.
 \end{equation*}
 Before we start the estimates, let us give  several facts  first, which will be used  constantly  in the  later proof.
 By our constructions, $J_s$ is a generalized elementary region,  $s=0,1,\cdots, g$.
 By the assumption \eqref{gboundnov26}, one has for all $s=0,1,\cdots, g$,
 \begin{equation}\label{gboundnov261}
 ( R_{J_{s}} A R_{J_{s}} )^{-1}\leq e^{M_1^\sigma}.
 \end{equation}
 Assume
 \begin{equation*}
 0< c\leq (1-5^{-\tilde{\sigma}})c_1.
 \end{equation*}
 If $|n-n_2|\geq\frac{M}{2}$ and $|n-n_1|\leq \frac{M}{10}$, one has
 \begin{eqnarray}
 c_1|n_1-n_2|^{\tilde{\sigma}} &\geq& c_1(|n-n_2|^{\tilde{\sigma}}-|n-n_1|^{\tilde{\sigma}})\nonumber \\
 &\geq& c|n-n_2|^{\tilde{\sigma}} .\label{gdec94}
 \end{eqnarray}
 It is clear that  for any  $n_1,n_2\in \tilde{\Lambda}_1$, 
 \begin{equation}\label{gggdec1133}
 4M_0 k(n_1)+|n_1-n_2|\geq  4M_0k(n_2)-4M_0.
 \end{equation}
 
 Without loss of generality, assume all the annuli in $J_0$   are bad (the another case is  similar).

For any $n\in \tilde{\Lambda}_1$, define
\begin{equation*}
  \Gamma_s(n)=\max\{4M_0k(n)-10(s+1)M_0,0\}.
\end{equation*}

By   \eqref{gggdec1133}, we have for any  $n_1\in \tilde{\Lambda}_1$ and $n_2\in \tilde{\Lambda}_1$,
\begin{equation}\label{gggdec131}
    \Gamma_s(n_1)+|n_2-n_1|\geq \max\{  \Gamma_s(n_2)-4M_0,0\}.
\end{equation}

 We shall   inductively obtain estimates of the form
 \begin{equation}\label{gind}
   |G_{J_s(x)}(x,z)|\leq T_se^{-\gamma_0\Gamma_s^{\tilde{\sigma}}(z)},
 \end{equation}
 where $z\in J_s$, $s=0,1,\cdots,g$.


{\bf First step: $s=0$}

 Since all annuli in $J_0$ are bad, one has $k(z)=0$ and hence
 \begin{equation}\label{Gnov261}
   \Gamma_0(z)=0.
 \end{equation}

By \eqref{gboundnov261} and \eqref{Gnov261}, one has   for $z\in J_0(x)$,
\begin{eqnarray*}
   |G_{J_s}(x,z)| &\leq& e^{M_1^\sigma} \\
   &= & e^{M_1^\sigma} e^{-\gamma_0 \Gamma_0^{\tilde{\sigma}}(z)}.
\end{eqnarray*}
It implies that \eqref{gind} holds for
\begin{equation}\label{gggdec111}
  T_0=e^{M_1^{\sigma}}.
\end{equation}
Assume \eqref{gind} holds at $s$th step for a proper $T_s$.

{\bf Case 1:}
All  annuli in  $J_{s+1}\backslash J_{s}$  are bad. 

Pick any $z\in J_{s+1}$.
Let $\tilde{n}_1\in J_s $ and $ \tilde{n}_2\in J_{s+1}\backslash J_s$ be such that
\begin{equation*}
\Gamma_s(\tilde{n}_1)+|\tilde{n}_1-\tilde{n}_2|= \inf_{ n_1\in J_s \atop n_2\in J_{s+1}\backslash J_s} (\Gamma_s(n_1)+|n_1-n_2|).
\end{equation*}

{\bf Case $1_1$}: $z\in J_{s+1}\backslash J_s$.
In this case, for any $n_2\in J_{s+1}\backslash J_s$, one has
\begin{equation}\label{gggdec114}
  k(z)=k(n_2), \Gamma_s(z)=\Gamma_s(n_2),
\end{equation}
since all  annuli in $J_{s+1}\backslash J_s$ are bad.

 Applying \eqref{Greso}  ($\Lambda_1=J_s$ and $\Lambda_2=J_{s+1}\backslash J_s$), one has
\begin{eqnarray}
  |G_{J_{s+1}}(x,z)| &\leq&   \sum_{n_1\in J_s \atop n_2\in J_{s+1}\backslash J_s} |G_{J_s}(x,n_1)| e^{-c_1|n_1-n_2|^{\tilde{\sigma}}}|G_{J_{s+1}}(n_2,z)|\nonumber  \\
   &\leq&  \sum_{n_1\in J_s \atop n_2\in J_{s+1}\backslash J_s}  T_se^{- \gamma_0\Gamma_s^{\tilde{\sigma}}(n_1)}  e^{-\gamma_0|n_1-n_2|^{\tilde{\sigma}}}|G_{J_{s+1}}(n_2,z)|\nonumber  \\
  &\leq& e^{M_1^{\sigma}}T_s\sum_{n_1\in J_s \atop n_2\in J_{s+1}\backslash J_s} e^{- \gamma_0\Gamma_s^{\tilde{\sigma}}(n_1)}  e^{-\gamma_0|n_1-n_2|^{\tilde{\sigma}}}  \nonumber\\
   &\leq&  (2M_1+1)^{2d}e^{M_1^{\sigma}}T_s \sup_{n_1\in J_s \atop n_2\in J_{s+1}\backslash J_s}  e^{-\gamma_0\Gamma_s^{\tilde{\sigma}}(n_1)}  e^{-\gamma_0|n_1-n_2|^{\tilde{\sigma}}} \nonumber\\
   &\leq& (2M_1+1)^{2d}    e^{M_1^\sigma}  T_s  e^{-\gamma_0 (\Gamma_s(\tilde{n}_1)+|\tilde{n}_1-\tilde{n}_2|)^{\tilde{\sigma}}}     \nonumber\\
    &\leq&(2M_1+1)^{2d}   e^{M_1^\sigma}  T_s   e^{-\gamma_0 (\max\{\Gamma_{s}(\tilde{n}_2)-4M_0,0\})^{\tilde{\sigma}}}, \nonumber\\
    &\leq&(2M_1+1)^{2d}   e^{M_1^\sigma}  T_s   e^{-\gamma_0 (\max\{\Gamma_{s}(z)-4M_0,0\})^{\tilde{\sigma}}}, \label{e12132}
\end{eqnarray}
where the second inequality holds by the induction \eqref{gind} and $\gamma_0\leq c_1$, the third inequality holds by \eqref{gboundnov261}, the fifth inequality holds by \eqref{gdec101},
the sixth inequality holds
 \eqref{gggdec131}, and the last inequality holds by \eqref{gggdec114}.


{\bf Case $1_2$}: $z\in J_s$. 

 In this case, we have  for any $n_2\in J_{s+1}\backslash J_s$,
\begin{equation}\label{gggdec114new}
 k(n_2)\geq   k(z).
\end{equation}

 Applying \eqref{Greso}  ($\Lambda_1=J_s$ and $\Lambda_2=J_{s+1}\backslash J_s$), one has
\begin{eqnarray}
  |G_{J_{s+1}}(x,z)| &\leq& |G_{J_s}(x,z)| +  \sum_{n_1\in J_s \atop n_2\in J_{s+1}\backslash J_s} |G_{J_s}(x,n_1)| e^{-c_1|n_1-n_2|^{\tilde{\sigma}}}|G_{J_{s+1}}(n_2,z)|\nonumber  \\
   &\leq&  T_se^{-\gamma_0\Gamma_s^{\tilde{\sigma}}(z)}+\sum_{n_1\in J_s \atop n_2\in J_{s+1}\backslash J_s}  T_se^{-\gamma_0\Gamma_s^{\tilde{\sigma}}(n_1)}  e^{-\gamma_0|n_1-n_2|^{\tilde{\sigma}}}|G_{J_{s+1}}(n_2,z)|\nonumber  \\
  &\leq&  T_se^{-\gamma_0\Gamma_s^{\tilde{\sigma}}(z)}+ (2M_1+1)^{2d}   e^{M_1^\sigma}   T_s  \sup_{n_1\in J_s\atop n_2\in J_{s+1}\backslash J_s}  e^{- \gamma_0(\Gamma_{s}(n_1)+|n_1-n_2|)^{\tilde{\sigma}}}     \nonumber\\
    &=&  T_se^{-\gamma_0\Gamma_s^{\tilde{\sigma}}(z)}+ (2M_1+1)^{2d}   e^{M_1^\sigma}   T_s    e^{- \gamma_0(\Gamma_{s}(\tilde{n}_1)+|\tilde{n}_1-\tilde{n}_2|)^{\tilde{\sigma}}}     \nonumber\\
    &\leq&T_se^{-\gamma_0\Gamma_s^{\tilde{\sigma}}(z)}+(2M_1+1)^{2d}   e^{M_1^\sigma}  T_s   e^{-\gamma_0 (\max\{\Gamma_{s}(\tilde{n}_2)-4M_0,0\})^{\tilde{\sigma}}}, \nonumber\\
     &\leq&T_se^{-\gamma_0\Gamma_s^{\tilde{\sigma}}(z)}+(2M_1+1)^{2d}   e^{M_1^\sigma}  T_s   e^{-\gamma_0 (\max\{\Gamma_{s}(z)-4M_0,0\})^{\tilde{\sigma}}}\nonumber\\
       &\leq&2(2M_1+1)^{2d}   e^{M_1^\sigma}  T_s   e^{-\gamma_0 (\max\{\Gamma_{s}(z)-4M_0,0\})^{\tilde{\sigma}}},\label{e12134}
\end{eqnarray}
where the second inequality holds by the induction \eqref{gind}, the third inequality holds by \eqref{gboundnov261} and \eqref{gdec101},  the forth inequality holds by
 \eqref{gggdec131}, and the fifth  inequality holds by \eqref{gggdec114new}.

{\bf Case 2:}  All the annuli in $J_{s+1}\backslash J_s$ are good.

By our constructions, 
 $J_{s+1}\backslash J_s$ has width at least $M_0$. 
Therefore,
for any $k\in J_{s+1}\backslash J_s$,
there exists some  $ W=W(k)\in \mathcal{E}_{M_0}$ such that
$k\in W\subset \Lambda$,  
\begin{equation}\label{gm2}
{\rm dist} (k,J_{s+1}\backslash J_s \backslash W)\geq \frac{M_0}{2},
\end{equation}
and
\begin{eqnarray}
\label{w1nov}&& \|G_{W(k)} \|\leq e^{M_0^\sigma},\\
\label{w2nov}&& |G_{W(k)} (n_1,n_2)|\leq  e^{-\gamma_0|n_1-n_2|^{\tilde{\sigma}}}\  {\mathrm{for} \ |n_1-n_2|\geq \frac{M_0}{10}},
\end{eqnarray}
where \eqref{w1nov} holds by the assumption \eqref{gboundnov26}.

Since $M_0$ is large enough, one has \eqref{ml} is satisfied.  Applying
 Lemma \ref{res1}, we have
\begin{equation}\label{gnov265}
  \|G_{J_{s+1}\backslash J_s}\|\leq 4 (2M_0+1)^d e^{ M_0^\sigma}.
\end{equation}
We remark that we can not use the assumption \eqref{gboundnov26} to bound $G_{J_{s+1}\backslash J_s}$ since $ J_{s+1}\backslash J_s$ is not necessary to be a generalized elementary region.  It is worth to point out that $ J_{s+1}\backslash J_s$ may not be connected.

We will first prove that for any $m,n\in J_{s+1}\backslash J_s$,
\begin{equation}\label{gdefm}
|G_{J_{s+1}\backslash J_s}(m,n)| \leq  M_1^{10d  M_1^{\tilde{\sigma}} M_0^{\sigma-\tilde{\sigma}}}   e^{-  \gamma_0(\max\{|m-n| -2M_0,0\})^{\tilde{\sigma}}} .
\end{equation}
Assume $|m-n|\leq 2M_0$.  \eqref{gdefm} holds by \eqref{gnov265}.

Assume $|m-n|>2M_0$.
Applying  \eqref{Greso} with $\Lambda_1=W(m)$ and   using that $|m-n|> 2M_0$, one has
\begin{equation}\label{BGSlenov}
  |G_{J_{s+1}\backslash J_s}(m,n)|\leq  \sum_{n_1\in W(m)\atop n_2\in J_{s+1}\backslash J_s\backslash W(m)} e^{-c_1|n_1-n_2|^{\tilde{\sigma}}}|G_{W(m)}(m,n_1)||G_{\Lambda}(n_2,n)|.
\end{equation}
Applying \eqref{gm2} with $k=m$ and by \eqref{gdec94}, one has
for any $n_1$  with $|n_1-m|\leq \frac{M_0}{10}$ and $n_2\in J_{s+1}\backslash J_s\backslash W(m)$, one has 
\begin{equation}\label{gm3}
c_1|n_1-n_2|^{\tilde{\sigma}}\geq c_2|m-n_2|^{\tilde{\sigma}}.
\end{equation}
By \eqref{w1nov}, \eqref{w2nov} and \eqref{BGSlenov}, we have
\begin{align}
\nonumber|G_{J_{s+1}\backslash J_s}&(m,n)|\\
\leq &\sum_{n_1\in W(m),|n_1-m|\leq \frac{M_0}{10}-1\atop n_2\in J_{s+1}\backslash J_s\backslash W(m)} e^{-c_1|n_1-n_2|^{\tilde{\sigma}}}|G_{W(m)}(m,n_1)||G_{J_{s+1}\backslash J_s}(n_2,n)| \nonumber\\
\nonumber&+ \sum_{n_1\in W(m),|n_1-m|\geq \frac{M_0}{10}\atop n_2\in J_{s+1}\backslash J_s\backslash W(m)} e^{-c_1|n_1-n_2|^{\tilde{\sigma}}}|G_{W(m)}(m,n_1)||G_{J_{s+1}\backslash J_s}(n_2,n)|\\
\leq &\nonumber  \sum_{n_1\in W(m),|n_1-m|\leq \frac{M_0}{10}-1\atop n_2\in J_{s+1}\backslash J_s\backslash W(m)} e^{{M}_0^\sigma}e^{-c_1|n_1-n_2|^{\tilde{\sigma}}}|G_{J_{s+1}\backslash J_s}(n_2,n)|\\
\nonumber&+ \sum_{n_1\in W(m),|n_1-m|\geq \frac{M_0}{10}\atop n_2\in J_{s+1}\backslash J_s\backslash W(m)} e^{-c_1|n_1-n_2|^{\tilde{\sigma}}} e^{-\gamma_0 |m-n_1|^{\tilde{\sigma}}}|G_{J_{s+1}\backslash J_s}(n_2,n)|\\
\leq& \nonumber \sum_{n_1\in W(m),|n_1-m|\leq \frac{M_0}{10}-1\atop n_2\in J_{s+1}\backslash J_s\backslash W(n_2)} e^{M_0^\sigma} e^{-\gamma_0|m-n_2|^{\tilde{\sigma}}}|G_{J_{s+1}\backslash J_s}(n_2,n)|\\
\nonumber&+ \sum_{n_1\in W(m),|n_1-m|\geq \frac{M_0}{10}\atop n_2\in J_{s+1}\backslash J_s\backslash W(m)} e^{-\gamma_0|m-n_2|^{\tilde{\sigma}}} |G_{J_{s+1}\backslash J_s}(n_2,n)|\\
\label{rsi8nov}\leq&  (2M_1+1)^{2d}e^{M_0^\sigma} \sup_{n_2\in J_{s+1}\backslash J_s\backslash W(m)}  e^{-\gamma_0|m-n_2|^{\tilde{\sigma}} }|G_{J_{s+1}\backslash J_s}(n_2,n)|,
\end{align}
where the third inequality holds because of \eqref{gm3}. 

Recall that $ |m-n_2|\geq \frac{M_0}{2}$.
Iterating \eqref{rsi8nov} until $|n_2-n|\leq 2M_0$  or  at most  $\lfloor\frac{2^{\tilde{\sigma}}|m-n|^{\tilde{\sigma}}}{M_0^{\tilde{\sigma}}}\rfloor+1$ times,  we have
\begin{eqnarray}
  |G_{J_{s+1}\backslash J_s}(m,n)|  &\leq &  e^{M_0^\sigma \left(\frac{2^{\tilde{\sigma}}|m-n|^{\tilde{\sigma}}}{M_0^{\tilde{\sigma}}}+1\right)}(2M_1+1)^{ 2d\left(\frac{2^{\tilde{\sigma}}|m-n|^{\tilde{\sigma}}}{M_0^{\tilde{\sigma}}}+1\right) }   \nonumber \\
  &&\times
  e^{-   \gamma_0 (|m-n|-2M_0)^{\tilde{\sigma}} }  \|G_{J_{s+1}\backslash J_s}\|\nonumber \\
   &\leq &  M_1^{9d  M_1^{\tilde{\sigma}} M_0^{\sigma-\tilde{\sigma}}}   e^{-   \gamma_0 (|m-n|-2M_0)^{\tilde{\sigma}} }  \|G_{J_{s+1}\backslash J_s}\|\nonumber \\
  &\leq &  M_1^{9d  M_1^{\tilde{\sigma}} M_0^{\sigma-\tilde{\sigma}}}     e^{-   \gamma_0 (|m-n|-2M_0)^{\tilde{\sigma}} }  4 (2M_0+1)^d e^{ M_0^\sigma}\nonumber \\
  &\leq & M_1^{10d  M_1^{\tilde{\sigma}} M_0^{\sigma-\tilde{\sigma}}}  e^{-   \gamma_0 (|m-n|-2M_0)^{\tilde{\sigma}} }  ,\label{gnov2610}
\end{eqnarray}
where the first inequality holds by $|m-n|\leq 2 M_1$ and the third inequality holds by \eqref{gnov265}.

{\bf Case $2_1$:} $z\in J_s$. For this case, following the proof of  Case $1_2$ (see \eqref{e12134}), one has
\begin{eqnarray}
  |G_{J_{s+1}}(x,z)| &\leq&2(2M_1+1)^{2d}   e^{M_1^\sigma}  T_s   e^{-\gamma_0 (\max\{\Gamma_{s}(z)-4M_0,0\})^{\tilde{\sigma}}}. \label{e12135}
\end{eqnarray}

{\bf Case $2_2$:}  $z\in J_{s+1}\backslash J_s$.

Applying \eqref{Greson}  ($\Lambda_1=J_s$ and $\Lambda_2=J_{s+1}\backslash J_s$), one has
\begin{align}
 |G_{J_{s+1}}&(x,z)|  \nonumber \\
  \leq&  \sum_{n_1\in J_s \atop n_2\in J_{s+1}\backslash J_s} |G_{J_{s+1}\backslash J_s}(n_2,z)| e^{-c_1|n_1-n_2|^{\tilde{\sigma}}} |G_{J_{s+1}}(x,n_1)|\nonumber  \\
  \leq& 2(2M_1+1)^{2d}   e^{M_1^\sigma}  T_s   \nonumber\\
  &\times   \sum_{n_1\in J_s \atop n_2\in J_{s+1}\backslash J_s} |G_{J_{s+1}\backslash J_s}(n_2,z)| e^{-c_1|n_1-n_2|^{\tilde{\sigma}}} e^{-\gamma_0 (\max\{\Gamma_{s}(n_1)-4M_0,0\})^{\tilde{\sigma}}} \nonumber
  \\
  \leq&    2(2M_1+1)^{4d}   e^{M_1^{\sigma}}M_1^{10d  M_1^{\tilde{\sigma}} 
  	M_0^{\sigma-\tilde{\sigma}}}T_s \nonumber\\
   &\times \sup_{n_1\in J_s \atop n_2\in J_{s+1}\backslash J_s}  e^{-  \gamma_0(\max\{|n_2-z| -2M_0,0\})^{\tilde{\sigma}}}e^{-c_1|n_1-n_2|^{\tilde{\sigma}}} e^{-\gamma_0 (\max\{\Gamma_{s}(n_1)-4M_0,0\})^{\tilde{\sigma}}}\nonumber\\
 \leq&
   M_1^{11d  M_1^{\tilde{\sigma}} M_0^{\sigma-\tilde{\sigma}}}T_s\sup_{n_1\in J_s}  e^{-  \gamma_0(\max\{|n_1-z| -2M_0,0\})^{\tilde{\sigma}}} e^{-\gamma_0 (\max\{\Gamma_{s}(n_1)-4M_0,0\})^{\tilde{\sigma}}}\nonumber
  \\
   \leq& M_1^{11d  M_1^{\tilde{\sigma}} M_0^{\sigma-\tilde{\sigma}}} T_se^{-  \gamma_0(\max\{\Gamma_s(z) -10M_0,0\})^{\tilde{\sigma}}} , \label{e12137}
\end{align}
where the second inequality holds by \eqref{e12135}, the third inequality holds by \eqref{gdefm},  the  fourth inequality holds by \eqref{gdec101} and
 the fifth inequality holds by \eqref{gdec101} and \eqref{gggdec131}.

Putting all cases together and by  \eqref{e12132},   \eqref{e12134}, \eqref{e12135} and \eqref{e12137}, one has if \eqref{gind} holds at $s$th step, then
\begin{eqnarray}
   |G_{J_{s+1}}(x,z)| &\leq& M_1^{12d  M_1^{\tilde{\sigma}} M_0^{\sigma-\tilde{\sigma}}} T_se^{-  \gamma_0(\max\{\Gamma_s(z) -10M_0,0\})^{\tilde{\sigma}}}\nonumber \\
   &\leq& M_1^{12d  M_1^{\tilde{\sigma}} M_0^{\sigma-\tilde{\sigma}}} T_se^{-  \gamma_0\Gamma_{s+1}^{\tilde{\sigma}}(z) }. \label{gggdec12121}
\end{eqnarray}


By \eqref{gggdec111} and \eqref{gggdec12121},  we obtain that \eqref{gind} is true for
\begin{equation}\label{Gb0}
  T_0=e^{M_1^{\sigma}},
\end{equation}
and
\begin{equation}\label{Gbs}
T_{s+1}=   M_1^{12d  M_1^{\tilde{\sigma}} M_0^{\sigma-\tilde{\sigma}}}  T_s.
\end{equation}
By \eqref{gind}, \eqref{Gb0} and \eqref{Gbs}, one has
\begin{equation}\label{gindnewnew}
   |G_{J_g}(x,z)|\leq M_1^{13gd  M_1^{\tilde{\sigma}} M_0^{\sigma-\tilde{\sigma}}}e^{-  \gamma_0\Gamma_{g}^{\tilde{\sigma}}(z) }.
 \end{equation}
 By the assumption that $\tilde{\Lambda}_1$ is good, one has
  \begin{equation}\label{gnov2615}
 g\leq 2B_1=2\kappa\frac{M_1}{M_0} ,
 \end{equation}
 and hence (by \eqref{Gdist})
  \begin{equation}\label{gnov2615new}
k(z)\geq \frac{|x-z|}{4M_0}-2B_1-C(d)\geq \frac{|x-z|}{4M_0}-2\kappa \frac{M_1}{M_0}-C(d).
 \end{equation}
By \eqref{gnov2615new} and the definition of $\Gamma_s$, we have  for $|x-z|\geq \frac{M_1}{10}$,
\begin{eqnarray}
  \Gamma_{g}^{\tilde{\sigma}}(z) &\geq& (|x-z|-8\kappa M_1-10(g+1) M_0)^{\tilde{\sigma}}\nonumber  \\
   &\geq&   (|x-z|-30\kappa M_1)^{\tilde{\sigma}} \nonumber \\
    &\geq& |x-z| ^{\tilde{\sigma}} (1-160\kappa )^{\tilde{\sigma}}\nonumber \\
      &\geq& |x-z| ^{\tilde{\sigma}} (1-200{\tilde{\sigma}}\kappa),\label{gnov2615new,}
\end{eqnarray}
where $\kappa$ will be chosen to be sufficiently small.

 By \eqref{gindnewnew}, \eqref{gnov2615} and \eqref{gnov2615new,}, we have for $|x-z|\geq \frac{M_1}{10}$,
 \begin{eqnarray}
    |G_{\tilde{\Lambda}_1}(x,z)| &= & |G_{J_g}(x,z)| \nonumber\\
    &\leq &M_1^{13gd  M_1^{\tilde{\sigma}} M_0^{\sigma-\tilde{\sigma}}}e^{-  \gamma_0\Gamma_{g}^{\tilde{\sigma}}(z) }\nonumber \\
    &\leq &   e^{-\gamma_0(1-200\kappa\tilde{\sigma}- 300d\kappa\rho\gamma_0^{-1}\frac{\log M_0}{M_0^{1-\rho+\tilde{\sigma}-\sigma}})|x-z|^{\tilde{\sigma}} }.\label{gnov2618}
 \end{eqnarray}
 {\bf Inductions:}
 Define
  \begin{equation}\label{gdec91}
   \gamma_m=\prod_{i=0}^{m-1}\gamma_0(1-C(d)\kappa\tilde{\sigma}- C(d)\kappa \rho\gamma_i^{-1}\frac{\log M_i}{M_i^{1-\rho+\tilde{\sigma}-\sigma}}).
 \end{equation}
 We remind that  $1-\rho+\tilde{\sigma}-\sigma>0$.
 Fix an elementary region $\tilde{\tilde{\Lambda}}_1\in \mathcal{E}_{M_{m}}$ and $\tilde{\tilde{\Lambda}}_1 \subset \tilde{\Lambda}_0$. For any $x\in \tilde{\tilde{\Lambda}}_1$, consider the exhaustion $\{S_j(x)\}_{j=0}^l$ of $\tilde{\tilde{\Lambda}}_1$  at $x$  with width $M_{m-1}$. We say  the annulus  $A_j(x)$ is good, if for any $y\in A_j(x)$, there exists $W(y)\in \mathcal{E}_{M_{m-1}}$ such that
\begin{equation*}
  y\in W(y)\subset A_j(x), \text { dist }(y, A_j(x)\backslash W(y))\geq M_{m-1}/2,
\end{equation*}
and  for $|n-n^\prime|\geq \frac{M_{m-1}}{10}$,
\begin{equation*}
  |(R_{W(y)}A  R_{W(y)})^{-1}(n,n^\prime)|\leq  e^{-\gamma_m|n-n^\prime|^{\tilde{\sigma}}}.
\end{equation*}
Otherwise, we call the annulus bad.
An elementary region $\tilde{\Lambda}_1\subset \tilde{\Lambda}_0$ is called  bad provided for some $x\in \tilde{\Lambda}_1$ the number of bad annuli $\{A_j(x)\}$
exceeds
\begin{equation*}
  B_m:=\kappa \frac{M_{m}}{M_{m-1}}.
\end{equation*}
Otherwise, we call   $\tilde{\tilde{\Lambda}}_1$ good.
Let $\mathcal{F}_m$ be an arbitrary family of pairwise disjoint bad  elementary regions in $\mathcal{E}_{M_m}$ contained in $\tilde{\Lambda}_0$.
By induction, it is easy to see that
\begin{equation*}
 \# \mathcal{F}_m\leq \frac{1}{\kappa ^m}\frac{N^{\varsigma}}{ M_m}.
\end{equation*}
Replace $M_0$, $M_1$, $\gamma_0$, $B_1$ with $M_{m-1}$, $M_m$, $\gamma_{m-1}$, $B_{m}$.
By induction and following the   proof of \eqref{gnov2618}, we have for good elementary regions $\tilde{\tilde{\Lambda}}_1\subset \tilde{\Lambda}_0$ and $\tilde{\tilde{\Lambda}}_1\in \mathcal{E}_{M_m}$,
we have for $|x-z|\geq \frac{M_m}{10}$,
 \begin{eqnarray}
    |G_{\tilde{\Lambda}_1}(x,z)|
    &\leq &   e^{-\gamma_m|x-z|^{\tilde{\sigma}}}.\label{gnov2619}
 \end{eqnarray}
 In order to reach $ \tilde{\Lambda}_0$ after $k$ step, we need
 \begin{equation*}
   M_k= M^{\rho^k}=N,
 \end{equation*}
 hence
 \begin{equation}\label{gnov2621}
   \rho^k\approx \frac{1}{\xi}.
 \end{equation}
 We may modify $\rho$ a little bit at the last  few steps    to ensure that  $k$   is a positive integer.
 However, this issue is rather small.
 Choose $\kappa=N^{-\delta}$  and
 \begin{equation*}
   \delta=-\frac{1}{2}(1-\varsigma)\frac{\log \rho}{\log \xi^{-1}}.
 \end{equation*}
 Direct computations show that
 \begin{equation}\label{gnov2620}
    \# \mathcal{F}_k\leq \frac{1}{\kappa ^k}\frac{N^{\varsigma}}{ M_k}<1.
 \end{equation}
 \eqref{gnov2620} implies that $M_k=N$ is good.
 Therefore, \eqref{gthm1} holds for
 \begin{equation*}
   c_3=\gamma_k,
 \end{equation*}
 where $k$ solves \eqref{gnov2621}.
 Computations show that
 \begin{equation*}
   c_3=c_2-N^{-\vartheta},
 \end{equation*}
 where $\vartheta=\vartheta(\sigma,\tilde{\sigma},\xi,\varsigma)>0$.

\end{proof}

\section{ Proof of Theorem \ref{thmmu2}}
 The proof of  Theorem \ref{thmmu2} is based on matrix-valued Cartan-type  estimates   \cite{bgs,bbook,gs08,jls}.   
 For our purpose,  a new version of Cartan's estimate, which works for non-self-adjoint matrices,  is necessary.
 For convenience,   we include a proof in the Appendix.
\begin{lemma}\label{mcl}
Let $T(x)$ be a   $N\times N$ matrix function of a parameter $x\in[-\delta,\delta]^{J}$ ($J\in\N$) satisfying the following conditions:
\begin{itemize}
\item[(i)] $T(x)$ is real analytic in $x\in [-\delta,\delta]^{J}$ and has a holomorphic extension to
\begin{equation*}
\mathcal{D}_{\delta,\delta_1}=\left\{x=(x_i)_{1\leq i\leq J}\in\mathbb{C}^{J}: \sup_{1\leq i\leq J}|\Re x_i|\leq\delta,\sup_{1\leq i\leq J}|\Im{x_i}|\leq \delta_1\right\}
\end{equation*}
satisfying
\begin{equation}\label{mc1}
\sup_{x\in \mathcal{D}_{\delta,\delta_1}}\|T(x)\|\leq B_1, B_1\geq 1.
\end{equation}
\item[(ii)]  For all $x\in[-\delta,\delta]^{J}$, there is subset $V\subset [1,N]$ with
\begin{equation*}|V|\leq M,\end{equation*}
and
\begin{equation}\label{mc2}
\|(R_{[1,N]\setminus V}T(x)R_{[1,N]\setminus V})^{-1}\|\leq B_2, B_2\geq 1.
\end{equation}
\item[(iii)]
\begin{equation}\label{mc3}
\mathrm{mes}\{x\in[-{\delta}, {\delta}]^{J}: \ \|T^{-1}(x)\|\geq B_3\}\leq 10^{-3J}J^{-J}\delta_1^J(1+B_1)^{-J}(1+B_2)^{-J}.
\end{equation}
Let
\begin{equation}\label{mc4}
0<\epsilon\leq (1+B_1+B_2)^{-10 M}.
\end{equation}
\end{itemize}
Then
\begin{equation}\label{mc5}
\mathrm{mes}\left\{x\in\left[-{\delta}/{2}, {\delta}/{2}\right]^{J}:\  \|T^{-1}(x)\|\geq \epsilon^{-1}\right\}\leq C\delta^Je^{-c\left(\frac{\log \epsilon^{-1}}{M\log(B_1+B_2+B_3)}\right)^{1/J}},
\end{equation}
where $C=C(J), c=c(J)>0$.
\end{lemma}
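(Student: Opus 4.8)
The plan is to reduce the $N\times N$ statement to a scalar Cartan estimate for the determinant of an $M\times M$ Schur complement, applied on each member of a bounded family of complex polydisks covering the real cube. First I would \textbf{localize}: set $r=c_{0}\delta_1\big(J(1+B_1)(1+B_2)\big)^{-1}$ with $c_{0}=c_{0}(J)$ small, and cover $[-\delta/2,\delta/2]^{J}$ by $\lesssim_{J}(\delta/r)^{J}$ complex polydisks $U_{x_{0}}=\{|x_{i}-x_{0,i}|\le r\}$ centered at real points $x_{0}\in[-\delta,\delta]^{J}$. The radius $r$ is chosen small enough that $\mathrm{mes}\big(U_{x_{0}}\cap\R^{J}\big)$ strictly exceeds the right-hand side of \eqref{mc3}; hence each $U_{x_{0}}$ contains a real point $x^{\ast}$ with $\|T^{-1}(x^{\ast})\|<B_3$. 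On such a polydisk, use hypothesis (ii) to fix a set $V=V(x_{0})\subset[1,N]$ with $|V|\le M$ and $\|(R_{[1,N]\setminus V}T(x_{0})R_{[1,N]\setminus V})^{-1}\|\le B_2$; by analyticity together with \eqref{mc1}, after shrinking $c_{0}$, the block $R_{[1,N]\setminus V}T(x)R_{[1,N]\setminus V}$ stays invertible throughout $U_{x_{0}}$ with inverse of norm $\le 2B_2$. The key point is that $V$ is now \emph{frozen} on $U_{x_{0}}$, so the Schur complement
\[
 s(x)=R_{V}\Big(T(x)-T(x)R_{V^{c}}\big(R_{V^{c}}T(x)R_{V^{c}}\big)^{-1}R_{V^{c}}T(x)\Big)R_{V},\qquad V^{c}=[1,N]\setminus V,
\]
is a genuine holomorphic $M\times M$ matrix function on $U_{x_{0}}$, with $\|s(x)\|\le C(1+B_1)(1+B_2)$.

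Next I would \textbf{translate between $T$, $s$ and $\det s$}. The block inversion identity gives, for general (not necessarily self-adjoint) matrices, the comparison $\|T(x)^{-1}\|\le C(1+B_1)^{2}(1+B_2)^{2}\big(1+\|s(x)^{-1}\|\big)$ and $\|s(x)^{-1}\|\le C(1+B_1)(1+B_2)\|T(x)^{-1}\|$ on $U_{x_{0}}$. Thus $\|T^{-1}(x)\|\ge\epsilon^{-1}$ forces $\|s(x)^{-1}\|\ge c\,\epsilon^{-1}(1+B_1)^{-2}(1+B_2)^{-2}$, while at the good point $\|s(x^{\ast})^{-1}\|\le C(1+B_1)^{2}(1+B_2)^{2}B_3$. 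Combining this with the elementary singular-value inequalities $\|\Sigma^{-1}\|^{-M}\le|\det\Sigma|\le\|\Sigma\|^{M-1}\|\Sigma^{-1}\|^{-1}$, valid for every invertible $M\times M$ matrix $\Sigma$, we obtain: on the bad set $\{x\in U_{x_{0}}\cap\R^{J}:\|T^{-1}(x)\|\ge\epsilon^{-1}\}$ one has $|\det s(x)|\le\eta:=\big(C(1+B_1)(1+B_2)\big)^{M}\epsilon$; at $x^{\ast}$ one has $|\det s(x^{\ast})|\ge\beta:=\big(C(1+B_1)^{2}(1+B_2)^{2}B_3\big)^{-M}$; and $\sup_{U_{x_{0}}}|\det s|\le A:=\big(C(1+B_1)(1+B_2)\big)^{M}$.

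Then I would \textbf{apply the scalar Cartan estimate} --- Cartan's theorem on small values of an analytic function, in its $J$-variable form --- to the holomorphic function $\det s$ on $U_{x_{0}}$: since $\det s$ is bounded by $A$, is $\ge\beta$ at an interior point, and we seek the set where it is $\le\eta$, this yields
\[
 \mathrm{mes}\{x\in U_{x_{0}}\cap\R^{J}:|\det s(x)|\le\eta\}\le C(J)\,r^{J}\exp\!\left(-c(J)\Big(\tfrac{\log(\beta/\eta)}{\log(A/\beta)}\Big)^{1/J}\right).
\]
Here $\log(\beta/\eta)=\log\epsilon^{-1}-O\big(M\log(1+B_1+B_2+B_3)\big)$, and the smallness hypothesis \eqref{mc4} on $\epsilon$ is exactly what guarantees $\log(\beta/\eta)\ge\tfrac12\log\epsilon^{-1}$, while $\log(A/\beta)\le CM\log(B_1+B_2+B_3)$. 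Summing over the $\lesssim_{J}(\delta/r)^{J}$ polydisks in the cover, and absorbing the resulting prefactor --- polynomial in $J$, $\delta/\delta_1$, $B_1$, $B_2$ --- into the constant $C$ at the price of a slightly smaller $c(J)$ (using once more that $\epsilon$ is small, so that $\log\epsilon^{-1}$ dominates any such polynomial after taking the $1/J$ power), gives \eqref{mc5}.

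The main obstacle is less any individual estimate than the bookkeeping needed to combine the $J$-dimensional scalar Cartan lemma with explicit tracking of constants \emph{and} to keep the Schur reduction uniform. The subset $V$ of exceptional indices depends on $x$; the argument works only because on each polydisk of the cover one may freeze $V$ while $R_{V^{c}}TR_{V^{c}}$ stays boundedly invertible, and this is precisely what the (seemingly wasteful) strength of the measure hypothesis \eqref{mc3} is designed to provide. Notably, the non-self-adjointness of $T$ is never used once one commits to working with $\det s$ and singular-value inequalities in place of the spectrum of $T$; the only genuinely analytic input, the scalar Cartan estimate, concerns scalar holomorphic functions and is insensitive to self-adjointness.
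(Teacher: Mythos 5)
Your proposal follows essentially the same route as the paper's proof in Appendix A: cover $[-\delta/2,\delta/2]^J$ by small complex polydisks whose radius is tuned against \eqref{mc3} so that both the index set $V$ can be frozen (Neumann series) and a good reference point $x^\ast$ exists; pass to the Schur complement and the function $\log|\det S|$; and apply the several-variable scalar Cartan estimate (the paper's Lemma~\ref{svcl}) before summing over the cover. The only cosmetic differences are that the paper bounds $\|S^{-1}\|$ via Cramer's rule and gets the lower bound on $|\det S|$ from the spectrum of $S^{-1}$ rather than via your singular-value inequalities, and it absorbs a factor of $2$ in the exponent ($\epsilon^{-2}$ versus $\epsilon^{-1}$) at the very end; neither affects the argument.
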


\begin{proof}[\bf Proof of Theorem \ref{thmmu2}]

Without loss of generality, we  assume   $i=1$. Fix $x_1\in \T^{b_1}$ and $x_1^\neg\in\T^{b-b_1}$.
Recall that $x=(x_1,x_1^\neg)\in \T^b$.

Let $\Lambda=\mathcal{R}\subset [-N_3,N_3]^d$. By making $\mathcal{B}_{\mathcal{R}}(x)$ slightly larger, we have
there exists $\bar{\Lambda}\subset \Lambda$ such that   for any $j\in \Lambda\backslash \bar{\Lambda}$, there  exists $W(j)\in \mathcal{E}_{N_1}$ such that
$W(j)\subset \Lambda\backslash \bar{\Lambda}$, $$ {\rm dist}(j,\Lambda\backslash \bar{\Lambda}\backslash W(j))\geq N_1/2$$ and
\begin{eqnarray}
&&\|G_{W(j)} \|\leq e^{{N_1}^{\sigma}},\label{Apr20}\\
&&|G_{W(j)}(n,n')|\leq e^{-c_2|n-n'|^{\tilde{\sigma}}}\ {\rm for}\ |n-n'|\geq\frac{N_1}{10},\label{Apr21}
\end{eqnarray}
and
\begin{equation}\label{gdec41}
  |\bar{\Lambda}|\leq C(d) L^{1-\delta}N_1^{2d}.
\end{equation}
Indeed, $\bar{\Lambda}$ can be chosen so that
\begin{equation}\label{ggequ2}
  \bar{\Lambda}\subset\{n\in\Z^d: \text{ dist }(n,B_{\mathcal{R}}(x))\leq C(d)N_1.
\end{equation}
Let $\eta= \frac{c_1}{\gamma}$.
 Let $\mathcal{D}$ be the $e^{-\eta N_1}$ neighborhood  of $x_1$ in the complex plane, i,e.,
\begin{equation*}
 \mathcal{D}=\{z\in \mathbb{C}^{b_1}:\  |\Im z|\leq e^{-\eta N_1}, |\Re z-x_1|\leq e^{-\eta N_1}\}.
\end{equation*}

By  the assumption  that $N_3\leq e^{N_1^{\frac{1}{2K_1}}}$, one has for any $y\in \mathcal{D}$,
\begin{equation*}
||x-y||\leq e^{- e^{(\log (2N_3+2))^{K_1}}}
\end{equation*}
and hence (by \eqref{glc1})
\begin{eqnarray}
  | A(x;n,n^\prime)-A(y;n,n^\prime)| &\leq& K||x-y||^{\gamma}\nonumber \\
    &\leq & K e^{-c_1 N_1},\label{gdec43}
\end{eqnarray}
for $n,n^\prime\in[-N_3,N_3]^d$ and large $N_1$.
 By \eqref{Apr20}, \eqref{Apr21}, \eqref{gdec43}, and  standard perturbation arguments, we have  for any $y\in \mathcal{D}$,
and  $j\in\Lambda\backslash\bar\Lambda$,
\begin{eqnarray}
&&\|G_{W(j)}(x_1+y,x_1^\neg)\|\leq 2e^{{N_1}^{\sigma}},\label{Apr22}\\
&&|G_{W(j)}(x_1+y,x_1^\neg;n,n')|\leq 2e^{-c_2|n-n'|^{\tilde{\sigma}}}\ {\rm for}\ |n-n'|\geq\frac{N_1}{10}.\label{Apr23}
\end{eqnarray}
 Substituting   $\Lambda$ with $\Lambda\backslash\bar{\Lambda}$ in Lemma \ref{res1}, one has for any $y\in \mathcal{D}$,
\begin{eqnarray}
\label{bs3}\|G_{\Lambda\setminus \bar\Lambda}(x_1+y,x_1^\neg)\|\leq e^{2{N_1}^{\sigma}}.
\end{eqnarray}
We want to use  Lemma \ref{mcl}.
For this purpose, let  $$T(y)=R_{\Lambda}AR_{\Lambda},J=b_1,\delta=\delta_1=e^{-\eta  N_1}.$$
Now we are in the position to check the assumptions of Lemma \ref{mcl}.
By \eqref{gdec43} and \eqref{GOnew}, one has
$B_1=O(1)$.

Let $V=\bar{\Lambda}$.
By \eqref{gdec41} and \eqref{bs3},  one has
\begin{eqnarray}\label{mb}
 M=|\bar\Lambda|\leq C(d) L^{1-\delta}N_1^{2d},
B_2=e^{2{N_1}^{\sigma}}.
\end{eqnarray}
Applying Lemma \ref{res1} with $M_0=M_1=N_2$ and \eqref{Gstarnov}, one has
\begin{eqnarray*}
 \|T^{-1}(y)\|\leq 4(2N_2+1)^de^{{N_2}^{\sigma}}\leq e^{2{N_2}^\sigma}=:B_3,
\end{eqnarray*}
except  on a   set of $y\in \T^{b_1}$
with measure less than $e^{-N_2^{\zeta}}$.

Since $N_2\geq N_1^{\frac{2}{\zeta}}$,
direct computation shows that
\begin{equation*}
10^{-3b_1}b_1^{-b_1}\delta^{b_1}(1+B_1)^{-b_1}(1+B_2)^{-b_1}\geq e^{-N_2^{\zeta}}.
\end{equation*}
This verifies (iii) in Lemma \ref{mcl}.

Let $\epsilon=e^{-L^{\mu}}$. By (\ref{mb}) and the assumption that $L\geq N^{\frac{2d+b+2}{\mu-1+\delta}}_2$, one has
  $$\epsilon<(1+B_1+B_2)^{-10M}.$$
  Let
\begin{equation*}
  Y=\{y\in\mathcal{D}: \|T^{-1}(y)\|\geq e^{L^{\mu}} \}.
\end{equation*}
By (\ref{mc5}) of Lemma \ref{mcl},
\begin{equation}\label{y0}
\mathrm{mes}( Y)\leq C e^{-c\left(\frac{L^{\mu-1+\delta}}{N_2^{\sigma}N_1^{2d+\sigma}}\right)^{1/b_1}}.
\end{equation}
By covering $\T^{b_1}$ with  balls with radius $e^{-\eta N_1}$, we have
\begin{eqnarray*}
  \mathrm{Leb}( \tilde{X}_{\mathcal{R}}(x_1^\neg)) &\leq& e^{CN_1}  e^{-c\left(\frac{L^{\mu-1+\delta}}{N_2^{\sigma}N_1^{2d+\sigma}}\right)^{1/b_1}}\\
   &\leq &  e^{-\left(\frac{{L}^{\mu-1+\delta}}{N_2^{2d+b+2}}\right)^{1/b_1}},
\end{eqnarray*}
where the second inequality holds by the assumption $L\geq N_2^{\frac{2d+b+2}{\mu-1+\delta}}$. It implies \eqref{Gstar}.

\end{proof}
\section{Proof of Theorems \ref{thmu} and \ref{thmunew}}
\begin{theorem}\label{res2}
	Let $\sigma,\tilde{\sigma},\kappa,s\in(0,1)$ and $\tilde{\sigma}>\kappa$. 
	Assume  ${\rm diam}(\Lambda)\leq 2N+1$.  Let $M_0=(\log N)^{1/s}$.
	Assume
	\begin{equation}\label{gdec93}
	c_2\in (0, (1-5^{-\tilde{\sigma}})c_1].
	\end{equation}
	Suppose that for any $n\in \Lambda $, there exists some  $ W=W(n)\in \mathcal{E}_M$ with $M_0\leq M\leq N^{\kappa}$
	such that
	$n\in W$,  ${\rm dist} (n,\Lambda\backslash W)\geq \frac{M}{2}$, $W\subset \Lambda$  and
	\begin{eqnarray*}
		&& \|G_{W}\|\leq 2e^{{M}^{\sigma}},\\
		&& |G_{W}(n,n')|\leq 2 e^{- c_2|n-n'|^{\tilde{\sigma}}}\  {\mathrm{for} \ |n-n'|\geq \frac{M}{10}}.
	\end{eqnarray*}
	Then
	\begin{equation}\label{gdec19}
	||G_{\Lambda}|| \leq 4(1+2N^{\kappa })^d e^{ N^{\kappa \sigma}},
	\end{equation}
	and
	\begin{eqnarray*}
		|G_{\Lambda}(n,n^\prime)|\leq e^{-\bar{c}|n-n'|^{\tilde{\sigma}}} \text{ for }\  |n-n'|\geq N/10 ,
	\end{eqnarray*}
	where
	\begin{equation}\label{gdec410}
	\bar{c}=c_2-\frac{O(1)}{M_0^{\tilde{\sigma}-s}}-\frac{O(1)}{M_0^{\tilde{\sigma}-\sigma}}-\frac{O(1)}{N^{\tilde{\sigma}-\kappa}}.
	\end{equation}
\end{theorem}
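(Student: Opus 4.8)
\emph{Proof strategy.} The plan is to derive the norm bound \eqref{gdec19} directly from Lemma~\ref{res1}, and then to obtain the off‑diagonal decay by iterating the resolvent identity \eqref{Greso} along a path, following Case~2 of the proof of Theorem~\ref{thmmul} (cf.\ \eqref{rsi8nov}--\eqref{gnov2610}) but keeping every constant explicit. For the first part, I apply Lemma~\ref{res1} with lower scale $M_0=(\log N)^{1/s}$ and upper scale $M_1=N^\kappa$: the block hypotheses there are exactly those assumed here, $M_0\le M_1\le N$ for $N$ large, and \eqref{ml} holds for $N$ large because $\sigma<\tilde\sigma$ and $10^{-\tilde\sigma}<2^{-\tilde\sigma}$, so its left side is $\le e^{M^\sigma}M^{O(d)}e^{-c_2(2^{-\tilde\sigma}-10^{-\tilde\sigma})M^{\tilde\sigma}}\to 0$ as $M\to\infty$. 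This gives $\|G_\Lambda\|\le 4(2N^\kappa+1)^de^{N^{\kappa\sigma}}$, i.e.\ \eqref{gdec19}.

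Next, fix $n,n'\in\Lambda$ with $|n-n'|\ge N/10$; for $N$ large $|n-n'|>2N^\kappa$, so at every step the target $n'$ lies outside the block used. Put $m_0=n$. At step $i$, apply \eqref{Greso} with $\Lambda_1=W(m_i)$ (of size $M(m_i)\in[M_0,N^\kappa]$) and $\Lambda_2=\Lambda\setminus W(m_i)$; the diagonal term vanishes, and splitting $n_1\in W(m_i)$ according to whether $|n_1-m_i|<M(m_i)/10$ or not — using \eqref{gdec94} (valid since $c_2\le(1-5^{-\tilde\sigma})c_1$) with $\|G_{W(m_i)}\|\le2e^{M(m_i)^\sigma}$ on the near part, and the assumed decay of $G_{W(m_i)}$ together with $c_2\le c_1$ and \eqref{gdec101} on the far part, together with $\mathrm{dist}(m_i,\Lambda\setminus W(m_i))\ge M(m_i)/2$ — yields the one‑step bound
\[
|G_\Lambda(m_i,n')|\le 4(2N+1)^{2d}e^{M(m_i)^\sigma}\!\!\sup_{\substack{n_2\in\Lambda\\ |m_i-n_2|\ge M(m_i)/2}}\!\!e^{-c_2|m_i-n_2|^{\tilde\sigma}}|G_\Lambda(n_2,n')|.
\]
Let $m_{i+1}$ realize the supremum and iterate, stopping when $|m_i-n'|\le 2N^\kappa$ (then bounding $|G_\Lambda(m_i,n')|\le\|G_\Lambda\|$) or after $T:=\lfloor 2^{\tilde\sigma}|n-n'|^{\tilde\sigma}M_0^{-\tilde\sigma}\rfloor+1$ steps. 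Since each jump has $|m_i-m_{i+1}|\ge M(m_i)/2\ge M_0/2$, running the full $T$ steps forces $\sum_{i<T}|m_i-m_{i+1}|^{\tilde\sigma}\ge T(M_0/2)^{\tilde\sigma}\ge|n-n'|^{\tilde\sigma}$, while stopping early at $t_0$ gives $\sum_{i<t_0}|m_i-m_{i+1}|^{\tilde\sigma}\ge|n-m_{t_0}|^{\tilde\sigma}\ge(|n-n'|-2N^\kappa)^{\tilde\sigma}$; in either case the accumulated exponent is $\ge(|n-n'|-2N^\kappa)^{\tilde\sigma}$.

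Multiplying the one‑step bounds and using $\|G_\Lambda\|\le 4(2N^\kappa+1)^de^{N^{\kappa\sigma}}$ gives $\log|G_\Lambda(n,n')|\le -c_2\sum_i|m_i-m_{i+1}|^{\tilde\sigma}+\sum_iM(m_i)^\sigma+O(Td\log N)+O(N^{\kappa\sigma})$. Inserting $M(m_i)^\sigma\le M_0^{\sigma-\tilde\sigma}M(m_i)^{\tilde\sigma}\le 2^{\tilde\sigma}M_0^{\sigma-\tilde\sigma}|m_i-m_{i+1}|^{\tilde\sigma}$, $\log N=M_0^s$, $T\le 2^{\tilde\sigma}|n-n'|^{\tilde\sigma}M_0^{-\tilde\sigma}+1$, $|n-n'|\ge N/10$, and $(|n-n'|-2N^\kappa)^{\tilde\sigma}\ge|n-n'|^{\tilde\sigma}-O(N^{\kappa+\tilde\sigma-1})$ (concavity of $t\mapsto t^{\tilde\sigma}$), one arrives at $\log|G_\Lambda(n,n')|\le-\bar c|n-n'|^{\tilde\sigma}$ with $\bar c=c_2-\frac{O(1)}{M_0^{\tilde\sigma-\sigma}}-\frac{O(1)}{M_0^{\tilde\sigma-s}}-O(1)\bigl(N^{\kappa\sigma-\tilde\sigma}+N^{\kappa-1}+N^{-\tilde\sigma}\log N\bigr)$; since $\tilde\sigma>\kappa$ and $\sigma,\kappa<1$, the last bracket is $\le\frac{O(1)}{N^{\tilde\sigma-\kappa}}$, which is exactly \eqref{gdec410}.

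The main obstacle is this constant bookkeeping: one must turn each per‑step loss — the factor $e^{M(m_i)^\sigma}$ from the block resolvent norm and $(2N+1)^{O(d)}$ from the lattice sums — into a correction of $c_2$ that vanishes as $N\to\infty$. This succeeds precisely because the variable block size is bounded below by $M_0$, forcing every jump to have length $\ge M_0/2$: the number of iterations is then $O(|n-n'|^{\tilde\sigma}/M_0^{\tilde\sigma})$, so the polynomial loss $e^{O(dM_0^s)}$ per step can be charged against the gain $e^{-c_2|m_i-m_{i+1}|^{\tilde\sigma}}$ with $|m_i-m_{i+1}|^{\tilde\sigma}\ge(M_0/2)^{\tilde\sigma}$, producing the $O(1)/M_0^{\tilde\sigma-s}$ term, and the block‑norm loss is absorbed the same way using $\sigma<\tilde\sigma$. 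The whole argument hinges on taking $M_0=(\log N)^{1/s}$ large while keeping $\tilde\sigma-s$ and $\tilde\sigma-\sigma$ positive; the delicate point to watch is that the iteration must stop before the target enters a current block, which is why one cuts off once $|m_i-n'|\le 2N^\kappa$ and closes with the global norm bound rather than paying an undamped $e^{M(m_i)^\sigma}$.
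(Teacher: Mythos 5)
Your proposal is correct and follows essentially the same route as the paper's own argument: Lemma~\ref{res1} at the upper scale $M_1=N^\kappa$ gives \eqref{gdec19}, and the off-diagonal decay comes from iterating \eqref{Greso} with the near/far split in $n_1$ (using \eqref{gdec94} on the near part and the block decay plus subadditivity on the far part), stopping once $n'$ could enter a block and closing with the norm bound, then absorbing the accumulated $e^{M(m_i)^\sigma}$ and $(2N+1)^{O(d)}$ losses step by step into the $c_2$ coefficient. The only superficial difference is organizational: the paper folds the $e^{M^\sigma}$ loss into a shifted $c_2-O(1)/M_0^{\tilde\sigma-\sigma}$ at each iteration of \eqref{rsi8}, while you carry the losses in log space and settle accounts at the end; your cutoff $2N^\kappa$ versus the paper's $4N^\kappa$ is likewise immaterial.
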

\begin{proof}
	\eqref{gdec19} follows from Lemma \ref{res1} immediately.

	Assume $|n-n^\prime|\geq \frac{N}{10}$.
	Applying \eqref{Greso} with $\Lambda_1=W=W(n)$, one has $ n^\prime\notin W(n)$ and
	\begin{equation*}
	|G_{\Lambda}(n,n')|\leq  \sum_{n_1\in W\atop n_2\in \Lambda\backslash W} e^{-c_1|n_1-n_2|^{\tilde{\sigma}}}|G_{W}(n,n_1)||G_{\Lambda}(n_2,n')|.
	\end{equation*}
	It implies
	\begin{align}
	\nonumber|G_{\Lambda}(n,n')|
	\leq & \sum_{n_1\in W,|n_1-n|\leq \frac{M}{10}-1\atop n_2\in \Lambda\backslash W} e^{-c_1|n_1-n_2|^{\tilde{\sigma}}}|G_{W}(n,n_1)||G_{\Lambda}(n_2,n')|\\
	\nonumber&+ \sum_{n_1\in W,|n_1-n|\geq \frac{M}{10}\atop n_2\in \Lambda\backslash W} e^{-c_1|n_1-n_2|^{\tilde{\sigma}}}|G_{W}(n,n_1)||G_{\Lambda}(n_2,n')|\\
	\leq &\nonumber  \sum_{n_1\in W,|n_1-n|\leq \frac{M}{10}-1\atop n_2\in \Lambda\backslash W} e^{{M}^\sigma}e^{-c_1|n_1-n_2|^{\tilde{\sigma}}}|G_{\Lambda}(n_2,n')|\\
	\nonumber&+ \sum_{n_1\in W,|n_1-n|\geq \frac{M}{10}\atop n_2\in \Lambda\backslash W} e^{-c_1|n_1-n_2|^{\tilde{\sigma}}} e^{-c_2 |n-n_1|^{\tilde{\sigma}}}|G_{\Lambda}(n_2,n')|\\
	\leq &\nonumber \sum_{n_1\in W,|n_1-n|\leq \frac{M}{10}-1\atop n_2\in \Lambda\backslash W} e^{{M}^\sigma}e^{-c_1|n_1-n_2|^{\tilde{\sigma}}}|G_{\Lambda}(n_2,n')|\\
	&+ \sum_{n_1\in W,|n_1-n|\geq \frac{M}{10}\atop n_2\in \Lambda\backslash W} e^{-c_2|n-n_2|^{\tilde{\sigma}}} |G_{\Lambda}(n_2,n')|\nonumber\\
	\leq &\nonumber e^{M^{\sigma}}\sum_{n_1\in W,|n_1-n|\leq \frac{M}{10}-1\atop n_2\in \Lambda\backslash W} e^{-c_2|n-n_2|^{\tilde{\sigma}}}|G_{\Lambda}(n_2,n')|\\
	&+ \sum_{n_1\in W,|n_1-n|\geq \frac{M}{10}\atop n_2\in \Lambda\backslash W} e^{-c_2|n-n_2|^{\tilde{\sigma}}} |G_{\Lambda}(n_2,n')|\nonumber\\
	\leq &\nonumber e^{M^{\sigma}} (2N+1)^{2d}\sup_{n_2\in \Lambda\backslash W}  e^{-c_2|n-n_2|^{\tilde{\sigma}} }|G_{\Lambda}(n_2,n')|\\
	\leq & (2N+1)^{2d}\sup_{n_2\in \Lambda\backslash W}  e^{-(c_2-\frac{O(1)}{M_0^{\tilde{\sigma}-\sigma}})|n-n_2|^{\tilde{\sigma}} }|G_{\Lambda}(n_2,n')|,\label{rsi8}
	\end{align}
	where the  third inequality holds by \eqref{gdec101} and the fourth inequality holds by \eqref{gdec94}.
	
	Iterating \eqref{rsi8} until $|n_2-n'|\leq 4N^{\kappa}$ (but at most $\lfloor\frac{2^{\tilde{\sigma}}|n-n'|^{\tilde{\sigma}}}{M_0^{\tilde{\sigma}}}\rfloor$ times) and applying \eqref{gdec19}, we have for $|n-n'|\geq\frac{N}{10}$,
	\begin{eqnarray*}
		|G_{\Lambda}(n,n')| &\leq & 
		(2N+1)^{ \frac{2^{\tilde{\sigma}}|n-n'|^{\tilde{\sigma}}}{M_0^{\tilde{\sigma}}}} e^{-(c_2-\frac{O(1)}{M_0^{\tilde{\sigma}-\sigma}})(|n-n'|-4N^{\kappa})^{\tilde{\sigma}} } 4(1+2N^{\kappa })^d e^{N^{\kappa\sigma}}\\
		&\leq&e^{ \frac{4|n-n'|^{\tilde{\sigma}}}{M_0^{\tilde{\sigma}}}\log N} e^{-(c_2-\frac{O(1)}{M_0^{\tilde{\sigma}-\sigma}}) (|n-n'|-4N^{\kappa})^{\tilde{\sigma}} } 4(1+2N^{\kappa })^d e^{N^{\kappa\sigma}}\\
		&\leq&  e^{ \frac{4|n-n'|^{\tilde{\sigma}}}{M_0^{\tilde{\sigma}}}M_0^s} e^{-(c_2-\frac{O(1)}{M_0^{\tilde{\sigma}-\sigma}}) (|n-n'|-4N^{\kappa})^{\tilde{\sigma}} } 4(1+2N^{\kappa })^d e^{N^{\kappa\sigma}}\\
		&\leq&   e^{-\bar{c}|n-n'|}  .
	\end{eqnarray*}

\end{proof}
It is easy to see that
the number of  generalized elementary regions in $[-N,N]^d$ with width larger or equal to $N^{\xi}$ is bounded by $N^{C(d)}$, more precisely for any $\xi>0$,
\begin{equation}\label{gbr}
 \# \{\Lambda\subset [-N,N]^d: \Lambda  \subset \mathcal{R}_{L}^{N^{\xi}}\}\leq N^{C(d)}.
\end{equation}

\begin{proof}[\bf Proof of Theorem \ref{thmu}]

Since the Green's function   satisfies properties $P$ with parameters  $(\mu,\zeta,c_2)$ at size $N_2$, we have there exists
$\tilde{X}_{N_2}\subset \mathbb{T}^b$  with
 \begin{equation}\label{gdec30}
 \sup_{1\leq i\leq k,x_i^\neg\in \T^{b-b_i}}\mathrm{Leb}(\tilde{X}_{N_2}(x_i^\neg))\leq N_3^{C(d)}e^{-N_2^{\zeta}},
 \end{equation}
 such that
 \begin{equation*}
  ||G_{m+ Q_{N_2}}(x)||\leq e^{N_2^{\mu}},
\end{equation*}
and for $|n-n^\prime|\geq N_2/10$,
\begin{equation*}
  |G_{m+ Q_{N_2}}(x;n,n^\prime)|\leq  e^{-c_2|n-n^\prime|^{\tilde{\sigma}}},
\end{equation*}
for any $Q_{N_2}\in \mathcal{E}_{N_2}^0$ and  $|m|\leq N_3$.
Indeed, we only need to set
\begin{equation*}
  \tilde{X}_{N_2}=\bigcup_{|m|\leq N_3}X_{N_2}(f^m(x)).
\end{equation*}

By the assumption $N_3\geq N_2^C$ and $N_2\geq N_1^C$ with large $C$ depending on $\varepsilon$, one has
\begin{equation}\label{gdec48}
  N_2\leq N_3^{\varepsilon},N_1\leq N_2^{\varepsilon}.
\end{equation}
Let $\xi=\delta-5\varepsilon$.
Applying \eqref{gdec5} to Theorem \ref{thmmu2}, and by \eqref{gbr} and \eqref{gdec48}, there exists $ {X}_{N_3}\subset [0,1)^b$  such that
\begin{eqnarray}
   \sup_{x_i^\neg\in \T^{b-b_i}}\mathrm{Leb}(X_{N_3}(x_i^\neg)) &\leq&  N_3^{C(d)}e^{-N_3^{\xi(\frac{\sigma-1}{b_i}+\frac{\delta}{b_i})-\varepsilon}}\nonumber \\
   &\leq & e^{-N_3^{\frac{\sigma-1}{b_i}\delta+\frac{\delta^2}{b_i}-\varepsilon}},\label{gdec6u}
\end{eqnarray}
and
for any $x\notin X_{N_3}$, $\mathcal{R}\subset \mathcal{R}_{L}^{N_3^{\xi}}$ with $N_3^{\xi}\leq L\leq N_3$, 
\begin{equation}\label{gnov3011}
||  G_{\mathcal{R}}(x)||\leq e^{L^{\sigma}}.
\end{equation}
 Let $\tilde{\mathcal{F}}$ be any pairwise disjoint elementary regions in $[-N_3,N_3]^d$ with size    $\lfloor N_3^\xi\rfloor $.
By \eqref{gdec5}, it is easy to see that there are at most $N_1^{C(d)}N_3^{1-\delta }=N_3^{1-\delta +\varepsilon}$ in $\tilde{\mathcal{F}}$ will intersect  elementary regions  not in  $SG_{N_1}$.
By Theorem \ref{res2}, any elementary region in $[-N_3,N_3]^d$ with size $\lfloor N_3^\xi\rfloor $, without intersecting any non-$SG_{N_1}$ elementary regions, will satisfy \eqref{ggood}.
It implies \eqref{gnbad} is true for $\varsigma=1-\varepsilon$.
Applying Theorem \ref{thmmul} and \eqref{gnov3011}, we obtain Theorem \ref{thmu}.
Let us  explain where  the bound    $ c_2-N_1^{-\vartheta_1}-N_3^{-\vartheta_2}$ in \eqref{gdec49} is   from.
Since $N_3^{\xi}\leq e^{\xi N_1^{c}}$, one has $s=\frac{11}{10}c$ in Theorem \ref{res2}.  Applying    $M_0=N_1$, $N=N_3^{\xi}$, $\sigma=\mu$ to Theorem \ref{res2}, we obtain the bound $ c_2-O(1)N_1^{-(\tilde{\sigma}-\frac{11}{10}c)}-O(1)N_1^{-(\tilde{\sigma}-\mu)}-N_3^{-\vartheta_2}$.  Theorem \ref{thmmul} will 
only contribute  $N_3^{-\vartheta_2}$. 
\end{proof}
\begin{proof}[\bf Proof of Theorem \ref{thmunew}]
	Fix any $m\in\Z^d$. 
Applying Theorem \ref{thmu} with $\tilde{A}^m$, one has 
there exists a  subset ${X}_{N_3}^m\subset \mathbb{T}^b$  such that
\begin{equation*}
\sup_{1\leq i \leq k,x_i^\neg\in \T^{b-b_i}}\mathrm{Leb}(X_{N_3}^m(x_i^\neg))\leq e^{-{N_3}^{\frac{\sigma-1}{b_i}\delta+\frac{\delta^2}{b_i}-\varepsilon}},
\end{equation*}
	and for any $x\notin  X_{N_3}^m$ and  $Q_{N_3}\in \mathcal{E}_{N_3}^0$,
\begin{eqnarray*}
	|| (R_{Q_{N_3}}\tilde{A}^m(x)R_{Q_{N_3}})^{-1} ||&\leq& e^{N_3^{\sigma}},
\end{eqnarray*}
and  for $|n-n^\prime|\geq \frac{N_3}{10}$, 
\begin{eqnarray*}
	|(R_{Q_{N_3}}\tilde{A}^mR_{Q_{N_3}})^{-1}(x;n,n^\prime)| &\leq& e^{-(c_2-N_1^{-\vartheta_1}-N_{3}^{-\vartheta_2})|n-n^\prime|^{^{\tilde{\sigma}}}}.
\end{eqnarray*}
Let $$ X_{N_3}=\bigcup_{m\in\Z^d} X_{N_3}^m.$$
By  \eqref{comu} and \eqref{gcom} , we have 
$$\sup_{1\leq i \leq k,x_i^\neg\in \T^{b-b_i}}\mathrm{Leb}(X_{N_3}(x_i^\neg))\leq e^{N_3^a}e^{-N_3^{\frac{\sigma-1}{b_i}\delta+\frac{\delta^2}{b_i}-\varepsilon}}\leq e^{-N_3^{\frac{\sigma-1}{b_i}\delta+\frac{\delta^2}{b_i}-\varepsilon}}.$$
\end{proof}
\section{Proof of Theorem \ref{thmholder}}
\begin{proof} 
Once we have the LDT at hand, the modulus of   continuity of the IDS is  standard.
The proof here follows from the corresponding part in \cite{blmp2000,schcmp}.
Let $N=|\log|E_1-E_2||^{\frac{1}{\sigma}-\varepsilon}$. Without loss of generality, assume $E_1<E_2$ and let  $E$ be the center of $[E_1,E_2]$. Therefore,
\begin{equation}\label{edis}
|E_1-E_2|\leq e^{-N^{\sigma+\varepsilon}}.
\end{equation}
By the assumption, there exists a set $X_{N}\subset \T^b$ such that
\begin{equation*}
  {\rm Leb}(X_N)\leq e^{-N^{\zeta}},
\end{equation*}
and for any $x\notin X_N$ and  any $Q_N\in \mathcal{E}_{N}^0$,
\begin{eqnarray*}
  ||G_{Q_N}(E,x)|| &\leq & e^{N^{\sigma}} \\
  |G_{Q_N}(E,x;n,n^\prime)| &\leq& e^{-c|n-n^\prime|^{\tilde{\sigma}}} \text{ for } |n-n^\prime|\geq \frac{N}{10},
\end{eqnarray*}
where $c>0$.  We should mention that $X_N$ depends on $E$.
By the assumption \eqref{gx}, for large $N_1$, one has
\begin{equation*}
  \#\{n\in \Z^d: |n|\leq N_1, f^n(x)\in X_N\}\leq 2(2N_1+1)^d e^{-N^{\zeta}}.
\end{equation*}
Let $\Lambda=[-N_1,N_1]^d$.
By making $ \#\{n\in \Z^d: |n|\leq N_1, f^n(x)\in X_N\}$ slightly larger, we have
there exists $\bar{\Lambda}\subset \Lambda$ such that   for all $j\in \Lambda\backslash \bar{\Lambda}$, there  exists $W(j)\in \mathcal{E}_{N}$ such that
$W(j)\subset \Lambda\backslash \bar{\Lambda}$, $ {\rm dist}(j,\Lambda\backslash \bar{\Lambda}\backslash W(j))\geq N/2$ and
\begin{eqnarray}
&&\|G_{W(j)} \|\leq e^{{N}^{\sigma}},\label{Apr20new}\\
&&|G_{W(j)}(n,n')|\leq e^{-c|n-n'|^{\tilde{\sigma}}}\ {\rm for}\ |n-n'|\geq\frac{N}{10}.\label{Apr21new}
\end{eqnarray}
and
\begin{equation*}
  |\bar{\Lambda}|\leq C(d)N^{2d}(2N_1+1)^d e^{-N^{\zeta}}.
\end{equation*}
Here, $\bar{\Lambda}$ is obtained in a similar way as \eqref{ggequ2}.

 Substituting   $\Lambda$ with $\Lambda\backslash\bar{\Lambda}$ in Lemma \ref{res1}, we have
\begin{eqnarray*}
\|G_{\Lambda\setminus \bar\Lambda}(E,x)\|\leq 4(2N+1)^d e^{{N}^{\sigma}}.
\end{eqnarray*}
By standard perturbation arguments,  we have for any $\tilde{E}\in [E_1,E_2]$,
\begin{eqnarray}
\label{bs3new}\|G_{\Lambda\setminus \bar\Lambda}(\tilde{E},x)\|\leq 8(2N+1)^d e^{{N}^{\sigma}}.
\end{eqnarray}
Denote  by  $\xi_j$, $j=1,2,\cdots,M$, the normalized eigenfunctions
of  $H_{\Lambda}$ with eigenvalues falling into the interval $[E_1,E_2]$. Let  $\xi$ be one of
them with eigenvalue  $\tilde{E}$. By definition,
\begin{equation}\label{gabo}
  R_{\Lambda\backslash\bar{\Lambda}}(H_{\Lambda}-{E}) R_{\Lambda\backslash\bar{\Lambda}}\xi+ R_{\Lambda\backslash\bar{\Lambda}}(H_{\Lambda}-{E}) R_{\bar{\Lambda}}\xi=(\tilde{E}-E)R_{\Lambda\backslash\bar{\Lambda}}\xi.
\end{equation}
Applying $G_{\Lambda\setminus \bar\Lambda}({E},x)$ to \eqref{gabo}, one has
\begin{equation}\label{ghold1}
   R_{\Lambda\backslash\bar{\Lambda}}\xi+ G_{\Lambda\setminus \bar\Lambda}({E},x)R_{\Lambda\backslash\bar{\Lambda}}(H_{\Lambda}-{E}) R_{\bar{\Lambda}}\xi=(\tilde{E}-E)G_{\Lambda\setminus \bar\Lambda}({E},x)R_{\Lambda\backslash\bar{\Lambda}}\xi.
\end{equation}
Denote by $P$ the projection onto the range of $G_{\Lambda\setminus \bar\Lambda}({E},x)R_{\Lambda\backslash\bar{\Lambda}}(H_{\Lambda}-{E}) R_{\bar{\Lambda}}$. Clearly, the
dimension of this range does not exceed $\bar{\Lambda}$. Thus ${\rm rank}(P ) \leq\bar{\Lambda}$.
By \eqref{edis} and \eqref{bs3new}, one has
\begin{equation}\label{ghold2}
|| (\tilde{E}-E)G_{\Lambda\setminus \bar\Lambda}({E},x)R_{\Lambda\backslash\bar{\Lambda}}\xi||\leq \frac{1}{100}||\xi||.
\end{equation}
Applying $I-P$ to \eqref{ghold1} and by \eqref{ghold2}, we have
\begin{equation}\label{ghold3}
||  R_{\Lambda\backslash\bar{\Lambda}}\xi-P  R_{\Lambda\backslash\bar{\Lambda}}\xi ||\leq  \frac{1}{100}||\xi||.
\end{equation}
Applying  \eqref{ghold3} to each $\xi_j$,
we have
\begin{eqnarray*}
  M &=&\sum_{j=1}^N ||\xi_j||^2 \\
   &\leq&\frac{M}{2}+4\sum_{j=1}^M||PR_{\Lambda\backslash\bar{\Lambda}}\xi_j||^2+ 2\sum_{j=1}^M||R_{\bar{\Lambda}}\xi_j||^2\\
    &\leq&\frac{M}{2}+4{\rm Trace}(PR_{\Lambda\backslash\bar{\Lambda}})+ 2{\rm Trace}(R_{\bar{\Lambda}})\\
    &\leq& \frac{M}{2}+ 6|\bar{\Lambda}|\\
   &\leq& \frac{M}{2}+ C(d)N^{2d}(2N_1+1)^d e^{-N^{\zeta}}.
\end{eqnarray*}
Therefore,
\begin{equation*}
  M\leq C(d)N^{2d}(2N_1+1)^d e^{-N^{\zeta}}.
\end{equation*}
It implies
\begin{equation*}
  k(x,E_1,E_2)\leq C(d)N^{2d} e^{-N^{\zeta}}\leq e^{-(\log\frac{1}{E_2-E_1})^{\frac{\zeta}{\sigma}-\varepsilon}}.
\end{equation*}
\end{proof}

\section{The discrepancy and semi-algebraic sets}

\subsection{ Discrepancy }
Let $ \vec{x}_1,..., \vec{x}_N\in [0,1)^b$ and $\mathcal{S}\subset [0,1)^b$.
Let $A(\mathcal{S}; \{\vec{x}_n\}_{n=1}^N)$ be the number of  $\vec{x}_n$ ($1\leq n\leq N$)
such that $\vec{x}_n\in \mathcal{S}$.
We define
the discrepancy of the sequence $\{\vec{x}_n\}_{n=1}^N$ by
\begin{align}\label{counting}
D_N(\{\vec{x}_n\}_{n=1}^N)=\sup_{\mathcal{S}\in \mathcal{C}}\left|\frac{A(\mathcal{S}; \{\vec{x}_n\}_{n=1}^N)}{N}-\mathrm{Leb}(S)\right|,
\end{align}
where $ \mathcal{C}$ is the family of all  intervals in $[0,1)^b$, namely $\mathcal{S}$ has the form of $$\mathcal{S}=[\varrho_1,\beta_1]\times [\varrho_2,\beta_2]\times \cdots \times [\varrho_b,\beta_b]$$ with $0\leq \varrho_n<\beta_n<1$, $n=1,2,\cdots,b$.
Let  $\alpha=(\alpha_1,\alpha_2,\cdots,\alpha_b)\in [0,1)^b$. The $b$-dimensional sequence  $\vec{x}_n=(n\alpha_1,n\alpha_2,\cdots,n\alpha_b) \mod \Z^b$ ($n\alpha$ for short), $n=1,2,\cdots$, is called the Kronecker sequence.  We denote by the discrepancy of $ \{n\alpha\}_{n=1}^N$, $D_N(\alpha)$.
The following Lemmas are well known.
\begin{lemma}\cite{dt97}\label{ledissh}
Assume $\alpha\in {\rm DC}(\kappa,\tau) $. Then
\begin{equation*}
  D_N(\alpha)\leq C(b,\kappa,\tau)N^{-\frac{1}{\kappa}}(\log N)^2.
\end{equation*}
\end{lemma}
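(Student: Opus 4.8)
\textbf{Proof proposal for Lemma \ref{ledissh} (discrepancy bound for Kronecker sequences under a Diophantine condition).}

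The plan is to derive the stated bound from the classical Erd\H{o}s--Tur\'an--Koksma inequality, which controls the discrepancy of any finite point set in $[0,1)^b$ in terms of exponential sums:
\begin{equation*}
D_N(\{\vec{x}_n\}_{n=1}^N)\leq C(b)\left(\frac{1}{H}+\sum_{0<\|k\|_\infty\leq H}\frac{1}{r(k)}\left|\frac{1}{N}\sum_{n=1}^N e^{2\pi i \langle k,\vec{x}_n\rangle}\right|\right),
\end{equation*}
valid for every integer $H\geq 1$, where $k$ ranges over $\Z^b\setminus\{0\}$ and $r(k)=\prod_{j=1}^b \max(1,|k_j|)$. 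For the Kronecker sequence $\vec{x}_n=n\alpha$ the inner exponential sum is a geometric series, so
\begin{equation*}
\left|\frac{1}{N}\sum_{n=1}^N e^{2\pi i n\langle k,\alpha\rangle}\right|\leq \frac{1}{N}\cdot\frac{1}{|\sin(\pi\langle k,\alpha\rangle)|}\leq \frac{1}{2N\,\|\langle k,\alpha\rangle\|},
\end{equation*}
using $|\sin\pi t|\geq 2\|t\|$. This reduces everything to lower bounds on $\|\langle k,\alpha\rangle\|=\|k\cdot\alpha\|$, which is precisely what ${\rm DC}(\kappa,\tau)$ supplies.

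First I would invoke the Diophantine condition: for $0<\|k\|_\infty\leq H$ we have $\|k\cdot\alpha\|\geq \tau|k|^{-\kappa}\geq \tau H^{-\kappa}$ (here $|k|=\|k\|_\infty$ as in \eqref{gdc}, and note $|k|\leq H$). Plugging this into the exponential-sum estimate gives each term a bound of order $H^{\kappa}/(N\,r(k))$, and summing over the box $\{0<\|k\|_\infty\leq H\}$ contributes a factor $\sum_{0<\|k\|_\infty\leq H} r(k)^{-1}\leq C(b)(\log H)^b$ (since $\sum_{j=1}^H 1/j\leq 1+\log H$ in each coordinate). Therefore
\begin{equation*}
D_N(\alpha)\leq C(b)\left(\frac{1}{H}+\frac{H^\kappa (\log H)^b}{\tau N}\right).
\end{equation*}
Now I would optimize in $H$: choosing $H\asymp N^{1/(\kappa+1)}$ balances the two terms and yields $D_N(\alpha)\leq C(b,\kappa,\tau)N^{-1/(\kappa+1)}(\log N)^b$, which already gives a power saving but with exponent $1/(\kappa+1)$ rather than $1/\kappa$.

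The main obstacle is sharpening the exponent from $1/(\kappa+1)$ to the claimed $1/\kappa$ (with only a $(\log N)^2$ loss, not $(\log N)^b$). The crude use of $\|k\cdot\alpha\|\geq \tau|k|^{-\kappa}$ for \emph{all} $k$ in the box is wasteful; the better route is a dyadic decomposition of the $k$-sum together with a counting estimate bounding the number of $k$ with $\|k\|_\infty\leq H$ and $\|k\cdot\alpha\|$ in a given dyadic range, exploiting that such $k$ cannot be too clustered (a three-distance / pigeonhole argument, or the standard continued-fraction analysis in the $b=1$ case and its multidimensional analogue). This is exactly the content of the cited reference \cite{dt97}, so in the write-up I would either reproduce that dyadic counting argument or simply cite it; the $(\log N)^2$ factor is the known sharp logarithmic loss for this class of $\alpha$. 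I should also remark that the constant depends on $b,\kappa,\tau$ only, uniformly over $\alpha\in{\rm DC}(\kappa,\tau)$, since every estimate above used only $\tau$ and $\kappa$ through the inequality \eqref{gdc}.
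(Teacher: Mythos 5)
The paper does not prove this lemma at all; it is cited directly from Drmota--Tichy \cite{dt97}, so there is no in-paper argument to compare against. Your proposal is therefore worth judging on its own terms, and on those terms it is incomplete: the exponent you actually establish is $1/(\kappa+1)$, not the claimed $1/\kappa$. Your derivation via Erd\H{o}s--Tur\'an--Koksma together with the uniform lower bound $\|k\cdot\alpha\|\geq\tau H^{-\kappa}$ over the whole box is correct as far as it goes, and you are right to flag that this is lossy. But the write-up then says you would ``either reproduce that dyadic counting argument or simply cite it''; as it stands, the sharp bound is never proved, only promised, and that is the one thing the lemma actually asserts.

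The missing step, concretely, is the separation estimate inside a dyadic block. Fix $l$ and consider $k$ with $2^{l-1}\leq |k|< 2^{l}$. For distinct $k,k'$ in this block one has $|k-k'|\leq 2^{l+1}$, so the Diophantine condition \eqref{gdc} applied to $k-k'$ gives $\|(k-k')\cdot\alpha\|\geq \tau\,2^{-(l+1)\kappa}$; hence the fractional parts $\{\|k\cdot\alpha\|\}$ in the block are pairwise separated by $\gtrsim\tau\,2^{-l\kappa}$, in addition to being bounded below by the same quantity. Sorting them, the $m$-th smallest is $\gtrsim m\,\tau\,2^{-l\kappa}$, so $\sum_{k\in\text{block}}\|k\cdot\alpha\|^{-1}\lesssim \tau^{-1}2^{l\kappa}\log(\#\text{block})$ rather than the crude $(\#\text{block})\cdot\tau^{-1}2^{l\kappa}$. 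This extra factor of $(\#\text{block})/\log(\#\text{block})$ is exactly what turns the $H^{\kappa}$ in your bound into $H^{\kappa-1}$ (up to logs), after which optimizing $1/H\sim H^{\kappa-1}/N$ gives $H\sim N^{1/\kappa}$ and hence the claimed $N^{-1/\kappa}$ with a logarithmic loss. Without carrying this out (or genuinely invoking the precise theorem in \cite{dt97} as a black box rather than deferring to it hypothetically), the proof is not finished.
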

\begin{lemma}\cite{s64}\label{ledissh1}
For almost every $\alpha$,   we have
\begin{equation*}
  D_N(\alpha)\leq C(\alpha)N^{-1}(\log N)^{b+2}.
\end{equation*}
\end{lemma}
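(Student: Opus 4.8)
The plan is to deduce this classical theorem of Schmidt from the Erd\H{o}s--Tur\'an--Koksma inequality together with a moment analysis, over $\alpha$, of the resulting exponential sums. Since the Kronecker sequence is $\vec x_n=n\alpha\bmod\Z^b$, one has $\sum_{n=1}^{N}e(m\cdot\vec x_n)=\sum_{n=1}^{N}e(n\,m\cdot\alpha)$, a geometric sum of modulus at most $\min\bigl(N,\tfrac{1}{2\|m\cdot\alpha\|}\bigr)$ for every $m\in\Z^b\setminus\{0\}$, where $\|t\|=\mathrm{dist}(t,\Z)$. Comparing the box discrepancy in \eqref{counting} with the star discrepancy costs only a factor $2^b$, so Erd\H{o}s--Tur\'an--Koksma gives, for every integer $H\ge 1$,
\[
D_N(\alpha)\le C_b\Bigl(\tfrac1H+\tfrac1N\sum_{0<\|m\|_\infty\le H}\tfrac{1}{r(m)}\min\bigl(N,\tfrac{1}{2\|m\cdot\alpha\|}\bigr)\Bigr),\qquad r(m)=\prod_{j=1}^b\max(1,|m_j|).
\]
Taking $H=N$ reduces the problem to controlling, for almost every $\alpha$, the quantity $\Sigma_N(\alpha):=\sum_{0<\|m\|_\infty\le N}\tfrac{1}{r(m)}\min\bigl(N,\tfrac{1}{2\|m\cdot\alpha\|}\bigr)$, which is nondecreasing in $N$.

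First I would compute the mean. For $m\neq 0$ the map $\alpha\mapsto m\cdot\alpha\bmod 1$ pushes Lebesgue measure on $[0,1)^b$ to Lebesgue measure on $[0,1)$, whence $\int_{[0,1)^b}\min(N,\tfrac1{2\|m\cdot\alpha\|})\,d\alpha=1+\log N$; combined with $\sum_{0<\|m\|_\infty\le N}\tfrac1{r(m)}=(1+2\sum_{k\le N}k^{-1})^b-1\le C_b(\log N)^b$ this yields $\int_{[0,1)^b}\Sigma_N(\alpha)\,d\alpha\le C_b(\log N)^{b+1}$. Chebyshev's inequality then already gives, for a.e.\ $\alpha$ along the sparse sequence $N_k=2^k$, a bound $\Sigma_{N_k}(\alpha)\le C(\alpha)(\log N_k)^{b+1+\varepsilon}$; since $N_{k+1}=2N_k$ and $\Sigma_N$ is monotone, this interpolates to all $N$ and, fed back into the inequality above, produces $D_N(\alpha)\le C(\alpha)N^{-1}(\log N)^{b+1+\varepsilon}$. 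To absorb the $\varepsilon$ and land on the stated exponent $b+2$ I would replace Chebyshev by a second-moment estimate: one must bound $\int\Sigma_N(\alpha)^2\,d\alpha$, i.e.\ the correlations $\int\min(N,\tfrac1{2\|m\cdot\alpha\|})\min(N,\tfrac1{2\|m'\cdot\alpha\|})\,d\alpha$. For pairs $m,m'$ spanning a rank-two sublattice the pair $(\|m\cdot\alpha\|,\|m'\cdot\alpha\|)$ is essentially equidistributed in $[0,1)^2$, so the cross term factorizes up to controlled error, while rank-one and near-degenerate configurations (where $m'$ is close to a rational multiple of $m$) contribute lower order; this should give $\mathrm{Var}(\Sigma_N)\ll(\mathbb E\,\Sigma_N)^2/\log N$, and a Gál--Koksma/Borel--Cantelli argument then pins $\Sigma_{N_k}(\alpha)$ to within a $\log\log$-factor of its mean, which after interpolation gives $D_N(\alpha)\le C(\alpha)N^{-1}(\log N)^{b+2}$.

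The main obstacle is exactly this variance (equivalently, higher-moment) estimate for $\Sigma_N$: it does not suffice to know the marginal law of $\|m\cdot\alpha\|$; one needs simultaneous control of the correlations among $\|m\cdot\alpha\|$ for many $m$, and the near-degenerate pairs are the delicate contribution. This is precisely the technical core of Schmidt's paper \cite{s64}, where the required moment and maximal inequalities are established; the Erd\H{o}s--Tur\'an--Koksma reduction and the dyadic interpolation are then routine. (For $b=1$ one can bypass the moment method, using the three-distance theorem and the metric theory of continued fractions, but that route does not generalize to $b\ge 2$.)
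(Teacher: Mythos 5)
The paper does not prove this lemma; it is cited directly from Schmidt~\cite{s64}, so there is no internal argument to compare against. Your sketch does identify the standard route to Schmidt's theorem --- reduce to exponential sums by Erd\H{o}s--Tur\'an--Koksma, pass the resulting arithmetic sum $\Sigma_N(\alpha)$ through a metric moment estimate, and conclude by a Borel--Cantelli/G\'al--Koksma argument along a sparse sequence of scales --- and you are upfront that the variance/maximal-inequality step is deferred to~\cite{s64}. That honesty is appropriate, but it means the proposal is a roadmap rather than a proof of the lemma.

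There are also quantitative slips in the part you describe as ``routine.'' Your first-moment step does \emph{not} give $\Sigma_{N_k}\le C(\alpha)(\log N_k)^{b+1+\varepsilon}$ for arbitrarily small $\varepsilon$. With $N_k=2^k$ and $\mathbb E\,\Sigma_{N_k}\ll k^{b+1}$, the Markov/first-moment Borel--Cantelli argument requires $\sum_k k^{b+1}/t_k<\infty$, which forces $t_k\gg k^{b+2+\delta}$; so it yields only $(\log N)^{b+2+\delta}$, one full power worse than you claim, and already no better than (indeed slightly worse than) the target exponent $b+2$. Likewise, the variance bound $\mathrm{Var}(\Sigma_N)\ll(\mathbb E\,\Sigma_N)^2/\log N$ plus plain Chebyshev and dyadic Borel--Cantelli still fails, since $\sum_k 1/\log N_k=\sum_k 1/(k\log 2)$ diverges; this is precisely why Schmidt's proof uses a G\'al--Koksma-type \emph{maximal} inequality rather than a pointwise Chebyshev bound at each dyadic scale. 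In short: the mean computation and the ETK reduction are correct, but the interpolation/Borel--Cantelli bookkeeping is off by a logarithm, and the one genuinely hard ingredient (correlation control for $\|m\cdot\alpha\|$ across many $m$, packaged as a maximal second-moment inequality) is exactly the content of~\cite{s64} and is not supplied.
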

Let $f$: $\T^b\rightarrow \T^b$ be defined as follows
\begin{equation*}
T(y_1,y_2,...,y_b)=(y_1+\alpha, y_2+y_1,...,y_b+y_{b-1}).
\end{equation*}
Let $T^n$  be the $n$th iteration of $T$ and $\vec{Y}_n=T^n(y_1,...,y_b)$.
\begin{lemma}\label{ledissk}
Assume $\alpha\in {\rm DC}(\kappa,\tau) $. Then for any $\varepsilon>0$,
\begin{equation*}
  D_N(\{\vec{Y}_n\}_{n=1}^N)\leq C(b,\kappa,\tau,\varepsilon)N^{-\frac{1}{2^{b-1}\kappa}+\varepsilon}.
\end{equation*}
\end{lemma}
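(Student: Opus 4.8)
The plan is to estimate the discrepancy of the skew-shift orbit $\{\vec{Y}_n\}_{n=1}^N$ by reducing it, via the Erd\H{o}s--Tur\'an--Koksma inequality, to bounding exponential sums $\left|\sum_{n=1}^N e^{2\pi i \langle k, \vec{Y}_n\rangle}\right|$ over nonzero $k\in\Z^b$. The key structural observation is that $\vec{Y}_n$ is a polynomial in $n$: writing $T(y_1,\dots,y_b)=(y_1+\alpha,y_2+y_1,\dots,y_b+y_{b-1})$, one checks by induction that the $j$th coordinate of $\vec{Y}_n$ has the form $\binom{n}{j}\alpha + (\text{lower-degree terms in }n\text{ with coefficients depending on }y_1,\dots,y_j)$, so $\langle k,\vec{Y}_n\rangle$ is a real polynomial in $n$ whose leading coefficient (for the highest index $j$ with $k_j\neq 0$) is $\binom{n}{j}k_j\alpha$, i.e. a rational multiple of $\alpha$ of controlled height. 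Thus the relevant sums are Weyl sums for polynomials whose top coefficient is Diophantine.

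The main steps, in order, are: (1) Derive the explicit polynomial form of $\vec{Y}_n$ and identify, for each fixed $k\neq 0$, the degree $j_0=j_0(k)\le b$ of the polynomial $P_k(n)=\langle k,\vec{Y}_n\rangle$ and its leading coefficient $c_{j_0}= k_{j_0}\alpha/j_0!$ up to integer shifts. (2) Apply Weyl differencing $j_0-1$ times to reduce the Weyl sum to sums of the form $\sum e^{2\pi i (\text{linear in }n)}$ with linear coefficient a nonzero integer combination of $\alpha$; here the Diophantine condition $\|k\alpha\|\ge \tau/|k|^\kappa$ (more precisely, applied to the integer multiple of $\alpha$ that appears after differencing) gives $\left|\sum_{n=1}^{N}e^{2\pi i m\alpha n}\right|\le \min(N,\|m\alpha\|^{-1})\ll |m|^{\kappa}/\tau$. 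Tracking the Weyl differencing, a degree-$j_0$ Weyl sum is bounded by $N^{1-2^{-(j_0-1)}+\varepsilon}$ times a power of the denominator, uniformly for $|k|$ up to the Erd\H{o}s--Tur\'an cutoff. (3) Since $j_0\le b$, the worst exponent is $1-2^{-(b-1)}$, and in Erd\H{o}s--Tur\'an--Koksma with cutoff parameter $H$ chosen as a small power of $N$, the resulting bound is $D_N(\{\vec{Y}_n\}_{n=1}^N)\ll H^{-1}+ N^{-2^{-(b-1)}}H^{C(b)}\log$-factors, and optimizing $H$ (or simply taking $H=N^{\varepsilon'}$ with $\varepsilon'$ small relative to $\varepsilon$) absorbs all lower-order contributions into the $N^\varepsilon$, yielding $D_N\le C(b,\kappa,\tau,\varepsilon)N^{-1/(2^{b-1}\kappa)+\varepsilon}$ — note the extra $1/\kappa$ comes from the $|m|^\kappa$ loss in the linear exponential sum at the bottom of the differencing.

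The main obstacle I expect is bookkeeping the denominators carefully through the Weyl differencing: after $j_0-1$ differencing steps the linear coefficient is not $k_{j_0}\alpha$ itself but a sum of shifted copies $\sum_\ell (k_{j_0}/(j_0-1)!)\,h_\ell\,\alpha$ (plus integers) where the $h_\ell$ range over differencing parameters of size $\le N$; one must ensure this coefficient is a genuinely nonzero integer multiple of $\alpha$ of size $\lesssim |k|N^{j_0-1}$ so that ${\rm DC}(\kappa,\tau)$ applies with a controlled power loss, and one must handle the degenerate cases where intermediate difference polynomials degenerate in $n$ (these occur on a negligible set of differencing parameters and are estimated trivially by $N$). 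Rather than re-derive this, I would cite the standard treatment of Weyl sums for the skew-shift (e.g. the discrepancy estimates in the references already present, or Bourgain's book) and simply record the resulting exponent; the only genuinely new point is tracking the dependence on $\kappa$, which enters linearly in the exponent as $1/(2^{b-1}\kappa)$.
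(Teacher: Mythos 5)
Your approach---Erd\H{o}s--Tur\'an--Koksma plus Weyl differencing on the degree-$\le b$ polynomial phases produced by the skew-shift, with the Diophantine condition on $\alpha$ entering at the linear stage---is exactly the route the paper indicates. The paper's own ``proof'' is the one-line remark following the lemma, citing the Erd\H{o}s--Tur\'an inequality and Weyl's method from Montgomery's book with no further detail, so your sketch and the paper's proof are the same argument.

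One bookkeeping point worth tightening: the intermediate Weyl-sum bound you record as $N^{1-2^{-(j_0-1)}+\varepsilon}$ times a power of the denominator is what one would get from a rational approximation of nearly optimal size $q\asymp N$. For $\alpha\in\mathrm{DC}(\kappa,\tau)$ one can only guarantee an intermediate convergent denominator in a window like $[N^{1/\kappa},N]$, and running this through the standard Weyl inequality (or equivalently through $\sum_{m\le N^{j_0-1}}\min(N,\|m c\alpha\|^{-1})$ after $j_0-1$ differencings, estimated by blocks of length $q\sim N^{1/\kappa}$) yields directly $|S_k|\ll H^{O(1)}N^{1-1/(2^{j_0-1}\kappa)+\varepsilon}$. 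In other words the $1/\kappa$ is built into the Weyl-sum exponent itself, not a separate loss bolted on afterwards; phrasing it your way reads as a contradiction between the displayed exponent and the closing sentence. With that correction, taking $H$ a small power of $N$ in Erd\H{o}s--Tur\'an--Koksma absorbs the $H^{O(1)}$ factor into $N^{\varepsilon}$ and gives exactly $D_N\ll N^{-1/(2^{b-1}\kappa)+\varepsilon}$, as claimed.
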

\begin{remark}
	Lemma \ref{ledissk}  follows from  the Erd\H{o}s-Tur\'an inequality (see Corollary 1.1 in p.8 of \cite{hugh}) and the Weyl's method  (Theorem 2 in p.41 of \cite{hugh}).
\end{remark}
The Erd\H{o}s-Tur\'an inequality and Weyl's method  also imply
\begin{lemma} \label{ledisskp}
Assume $\alpha\in {\rm DC}(\kappa,\tau) $. Let $ Y_n=P_b(T^n(y_1,...,y_b))$, where $P_b$ is the $b$th coordinate projection.
 Then for any $\varepsilon>0$,
\begin{equation*}
  D_N(\{{Y}_n\}_{n=1}^N)\leq C(b,\kappa,\tau,\varepsilon)N^{-\frac{1}{2^{b-1}\kappa}+\varepsilon}.
\end{equation*}
\end{lemma}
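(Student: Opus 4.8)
The plan is to reduce Lemma \ref{ledisskp} to the Erd\H{o}s--Tur\'an inequality combined with a Weyl-sum estimate, exactly as in the proof of Lemma \ref{ledissk}, but tracking only the single $b$th coordinate $Y_n=P_b(T^n(y))$ rather than the whole orbit. First I would recall the one-dimensional Erd\H{o}s--Tur\'an inequality: for any finite sequence $\{Y_n\}_{n=1}^N\subset [0,1)$ and any positive integer $H$,
\begin{equation*}
D_N(\{Y_n\}_{n=1}^N)\leq C\left(\frac{1}{H}+\sum_{h=1}^{H}\frac{1}{h}\left|\frac{1}{N}\sum_{n=1}^{N}e^{2\pi i h Y_n}\right|\right).
\end{equation*}
Thus the problem becomes estimating the exponential sums $S_h(N)=\sum_{n=1}^N e^{2\pi i h Y_n}$ uniformly in $h\leq H$, and then optimizing $H$ as a power of $N$.

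The key structural input is that $Y_n=P_b(T^n(y_1,\dots,y_b))$ is a polynomial in $n$ of degree exactly $b$ with leading coefficient a fixed nonzero rational multiple of $\alpha$ (since iterating the skew-shift $T$ turns $y_1+\alpha$ into $y_1+n\alpha$ after the first coordinate, and each successive coordinate integrates once more, producing $\binom{n}{b-1}\alpha$ plus lower-order-in-$n$ terms depending on the initial data $y_1,\dots,y_b$). Concretely, $Y_n = \frac{\alpha}{(b-1)!}n^{b-1}+ (\text{lower degree polynomial in }n)$ modulo $1$. Then $hY_n$ is a polynomial in $n$ of degree $b-1$ whose top coefficient is $h\alpha/(b-1)!$. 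The standard Weyl differencing method (Theorem 2 in p.41 of \cite{hugh}, i.e.\ iterated van der Corput / Weyl's inequality for polynomial sequences) gives
\begin{equation*}
\left|\frac{1}{N}\sum_{n=1}^N e^{2\pi i h Y_n}\right|\ll N^{-2^{-(b-2)}} N^{\varepsilon'}\left(\sum_{\ell}\|q_\ell h\alpha\|^{-1} N^{-1}+\cdots\right)^{2^{-(b-2)}},
\end{equation*}
and using the Diophantine condition $\alpha\in {\rm DC}(\kappa,\tau)$ to control $\|q h\alpha\|$ from below by $\tau |qh|^{-\kappa}$ for the relevant range $|q|\lesssim N$, $h\leq H$, one obtains a bound of the shape $N^{-2^{-(b-1)}/\kappa + \varepsilon''}$ for each $S_h(N)/N$, uniformly in $1\le h\le H$ provided $H\le N^{\text{small}}$.

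Plugging this into Erd\H{o}s--Tur\'an, summing $\sum_{h\le H}1/h = O(\log H)$, and choosing $H$ to be a suitable small power of $N$ (so that the $1/H$ term and the Weyl term balance up to $N^\varepsilon$), yields
\begin{equation*}
D_N(\{Y_n\}_{n=1}^N)\leq C(b,\kappa,\tau,\varepsilon)N^{-\frac{1}{2^{b-1}\kappa}+\varepsilon},
\end{equation*}
which is the claim. The routine book-keeping — the exact shape of the polynomial coefficients after iterating the skew-shift, and the precise exponents emerging from $b-2$ rounds of van der Corput differencing — I would just cite from \cite{hugh} rather than redo; the proof is essentially identical to that of Lemma \ref{ledissk}, the only change being that we track one coordinate instead of the full $\T^b$-orbit, so the dimension count in the discrepancy is over $[0,1)$ rather than $[0,1)^b$. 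The main obstacle, such as it is, is simply verifying that the leading $n$-coefficient of $P_b(T^n y)$ is indeed a nonzero constant times $\alpha$ independent of the initial point $y$, so that the Diophantine hypothesis on $\alpha$ actually controls every Weyl sum uniformly in $y$; this is a direct induction on $b$ using $P_j(T^n y)=P_j(y)+\sum_{m<n}P_{j-1}(T^m y)$.
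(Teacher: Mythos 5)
Your approach is the same one the paper invokes (the paper gives no written proof, only the remark that Lemma \ref{ledisskp} follows from the Erd\H{o}s--Tur\'an inequality together with Weyl's method, exactly as for Lemma \ref{ledissk}), and it is sound. However, there is an off-by-one slip in the step you yourself flag as ``the main obstacle.'' The $b$th coordinate of $T^n(y_1,\dots,y_b)$ is a polynomial in $n$ of degree $b$, not $b-1$: iterating the recursion you correctly write down, $P_j(T^n y)=P_j(y)+\sum_{m<n}P_{j-1}(T^m y)$, gives $P_1(T^n y)=y_1+n\alpha$ (degree $1$), $P_2(T^n y)=y_2+ny_1+\binom{n}{2}\alpha$ (degree $2$), and by induction $P_b(T^n y)=\binom{n}{b}\alpha+\binom{n}{b-1}y_1+\cdots+y_b$, with leading term $\frac{\alpha}{b!}n^b$, not $\frac{\alpha}{(b-1)!}n^{b-1}$. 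Thus $hY_n$ has degree $b$ with top coefficient $h\alpha/b!$, and Weyl differencing is applied $b-1$ times, giving a saving exponent $2^{-(b-1)}$; combined with the Diophantine lower bound $\|qh\alpha\|\gtrsim |qh|^{-\kappa}$ this produces the claimed $N^{-\frac{1}{2^{b-1}\kappa}+\varepsilon}$. Your intermediate display ($N^{-2^{-(b-2)}}$) is consistent neither with the correct degree $b$ nor with the final exponent you state; with the degree fixed everything lines up, and the rest of the proof — Erd\H{o}s--Tur\'an in one variable, choosing $H$ a small power of $N$, and uniformity in the initial point $y$ because only the leading coefficient $\alpha/b!$ enters — goes through as you describe.
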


\subsection{Semi-algebraic sets}
A set $\mathcal{S}\subset \mathbb{R}^n$ is called a semi-algebraic set if it is a finite union of sets defined by a finite number of polynomial equalities and inequalities. More precisely, let $\{P_1,\cdots,P_s\}\subset\mathbb{R}[x_1,\cdots,x_n]$ be a family of real polynomials whose degrees are bounded by $d$. A (closed) semi-algebraic set $\mathcal{S}$ is given by an expression
\begin{equation}\label{smd}
\mathcal{S}=\bigcup\limits_{j}\bigcap\limits_{\ell\in\mathcal{L}_j}\left\{x\in\mathbb{R}^n: \ P_{\ell}(x)\varsigma_{j\ell}0\right\},
\end{equation}
where $\mathcal{L}_j\subset\{1,\cdots,s\}$ and $\varsigma_{j\ell}\in\{\geq,\leq,=\}$. Then we say that $\mathcal{S}$ has degree at most $sd$. In fact, the degree of $\mathcal{S}$ which is denoted by $\deg(\mathcal{S})$, means the  smallest $sd$ over all representations as in (\ref{smd}).

The following  lemma is a special case appearing \cite{ba}. It is restated in \cite{bbook}.
\begin{lemma}\cite[Theorem 9.3]{bbook} \cite[Theorem 1]{ba}\label{lediscom}
Let $\mathcal{S}\subset [0,1]^n$ be a semi-algebraic  set of degree $B$. Then  the number of connected components of
$\mathcal{S}$ does not exceed $(1+B)^{C(n)}$.
\end{lemma}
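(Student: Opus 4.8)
This is the Oleinik--Petrovsky--Thom--Milnor bound on the topology of semi-algebraic sets, in the refined form that keeps track of the \emph{number} of defining polynomials; the plan is to reduce it to a count of critical points handled by B\'ezout's theorem. Write $\mathcal{S}$ as in \eqref{smd} through a family $P_1,\dots,P_s$ of real polynomials of degree at most $d$, so that $B$ dominates both $s$ and $d$; enlarging the family by the $2n$ linear polynomials $x_j$ and $1-x_j$ (which changes $B$ only by an amount absorbed into $C(n)$) we may forget the cube and work in $\mathbb{R}^n$. The first move is purely combinatorial: $\mathbb{R}^n$ is the disjoint union of the nonempty \emph{sign cells}
\[
C_\sigma=\bigl\{x\in\mathbb{R}^n:\ \operatorname{sign}P_i(x)=\sigma_i,\ 1\le i\le s\bigr\},\qquad \sigma\in\{-1,0,1\}^s,
\]
$\mathcal{S}$ is a union of some of them, and every connected component of $\mathcal{S}$ is a union of connected components of cells; hence it suffices to bound $\sum_\sigma \#\{\text{connected components of }C_\sigma\}$ by $(1+B)^{C(n)}$.

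I would bound this sum by induction on $n$. The cells on which some $P_i$ vanishes identically lie in a hypersurface $\{P_i=0\}$ of degree $\le d$, to which one applies the inductive hypothesis after a routine recursion on its singular strata. The core case is therefore the \emph{open} cells, i.e.\ the connected components of $\mathbb{R}^n\setminus\bigcup_i\{P_i=0\}$, equivalently of $\{Q_0\neq 0\}$ with $Q_0=\prod_i P_i$. Intersecting with a large ball $B_R$: a bounded component is an open set on whose boundary $Q_0$ vanishes, so $|Q_0|$ attains a positive interior maximum there, producing an interior critical point of $Q_0$. After replacing $P_i$ by $P_i-\delta_i$ for generic small $\delta_i>0$ the critical set of the corresponding $Q$ becomes finite, and by B\'ezout's theorem $\nabla Q=0$ has at most $(\deg Q-1)^n\le (B-1)^n$ solutions. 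The components of $B_R\setminus\{Q_0=0\}$ that meet $\partial B_R$ are controlled by the same problem on the sphere $S^{n-1}_R$, one dimension lower, which is where the induction feeds back. Summing the $O(n)$ layers gives $\sum_\sigma\#\{\text{components of }C_\sigma\}\le (1+B)^{C(n)}$, as desired.

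The crux — and the reason one cannot simply invoke the classical Milnor estimate — is that the exponent must depend on $n$ only, with $s$ absorbed \emph{inside} the base $1+B$: handling the $C_\sigma$ one at a time would cost a factor $3^s$, which is fatal, so one must count the critical points of a \emph{single} product polynomial of degree $\le sd$ at once. Making this rigorous — the genericity of the $\delta_i$ making the critical set finite, the transversality/Morse step, and the careful treatment of the components meeting $\partial B_R$ — is the delicate part. As a safety net I would keep in mind Milnor's device of replacing $C_\sigma$ by the real variety $\{\,y_i^2=\pm(P_i(x)-\delta_i)\,\}_{i=1}^s$, which surjects onto $C_\sigma$ and has at most $d(2d-1)^{n+s-1}$ connected components by the textbook Morse/B\'ezout count; but this is exponential in $s$ and hence only yields a weaker bound, which is precisely why the simultaneous argument above is needed to obtain the exponent $C(n)$ claimed in the statement.
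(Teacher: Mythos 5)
The paper does not prove this lemma; it cites it (Bourgain, \emph{Green's function estimates}, Theorem 9.3, and Basu, Theorem 1) and uses it as a black box, so there is no in-paper argument to compare against --- the only question is whether your sketch, if pushed through, would actually produce the stated bound.

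Your sketch identifies the right circle of ideas and the reduction to counting components of sign cells, and the open-cell argument (take $Q_0=\prod_i P_i$, find a critical point of $Q_0$ in each bounded component of $\{Q_0\neq 0\}$, invoke B\'ezout after a generic perturbation, with $\deg Q_0\leq sd\leq B$ giving $(B-1)^n\leq(1+B)^{C(n)}$) is the classical Morse/B\'ezout count and is essentially correct. But the step you dismiss as ``a routine recursion on its singular strata'' is exactly where all of the difficulty of Basu's theorem lives, and as written it does not go through. A sign cell on which some $P_i$ vanishes sits inside the hypersurface $V=\{P_i=0\}\subset\mathbb{R}^n$, which is not $\mathbb{R}^{n-1}$: it may be singular and of unbounded geometric complexity, and your ``induction on $n$'' has nowhere to stand on it. Replacing $V$ by a nearby smooth level set $\{P_i=\varepsilon\}$ changes the cell --- you are not free to perturb and then pretend you are still counting components of the original set --- and the same objection applies, more mildly, to the perturbation of the $P_i$ in the open-cell count, where you must argue that each bounded component of the original $\{Q_0\neq 0\}$ still contributes its own critical point of the perturbed $Q$. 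Basu's actual proof works over a real closed extension field with infinitesimals precisely so that such deformations become deformation retracts preserving Betti numbers, and then runs a Mayer--Vietoris induction on the \emph{number of polynomials} rather than on the dimension; none of that machinery appears in your sketch. So the sketch has the right cast of characters but a genuine gap at the step you flag as routine, and your own fallback (Milnor's $y_i^2=\pm(P_i-\delta_i)$ trick) is, as you concede, exponential in $s$ and does not recover the claimed $(1+B)^{C(n)}$.
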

The following  lemma  follows from  the Yomdin-Gromov triangulation theorem \cite{gro,yom}, which has been stated in \cite{bbook}. 
We refer readers to  \cite{BiN}  and references therein for the complete proof of the Yomdin-Gromov triangulation theorem.
\begin{lemma}\cite[Corollary 9.6]{bbook}\label{ledis}
Let $\mathcal{S}\subset [0,1]^n$ be a semi-algebraic set of degree $B$. Let $\epsilon>0$ be a small number and ${\rm Leb}(\mathcal{S})\leq \epsilon^n$. Then $\mathcal{S}$
can be covered by a family of $\epsilon$-balls with total number less than $\frac{(1+B)^{C(n)} }{\epsilon^{n-1}}$.
\end{lemma}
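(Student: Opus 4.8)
The plan is to deduce the covering bound from the Yomdin--Gromov triangulation (reparametrization) theorem for semi-algebraic sets, together with an elementary boundary argument which is where the hypothesis ${\rm Leb}(\mathcal{S})\leq\epsilon^{n}$ actually enters. The triangulation theorem is the one genuine input, and we take it as given (it is the content of \cite{gro,yom}, with a complete proof in \cite{BiN}).

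First I would record the following consequence of the triangulation theorem, valid for \emph{any} semi-algebraic $\mathcal{T}\subset[0,1]^{n}$ of degree at most $B$ and dimension $d$: there is a family of maps $\phi_{1},\dots,\phi_{K}$ with $K\leq(1+B)^{C(n)}$, each $\phi_{i}\colon[0,1]^{d_{i}}\to\mathcal{T}$ with $d_{i}\leq d$, of class $C^{1}$ and with $\|\phi_{i}\|_{C^{1}}\leq 1$, whose images cover $\mathcal{T}$. Partitioning each cube $[0,1]^{d_{i}}$ into at most $(2/\epsilon)^{d_{i}}$ subcubes of side $\epsilon$ and using that $\phi_{i}$ is $1$-Lipschitz (so each subcube has $\phi_{i}$-image of diameter $\leq C(n)\epsilon$), one covers $\mathcal{T}$ by at most $(1+B)^{C(n)}\epsilon^{-d}$ balls of radius $C(n)\epsilon$; a rescaling of $\epsilon$ by the constant $C(n)$ turns this into a cover by at most $(1+B)^{C(n)}\epsilon^{-d}$ balls of radius $\epsilon$. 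In particular, if $\dim\mathcal{S}\leq n-1$ the lemma follows at once with $d=n-1$, and the measure hypothesis is not even needed.

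It remains to handle the case $\dim\mathcal{S}=n$. Here I would fix the grid of axis-parallel cubes of side $s=\epsilon/\sqrt{n}$ covering $[0,1]^{n}$, so that every grid cube has diameter $\epsilon$ and hence lies in a ball of radius $\epsilon$. A grid cube contained in $\mathcal{S}$ contributes $s^{n}$ to ${\rm Leb}(\mathcal{S})$, and these cubes are disjoint up to measure zero, so their number is at most $\epsilon^{n}/s^{n}=n^{n/2}=C(n)$ --- this is the only place ${\rm Leb}(\mathcal{S})\leq\epsilon^{n}$ is used. Any other grid cube meeting $\mathcal{S}$ is not contained in $\mathcal{S}$, hence meets both $\mathcal{S}$ and its complement, and being connected it must meet the topological boundary $\partial\mathcal{S}=\mathcal{S}\setminus\operatorname{int}\mathcal{S}$. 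Now $\partial\mathcal{S}$ is again semi-algebraic, of dimension at most $n-1$ and, by the standard effective bounds on semi-algebraic sets under quantifier elimination, of degree at most $(1+B)^{C(n)}$; applying the first step to $\partial\mathcal{S}$ covers it by at most $(1+B)^{C(n)}\epsilon^{-(n-1)}$ balls of radius $\epsilon$, and each such ball meets at most $C(n)$ grid cubes. Hence the number of grid cubes meeting $\mathcal{S}$ is at most $C(n)+(1+B)^{C(n)}\epsilon^{-(n-1)}\leq(1+B)^{C(n)}\epsilon^{-(n-1)}$, and replacing each of them by a containing $\epsilon$-ball gives the assertion.

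The routine parts are the subdivision/Lipschitz estimate in the first step and the boundary bookkeeping in the last step, together with the innocuous check that the rescalings of $\epsilon$ and the passage from cubes to balls cost only constants $C(n)$ that are absorbed into $(1+B)^{C(n)}$. The main obstacle --- and the only place where serious machinery is invoked --- is the triangulation/reparametrization theorem furnishing the $C^{1}$ charts $\phi_{i}$ with controlled $C^{1}$ norm and with the number of charts bounded by a power of $1+B$; everything else in the proof is soft.
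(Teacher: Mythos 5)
The paper does not prove this lemma itself: it is stated as a citation to Bourgain's book (Corollary 9.6), with the accompanying remark that it ``follows from the Yomdin--Gromov triangulation theorem,'' whose proof is deferred to \cite{gro,yom,BiN}. Your argument is a correct reconstruction of exactly the proof Bourgain gives, and it matches the route the paper intends. The decomposition is the right one: reduce to covering the $(n-1)$-dimensional boundary $\partial\mathcal{S}$ via the $C^1$-reparametrization form of Yomdin--Gromov (which gives the $\epsilon^{-(n-1)}$ scaling with the $(1+B)^{C(n)}$ prefactor), and then use the measure hypothesis only to control the $O_n(1)$ grid cubes that sit entirely inside $\mathcal{S}$. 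The two facts you invoke without proof --- that $\partial\mathcal{S}$ is semi-algebraic of dimension $\le n-1$ and degree $(1+B)^{C(n)}$, and that a grid cube meeting both $\mathcal{S}$ and $\mathcal{S}^c$ must meet $\partial\mathcal{S}$ --- are standard and correct for the closed semi-algebraic sets used in the paper; the first follows from effective quantifier elimination together with the fact that $\partial\mathcal{S}$ has empty interior, and the second from connectedness of the cube and closedness of $\mathcal{S}$. No gaps.
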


\begin{theorem}\label{thmdis}
Assume that  the discrepancy of the sequence $\{\vec{x}_j\}_{j=1}^N$ satisfies
\begin{equation*}
  D_{N}(\{\vec{x}_j\}_{j=1}^N)\leq N^{-\varsigma},
\end{equation*}
for some $\varsigma>0$.
Let $S\subset [0,1]^n$ be a semi-algebraic set  with degree less than $B$.  Suppose
\begin{equation*}
  {\rm Leb}(\mathcal{S})\leq N^{-\varsigma}.
\end{equation*}
Then
\begin{equation*}
   A(\mathcal{S}; \{\vec{x}_j\}_{j=1}^N)\leq (1+B)^{C(n)} N^{1-\frac{\varsigma}{n}}.
\end{equation*}
\end{theorem}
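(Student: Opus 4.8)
The plan is to bound $A(\mathcal{S};\{\vec{x}_j\}_{j=1}^N)$ by covering $\mathcal{S}$ with small balls and counting how many sequence points each ball can absorb via the discrepancy hypothesis. First I would set $\epsilon = N^{-\varsigma/n}$, so that the measure hypothesis ${\rm Leb}(\mathcal{S})\le N^{-\varsigma} = \epsilon^n$ puts us exactly in the regime of Lemma~\ref{ledis}. Applying that lemma, $\mathcal{S}$ is covered by a family $\mathcal{B}$ of $\epsilon$-balls with $\#\mathcal{B} \le (1+B)^{C(n)}\epsilon^{-(n-1)} = (1+B)^{C(n)} N^{\varsigma(n-1)/n}$. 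The point count is then $A(\mathcal{S};\{\vec{x}_j\}) \le \sum_{Q\in\mathcal{B}} A(Q;\{\vec{x}_j\})$, so it suffices to control $A(Q;\{\vec{x}_j\})$ for a single $\epsilon$-ball $Q$.

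Next I would use the discrepancy bound. An $\epsilon$-ball is contained in an axis-parallel box of side $2\epsilon$, i.e.\ an element of the family $\mathcal{C}$ of intervals in $[0,1)^b$ appearing in the definition \eqref{counting} of $D_N$ (after the harmless reduction to boxes, noting $n=b$ here — or more precisely one works with the box directly since enlarging $Q$ to its bounding box only increases the count). Hence by \eqref{counting},
\begin{equation*}
  \frac{A(Q;\{\vec{x}_j\}_{j=1}^N)}{N} \le {\rm Leb}(\text{box}) + D_N(\{\vec{x}_j\}_{j=1}^N) \le (2\epsilon)^b + N^{-\varsigma}.
\end{equation*}
Since $\epsilon = N^{-\varsigma/n}$ and $n = b$, we get $(2\epsilon)^b = 2^b N^{-\varsigma}$, so $A(Q;\{\vec{x}_j\}) \le C(b) N^{1-\varsigma}$ for each ball. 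Multiplying by the number of balls,
\begin{equation*}
  A(\mathcal{S};\{\vec{x}_j\}_{j=1}^N) \le (1+B)^{C(n)} N^{\varsigma(n-1)/n} \cdot C(n) N^{1-\varsigma} = (1+B)^{C(n)} N^{1-\varsigma/n},
\end{equation*}
after absorbing the dimensional constant $C(n)$ into $(1+B)^{C(n)}$ (enlarging the exponent $C(n)$ if necessary), which is exactly the claimed bound.

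The main subtlety — and the step I would be most careful about — is the compatibility of dimensions and the reduction of $\epsilon$-balls to boxes in $\mathcal{C}$: the discrepancy in \eqref{counting} is defined with respect to axis-parallel intervals, while Lemma~\ref{ledis} produces Euclidean $\epsilon$-balls, so one must pass to bounding boxes (losing only a dimensional constant) and check that $\mathcal{S}\subset[0,1]^n$ with $n=b$ so that the box side $2\epsilon$ raised to the power $b$ matches $\epsilon^n$ up to a constant. One should also note the edge case where the measure hypothesis forces $\mathcal{S}$ to be coverable at all (if ${\rm Leb}(\mathcal{S})$ were larger than $\epsilon^n$ Lemma~\ref{ledis} would not apply, but that is excluded by assumption), and the trivial case $A(\mathcal{S};\cdot) \le N$ which already gives the bound when $(1+B)^{C(n)} \ge N^{\varsigma/n}$, so no loss of generality there. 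Everything else is routine bookkeeping with the two quoted lemmas.
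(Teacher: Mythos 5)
Your proof is correct and follows essentially the same route as the paper: set $\epsilon = N^{-\varsigma/n}$, cover $\mathcal{S}$ by $(1+B)^{C(n)}\epsilon^{-(n-1)}$ many $\epsilon$-balls via Lemma~\ref{ledis}, bound the count in each ball by $CN^{1-\varsigma}$ using the discrepancy hypothesis, and multiply. The only difference is that you spell out the ball-to-bounding-box reduction and the $n=b$ dimensional bookkeeping, which the paper leaves implicit.
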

\begin{proof}
Let $\epsilon=N^{-\frac{\varsigma}{n}}$. By Lemma \ref{ledis}, ${S}$ can be covered, at most $\frac{(1+B)^C }{\epsilon^{n-1}}$,  $\epsilon $-balls.
Pick  one $\epsilon $-ball, say $J$.  By the fact   $D_{N}(\{\vec{x}_j\}_{j=1}^{N})\leq N^{-\varsigma}$, one has
\begin{equation*}
  A(J; \{\vec{x}_j\}_{j=1}^N)\leq CN \epsilon^n+N^{1-\varsigma}\leq CN^{1-\varsigma},
\end{equation*}
where $C$ depends on the dimension $n$.
Since there are at most $\frac{(1+B)^C }{\epsilon^{n-1}}$ balls, we have
\begin{eqnarray*}
   A(\mathcal{S}; \{\vec{x}_j\}_{j=1}^N) &\leq& (1+B)^C \frac{1}{\epsilon^{n-1}} N^{1-\varsigma} \\
   &=& (1+B)^C N^{\frac{n-1}{n}\varsigma} N^{1-\varsigma} \\
   &=&  (1+B)^C N^{1-\frac{\varsigma}{n}}.
\end{eqnarray*}
\end{proof}
\begin{remark}\label{redec82}
\begin{itemize}
\item Theorem \ref{thmdis} says that there is a factor  $b$ loss (referred to as dimension loss) when passing discrepancy from intervals to semi-algebraic sets.   The
dimension loss is not surprising. For example, there is also a dimension loss passing the discrepancy to the isotropic discrepancy \cite[Theorem 1.6]{kui74}.
  \item  The   proof  of Theorem \ref{thmdis} is taken from Bourgain \cite{bbook}, where no explicit bounds are given.
\end{itemize}
\end{remark}
For a set $S\subset [0,1)^2$, denote by $l(S)$ the length of the longest line segment
contained in $S$.
\begin{lemma}\cite[Theorem 5.1]{bk}\label{lebk}
Assume $\alpha_1\in {\rm DC}(\kappa,\tau)$ and $\alpha_2\in {\rm DC}(\kappa,\tau)$.
Let $S\subset [0,1)^2$ be a semi-algebraic set with degree less than $B$ and
\begin{equation*}
  l(S)\leq \frac{1}{2}\min_{1\leq |k|\leq 2N}||k\alpha||.
\end{equation*}
Then
\begin{equation*}
  \#\{k=(k_1,k_2)\in \Z^2: |k|\leq N, (k_1\alpha_1,k_2\alpha_2)\in S \mod \Z^2\}
\end{equation*}
\begin{equation}\label{gbkb}
\leq (1+B)^{C(d)} C(\kappa,\tau)N^{3\kappa-\frac{9}{4}}.
\end{equation}
\end{lemma}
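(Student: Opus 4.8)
The plan is to follow the argument of \cite[Theorem 5.1]{bk}, which combines the semi-algebraic cell decomposition with the two one-dimensional Diophantine properties of $\alpha_1$ and $\alpha_2$, organized so as to exploit the hypothesis that $S$ contains no segment longer than $\eta:=\tfrac12\min_{1\le|k|\le 2N}\|k\alpha\|$. First I would apply the Yomdin--Gromov semi-algebraic cell decomposition (in the spirit of Lemmas \ref{lediscom} and \ref{ledis}): $S$ is a union of at most $(1+B)^{C}$ cells, each of which is a point, a piece of an algebraic arc of degree $\le B^{C}$, or a cylindrical $2$-cell $\{(t,y):t\in I,\ \psi_-(t)<y<\psi_+(t)\}$ (or the same with the roles of $t$ and $y$ reversed) with $\psi_\pm$ semi-algebraic of degree $\le B^{C}$. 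It then suffices to bound the number of lattice points falling into a single cell and multiply by $(1+B)^{C}$.

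The geometric heart is to turn the no-long-segment hypothesis into a quantitative thinness statement for each cell. A full-dimensional semi-algebraic cell of bounded degree and diameter $d$ contains a segment of length $\gtrsim (1+B)^{-C}d$, so $\operatorname{diam}(C)\lesssim (1+B)^{C}\eta$ for every $2$-cell $C$; moreover a thin tube around a genuinely curved algebraic arc of curvature $\gtrsim (1+B)^{-C}$ has longest inscribed segment comparable to the square root of its width, so that width is $\lesssim (1+B)^{C}\eta^{2}$. After one further $(1+B)^{C}$-fold subdivision into monotone pieces, each $2$-cell (and, a fortiori, each $1$-cell) is contained in a tube $T=\{(t,y):t\in I,\ \|y-\phi(t)\|<w\}$ about a monotone graph $y=\phi(t)$ of degree $\le B^{C}$, with $w\lesssim (1+B)^{C}\eta^{2}$; in particular $\operatorname{Leb}(S)\lesssim (1+B)^{C}\eta^{2}$.

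Next I would count the lattice points in one tube $T$ by slicing over the first coordinate. By $\alpha_1\in {\rm DC}(\kappa,\tau)$ and the discrepancy bound of Lemma \ref{ledissh} (with $b=1$), the number of $k_1$ with $|k_1|\le N$ and $k_1\alpha_1\bmod\Z\in I$ is $\le C(\kappa,\tau)\bigl(N|I|+N^{1-1/\kappa}(\log N)^{2}\bigr)$; for each admissible $k_1$ the admissible $k_2$ lie in the $(k_2\mapsto k_2\alpha_2)$-preimage of an interval of length $2w\lesssim \eta^{2}$, which, when $2w$ falls below $\min_{1\le|m|\le 2N}\|m\alpha_2\|$, forces at most one $k_2$, and otherwise is again estimated by the one-dimensional discrepancy. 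Summing these estimates over all cells, the terms proportional to $N|I|$ only produce a bound of size $\sim N$, which is too weak; to reach $N^{3\kappa-9/4}$ one must additionally feed the global bound $\operatorname{Leb}(S)\lesssim (1+B)^{C}\eta^{2}$ into the semi-algebraic covering lemma (Lemma \ref{ledis}) and combine it with the discrepancy of the two-dimensional product sequence $(k_1\alpha_1,k_2\alpha_2)$, optimizing the covering radius. The exponent $3\kappa-\tfrac94$ then emerges from balancing the $\eta^{2}$-thinning (the ``$2$''), the one-dimensional discrepancy errors in the two coordinates (the ``$\tfrac14$'' from the optimization), and the two appearances of the Diophantine exponent $\kappa$ through $\eta\sim N^{-O(\kappa)}$ and $D_N\sim N^{-1/\kappa}$.

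The main obstacle I anticipate is exactly the geometric step: making ``no long segment $+$ bounded degree $\Rightarrow$ width $\lesssim \eta^{2}$'' quantitative and uniform in the degree $B$, and deciding, cell by cell, whether to slice in the $t$- or the $y$-direction (a cell that is a thin near-vertical tube must be sliced over $y$, not $t$). A secondary difficulty is the exponent optimization in the counting step, which has to thread between the crude per-slice estimate and the global measure bound, and which is the source of the slightly awkward constant $9/4$.
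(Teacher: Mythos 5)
The paper does not prove this lemma; it is cited verbatim from \cite[Theorem~5.1]{bk} and used as a black box, so there is no in-paper proof to compare against --- your sketch has to be judged as a reconstruction of the Bourgain--Kachkovskiy argument. The overall shape of your plan (cylindrical cell decomposition, conversion of the ``no long segment'' hypothesis into a thinness/measure bound, lattice-point counting using the one-dimensional Diophantine properties of $\alpha_1$ and $\alpha_2$) is in the right spirit, but there are two genuine gaps. First, the opening geometric claim that a full-dimensional semi-algebraic cell of bounded degree and diameter $d$ contains a segment of length $\gtrsim (1+B)^{-C}d$ is false: the set $\{(t,y):t\in[0,1],\,|y-t^2|<\delta\}$ has fixed small degree and diameter $\sim 1$, yet its longest inscribed segment has length $\sim\sqrt{\delta}$, which can be taken arbitrarily small. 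So the conclusion ``$\mathrm{diam}(C)\lesssim(1+B)^{C}\eta$ for every $2$-cell $C$'' does not follow; one can only extract $\mathrm{Leb}(C)\lesssim(1+B)^{C}\eta^{2}$ after splitting each cell into nearly straight pieces (where the hypothesis controls the length) and genuinely curved pieces (where it controls the width), and the choice of slicing direction must be made per piece rather than via a uniform diameter bound.

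Second, and more seriously, the exponent $3\kappa-\tfrac{9}{4}$ is never actually derived. You correctly observe that naive slicing produces only an $O(N)$ bound and then assert that the exponent ``emerges from balancing'' the $\eta^{2}$-thinness, the one-dimensional discrepancies in the two coordinates, and two uses of the Diophantine exponent; but no optimization is carried out and the resulting numerology is left unchecked. That balancing is precisely the content of \cite[Theorem~5.1]{bk}, and until it is made explicit --- together with the corrected cell geometry above --- the proposal is a plan, not a proof. You flag both of these points yourself in your closing paragraph, which is honest, but it confirms that the crucial quantitative steps remain open.
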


\section{Proof of all the results in Section \ref{sapp}}

Applying    Theorem \ref{thmholder} with $\sigma=1-\varepsilon$, 
 Theorem \ref{thmapp1'}  follows from Theorem \ref{thmapp1},
Theorem \ref{thmapp4'} follows from Theorem \ref{thmapp4}, Theorem \ref{thmapp2'} follows from Theorem \ref{thmapp2},  Theorem \ref{thmapp5'} follows from Theorem \ref{thmapp5}, Theorem \ref{thmapp7'} follows from Theorem \ref{thmapp7} and Theorem \ref{thmapp3'} follows from Theorem \ref{thmapp3}.

Applying   strong Diophantine frequencies to  Theorems   \ref{thmapp5'} and  \ref{thmapp3'}, we obtain Corollaries
  \ref{coroapp5} and \ref{coroapp3}.

 With large deviation theorems \ref{thmapp4} and   \ref{thmapp2}  at hand, the proof  of Theorems \ref{thmapp4new} and    \ref{thmapp2''}  is rather standard. We refer the readers to \cite[Section 3]{gafa},  \cite[Section 6]{bgs} and \cite[Chapter XV]{bbook} for details. We note that the only difference is that the degree of semi-algebraic sets is at most $e^{(\log N)^C}$ in our cases, not $N^C$.

By the discussion above, in order to prove all the results in Section \ref{sapp}, it suffices to prove Theorems 
\ref{thmapp1}, \ref{coroapp1}, \ref{thmapp4},   \ref{coroapp4}, \ref{thmapp2},    \ref{thmapp5}, \ref{thmapp7},    \ref{thmapp3} and Corollary \ref{coroapp5b}.

In this section, $C$($c$) is always a  large (small) constant. It may change even in the same formula.
\begin{lemma}\cite[Prop.7.19]{bbook}\label{leinitial}
	Let $H(x)$ be given by \eqref{opapp1} and the Lyapunov exponent is given by \eqref{G21}.
Suppose $L(E)>0$. Then  there exist $0<\sigma<1$ and $\zeta>0$ such that
for large $N$, there exists $X_N\subset \T^b$ such that ${\rm Leb}(X_N)\leq e^{-N^\zeta}$ and
 for $x\notin X_N$, one of the intervals
\begin{equation*}
  \Lambda=[1,N];[1,N-1];[2,N];[2,N-1]
\end{equation*}
will satisfy
\begin{equation*}
  |G_{\Lambda}(n_1,n_2)|\leq e^{-  L(E)|n_1-n_2|+N^\sigma}.
\end{equation*}
\end{lemma}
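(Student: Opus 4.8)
The plan is to deduce the Green's function estimate from the large deviation theorem for transfer matrices via Cram\'er's rule, following the one-dimensional scheme of Bourgain--Goldstein and Goldstein--Schlag (this is Bourgain's \cite[Prop.~7.19]{bbook}). Write $\mu_{[p,q]}(x)=\det\big(R_{[p,q]}(H(x)-E)R_{[p,q]}\big)$, with $\mu_\emptyset=1$. Since every off-diagonal entry of $H(x)-E$ equals $1$, Cram\'er's rule gives, for $p\le n_1\le n_2\le q$,
\[
 |G_{[p,q]}(x;n_1,n_2)|=\frac{|\mu_{[p,n_1-1]}(x)|\,|\mu_{[n_2+1,q]}(x)|}{|\mu_{[p,q]}(x)|},
\]
and each determinant is controlled by a transfer matrix: $|\mu_{[a,b]}(x)|\le\|A^E_{b-a+1}(x+(a-1)\omega)\|$, while conversely $\|A^E_N(x)\|\le 2\max\{|\mu_{[1,N]}(x)|,|\mu_{[1,N-1]}(x)|,|\mu_{[2,N]}(x)|,|\mu_{[2,N-1]}(x)|\}$ because the four entries of $A^E_N(x)$ are, up to sign, exactly these four determinants. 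So the task reduces to upper bounds for numerator determinants and a lower bound for one full determinant.

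For the inputs I would fix an auxiliary scale $m=\lfloor N^{1/2}\rfloor$ and invoke the Bourgain--Goldstein large deviation theorem (valid since $L(E)>0$, $v$ analytic, $\omega$ Diophantine): there is $\zeta_0>0$ so that for each $k\ge m$ one has $\big|\tfrac1k\log\|A^E_k(y)\|-L_k(E)\big|<k^{-1/4}$ off a set of measure $<e^{-k^{\zeta_0}}$, where $L_k(E)=\tfrac1k\int\log\|A^E_k\|\downarrow L(E)$. Let $X_N$ be the union over $|j|\le 2N$ of the scale-$m$ exceptional sets translated by $j\omega$, together with the scale $N,N-1,N-2$ exceptional sets based at $x$ and $x+\omega$; then $\mathrm{Leb}(X_N)\le 2Ne^{-m^{\zeta_0}}+4e^{-(N-2)^{\zeta_0}}\le e^{-N^{\zeta}}$ for a suitable $\zeta>0$. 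For $x\notin X_N$: (i) by submultiplicativity across blocks of length $m$, every sub-interval $[a,b]\subset[1,N]$ of length $k$ satisfies $|\mu_{[a,b]}(x)|\le\|A^E_k(x+(a-1)\omega)\|\le e^{L(E)k+CN^{1/2}}\le e^{L(E)k+N^{3/4}}$; (ii) from $\|A^E_N(x)\|\ge e^{(L(E)-N^{-1/4})N}$ and the displayed relation between $\|A^E_N\|$ and the four determinants, at least one of $\mu_{[1,N]},\mu_{[1,N-1]},\mu_{[2,N]},\mu_{[2,N-1]}$ at $x$ has modulus $\ge\tfrac14 e^{L(E)N-N^{3/4}}$; call the corresponding interval $\Lambda$, so $|\mu_\Lambda(x)|\ge\tfrac14 e^{L(E)|\Lambda|-N^{3/4}-2L(E)}$. (The unimodularity identity $\det A^E_N\equiv1$ also shows one of the four is $\ge\tfrac12$ unconditionally, which is what makes $\Lambda$ available for every $x$.)

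Combining via Cram\'er's rule on this $\Lambda=[p,q]$, for $n_1\le n_2$ in $\Lambda$ the exponents add up to $L(E)\big[(n_1-p)+(q-n_2)-(q-p+1)\big]+O(N^{3/4})=-L(E)|n_1-n_2|+O(N^{3/4})$, giving $|G_\Lambda(x;n_1,n_2)|\le e^{-L(E)|n_1-n_2|+N^{\sigma}}$ with $\sigma=7/8$ (any $\sigma<1$ close to $1$ works, adjusting $m$); the case $n_1\ge n_2$ follows from symmetry of $G_\Lambda$. The main obstacle is the \emph{lower} large deviation estimate used in step (ii): that $\tfrac1N\log\|A^E_N(x)\|$ is not appreciably below $L(E)$ outside a set of measure $e^{-N^{\zeta}}$. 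This is the deep point, resting on the subharmonicity of $\tfrac1N\log\|A^E_N(\cdot)\|$ (Thouless formula / Riesz representation) together with the Bourgain--Goldstein--Schlag subharmonic-function estimates, and it is where analyticity of $v$ and the Diophantine property of $\omega$ are genuinely used; the remaining ingredients (Cram\'er's rule, the four-interval unimodularity trick, and the upper bounds via block submultiplicativity) are elementary once the large deviation theorem is available.
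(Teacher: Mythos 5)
The paper does not prove Lemma \ref{leinitial}; it cites it verbatim as \cite[Prop.~7.19]{bbook}, so the only available comparison is with the standard argument in Bourgain's book. Your proposal follows exactly that standard route: Cram\'er's rule expressing $G_{[p,q]}(n_1,n_2)$ as a ratio of block determinants, identification of those determinants with transfer-matrix entries, an upper bound on the numerator blocks and a lower bound on the denominator via the large deviation theorem, and the ``one of four intervals'' selection coming from $\max_{ij}|(A^E_N)_{ij}|\ge \tfrac12\|A^E_N\|$ for a $2\times 2$ matrix. In that sense the approach is correct and is the same as the cited one.

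There is, however, one genuine gap you should close. Your step (i) controls $|\mu_{[a,b]}(x)|$ for a sub-block of length $k$ by chaining together $\lceil k/m\rceil$ copies of the scale-$m$ LDT via block submultiplicativity. What that chaining actually yields is
\[
\|A^E_k(y)\|\le e^{\,k\,L_m(E)\,+\,k\,m^{-1/4}\,+\,O(m)},
\]
with $L_m(E)$, \emph{not} $L(E)$, in the leading term. Since $L_m(E)\ge L(E)$, to reach your claimed bound $e^{L(E)k+N^\sigma}$ with $\sigma<1$ you must additionally know the quantitative rate of convergence
\[
L_m(E)-L(E)\lesssim \frac{1}{m},
\]
so that $(L_m(E)-L(E))k\le CN/m$ is $\le N^\sigma$ after tuning $m$. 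This inequality is not a consequence of submultiplicativity alone (submultiplicativity only gives monotonicity $L_m\downarrow L$ with no rate); it is a nontrivial theorem relying on the avalanche principle and the LDT itself (Goldstein--Schlag; also in \cite{bbook}). Your write-up states the LDT centered at $L_k(E)$ and then silently replaces $L_m$ by $L$ in the upper bound $e^{L(E)k+CN^{1/2}}$, without invoking this rate. You should either state the LDT in the stronger form centered at $L(E)$ (which already bakes in the rate and is how it is presented in \cite{bbook}), or explicitly quote and use $L_m(E)-L(E)=O(1/m)$. With that ingredient added, the argument is complete; the remaining bookkeeping (choice of $m$, the $O(m)$ boundary term, the union bound over $|j|\le 2N$ translates, and the unimodularity $\det A^E_N=1$ guaranteeing the denominator is nonzero) is routine and appears correctly, up to inconsequential exponent arithmetic.
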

\begin{proof}[\bf Proof of Theorem \ref{thmapp1}]
By Lemma \ref{leinitial},
 there exist $0<\sigma_1<1$ and $\zeta_1>0$ such that
for any large $N_1$, there exists $X_{N_1}\subset \T^b$ such that ${\rm Leb}(X_{N_1})\leq e^{-N_1^{\zeta_1}}$ and
 for $x\notin X_{N_1}$, one of the intervals
\begin{equation}\label{gnov303}
  \Lambda(N_1)=[1,N_1];[1,N_1-1];[2,N_1];[2,N_1-1]
\end{equation}
will satisfy
\begin{equation}\label{gnov301}
  |G_{\Lambda(N_1)}(n_1,n_2)|\leq e^{-  L(E)|n_1-n_2|+N_1^{\sigma_1}}.
\end{equation}
By approximating the analytic function with trigonometric polynomials given by  \eqref{glc1news1} and using Taylor expansions,
we can further  assume that $ X_{N_1}$ is a semi-algebraic set with degree less than $e^{(\log N_1)^C}$. This argument is quite standard. We refer to \cite{bbook} for details.
By Lemma \ref{ledissh} and Theorem \ref{thmdis},
for any $e^{(\log N_1)^C}\leq N_3\leq e^{N_1^c}$,
\begin{equation*}
  A(X_{N_1}; \{n\omega\}_{n=1}^{N_3}) \leq N_3^{1-\frac{1}{b\kappa}+\varepsilon}.
\end{equation*}
Let $N_2=N_3^{\frac{1}{C}}$. Applying \eqref{gnov301} to $N_2$, one has
\begin{equation}\label{gnov302}
  |G_{\Lambda(N_2)}(n_1,n_2)|\leq e^{-  L(E)|n_1-n_2|+N_2^{\sigma_1}},
\end{equation}
except for  a set of $x$ with measure less than $e^{-N_2^{\sigma_1}}$.
Now Theorem \ref{thmapp1} follows from Theorem \ref{thmu}.
We should mention that the elementary region is  $ [-N_1,N_1]$ in Theorem \ref{thmu} which is slightly different from \eqref{gnov303}. However,  the same statement is true.
\end{proof}
\begin{proof}[\bf Proof of Theorem \ref{thmapp7}]
The proof of Theorem \ref{thmapp7} is similar to that of Theorem \ref{thmapp1}.
The difference is that instead of Lemma \ref{leinitial}, we need to use the corresponding statements in p.3575 \cite{taojde} for  initial scales.
We also need to use Lemma \ref{ledisskp} instead of Lemma \ref{ledissk}. 
\end{proof}

\begin{proof}[\bf Proof of Theorem \ref{thmapp2}]
Let   $N_2=e^{N_1^c}$.  Assume the Green's function in Theorem  \ref{thmapp2} satisfies properties $P$ with parameters  $(\mu,\zeta,c_2)$ at sizes $N_1$
and $N_2$.
 Let $N_3=N_2^C$. 
We can  assume that $ X_{N_1}$ is a semi-algebraic set with degree less than $e^{(\log N_1)^C}$.
By Lemma \ref{lediscom}, $ X_{N_1}$ is consisted  of at most $e^{(\log N_1)^C}$ intervals with measure less than $e^{-N_1^\zeta}$. Let $I$ be one of the intervals.
Since $\omega$ satisfies Diophantine condition, for any $x\in\T$, there is at most one $n\in \Z^d$ with  $|n|\leq N_3$ such that $x+n\omega\mod \Z\in I$.
Therefore,
\begin{equation}\label{gnov3010}
  A(X_{N_1}; \{n\omega\}_{n=1}^{N_3}) \leq e^{(\log N_1)^C}\leq N_3^{\varepsilon}.
\end{equation}
By Theorem \ref{thmu}, we have the Green's function satisfies
properties $P$ with parameters     $(\sigma,\sigma-\varepsilon,c_2-N_3^{-\vartheta})$  at size $N_3$.
Standard Neumann series expansion ensures that for  any large $N_0$, there exists $\lambda_0$ such that for any $\lambda>\lambda_0$, the Green's functions have properties $P$ with parameters  $(\sigma,\sigma-\varepsilon,\frac{4}{5}c_1)$ at all sizes
smaller than $N_0$ \cite[Theorem 4.3]{jls}. Now Theorem   \ref{thmapp2} follows by standard induction. See pages 15 and 16 in \cite{jls} for details.

\end{proof}

\begin{proof}[\bf Proof of Theorem  \ref{thmapp4}]
Fix $N_1$. Let $N_2=e^{N_1^c} $ and $N_3=N_2^C$. Assume the Green's function in Theorem  \ref{thmapp4}    satisfies properties $P$ with parameters  $(\mu,\zeta,c_2)$ at sizes $N_1$
and $N_2$.  We can again  assume that $ X_{N_1}$ is a semi-algebraic set with degree less than  $e^{(\log N_1)^C}$.
By Lemma \ref{ledissh} and Theorem \ref{thmdis},
\begin{equation}\label{gdec31}
  A(X_{N_1}; \{n\omega\}_{n=1}^{N_3}) \leq N_3^{1-\frac{1}{b\kappa}+\varepsilon}.
\end{equation}
By Theorem \ref{thmu}, we have the Green's function satisfies
properties $P$ with parameters     $$\left(\sigma,\frac{\sigma-1}{b^2\kappa}+\frac{1}{b^3\kappa^2}-\varepsilon,c_2-N_3^{-\vartheta}\right)$$  at size $N_3$.
As the arguments at  the end  of proof of Theorem  \ref{thmapp2}, large $\lambda$ will ensure the initial scales and hence Theorem  \ref{thmapp4} follows by induction.
\end{proof}

\begin{proof}[\bf Proof of Theorems  \ref{coroapp1} and   \ref{coroapp4}]
The proof of Theorems   \ref{coroapp1} and   \ref{coroapp4} closely follow that of Theorems \ref{thmapp1'} and   \ref{thmapp4'}.
The difference is that we need to use Lemma \ref{ledissh1} instead of  Lemma \ref{ledissh}.
\end{proof}

\begin{proof}[\bf Proof of Theorem  \ref{thmapp5}]
Replacing Lemma \ref{ledissh} with Lemma \ref{ledissk},   Theorem  \ref{thmapp5} follows Theorem \ref{thmapp4}.
\end{proof}

\begin{proof}[\bf Proof of Corollary \ref{coroapp5b}]
By formula (3.53) in \cite{bgscmp},  one has for almost every $\alpha$,
\begin{equation}\label{gdec31new}
  A(X_{N_1}; \{n\omega\}_{n=1}^{N_3}) \leq N_3^{1-\frac{1}{3}+\varepsilon}.
\end{equation}
Let $\delta=1/3-\varepsilon$. Applying $\tilde{\sigma}=1$, $\sigma=1-\varepsilon$ and $b_i=2$ in Theorem \ref{thmu} and then Theorem \ref{thmholder}, we obtain 
Corollary \ref{coroapp5b}. Indeed, $1/18$ comes from $(1/3)^2/b$.
	
\end{proof}

\begin{proof}[\bf Proof of Theorem  \ref{thmapp3}]
The proof of Theorem  \ref{thmapp3} is similar to that of
 Theorems \ref{thmapp2} and \ref{thmapp4}. We only  point  out the modifications.
 \begin{itemize}
   \item The induction goes in the following way. The semi-algebraic set $X_N$ intersecting with any line segments contained in $[0,1)^2$ has Lebesgue measure at most
   $e^{-N^{\zeta}}$. The assumption that $v$ is not constant on any line segments  ensure the initial scales.
   \item  Replace \eqref{gnov3010} or \eqref{gdec31} with \eqref{gbkb}.
   \item  Since  the induction is based on semi-algebraic sets  only on line segments, the Cartan's estimate will not lead to dimension loss. In other words, when \eqref{gdec3u} is used to do the induction,   $b_i=1$.
    \end{itemize}
\end{proof}
\begin{remark}\label{rec}
\begin{enumerate}
  \item The calculation of the bound in  Theorem  \ref{thmapp3} goes in the following way. By \eqref{gbkb}, the sublinear bound is
   \begin{equation*}
     3\kappa-\frac{9}{4}=1-\delta, \text{ where } \delta=\frac{13}{4}-3\kappa.
   \end{equation*}
    Therefore, the  bound in \eqref{gdec3u} becomes ($b_i=1$)
    \begin{eqnarray*}
      \frac{\sigma-1}{b_i}\delta+\frac{\delta^2}{b_i} &=& (\sigma-1)\delta+\delta^2\\
       &=& (\sigma-1)\left(\frac{13}{4}-3\kappa\right)+\left(\frac{13}{4}-3\kappa\right)^2.
    \end{eqnarray*}
  \item  The induction   of  Theorem  \ref{thmapp3}  follows the corresponding parts in \cite{bk}. Our quantitative approaches developed in the paper allow us to obtain the explicit bound.
\end{enumerate}
\end{remark}

  \appendix
\section{Cartan's estimates for non-self-adjoint matrices}
In the following, we will prove the several variables matrix-valued  Cartan estimate (Lemma \ref{mcl}). The proof is similar to that in \cite{bbook,bkick,jls,gafa,bgs}. The improvement is that  we do not assume the matrix is self-adjoint.
\begin{lemma}\label{sl}
	Let $T$ be the matrix
	\begin{equation*}
	T=\left(
	\begin{array}{cc}
	T_1&T_2\\
	{T}_3&T_4
	\end{array}
	\right),
	\end{equation*}
	where $T_1$ is an invertible $n\times n$ matrix , $T_2$ is an $n\times k$ matrix,  $T_3$ is a $k\times n$ matrix, and $T_4$ is a $k\times k$ matrix. Let
	$$S=T_4- {T}_3T_1^{-1} T_2.$$
	Then $T$ is invertible if and only if $S$ is invertible, and
	\begin{equation}\label{sl1}
	\|S^{-1}\|\leq \|T^{-1}\|\leq C(1+||T_2||)(1+||T_3||)(1+\|T_1^{-1}\|)^2(1+\|S^{-1}\|),
	\end{equation}
	where $C$  is an absolute constant.
\end{lemma}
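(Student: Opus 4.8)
The plan is to prove the Schur-complement identity and the two-sided norm estimate \eqref{sl1} by a direct block-matrix computation, following the classical approach but tracking the non-self-adjoint dependence carefully. First I would record the standard block factorization: whenever $T_1$ is invertible, one has
\begin{equation*}
T=\begin{pmatrix} I & 0 \\ T_3 T_1^{-1} & I\end{pmatrix}\begin{pmatrix} T_1 & 0 \\ 0 & S\end{pmatrix}\begin{pmatrix} I & T_1^{-1}T_2 \\ 0 & I\end{pmatrix},
\end{equation*}
where $S=T_4-T_3T_1^{-1}T_2$ is the Schur complement. Since the two triangular factors are invertible with explicit triangular inverses, this identity immediately gives that $T$ is invertible if and only if $S$ is, and yields the explicit formula
\begin{equation*}
T^{-1}=\begin{pmatrix} I & -T_1^{-1}T_2 \\ 0 & I\end{pmatrix}\begin{pmatrix} T_1^{-1} & 0 \\ 0 & S^{-1}\end{pmatrix}\begin{pmatrix} I & 0 \\ -T_3 T_1^{-1} & I\end{pmatrix}.
\end{equation*}

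For the lower bound $\|S^{-1}\|\le\|T^{-1}\|$, I would observe that $S^{-1}$ is exactly the lower-right $k\times k$ block of $T^{-1}$, as one reads off from the product above (the $(2,2)$ entry is $S^{-1}$). Since compressing a matrix to a sub-block cannot increase the operator norm, $\|S^{-1}\|\le\|T^{-1}\|$ follows. Alternatively one can note $S^{-1}=R_2 T^{-1}R_2^{*}$ where $R_2$ is the coordinate projection onto the last $k$ coordinates, and use $\|R_2\|=\|R_2^{*}\|=1$.

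For the upper bound I would simply take operator norms in the explicit formula for $T^{-1}$. The outer triangular factors have norm at most $1+\|T_1^{-1}T_2\|\le 1+\|T_1^{-1}\|\,\|T_2\|$ and $1+\|T_3T_1^{-1}\|\le 1+\|T_3\|\,\|T_1^{-1}\|$ respectively, while the middle block-diagonal factor has norm $\max\{\|T_1^{-1}\|,\|S^{-1}\|\}\le (1+\|T_1^{-1}\|)(1+\|S^{-1}\|)$. Multiplying these three bounds and absorbing constants gives
\begin{equation*}
\|T^{-1}\|\le (1+\|T_2\|\,\|T_1^{-1}\|)(1+\|T_3\|\,\|T_1^{-1}\|)(1+\|T_1^{-1}\|)(1+\|S^{-1}\|),
\end{equation*}
which is bounded by the claimed right-hand side $C(1+\|T_2\|)(1+\|T_3\|)(1+\|T_1^{-1}\|)^2(1+\|S^{-1}\|)$ after using $1+ab\le (1+a)(1+b)$. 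Since no self-adjointness was invoked anywhere — the block factorization and the sub-block compression argument are purely algebraic — there is really no genuine obstacle here; the only mild care needed is in bookkeeping the powers of $(1+\|T_1^{-1}\|)$ so that the exponent $2$ in the statement is correct, which is exactly what the estimate above delivers (one factor from the middle block, one from the $T_2$-term, and the $T_3$-term contributes the remaining one after regrouping). This lemma will then feed into the proof of the matrix-valued Cartan estimate (Lemma \ref{mcl}) by allowing one to reduce the $N\times N$ resolvent bound to an estimate on the $M\times M$ Schur complement indexed by the small "bad" set $V$.
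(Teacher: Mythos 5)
Your factorization and inverse formula are correct, and your argument for the lower bound $\|S^{-1}\|\le\|T^{-1}\|$ (reading off $S^{-1}$ as the $(2,2)$ block of $T^{-1}$) is exactly the paper's. But the upper bound as you have written it does not yield the exponent $2$ in the statement: with your symmetric LDU-type factorization
\[
T^{-1}=\begin{pmatrix} I & -T_1^{-1}T_2 \\ 0 & I\end{pmatrix}\begin{pmatrix} T_1^{-1} & 0 \\ 0 & S^{-1}\end{pmatrix}\begin{pmatrix} I & 0 \\ -T_3 T_1^{-1} & I\end{pmatrix},
\]
each of the three factors contributes one full factor of $(1+\|T_1^{-1}\|)$ (from $T_1^{-1}T_2$, from $\max\{\|T_1^{-1}\|,\|S^{-1}\|\}$, and from $T_3T_1^{-1}$), so the best you can get is $(1+\|T_1^{-1}\|)^3(1+\|T_2\|)(1+\|T_3\|)(1+\|S^{-1}\|)$. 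This is not an artifact of sloppy bookkeeping: taking $\|T_1^{-1}\|\to\infty$ with the other norms fixed shows that the product of your three factor-norms genuinely grows like $\|T_1^{-1}\|^3$, so no regrouping recovers the exponent $2$. Your closing parenthetical, which counts ``one, one, and one more'' and asserts this delivers exponent $2$, is a miscount.

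The paper avoids this by an asymmetric factorization that keeps $T_2$ with $S$ rather than pairing it with $T_1^{-1}$: writing $T=\begin{pmatrix} I & 0 \\ T_3T_1^{-1} & I\end{pmatrix}\begin{pmatrix} I & T_2 \\ 0 & S\end{pmatrix}\begin{pmatrix} T_1 & 0 \\ 0 & I\end{pmatrix}$ gives
\[
T^{-1}=\begin{pmatrix} T_1^{-1} & 0 \\ 0 & I\end{pmatrix}\begin{pmatrix} I & -T_2S^{-1} \\ 0 & S^{-1}\end{pmatrix}\begin{pmatrix} I & 0 \\ -T_3T_1^{-1} & I\end{pmatrix},
\]
and now the three factor-norms are bounded by $1+\|T_1^{-1}\|$, by $(1+\|T_2\|)(1+\|S^{-1}\|)$, and by $(1+\|T_3\|)(1+\|T_1^{-1}\|)$, giving exactly the claimed $(1+\|T_1^{-1}\|)^2(1+\|T_2\|)(1+\|T_3\|)(1+\|S^{-1}\|)$. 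The fix is thus to change the factorization, not the bookkeeping. For what it is worth, the weaker exponent $3$ would still suffice for the Cartan estimate (Lemma \ref{mcl}), since there the extra power of $B_2$ is swallowed by the crude condition $\epsilon\le(1+B_1+B_2)^{-10M}$, but as written your proof establishes a strictly weaker inequality than the one stated.
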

\begin{proof}
	It is easy to check that
	\begin{equation}\label{appaug1}
	T=\left(
	\begin{array}{cc}
	T_1&T_2\\
	{T}_3&T_4
	\end{array}
	\right)=\left(
	\begin{array}{cc}
	I&0\\
	{T}_3T_1^{-1}&I
	\end{array}
	\right)\left(
	\begin{array}{cc}
	I&T_2\\
	0&S
	\end{array}
	\right)\left(
	\begin{array}{cc}
	T_1&0\\
	0&I
	\end{array}
	\right).
	\end{equation}
	It implies $T$ is invertible if and only if $S$ is invertible.  
	By  \eqref{appaug1}, one has
	\begin{eqnarray}
		T^{-1}&=&  \left(
		\begin{array}{cc}
			T_1&0\\
			0&I
		\end{array}
		\right)^{-1}\left(
		\begin{array}{cc}
			I& T_2 \\
			0& S
		\end{array}
		\right)^{-1}  \left(
		\begin{array}{cc}
			I&0\\
		 {T}_3T_1^{-1}&I
		\end{array}
		\right)^{-1} \nonumber \\
		&=& \left(
		\begin{array}{cc} 
			T_1^{-1}&0\\
			0&I
		\end{array}
		\right)\left(
		\begin{array}{cc}
			I&-T_2S^{-1}\\
			0& S^{-1}
		\end{array}
		\right)   \left(
		\begin{array}{cc}
			I&0\\
			-{T}_3T_1^{-1}&I
		\end{array}
		\right)\label{app1}\\
		&=&
		\left(
		\begin{array}{cc}
			\star&\star\\
			\star& S^{-1}
		\end{array}
		\right).\label{app2}
	\end{eqnarray}
Now 
the second inequality of \eqref{sl1} follows from \eqref{app1} and the first one follows from \eqref{app2}.
\end{proof}
 Denote by $\mathcal{D}(z,r)$  the standard disk on $\mathbb{C}$ of center  $z$ and radius  $r>0$.

\begin{lemma}\cite[Lemma 2.15]{gs}\label{svcl}
	Let $f(z_1,\cdots,z_J)$ be an analytic function defined in a ploydisk $\mathcal{P}=\prod\limits_{1\leq i\leq J}\mathcal{D}(z_{i,0},1/2)$ and $\phi=\log|f|$. Let $\sup\limits_{\underline{z}\in \mathcal{P}}\phi(\underline{z})\leq M,m\leq \phi(\underline{z}_0)$, $\underline{z}_0=(z_{1,0},\cdots,z_{J,0})$. Given sufficiently large $F $, there exists a set $\mathcal{B}\subset\mathcal{P}$  such that
	\begin{equation}\label{cal1}
	\phi(\underline{z})>M-C(J)F(M-m), \text{ for any }\ \underline{z}\in \prod\limits_{1\leq i\leq J}\mathcal{D}(z_{i,0},1/4)\setminus \mathcal{B},
	\end{equation}
	and
	\begin{equation}\label{cal2}
	\mathrm{mes}(\mathcal{B}\cap\mathbb{R}^J)\leq  C(J)e^{-F^{1/J}}.
	\end{equation}
\end{lemma}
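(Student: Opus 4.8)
The plan is to prove Lemma~\ref{svcl} by induction on the number of variables $J$, the base case $J=1$ being the classical one–variable Cartan estimate, and the inductive step reducing the $J$–variable situation to a single one–variable slice together with the $(J-1)$–variable hypothesis, glued by a Fubini argument. Throughout I will freely intersect the exceptional sets produced at each stage with the relevant smaller polydisk so that every bad set lies inside $\prod_i \mathcal{D}(z_{i,0},1/4)$.

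For $J=1$, after translating $z_{1,0}=0$, I would first apply Jensen's formula on the circle of radius $3/4$ to bound the number $n$ of zeros $\zeta_1,\dots,\zeta_n$ of $f$ in $\mathcal{D}(0,1/2)$ by $n\le C(M-m)$, using $\phi(0)\ge m$ and $\phi\le M$. Factoring $f=P g$ with $P(z)=\prod_k(z-\zeta_k)$ and $g$ zero–free on $\mathcal{D}(0,1/2)$, the function $\log|g|$ is harmonic; since $|P|\le1$ on $\partial\mathcal{D}(0,1/2)$ one gets $\log|g|\le M$ there and hence on the whole disk, while $|P(0)|\le1$ gives $\log|g(0)|\ge m$. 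Harnack's inequality for the nonnegative harmonic function $M-\log|g|$ on $\mathcal{D}(0,1/2)$ then yields $\log|g|>M-3(M-m)$ on $\mathcal{D}(0,1/4)$. For $P$, Cartan's lemma on polynomials produces a set $\mathcal{B}_1$ covered by at most $n$ disks of total radius $\le e^{-F}$ (after absorbing the lower–order term $n\le C(M-m)$ and adjusting the absolute constants, valid for $F$ large) outside which $\log|P|>-C F(M-m)$; its real trace then has measure $\le 2e^{-F}$. Adding the two estimates gives $\phi>M-C F(M-m)$ off $\mathcal{B}_1$, which is precisely \eqref{cal1}–\eqref{cal2} with $F^{1/J}=F$.

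For the inductive step, assume the statement for $J-1$ and set $F_1=F^{1/J}$ and $F_2=F^{(J-1)/J}$. First apply the case $J=1$ to $h(w):=f(z_{1,0},\dots,z_{J-1,0},w)$ on $\mathcal{D}(z_{J,0},1/2)$ with parameter $F_1$: there is $E\subset\mathcal{D}(z_{J,0},1/4)$ with $\mathrm{mes}(E\cap\R)\le C e^{-F_1}$ such that $\log|h(w)|>M-C F_1(M-m)$ for $w\in\mathcal{D}(z_{J,0},1/4)\setminus E$. For each such $w$, the slice $f_w(z_1,\dots,z_{J-1}):=f(z_1,\dots,z_{J-1},w)$ on $\prod_{i<J}\mathcal{D}(z_{i,0},1/2)$ satisfies $\sup\phi_w\le M$ and $\phi_w$ at its center is $\ge m_w$ with $M-m_w\le C F_1(M-m)$; the $(J-1)$–variable hypothesis with parameter $F_2$ then yields $\mathcal{B}_w\subset\prod_{i<J}\mathcal{D}(z_{i,0},1/4)$ with $\mathrm{mes}(\mathcal{B}_w\cap\R^{J-1})\le C(J-1)e^{-F_2^{1/(J-1)}}$ off which $\phi_w>M-C(J)F_1F_2(M-m)=M-C(J)F(M-m)$. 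Define $\mathcal{B}$ as the union of $\{z_J\in E\}$ and $\{z_J\notin E,\ (z_1,\dots,z_{J-1})\in\mathcal{B}_{z_J}\}$ inside $\prod_i\mathcal{D}(z_{i,0},1/4)$; then \eqref{cal1} holds off $\mathcal{B}$, and by Fubini $\mathrm{mes}(\mathcal{B}\cap\R^J)\le C(J)\bigl(e^{-F_1}+e^{-F_2^{1/(J-1)}}\bigr)=C(J)e^{-F^{1/J}}$ by the choice of $F_1,F_2$.

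The main obstacle will be the bookkeeping of constants and exponents along the induction: one must check that the loss in the lower bound chains multiplicatively as $F_1F_2=F$ while the exceptional measures combine additively as $e^{-F_1}+e^{-F_2^{1/(J-1)}}=e^{-F^{1/J}}$, and that every intermediate "sufficiently large $F$" requirement is met once $F$ itself is large. Alongside this, one should verify the measurability of $w\mapsto\mathrm{mes}(\mathcal{B}_w\cap\R^{J-1})$ needed for the Fubini step (which follows from the explicit, parameter–continuous construction of the Cartan bad sets) and the precise one–variable input with the radii $1/2$ and $1/4$; these are routine but genuinely necessary, and the rest of the argument is a mechanical unwinding of the induction.
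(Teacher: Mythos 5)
This lemma is not proved in the paper at all: it is quoted verbatim from Goldstein--Schlag \cite{gs}, so there is no internal proof to compare against. That said, your architecture is exactly the standard one for this statement: a one-variable Cartan estimate as the base case, then induction on $J$ by first controlling the single-variable restriction $w\mapsto f(z_{1,0},\dots,z_{J-1,0},w)$ to guarantee that most slices $f_w$ have a good lower bound at their center, and then applying the $(J-1)$-variable statement to those slices. The exponent bookkeeping is correct and is the crux of why the bound degrades to $e^{-F^{1/J}}$: with $F_1=F^{1/J}$ and $F_2=F^{(J-1)/J}$ one has $F_1F_2=F$ (so the losses in the lower bound multiply to $C(J)F(M-m)$) and $F_2^{1/(J-1)}=F^{1/J}=F_1$ (so the two exceptional measures are both $Ce^{-F^{1/J}}$). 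The Fubini/measurability caveat you flag is real but routine.

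The genuine gap is in your one-variable base case. From $|P|\le 1$ on $\partial\mathcal{D}(0,1/2)$ you conclude $\log|g|\le M$ there; but $\log|g|=\log|f|-\log|P|$ and $\log|P|\le 0$ give $\log|g|\ge\log|f|$ --- the inequality points the wrong way, and indeed $g=f/P$ blows up near any zero $\zeta_k$ close to the circle, so no upper bound for $\log|g|$ on the disk follows. Without that upper bound the function $M-\log|g|$ need not be nonnegative and Harnack cannot be applied. The standard repair is to divide out Blaschke factors adapted to $\mathcal{D}(0,1/2)$, namely $B_k(z)=\frac{r(z-\zeta_k)}{r^2-\bar{\zeta}_k z}$ with $r=1/2$, which satisfy $|B_k|=1$ on $|z|=r$ and $|B_k|\le 1$ inside; then $|g|=|f|/\prod_k|B_k|\le e^{M}$ on the whole disk by the maximum principle, $|g(0)|\ge e^{m}$, Harnack applies, and the discrepancy between $\log|B_k|$ and $\log|z-\zeta_k|$ costs only $O(n)=O(M-m)$, which is absorbed into $C\,F(M-m)$. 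Relatedly, Jensen's formula cannot be invoked on the circle of radius $3/4$, since $f$ is only assumed analytic on $\mathcal{D}(z_{1,0},1/2)$; the zero count must be performed on a circle of radius strictly between $1/4$ and $1/2$. Both defects are confined to the (entirely classical) base case; once that is stated and proved with the Blaschke normalization, your induction goes through.
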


\begin{proof}[\bf{Proof of Lemma \ref{mcl}}]
	The proof is similar to that of Lemma 3.4 in \cite{bbook}.
		In the following proof, $C=C(J)$  and $c=c(J)$.
	
	Let
	$$\mu=10^{-2}{J^{-1}}\delta_1(1+B_1)^{-1}(1+B_2)^{-1}.$$ Fix
	$$x_0\in\left[-\delta/2, \delta/2\right]^{J}$$
	and consider $T(z)$ with $|z-x_0|=\sup\limits_{1\leq i\leq J}|z_i-x_{0,i}|<\mu$.  Thanks to Cauchy's estimate and (\ref{mc1}), one obtains for $|z-x_0|<\mu$,
	$$\|{\partial_{z_i} T(z)}\|\leq \frac{4 B_1}{\delta_1},i=1,2,\cdots, J,$$
	which implies
	$$\|T(z)-T(x_0)\|\leq \frac{4JB_1\mu}{\delta_1}\leq 25^{-1}(1+B_2)^{-1}.$$
	From the assumption (ii) of Lemma \ref{mcl}, we can find $V=V(x_0)$ so that $|V|=\tilde{M}\leq M$ and \eqref{mc2} is satisfied. Denote by $V^c=[1,N]\setminus V$. Thus using the standard Neumann series argument and (\ref{mc2}), one has
	\begin{equation}\label{acb}
	\|(R_{V^c}T(z)R_{V^c})^{-1}\|\leq 2B_2\ \mathrm{for}\  |z-x_0|<\mu.\end{equation}
	We define for $|z-x_0|<\mu$ the analytic   function
	\begin{equation}\label{scf}
	S(z)=R_{V}T(z)R_{V}-R_{V}T(z)R_{V^c}(R_{V^c}T(z)R_{V^c})^{-1}R_{V^c}T(z)R_{V}.
	\end{equation}
	Then by (\ref{acb}) and (\ref{scf}), we have
	\begin{equation}\label{sb}
	\|S(z)\|\leq 3B_1^2B_2.
	\end{equation}
	Recalling Lemma \ref{sl}, if $S(z)$ is invertible, so is $T(z)$ and by (\ref{sl1}),
	\begin{equation}\label{sib}
	\|S^{-1}(z)\|\leq C\|T^{-1}(z)\|\leq CB_1^2B_2^2(1+\|S^{-1}(z)\|).
	\end{equation}
	For $x\in\mathbb{R}^{J}$, 
	one has
	\begin{equation}\label{dets}
	||S(x)||^{\tilde{M}}\geq |\det S(x)|.
	\end{equation}
		Let $\lambda=\min \{|\tilde{\lambda}|: \tilde{\lambda}\in \sigma(S(x)) \}$.
	We have
	\begin{align}
|\det S(x)|&\geq \lambda ^{\tilde{M}}
\nonumber\\
&\geq \|S^{-1}(x)\|^{-\tilde{M}}.\label{dets'}
	\end{align}
	By Cramer's rule, one has every  entry of $S^{-1}(x)$ is bounded by
	$$\frac{||S(x)||^{{\tilde{M}}-1}}{|\det S(x)|}$$
	and hence  (by (\ref{sb}))
		\begin{equation}\label{sib1}
	\|S^{-1}(x)\|\leq \frac{\tilde{M}(3B_1^2B_2)^{\tilde{M}}}{|\det S(x)|}.
	\end{equation}

	Let
	$$\phi(z)=\log|\det S(x_0+\mu z)|,\  |z|<1.$$
	Then by (\ref{dets}) and (\ref{sb}),
	\begin{equation}\label{pub}
	\sup_{|z|<1}\phi(z)\leq C \tilde{M}\log (B_1+B_2).
	\end{equation}
	By (\ref{mc3}) and the definition of $\mu$, there is some $x_1$ with $|x_0-x_1|<\mu/10$  such that
	\begin{equation}
	\|T^{-1}(x_1)\|\leq B_3.
	\end{equation}
	Hence by (\ref{sib}), $\|S^{-1}(x_1)\|\leq CB_3$, and from (\ref{dets'}),
	\begin{equation}\label{plb}
	\phi(a)\geq -C\tilde{M}\log B_3,
	\end{equation}
	where $a=\frac{x_1-x_0}{\mu}$, so $ |a|<1/10$.      Let
	\begin{equation*}
	\mathcal{P}=\prod_{1\leq i\leq J}\mathcal{D}(a_i,{1}/{2}).
	\end{equation*}
	Therefore, one has
	\begin{equation*}
	\sup_{z\in\mathcal{P}}\phi(z)\leq C \tilde{M}\log (B_1+B_2), \phi(a)\geq- C\tilde{M}\log B_3.
	\end{equation*}
	Applying Lemma \ref{svcl} and recalling (\ref{cal1}), (\ref{cal2}), for any $F\gg1$, there is some set $\mathcal{B}\subset \prod\limits_{1\leq i\leq J}\mathcal{D}(a_i,{1}/{4})$ with
	\begin{equation}\label{plbb}
	\phi(z)\geq -CF \tilde{M}\log(B_1+B_2+B_3)\ \mathrm{for}\ z\in \prod\limits_{1\leq i\leq J}\mathcal{D}(a_i,{1}/{4})\setminus \mathcal{B},
	\end{equation}
	and
	\begin{equation}\label{bm}
	\mathrm{mes}(\mathcal{B}\cap\mathbb{R}^{J})\leq Ce^{-F^{1/J}}.
	\end{equation}
	For $0<\epsilon<1$, let
	\begin{equation*}
	F =\frac{-c\log \epsilon}{ \tilde{M}\log(B_1+B_2+B_3)}.
	\end{equation*}
	Then by (\ref{plbb}) and (\ref{bm}),
	\begin{eqnarray}
	\nonumber&&\mathrm{mes}\left\{x\in\mathbb{R}^{J}: \ |x-x_1|<\mu/4\ \mathrm{and}\ |\det (S(x))|\leq \epsilon\right\}\\
	\nonumber&&\ \ \ \ \ \  = \mu^{{J}}\mathrm{mes}\left\{x\in\mathbb{R}^{J}:\  |x-a|<1/4\ \mathrm{and}\ \phi(x)\leq \log \epsilon\right\}\\
	\nonumber&&\ \ \ \ \ \ \leq C\mu^{{J}} e^{-F^{1/J}}.
	\end{eqnarray}
	Since  $|x_0-x_1|<\mu/10$, we have
	\begin{equation}\label{x0b}
	\mathrm{mes}\left\{x\in\mathbb{R}^{J}:\  |x-x_0|<\mu/8\ \mathrm{and}\ |\det(S(x))|\leq \epsilon\right\}\leq C \mu^{{J}} e^{-c\left(\frac{\log \epsilon^{-1}}{\tilde{M} \log(B_1+B_2+B_3)}\right)^{1/J}}.
	\end{equation}
	Recalling (\ref{sib}), (\ref{sib1}) and (\ref{mc4}), one has for $|x-x_0|<\mu/8$ and $|\det S(x)|\geq \epsilon$,
	\begin{equation}\label{aib}
	\|T^{-1}(x)\|\leq CB_1^2B_2^2\epsilon^{-1}\tilde{M}(3B_1^2{B_2})^{\tilde{M}}\leq \epsilon^{-2}.
	\end{equation}
	Covering $[-\frac{\delta}{2},\frac{\delta}{2}]^{J}$ by cubes of side  $\mu/4$, and combining (\ref{x0b}) and (\ref{aib}), one has
 \begin{align*}
 	\mathrm{mes}\left\{x\in\left[-\delta/2, \delta/2\right]^{J}:\  \|T^{-1}(x)\|\geq \epsilon^{-2}\right\}&\leq  C\delta^{J}e^{-c\left(\frac{\log \epsilon^{-1}}{\tilde{M}\log(B_1+B_2+B_3)}\right)^{1/J}}\\
 &\leq C\delta^{J}e^{-c\left(\frac{\log \epsilon^{-1}}{{M}\log(B_1+B_2+B_3)}\right)^{1/J}}.
	\end{align*}

\end{proof}

\section*{Acknowledgments}
 I would like to thank  Alan Haynes for telling me  \cite{hugh}, which leads to Lemmas \ref{ledissk} and \ref{ledisskp}. 
 I also wish to thank Svetlana Jitomirskaya for comments on earlier versions of the manuscript. 
  This research was 
 supported by   NSF DMS-1700314/2015683, DMS-2000345 and  the Southeastern
 Conference (SEC) Faculty Travel Grant 2020-2021.

\end{document}